\documentclass[english,11pt, letter]{article}
\usepackage[T1]{fontenc}
\usepackage[latin9]{inputenc}
\usepackage{geometry}
\geometry{verbose,tmargin=1in,bmargin=1in,lmargin=1in,rmargin=1in}
\usepackage{amsmath}
\usepackage{amsthm}
\usepackage{amssymb}

\makeatletter

\providecommand{\tabularnewline}{\\}

\numberwithin{equation}{section}
\numberwithin{figure}{section}
\theoremstyle{plain}
\newtheorem{thm}{\protect\theoremname}
\theoremstyle{definition}
\newtheorem{defn}[thm]{\protect\definitionname}
\theoremstyle{plain}
\newtheorem{lem}[thm]{\protect\lemmaname}
\theoremstyle{plain}
\newtheorem{cor}[thm]{\protect\corollaryname}

\pdfoutput=1

\pdfoutput=1

\usepackage{amsmath, amsthm, amssymb, amsfonts,bbold}
\usepackage[basic]{complexity}
\usepackage[pdftex,bookmarks,colorlinks]{hyperref}
\usepackage{enumitem}
\usepackage{color}

\DeclareMathOperator*{\argmaxTex}{arg\,max}
\DeclareMathOperator*{\argminTex}{arg\,min}
\DeclareMathOperator*{\signTex}{sign}
\DeclareMathOperator*{\rankTex}{rank}
\DeclareMathOperator*{\diagTex}{diag}
\DeclareMathOperator*{\imTex}{im}

\hypersetup{colorlinks=true,citecolor=red}

\renewcommand{\varepsilon}{\epsilon}

\usepackage{thmtools}
\usepackage{thm-restate}

\usepackage[lined,boxed,ruled,norelsize,algo2e]{algorithm2e}

\usepackage[small,compact]{titlesec}
\usepackage{setspace}
\usepackage{subfigure}

\makeatother

\usepackage{babel}
\providecommand{\corollaryname}{Corollary}
\providecommand{\definitionname}{Definition}
\providecommand{\lemmaname}{Lemma}
\providecommand{\theoremname}{Theorem}

\begin{document}

\global\long\def\R{\mathbb{R}}%

\global\long\def\Rn{\mathbb{R}^{n}}%

\global\long\def\Rm{\mathbb{R}^{m}}%

\global\long\def\Rmn{\mathbb{R}^{m \times n}}%

\global\long\def\Rnm{\mathbb{R}^{n \times m}}%

\global\long\def\Rmm{\mathbb{R}^{m \times m}}%

\global\long\def\Rnn{\mathbb{R}^{n \times n}}%

\global\long\def\Z{\mathbb{Z}}%

\global\long\def\Rp{\R_{> 0}}%

\global\long\def\dom{\mathrm{dom}}%

\global\long\def\dInterior{K}%

\global\long\def\Rpm{\R_{> 0}^{m}}%


\global\long\def\ellOne{\ell_{1}}%
 
\global\long\def\ellTwo{\ell_{2}}%
 
\global\long\def\ellInf{\ell_{\infty}}%
 
\global\long\def\ellP{\ell_{p}}%

\global\long\def\otilde{\widetilde{O}}%

\global\long\def\argmax{\argmaxTex}%

\global\long\def\argmin{\argminTex}%

\global\long\def\sign{\signTex}%

\global\long\def\rank{\rankTex}%

\global\long\def\diag{\diagTex}%

\global\long\def\im{\imTex}%

\global\long\def\enspace{\quad}%

\global\long\def\mvar#1{\mathbf{#1}}%

\global\long\def\vvar#1{#1}%



\global\long\def\defeq{\stackrel{\mathrm{{\scriptscriptstyle def}}}{=}}%

\global\long\def\diag{\mathrm{diag}}%

\global\long\def\mDiag{\mvar{Diag}}%
 
\global\long\def\ceil#1{\left\lceil #1 \right\rceil }%

\global\long\def\E{\mathbb{E}}%

\global\long\def\jacobian{\mvar J}%

\global\long\def\onesVec{1}%
 
\global\long\def\indicVec#1{1_{#1}}%
\global\long\def\cordVec#1{e_{#1}}%
\global\long\def\op{\mathrm{spe}}%

\global\long\def\va{\vvar a}%
 
\global\long\def\vb{\vvar b}%
 
\global\long\def\vc{\vvar c}%
 
\global\long\def\vd{\vvar d}%
 
\global\long\def\ve{\vvar e}%
 
\global\long\def\vf{\vvar f}%
 
\global\long\def\vg{\vvar g}%
 
\global\long\def\vh{\vvar h}%
 
\global\long\def\vl{\vvar l}%
 
\global\long\def\vm{\vvar m}%
 
\global\long\def\vn{\vvar n}%
 
\global\long\def\vo{\vvar o}%
 
\global\long\def\vp{\vvar p}%
 
\global\long\def\vs{\vvar s}%
 
\global\long\def\vu{\vvar u}%
 
\global\long\def\vv{\vvar v}%
 
\global\long\def\vx{\vvar x}%
 
\global\long\def\vy{\vvar y}%
 
\global\long\def\vz{\vvar z}%
 
\global\long\def\vxi{\vvar{\xi}}%
 
\global\long\def\valpha{\vvar{\alpha}}%
 
\global\long\def\veta{\vvar{\eta}}%
 
\global\long\def\vphi{\vvar{\phi}}%
\global\long\def\vpsi{\vvar{\psi}}%
 
\global\long\def\vsigma{\vvar{\sigma}}%
 
\global\long\def\vgamma{\vvar{\gamma}}%
 
\global\long\def\vphi{\vvar{\phi}}%
\global\long\def\vDelta{\vvar{\Delta}}%
\global\long\def\vzero{\vvar 0}%

\global\long\def\ma{\mvar A}%
 
\global\long\def\mb{\mvar B}%
 
\global\long\def\mc{\mvar C}%
 
\global\long\def\md{\mvar D}%
 
\global\long\def\mf{\mvar F}%
 
\global\long\def\mg{\mvar G}%
 
\global\long\def\mh{\mvar H}%
 
\global\long\def\mj{\mvar J}%
 
\global\long\def\mk{\mvar K}%
 
\global\long\def\mm{\mvar M}%
 
\global\long\def\mn{\mvar N}%
 
\global\long\def\mq{\mvar Q}%
 
\global\long\def\mr{\mvar R}%
 
\global\long\def\ms{\mvar S}%
 
\global\long\def\mt{\mvar T}%
 
\global\long\def\mU{\mvar U}%
 
\global\long\def\mv{\mvar V}%
 
\global\long\def\mx{\mvar X}%
 
\global\long\def\my{\mvar Y}%
 
\global\long\def\mz{\mvar Z}%
 
\global\long\def\mSigma{\mvar{\Sigma}}%
 
\global\long\def\mLambda{\mvar{\Lambda}}%
\global\long\def\mPhi{\mvar{\Phi}}%
 
\global\long\def\mZero{\mvar 0}%
 
\global\long\def\iMatrix{\mvar I}%
\global\long\def\mi{\mvar I}%
\global\long\def\mDelta{\mvar{\Delta}}%

\global\long\def\oracle{\mathcal{O}}%
 
\global\long\def\mw{\mvar W}%

\global\long\def\runtime{\mathcal{T}}%


\global\long\def\mProj{\mvar P}%

\global\long\def\vLever{\sigma}%
 
\global\long\def\mLever{\mSigma}%
 
\global\long\def\mLapProj{\mvar{\Lambda}}%

\global\long\def\penalizedObjective{f_{t}}%
 
\global\long\def\penalizedObjectiveWeight{{\color{red}f}}%

\global\long\def\fvWeight{\vg}%
 
\global\long\def\fmWeight{\mg}%

\global\long\def\vNewtonStep{\vh}%

\global\long\def\norm#1{\|#1\|}%
 
\global\long\def\normFull#1{\left\Vert #1\right\Vert }%
 
\global\long\def\normA#1{\norm{#1}_{\ma}}%
 
\global\long\def\normFullInf#1{\normFull{#1}_{\infty}}%
 
\global\long\def\normInf#1{\norm{#1}_{\infty}}%
 
\global\long\def\normOne#1{\norm{#1}_{1}}%
 
\global\long\def\normTwo#1{\norm{#1}_{2}}%
 
\global\long\def\normLeverage#1{\norm{#1}_{\mSigma}}%
 
\global\long\def\normWeight#1{\norm{#1}_{\fmWeight}}%

\global\long\def\cWeightSize{c_{1}}%
 
\global\long\def\cWeightStab{c_{\gamma}}%
 
\global\long\def\cWeightCons{{\color{red}c_{\delta}}}%

\global\long\def\TODO#1{{\color{red}TODO:\text{#1}}}%
\global\long\def\mixedNorm#1#2{\norm{#1}_{#2+\infty}}%

\global\long\def\mixedNormFull#1#2{\normFull{#1}_{#2+\infty}}%
\global\long\def\CNorm{C_{\mathrm{norm}}}%
\global\long\def\Pxw{\mProj_{\vx,\vWeight}}%
\global\long\def\vq{q}%
\global\long\def\cnorm{\CNorm}%

\global\long\def\next#1{#1^{\mathrm{(new)}}}%

\global\long\def\trInit{\vx^{(0)}}%
 
\global\long\def\trCurr{\vx^{(k)}}%
 
\global\long\def\trNext{\vx^{(k + 1)}}%
 
\global\long\def\trAdve{\vy^{(k)}}%
 
\global\long\def\trAfterAdve{\vy}%
 
\global\long\def\trMeas{\vz^{(k)}}%
 
\global\long\def\trAfterMeas{\vz}%
 
\global\long\def\trGradCurr{\grad\Phi_{\alpha}(\trCurr)}%
 
\global\long\def\trGradAdve{\grad\Phi_{\alpha}(\trAdve)}%
 
\global\long\def\trGradMeas{\grad\Phi_{\alpha}(\trMeas)}%
 
\global\long\def\trGradAfterAdve{\grad\Phi_{\alpha}(\trAfterAdve)}%
 
\global\long\def\trGradAfterMeas{\grad\Phi_{\alpha}(\trAfterMeas)}%
 
\global\long\def\trSetCurr{U^{(k)}}%
\global\long\def\vWeightError{\vvar{\Psi}}%
\global\long\def\code#1{\texttt{#1}}%

\global\long\def\nnz{\mathrm{nnz}}%
\global\long\def\tr{\mathrm{tr}}%
\global\long\def\vones{\vec{1}}%

\global\long\def\volPot{\mathcal{V}}%
\global\long\def\grad{\mathcal{\nabla}}%
\global\long\def\hess{\nabla^{2}}%
\global\long\def\hessian{\nabla^{2}}%

\global\long\def\shurProd{\circ}%
 
\global\long\def\shurSquared#1{{#1}^{(2)}}%
\global\long\def\trans{\top}%

\global\long\def\lpweight{w_{p}}%
\global\long\def\mlpweight{\mw_{p}}%
\global\long\def\lqweight{w_{q}}%
\global\long\def\mlqweight{\mw_{q}}%

\newcommand{\bracket}[1]{[#1]}
\global\long\def\interiorPrimal{\Omega^{\circ}}%
\global\long\def\interiorDual{\Omega^{\circ}}%

\pagenumbering{gobble}
\title{Solving Linear Programs with $\otilde(\sqrt{\rank})$ Linear System
Solves}
\author{Yin Tat Lee\\
University of Washington and Microsoft Research\\
\texttt{yintat@uw.edu}\and  Aaron Sidford\\
Stanford University\\
\texttt{sidford@stanford.edu}}
\date{}

\maketitle
\global\long\def\vol{\mathrm{vol}}%
\global\long\def\rPos{\R_{> 0}}%
\global\long\def\dWeight{\R_{>0}^{m}}%
\global\long\def\dWeights{\rPos^{m}}%

\global\long\def\specGeq{\succeq}%
 
\global\long\def\specLeq{\preceq}%
 
\global\long\def\specGt{\succ}%
 
\global\long\def\specLt{\prec}%
\global\long\def\gradient{\nabla}%
\global\long\def\weight{w}%
 
\global\long\def\vWeight{\vvar{\weight}}%
 
\global\long\def\mWeight{\mvar W}%
\global\long\def\mNormProjLap{\bar{\mLambda}}%
\global\long\def\volPot{\mathcal{V}}%
\global\long\def\dInterior{\Omega^{\circ}}%
\global\long\def\dFull{\{\dInterior\times\R_{>0}^{m}\}}%
\global\long\def\vWeight{\vvar w}%
 
\global\long\def\mWeight{\mvar W}%
\global\long\def\polytope{\Omega}%
\global\long\def\interior{\Omega^{\circ}}%

\begin{abstract}
We present an algorithm that given a linear program with $n$ variables,
$m$ constraints, and constraint matrix $\ma$, computes an $\epsilon$-approximate
solution in $\otilde(\sqrt{\rank(\ma)}\log(1/\epsilon))$ iterations
with high probability. Each iteration of our method consists of solving
$\otilde(1)$ linear systems and additional nearly linear time computation,
improving by a factor of $\tilde{\Omega}((m/\rank(\ma))^{1/2})$ over
the previous fastest method with this iteration cost due to Renegar
(1988) \cite{renegar1988polynomial}.\footnote{This paper is a journal version of the paper, ``Path-Finding Methods
for Linear Programming : Solving Linear Programs in $\otilde(\sqrt{\rank})$
Iterations and Faster Algorithms for Maximum Flow'' \cite{leeS14}
and arXiv submissions \cite{lsInteriorPoint,lsMaxflow}.

This paper contains several new results beyond these prior submissions.
This paper provides the first proof of a $\otilde(r)$-self-concordant
barrier for all polytopes $\{x\in\R^{n}:\ma x\geq b\}$ with $r=\rank(\ma)$
that is polynomial time computable (as opposed to the pseudo-polynomial
time computability of the universal barrier of \cite{nesterov1989self}).
Further, this paper provides new connections between the algorithms
presented, the barrier analyzed, and $\ell_{p}$ Lewis weights \cite{Lewis1978,CohenP15,CohenLS18}.

Several components of \cite{leeS14,lsInteriorPoint,lsMaxflow} were
not included in this journal version. Techniques, for leveraging this
paper to solve linear programs exactly are deferred to \cite{lsInteriorPoint}
and techniques for analyzing the error induced by approximate linear
system solves are deferred to \cite{lsMaxflow}. These techniques
are fairly standard and general and omitted from this paper for brevity.
Further, techniques for reducing the cost of the linear systems found
in \cite{lsInteriorPoint} are also not included and have been improved
in a sequence of recent work \cite{LeeS15,CohenLS18,AdilKPS19} and
techniques solving generalized minimum cost flow as opposed to the
more restricted minimum cost flow problem considered in this paper
are deferred to \cite{lsMaxflow}.} Further, we provide a deterministic polynomial time computable $\otilde(\rank(\ma))$-self-concordant
barrier function for the polytope, resolving an open question of Nesterov
and Nemirovski (1994) \cite{nesterov1989self} on the theory of ``universal
barriers'' for interior point methods.

Applying our techniques to the linear program formulation of maximum
flow yields an $\otilde(|E|\sqrt{|V|}\log(U))$ time algorithm for
solving the maximum flow problem on directed graphs with $|E|$ edges,
$|V|$ vertices, and integer capacities of size at most $U$. This
improves upon the previous fastest polynomial running time of $O(|E|\min\{|E|^{1/2},|V|^{2/3}\}\log(|V|^{2}/|E|)\log(U))$
achieved by Goldberg and Rao (1998) \cite{GoldbergRao}. In the special
case of solving dense directed unit capacity graphs our algorithm
improves upon the previous fastest $O(|E|\min\{|E|^{1/2},|V|^{2/3}\})$
running times achieved by Even and Tarjan (1975) \cite{even1975network}
and Karzanov (1973) \cite{k1973} and of $\otilde(|E|^{10/7})$ achieved
more recently by M\k{a}dry (2013) \cite{madryFlow}.
\end{abstract}
\pagebreak\pagenumbering{arabic}
\setcounter{page}{1}
\section{Introduction}

Given a matrix, $\ma\in\R^{m\times n}$, and vectors, $\vb\in\Rm$
and $\vc\in\Rn$, solving a linear program
\begin{equation}
\min_{\vx\in\Rn~:~\ma\vx\geq\vb}\vc^{\top}\vx\label{eq:intro:problem}
\end{equation}
is a core algorithmic task for the theory and practice of computer
science and operations research.

Since Karmarkar's breakthrough result in 1984 \cite{karmarkar1984new},
proving that interior point methods can solve linear programs in polynomial
time for a relatively small polynomial, interior point methods have
been an incredibly active area of research. Currently, the fastest
asymptotic running times for solving \eqref{eq:intro:problem} in
many regimes are interior point methods. Previously, state-of-the-art
interior point methods for solving \eqref{eq:intro:problem} compute
an $\epsilon$-approximate solution in either $\otilde(\sqrt{m}\log(1/\epsilon))$\footnote{Here and throughout the paper we use $\otilde(\cdot)$ to hide factors
polylogarithmic in $m$, $n$, $U$, $|V|$, $|E|$, and $M$.} iterations of solving linear systems \cite{renegar1988polynomial}
or $\otilde((m\cdot\rank(\ma))^{1/4}\log(1/\epsilon))$ iterations
of a more complicated but still polynomial time operation \cite{vaidya89convexSet,vaidya90parallel,vaidya1993technique,anstreicher96}.\footnote{All approximate linear programming algorithms discussed in this paper
can be leveraged to obtain exact solutions in weakly polynomial time
through standard straightforward reductions (see e.g. \cite{renegar1988polynomial}).
This transformation replaces each $\log(1/\epsilon)$ factor in running
times with $L$, a parameter that is at most the number of bits needed
to represent \eqref{eq:intro:problem} but in many cases can be much
smaller.}

However, in a breakthrough result of Nesterov and Nemirovski in 1994,
they showed that there exists a \emph{universal barrier} function
that if computable would allow \eqref{eq:intro:problem} to be solved
in $O(\sqrt{\rank(\ma)}\log(1/\epsilon))$ iterations \cite{nesterov1997self}.
Unfortunately, this barrier is more difficult to compute than solutions
to \eqref{eq:intro:problem} and despite this result, in many regimes
the fastest interior point algorithms are still based on the $\otilde(\sqrt{m}\log(1/\epsilon))$
iteration algorithm of Renegar from 1988.

In this paper we present a new interior point method that solves general
linear programs in $\otilde(\sqrt{\rank(\ma)}\log(1/\epsilon))$ iterations
thereby matching the theoretical limit proved by Nesterov and Nemirovski
up to polylogarithmic factors. Further, we show how to achieve this
convergence rate while only solving $\otilde(1)$ linear systems and
performing additional $\otilde(\nnz(\ma))$ work in each iteration.\footnote{We assume that $\ma$ has no rows or columns that are all zero as
these can be remedied by trivially removing constraints or variables
respectively or immediately solving the linear program. Therefore
$\nnz(\ma)\geq\min\{m,n\}$.} Our algorithm is easily parallelizable and in the standard PRAM model
of computation we achieve the first $\otilde(\sqrt{\rank(\ma)}\log(1/\epsilon))$-depth
polynomial-work method for solving linear programs. Using state-of-the-art
regression algorithms in \cite{nelson2012osnap,li2012iterative},
our linear programming algorithm has a running time of $\otilde((\mathrm{nnz}(\ma)+\left(\rank(\ma)\right)^{\omega})\sqrt{\rank(\ma)}\log(1/\epsilon))$
where $\omega<2.3729$ is the matrix multiplication constant \cite{williams2012matrixmult}.
Further, leveraging advances in solving sequences of linear systems
this running time is improvable to $\otilde((\mathrm{nnz}(\ma)+\rank(\ma)^{2})\sqrt{\rank(\ma)}\log(1/\epsilon))$\cite{LeeS15}. 

We achieve our results through an extension of standard path following
techniques for linear programming \cite{renegar1988polynomial,gonzaga1992path}
that we call \emph{weighted path finding}. We study the \emph{weighted
central path}, i.e. a weighted variant of the standard logarithmic
barrier function \cite{todd1994scaling,freund_weighted,megiddo_weighted}
that was used implicitly by M\k{a}dry \cite{madryFlow} to achieve
a breakthrough improvement to the running time for solving unit-capacity
maximum flow problem \cite{madryFlow}. We provide a general analysis
of the weighted central path, discuss tools for manipulating points
along the path and changing the path, and leverage this to produce
an efficiently computable path that converges in $\otilde(\sqrt{\rank(\ma)}\log(1/\epsilon)$
iterations.

Ultimately, we show approximately following the central path re-weighted
by variants of\emph{ $\ell_{p}$ Lewis weights}, a fundamental concept
in Banach space theory that has recently found applications for solving
$\ell_{p}$ regression, yields our desired running times. We provide
further intuition regarding these weighted central paths, and show
that the central path re-weighted by $\ell_{p}$-Lewis weights is
the central path induced by a $\otilde(\rank(\ma))$-self-concordant
barrier. Further, we show that the value, gradient, and Hessian of
this barrier are all computable deterministically in polynomial time.
This \emph{Lewis weight barrier} constitutes the first barrier for
polytopes whose self-concordance nearly matches that of the universal
\cite{Nesterov1994,lee2018universal} and entropic \cite{BubeckE15}
barriers; neither of which are not known to be either deterministically
or polynomial time computable. Previous methods for computing such
barriers required random sampling and run in pseudo-polynomial time,
i.e. have running times which depend polynomially (as opposed to polylogarithmically)
on the desired accuracy \cite{abernethy2016faster}.

To further demonstrate the efficacy of our proposed interior point
method, we show that it yields provably faster algorithms for solving
the maximum flow problem, one of the most well studied problems in
combinatorial optimization \cite{schrijver2003combinatorial}. By
applying our interior point method to a linear program formulation
of maximum flow and applying state-of-the-art solvers for symmetric
diagonally dominant linear systems \cite{spielman2004nearly,KoutisMP10,KMP11,Kelner2013,lee2013ACDM,cohen2014solving,lee2015sparsified,kyng2016approximate},
to implement the iterations we achieve an algorithm on $|V|$ node,
$|E|$ edge graphs with integer capacities in the range $0$ to $U$
in time $O(|E|\sqrt{|V|}\log^{O(1)}(|V|)\log(U))$ with high probability.
This improves upon the previous fastest polynomial running time of
$O(|E|\min\{|E|^{1/2},|V|^{2/3}\}\log(|V|^{2}/|E|)\log(U))$ achieved
in 1998 by Goldberg and Rao \cite{GoldbergRao} for dense graphs.
In the special case of solving dense unit capacity graphs our algorithm
improves upon the previous fastest running times of $O(|E|\min\{|E|^{1/2},|V|^{2/3}\})$
achieved by Even and Tarjan in 1975 \cite{even1975network} and Karzanov
in 1973 \cite{k1973} and of $\otilde(|E|^{10/7})$ achieved by M\k{a}dry
\cite{madryFlow} more recently. Further, our algorithm is easily
parallelizable and using \cite{peng2013efficient,lee2015sparsified,KyngLPSS16},
in the PRAM model we obtain a $\otilde(|E|\sqrt{|V|}\log(U))$-work
$\otilde(\sqrt{|V|})$-depth algorithm. Using the same technique,
we also solve the minimum cost flow problem in time $\otilde(|E|\sqrt{|V|}\log(M))$
with high probability where $M$ is an upper bound on the absolute
value of integer costs and capacities, improving upon the previous
fastest algorithm of $\otilde(E|^{1.5}\log(M))$ due to Daitch and
Spielman \cite{daitch2008faster}.

\subsection{Previous Work}

Linear programming is an extremely well studied problem with a long
history. There are numerous algorithmic frameworks for solving linear
programming problems, e.g. simplex methods \cite{dantzig1951maximization},
ellipsoid methods \cite{khachiyan1980polynomial}, and interior point
methods \cite{karmarkar1984new}. Each method has a rich history and
an impressive body of work analyzing the practical and theoretical
guarantees of the methods. Here we only present the major improvements
on the number of iterations required to solve \eqref{eq:intro:problem}
and discuss the asymptotic running times of these methods. For a more
comprehensive history linear programming and interior point methods
we refer the reader to one of the many excellent references on the
subject, e.g. \cite{Nesterov1994,ye2011interior}.

In 1984 Karmarkar \cite{karmarkar1984new} provided the first proof
of an interior point method running in polynomial time. This method
required $O(m\log(1/\epsilon))$ iterations where the running time
of each iteration was dominated by the time needed to solve a linear
system of the form $\ma^{\trans}\md\ma\vx=\mbox{\ensuremath{\vy}}$
for some diagonal matrix $\md\in\R_{>0}^{m\times m}$ and some $\vy\in\Rn$.
Using low rank matrix updates and preconditioning, Karmarkar achieved
a running time of $O(m^{3.5}\log(1/\epsilon))$ for solving \eqref{eq:intro:problem}
inspiring a long line of research into interior point methods.

In 1988 Renegar provided an improved $O(\sqrt{m}\log(1/\epsilon))$
iteration interior point method for solving \eqref{eq:intro:problem}.
His method was based on type of interior point methods known as \emph{path
following methods }which solve \eqref{eq:intro:problem} by incrementally
minimizing a $f_{t}(\vx)\defeq t\cdot\vc^{T}\vx+\phi(\vx)$ where
$\phi:\Rn\rightarrow\R$ is a \emph{barrier function }such that $\phi(\vx)\rightarrow\infty$
as $\vx$ tends to boundary of the polytope and $t$ is a parameter
changed during the algorithm. Renegar provided a method based on using
the \emph{log barrier} $\phi_{\ell}(\vx)\defeq-\sum_{i\in[m]}\log([\ma\vx-\vb]_{i})$
which serves as the foundation for many modern interior point methods.
As with Karmarkar's result the running time of each iteration of this
method was dominated by the time needed to solve a linear system of
the form $\ma^{\trans}\md\ma\vx=\vy$. Using a combination of techniques
involving low rank updates, preconditioning and fast matrix multiplication,
the amortized complexity of each iteration was improved \cite{vaidya1987speeding,gonzaga1992path,Nesterov1994}
yielding the previous best known running time of $O(m^{1.5}n\log(1/\epsilon))$
\cite{vaidya1989speeding}.

In seminal work of Nesterov and Nemirovski in 1994 \cite{Nesterov1994},
they generalized this approach and showed that path-following methods
can be applied to minimize any linear cost function over any convex
set if given a suitable \emph{barrier function}. They introduced a
measure of complexity of a barrier known as \emph{self-concordance
}and showed that given any \emph{$\nu$-self-concordant barrier} for
the set, an $\otilde(\sqrt{\nu}\log(1/\epsilon))$ iteration method
could be achieved. Further, they showed that for any convex set in
$\Rn$, there exists an $O(n)$-self-concordant barrier,called the
\emph{universal barrier} function. Therefore, in theory any such $n$-dimensional
convex optimization problem can be solved in $O(\sqrt{n}\log(1/\epsilon))$
iterations. However, this result is traditionally considered to be
primarily of theoretical interest as the universal barrier function
is difficult to compute. Given the possible algorithmic implications
of faster interior point methods, e.g. the flow problems of this paper,
obtaining a barrier with near-optimal self-concordance that is easy
to minimize is a fundamental open problem.

In 1989, Vaidya \cite{vaidya1993technique} made an important breakthrough
in this direction. He proposed two barrier functions related to the
volume of certain ellipsoids and obtained $O((m\cdot\rank(\ma))^{1/4}\log(1/\epsilon))$
and $O(\rank(\ma)\log(1/\epsilon))$ iteration linear programming
algorithms \cite{vaidya90parallel,vaidya1993technique,vaidya89convexSet}.
Unfortunately, each iteration of these methods required computing
the projection matrix $\md^{1/2}\ma(\ma^{\trans}\md\ma)^{-1}\ma^{\trans}\md^{1/2}$
for a positive diagonal matrix $\md\in\Rmm$. This was slightly improved
by Anstreicher \cite{anstreicher96} who showed it sufficed to compute
the diagonal of this projection matrix. Unfortunately, neither of
these methods yield faster running times than \cite{vaidya1989speeding}
unless $m\gg n$ and neither are immediately amenable to take full
advantage of improvements in solving structured linear system solvers
and thereby improve the running time for solving the maximum flow
problem.\vspace{5pt}\\
\begin{tabular}{|c|l|c|c|}
\hline 
Year  & Author & Number of Iterations & Nature of iterations\tabularnewline
\hline 
\hline 
1984  & Karmarkar \cite{karmarkar1984new}  & $\otilde(m\log(1/\epsilon))$  & Linear system solve\tabularnewline
\hline 
1986  & Renegar \cite{renegar1988polynomial}  & $O(\sqrt{m}\log(1/\epsilon))$  & Linear system solve\tabularnewline
\hline 
1989  & Vaidya \cite{vaidya1996new}  & $O((m\cdot\rank(\ma))^{1/4}\log(1/\epsilon))$  & Matrix Inversion\tabularnewline
\hline 
1994  & Nesterov and Nemirovskii \cite{Nesterov1994} & $O(\sqrt{\rank(\ma)}\log(1/\epsilon))$  & Volume computation\tabularnewline
\hline 
 & This paper  & $\otilde(\sqrt{\rank(\ma)}\log(1/\epsilon))$  & $\otilde(1)$ Linear system solves\tabularnewline
\hline 
\end{tabular}\vspace{5pt}

These results suggest that you can solve linear programs closer to
the $\otilde(\sqrt{\rank(\ma)}\log(1/\epsilon))$ bound achieved by
the universal barrier only if you pay more in each iteration. In this
paper, we show that this is not the case. We provide a method that
up to polylogarithmic factors matches the convergence rate of the
universal barrier function while only having iterations of cost comparable
to that of Karmarkar's \cite{karmarkar1984new} and Renegar's \cite{renegar1988polynomial}
algorithms. 

\subsection{Our Results}

Our main result is provably faster algorithms which given $\ma\in\Rmn$,
$\vb\in\Rn$, $\vc\in\Rm$, $l_{i}\in\R\cup\{-\infty\}$, and $u_{i}\in\R\cup\{+\infty\}$
for all $i\in[m]$ solve linear programs in the following form\footnote{Typically \eqref{eq:intro:two-sided} is written as $\ma\vx=\vb$
rather than $\ma^{\top}\vx=\vb$. We chose this formulation to be
consistent with the derivation of the self-concordant barrier in Section~\ref{sec:self-concordance},
and the standard use of $n$ to denote the number of vertices and
$m$ to denote the number of edges in the linear program formulation
of flow problems.}
\begin{equation}
\text{OPT}\defeq\min_{\begin{array}{c}
\vx\in\Rm~:~\ma^{\trans}\vx=\vb\\
\forall i\in[m]~:~l_{i}\leq x_{i}\leq u_{i}
\end{array}}\vc^{\top}\vx\,.\label{eq:intro:two-sided}
\end{equation}
 We assume throughout that $\ma$ is \emph{non-degenerate}, which
we define as full column rank and no rows that are all zero. Further,
we assume that for all $i\in[m]$ the set $\dom(x_{i})\defeq\{x\,:\,l_{i}<x<u_{i}\}$,
is neither the empty set or the entire real line, i.e. $l_{i}<u_{i}$
and either $l_{i}\neq-\infty$ or $u_{i}\neq+\infty$ and we assume
that the interior of the polytope, $\dInterior\defeq\{\vx\in\Rm~:~\ma^{\top}\vx=\vb,l_{i}<x_{i}<u_{i}\}$,
is non-empty. 

The problem of solving \eqref{eq:intro:two-sided} without these assumptions
is reducible to an instance where these assumptions hold, without
increasing the running times the methods of this paper by more than
polylogarithmic factors (see e.g., Appendix~E of Part~I \cite{lsInteriorPoint}).
Our main result is the following.

\begin{restatable}[Linear Programming]{thm}{LPsolvedetail}\label{thm:LPSolve_detailed}
Given interior point $\vx_{0}\in\dInterior$ for linear program \eqref{eq:intro:two-sided},
the algorithm $\code{LPSolve}$ (Algorithm~\ref{alg:lp_solve}) outputs
$\vx\in\dInterior$ with $\vc^{\top}\vx\leq\text{OPT}+\epsilon$ with
constant probability in
\[
O(\sqrt{n}\log^{13}m\cdot\log(mU/\epsilon)\cdot\mathcal{T}_{w})\text{-work and }O(\sqrt{n}\log^{13}m\cdot\log(mU/\epsilon)\cdot\mathcal{T}_{d})\text{-depth}
\]
where $U=\max\{\norm{1/(u-x_{0})}_{\infty},\norm{1/(x_{0}-l)}_{\infty},\norm{\vu-\vl}_{\infty},\norm{\vc}_{\infty}\}$
and $\mathcal{T}_{w}$ and $\mathcal{T}_{d}$ are the work and depth
needed to compute $(\ma^{\top}\md\ma)^{-1}\vq$ for input positive
diagonal matrix $\md$ and vector $\vq$.

\end{restatable}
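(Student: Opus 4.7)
The plan is to instantiate a short-step weighted path-following interior point method on a reweighted log barrier whose weights track the $\ell_p$ Lewis weights of $\ma$. The main structural input is the result, developed earlier in the paper, that the Lewis weight barrier is $\otilde(n) = \otilde(\rank(\ma))$-self-concordant; by the standard Nesterov--Nemirovski short-step analysis this self-concordance parameter alone forces at most $\otilde(\sqrt{n} \log(mU/\epsilon))$ outer iterations, which is where the claimed iteration count originates. What remains to prove is that each such iteration is implementable with $\otilde(1)$ linear system solves of the form $(\ma^\top \md \ma)^{-1} \vq$ plus nearly linear post-processing, so that the iteration cost is $\otilde(\runtime_w)$-work and $\otilde(\runtime_d)$-depth.

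Concretely, the algorithm $\code{LPSolve}$ will maintain a triple $(\vx, \vw, t)$ where $\vx \in \dInterior$ is approximately $t$-centered on the current weighted path, $\vw$ is a positive vector approximating the Lewis weights of $\ma$ in the local scaling induced by $\vx$, and $t$ is the centrality parameter. The proof proceeds in three stages: (i) an initialization stage that starts from the supplied $\vx_0 \in \dInterior$, introduces an auxiliary homotopy objective, and produces a pair $(\vx, \vw)$ close to the weighted central path at some $t_0 \geq 1/\mathrm{poly}(mU)$; (ii) a main loop that in each outer iteration multiplies $t$ by $1 + \Omega(1/\sqrt{n})$, then runs $O(\log m)$ Newton centering steps on $\vx$ and $O(\log m)$ weight-update steps on $\vw$; and (iii) a termination stage where, once $t \geq \mathrm{poly}(mU/\epsilon)$, the near-central point already satisfies $\vc^\top \vx \leq \mathrm{OPT} + \epsilon$ by the standard duality-gap bound for self-concordant barriers. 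Each centering step reduces to constantly many solves of $(\ma^\top \mWeight \md \ma)^{-1} \vq$ with $\mWeight = \diag(\vw)$ and $\md$ a positive diagonal depending on $\vx$; each weight step reduces to the same primitive via approximate leverage score computations. This matches the oracle model in the theorem statement.

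The main obstacle will be the coupled invariant analysis for the main loop: I must show that each outer iteration, which both shifts $t$ and perturbs $\vw$, restores both the ``$\vx$ is close to the $t$-weighted central path'' invariant and the ``$\vw$ is close to the Lewis weights at $\vx$'' invariant after $\otilde(1)$ inner steps. The self-concordance estimate bounds the Newton decrement after the $t$-scaling, but only under the assumption that the weights are already accurate, while the contractivity of the weight update requires $\vx$ to already be well-centered, so the two invariants must be analyzed jointly with a potential function that dominates both error quantities and contracts under the combined update. Once this joint invariant is in place, summing the geometric progress of $t$ gives the iteration count, and multiplying by the $\otilde(1)$ linear solves per iteration gives the work and depth bounds. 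The $\log^{13} m$ overhead absorbs polylogarithmic losses from Lewis weight approximation accuracy, randomized leverage score estimation (whose constant per-step failure probability is boosted to overall constant success by the standard union bound), and the bounded number of inner centering and weight steps; the techniques for handling inexact linear solves and exact rounding that contribute the remaining accounting are imported directly from \cite{lsInteriorPoint,lsMaxflow} as noted in the abstract's footnote.
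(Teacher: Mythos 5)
Your high-level algorithmic sketch (maintain $(\vx,\vw,t)$, alternate centering in $\vx$ with weight updates in $\vw$ and $t$-scalings, use an initial homotopy phase and a duality-gap termination criterion) matches what the paper's $\code{LPSolve}$/$\code{pathFollowing}$ routines do. But the central justification you give for the $\otilde(\sqrt{n})$ iteration count does not work, and the paper explicitly warns against it.

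You write that ``by the standard Nesterov--Nemirovski short-step analysis'' the $\otilde(n)$-self-concordance of the Lewis weight barrier ``alone forces'' the claimed iteration count. This cannot be right for Theorem~\ref{thm:LPSolve_detailed}: that theorem concerns the two-sided box-constrained formulation \eqref{eq:intro:two-sided}, and as the paper notes (invoking Proposition 2.3.6 of \cite{Nesterov1994}), \emph{any} self-concordant barrier for the feasible region of \eqref{eq:intro:two-sided} has parameter $\Omega(m)$, not $\otilde(n)$. The Lewis weight barrier of Section~\ref{sec:self-concordance} is a barrier for the dual-shaped set $\{x : \ma x > b\}$ and is the content of Theorem~\ref{thm:lewisbarriercomp}; it is not a barrier for the primal region of Theorem~\ref{thm:LPSolve_detailed}, and even for the dual the paper's proof of Theorem~\ref{thm:LPSolve_detailed_dual} does not go through Theorem~\ref{thm:lewisbarriercomp}. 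The paper's actual mechanism is the weighted path finding framework of Section~\ref{sec:weighted_path_finding}: centrality is measured in the \emph{mixed norm} $\|\cdot\|_{w+\infty} = \|\cdot\|_\infty + \cnorm\|\cdot\|_w$ rather than a Hessian norm, the change in $t$ is controlled by $\norm{w}_1 = O(n)$ via Lemma~\ref{lem:gen:t_step}, and centering progress is certified via Lemmas~\ref{lem:gen:x_progress}--\ref{lem:gen:w_change} and Theorem~\ref{thm:smoothing:centering_inexact_weight}. This circumvents the self-concordance lower bound precisely because it is not bounding a self-concordance parameter; the quantity that scales like $\sqrt{n}$ is $\cnorm\sqrt{\norm{w}_1}$. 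A proof that starts from ``self-concordance of a barrier for the feasible region gives $\otilde(\sqrt{n})$ iterations'' would be disproved by the $\Omega(m)$ lower bound.

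A secondary gap: you gesture at a ``potential function that dominates both error quantities,'' but the specific mechanism the paper uses to maintain the $\vw \approx \vg(\vx)$ invariant under \emph{multiplicative} approximation error in the computed Lewis weights is the $\ell_\infty$ chasing game of Theorem~\ref{thm:zerogame}, which is played on $\log w$ versus $\log g(x)$ with the softmax-like potential $\Phi_\mu$. This is a nontrivial ingredient — the naive argument that each $\code{computeApxWeight}$ call returns something close enough to $g(x)$ does not on its own control how far $w$ drifts over $\otilde(\sqrt{n})$ iterations — and your proposal does not identify or replace it. Finally, the footnote you cite as importing the inexact-solve analysis from \cite{lsInteriorPoint,lsMaxflow} says those techniques are \emph{deferred} to other papers; the statement as proved here assumes exact $(\ma^\top\md\ma)^{-1}\vq$ solves, so that import is not actually part of the argument.
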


Note that \eqref{eq:intro:two-sided} is the dual of \eqref{eq:intro:problem}
in the special case when $u_{i}=\infty$ for all $i\in[m]$. Consequently,
in obtaining this result we solve \eqref{eq:intro:problem} with the
desired complexity (see Theorem~\ref{thm:LPSolve_detailed_dual}).
We consider this formulation with two-sided constraints, \eqref{eq:intro:two-sided},
as it directly encompasses the formulation of maximum flow and minimum
cost flow as a linear program \cite{daitch2008faster}. Interestingly,
while it is well known that all linear programs, including \eqref{eq:intro:two-sided},
can be written in standard form, all known transformations to put
\eqref{eq:intro:two-sided} in standard form would increase the rank
of $\ma$ causing an $\otilde(\sqrt{\rank(\ma)}$) iteration algorithms
to be too slow to improve the running time for solving the maximum
flow problem.

Using lower bounds results of Nesterov and Nemirovski, it is not hard
to see that any general barrier for \eqref{eq:intro:two-sided} must
have self-concordance $\Omega(m)$. In particular, Proposition 2.3.6
of \cite{Nesterov1994} shows that if any vertex of a $m$-dimensional
polytope belongs to $k$ linearly independent $(m-1)$-dimensional
facets, then the self-concordance of any barrier on $\Omega$ is at
least $k$. Consequently, Theorem~\ref{thm:LPSolve_detailed} corresponds
to a method which converges at a rate faster than what would be predicted
by standard interior point theory. That we solve \eqref{eq:intro:two-sided}
in $o(\sqrt{m}\log(1/\epsilon))$ is critical for achieving our faster
maximum flow results. Leveraging Theorem~\ref{thm:LPSolve_detailed}
we show the following:

\begin{restatable}[Maximum Flow]{thm}{maxflow}\label{thm:maxflow}
Given a directed graph $G=(V,E)$ with integral costs $q\in\Z^{E}$
and capacities $c\in\Z_{\geq0}^{E}$ with $\|q\|_{\infty}\leq M$
and $\|c\|_{\infty}\leq M$, we can compute a minimum cost maximum
flow with constant probability with $O(|E|\sqrt{|V|}\log^{18}|E|\log M)$
work and $O(\sqrt{|V|}\log^{20}|E|\log M)$ depth. \end{restatable}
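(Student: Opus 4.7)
The plan is to cast minimum cost maximum flow as an instance of the linear program~\eqref{eq:intro:two-sided}, invoke Theorem~\ref{thm:LPSolve_detailed}, and bound the per-iteration work and depth using nearly linear work, polylogarithmic depth SDD (Laplacian) solvers.

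First I would write the problem in the form~\eqref{eq:intro:two-sided}. Let $\ma \in \R^{E \times V}$ be the signed edge-vertex incidence matrix of $G$, take $\vx \in \R^{E}$ to be the edge flow vector, and impose flow conservation $\ma^{\top} \vx = \vb$ with $\vb$ encoding supply/demand. Using the capacity constraints $0 \le x_{e} \le c_{e}$ as the box constraints and $\vc = \vq$ as the objective, this directly expresses minimum cost flow. Maximum flow reduces to min-cost flow in the same form by adding a single auxiliary $t \to s$ arc with cost $-1$ and capacity $|V|\cdot M$. Restricting to a connected component, we have $\nnz(\ma) = O(|E|)$ and $\rank(\ma) = |V|-1$, so Theorem~\ref{thm:LPSolve_detailed} converges in $O(\sqrt{|V|}\log^{13}|E| \cdot \log(|E| U/\epsilon))$ iterations.

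Second, I would bound the per-iteration cost. Because $\ma$ is a graph incidence matrix, $\ma^{\top}\md\ma$ is a weighted graph Laplacian and hence symmetric diagonally dominant. Parallel nearly linear work SDD solvers \cite{peng2013efficient,lee2015sparsified,KyngLPSS16} solve such systems with $\mathcal{T}_{w} = O(|E|\log^{O(1)}|E|)$ work and $\mathcal{T}_{d} = O(\log^{O(1)}|E|)$ depth, and the additional $\otilde(\nnz(\ma)) = \otilde(|E|)$ per-iteration work of $\code{LPSolve}$ also fits within nearly linear work and polylogarithmic depth. Multiplying these by the iteration count from Theorem~\ref{thm:LPSolve_detailed} already matches the claimed shape once the $\log(|E|U/\epsilon)$ factor is bounded.

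Finally, I would bound $U$ and $\epsilon$ and round to an integer optimum. A standard interior starting point (for instance $x_{e} = c_{e}/2$ augmented by auxiliary slack edges in the style of \cite{daitch2008faster}) gives $U = \mathrm{poly}(|V|, M)$. Choosing $\epsilon = 1/\mathrm{poly}(|V|, M)$ small enough to isolate the unique integer optimum yields $\log(|E|U/\epsilon) = O(\log M + \log |V|)$, which is absorbed into the polylogarithmic overhead to produce the stated $\log^{18}|E|\log M$ and $\log^{20}|E|\log M$ factors. A standard combinatorial repair (flow decomposition and cycle cancellation, or the exact-solution and error-analysis techniques deferred to \cite{lsInteriorPoint,lsMaxflow}) converts the approximate primal-dual solution into an integer min-cost maximum flow without changing the asymptotics. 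The main obstacles are not algorithmic but bookkeeping: (i) constructing the initial interior point and auxiliary arc while preserving the non-degeneracy and interior hypotheses of Theorem~\ref{thm:LPSolve_detailed}, and (ii) verifying that the approximate SDD solves are accurate enough to substitute for the exact solves assumed in Theorem~\ref{thm:LPSolve_detailed}, which is precisely the content of \cite{lsMaxflow}.
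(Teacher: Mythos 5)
Your high-level strategy matches the paper's: write min-cost max-flow as an instance of~\eqref{eq:intro:two-sided} with $\rank(\ma)=|V|-1$, invoke Theorem~\ref{thm:LPSolve_detailed}, and implement each iteration with a nearly-linear-work, polylog-depth SDD solver since $\ma^{\top}\md\ma$ is a graph Laplacian. The bounds on $U$ and the number of iterations also line up.

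However, there is a concrete gap in the final rounding step, and it is not mere bookkeeping. You write that one can ``choose $\epsilon=1/\mathrm{poly}(|V|,M)$ small enough to isolate the unique integer optimum'' and then repair by ``flow decomposition and cycle cancellation.'' But for generic integral costs and capacities the min-cost max-flow is \emph{not} unique: the LP has a face of optima, the central path limits to the analytic center of that face, and the $\epsilon$-approximate point returned by \code{LPSolve} can be bounded away (by a constant) from every integral optimum. No choice of $\epsilon$ fixes this, and a single round of cycle cancellation from such a point does not obviously fit the $\otilde(|E|\sqrt{|V|})$ budget. The paper resolves this in Lemma~\ref{lem:flowreduction} by first applying the Daitch--Spielman isolation lemma (\cite[Lemma 3.13]{daitch2008faster}): each edge cost is randomly perturbed by a tiny amount so that, with probability at least $1/2$, the perturbed LP has a \emph{unique} optimum that is also an optimum of the original problem. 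This randomization is exactly the source of the ``constant probability'' in the theorem statement, and your proposal never introduces it. With uniqueness in hand, the paper then gives a specific repair: scale the approximate solution by $1-\epsilon$, route all residual excess demand to $s$ along a spanning tree, augment any remaining $s$--$t$ shortfall, and argue that the result is within $1/6$ of the unique integral optimum coordinatewise, so simple rounding finishes. Your ``cycle cancellation or defer to \cite{lsInteriorPoint,lsMaxflow}'' is a different route that would need its own justification to hit the claimed work and depth.

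A second, smaller deviation: you propose reducing max flow to min-cost flow by a single auxiliary $t\to s$ arc, whereas the paper instead introduces per-vertex slack variables $y,z$ (with capacity $4|V|M$) and a scalar flow-value variable $F$, precisely so that the point $F=|V|M$, $x=c/2$, $y$, $z$ as specified is a strictly interior feasible point. With only a $t\to s$ arc you still need to manufacture an interior point for a graph that may not admit one in the original variables, and you also need the auxiliary-arc cost $-|V|\tilde{M}$ (not $-1$) so that sending more flow always decreases the objective even after the cost perturbation. These are fixable, but they are the kind of detail that Lemma~\ref{lem:flowreduction} is specifically engineered to get right while preserving the SDD structure of $\ma^{\top}\md\ma$ via $\mvar L=[\,\ma\,|\,\mi\,|\,-\mi\,|\,-e_t\,]\ms[\,\ma\,|\,\mi\,|\,-\mi\,|\,-e_t\,]^{\top}$.
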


We complement these results by designing a new barrier whose self-concordance
nearly matches that of the the universal barrier. We show that specialized
to \eqref{eq:intro:problem} an idealized version of our algorithm
corresponds to following a path following scheme on a natural barrier
induced by Lewis weights \cite{Lewis1978}. Formally, when $\ma$
is non-degenerate we provide an $O(\rank(\ma)\log^{5}(m))$-self-concordant-barrier
such that it's gradient and Hessian are all polynomial time computable.

\begin{restatable}[Nearly Universal Barrier]{thm}{lewisbarriercomp}\label{thm:lewisbarriercomp}
Let $\interior\defeq\{x\,:\,\ma x>b\}$ be non-empty for non-degenerate
$\ma\in\R^{m\times n}$. There is an $O(n\log^{5}m)$-self concordant
barrier $\psi$ for $\interior$ such that for all $\epsilon>0$ and
$x\in\interior$ in $O(mn^{\omega-1}\cdot\log m\cdot\log(m/\epsilon))$-work
and $O(\log^{2}m\cdot\log(m/\epsilon))$-depth it is possible to compute
$g\in\R^{n}$ and $\mh\in\R^{n\times n}$ with $\|g-\nabla\psi(x)\|_{\nabla^{2}\psi(x)^{-1}}\leq\epsilon$,
and $(1-\epsilon)\nabla^{2}\psi(x)\preceq\mh\preceq(1+\epsilon)\nabla^{2}\psi(x)$. 

\end{restatable}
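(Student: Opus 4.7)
The plan is to define $\psi$ as the Lewis weight barrier, the construction motivated and developed earlier in the paper, instantiated with $\ell_p$ Lewis weights for $p = \Theta(\log m)$ so as to balance the self-concordance constant against the computational cost of the weights. For a point $x \in \interior$ with slacks $\vs(x) = \ma x - b \in \R_{>0}^m$, the $\ell_p$ Lewis weights $\vw(x) \in \R_{>0}^m$ are the unique positive solution to the fixed point equation $w_i = \sigma_i\bigl(\mw^{1/2 - 1/p}\, \ms^{-1} \ma\bigr)$, where $\sigma_i$ denotes the $i$-th leverage score; the trace identity gives $\sum_i w_i(x) = n$. The barrier $\psi$ is then built as a $\vw(x)$-weighted log-barrier, modified so that the $x$-dependence of $\vw(x)$ is absorbed into a genuine scalar potential as developed in the paper's construction of the Lewis weight barrier.

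Self-concordance is the main obstacle. The standard reduction is to bound $|D^3 \psi(x)[h,h,h]| \leq 2 \bigl(D^2 \psi(x)[h,h]\bigr)^{3/2}$ up to a $O(n \log^5 m)$ factor coming from $\sum_i w_i \cdot \mathrm{polylog}(m)$. I would differentiate $\psi$ twice and three times in a direction $h$, then use the implicit function theorem on the Lewis weight fixed point equation to express $D\vw(x)[h]$ as the solution of a positive-definite linear system in $\vw$ and $\ms$. The crucial technical ingredient is the stability of $\ell_p$ Lewis weights under slack perturbations: a coordinate-wise small relative change in $\vs$ induces a coordinate-wise change in $\vw$ controlled by a factor of roughly $p/(p-2)$, with an $m^{1/p}$ loss in the worst case. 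Choosing $p = \Theta(\log m)$ turns $p \cdot m^{1/p}$ into $\mathrm{polylog}(m)$, which after squaring/cubing in the self-concordance calculus produces the $\log^5 m$ overhead. The delicate matrix-inequality step is showing that the implicit derivative of $\vw$ contributes only a bounded multiplicative factor to $\nabla^3 \psi$ relative to $\nabla^2 \psi$, for which I would use weighted Cauchy--Schwarz applied to the leverage score matrix $\mPhi = \mw^{1/2 - 1/p} \ms^{-1} \ma (\ma^\top \ms^{-2} \mw^{1 - 2/p} \ma)^{-1} \ma^\top \ms^{-1} \mw^{1/2 - 1/p}$, together with the trace bound $\sum_i w_i = n$.

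For the computational claim, I would compute approximate Lewis weights through the Cohen--Peng style contractive iteration $w_i \gets \sigma_i\bigl(\mw^{1/2 - 1/p} \ms^{-1} \ma\bigr)^{p/(p-2)}$. For $p = \Theta(\log m)$ this is a contraction in the $\ell_\infty$-relative metric with contraction factor $2/(p-2)$, so $O(\log(m/\epsilon))$ iterations suffice for $\epsilon$-multiplicative accuracy. Each iteration requires the leverage scores of an $m \times n$ matrix, which I would compute exactly by forming and inverting $\ma^\top \mD \ma$ for $\mD = \ms^{-2} \mw^{1 - 2/p}$ in $O(m n^{\omega - 1})$ work and $O(\log^2 m)$ depth using parallel matrix multiplication and inversion. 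Closed-form expressions for $\nabla \psi$ and $\nabla^2 \psi$ in terms of $\vw$, $\vs$, and $\ma^\top \mD \ma$ then give $g$ and $\mh$ meeting the theorem's bounds.

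The remaining step is to convert the $\epsilon$-multiplicative accuracy in $\vw$ to the advertised error bounds $\|g - \nabla \psi(x)\|_{\nabla^2 \psi(x)^{-1}} \leq \epsilon$ and $(1-\epsilon) \nabla^2 \psi(x) \preceq \mh \preceq (1+\epsilon) \nabla^2 \psi(x)$. This follows by substituting the approximate $\vw$ into the closed-form expressions and bounding the resulting perturbation using the same Lewis-weight stability estimates and leverage-score matrix inequalities employed in the self-concordance proof, together with a $\mathrm{polylog}(m)$ rescaling of the internal accuracy parameter to absorb the $\log^5 m$ factor from the barrier's self-concordance.
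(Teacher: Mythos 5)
Your barrier and self-concordance outline track the paper's argument closely: the paper's Theorem~\ref{thm:LS_barrier_sc} proves $\psi$ (defined via $\ell_q$ Lewis weights) is $nv_q^2$-self-concordant with $v_q = (q+2)^{3/2}m^{1/(q+2)} + 4\max\{q,2\}^{2.5}$, so $q = \Theta(\log m)$ gives the $O(n\log^5 m)$ bound; the stability ingredients you mention (the $m^{1/p}$ rounding loss and the $p$-power Jacobian bounds) correspond to Lemmas~\ref{lem:lewis_p_change} and~\ref{lem:LS_weights_stable}. The gap is in the weight-computation step. The iteration you write, $w_i \gets \sigma_i(\mw^{1/2-1/p}\ms^{-1}\ma)^{p/(p-2)}$, does not have the Lewis weights as a fixed point: at $w = \lpweight$ one has $\sigma_i(\mw^{1/2-1/p}\ms^{-1}\ma) = w_i$ by definition, so the map sends $w_i \mapsto w_i^{p/(p-2)} \ne w_i$. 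More seriously, no Cohen--Peng-style fixed-point map is an $\ell_\infty$-relative contraction for $p \ge 4$. The log-Jacobian of $w\mapsto\sigma(\mw^{1/2-1/p}\ma)$ at the fixed point equals $(1-\tfrac{2}{p})\mSigma^{-1}\mLambda$; the spectral norm of $\mSigma^{-1}\mLambda$ is at most $1$ (giving a contraction only in the $\mw^{-1}$-weighted $\ell_2$ metric with rate $1-2/p$), but its $\ell_\infty\!\to\!\ell_\infty$ operator norm is only bounded by $2$ (Lemma~\ref{lem:tool:projection_matrices}), so the $\ell_\infty$ Lipschitz constant is $2|1-2/p| \ge 1$ whenever $p\ge 4$, which excludes $p=\Theta(\log m)$. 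This is precisely why Cohen and Peng restrict their contractive method to $p<4$.

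The paper instead obtains the weights for all $p>0$ by projected gradient descent on the convex volumetric potential $\volPot_p^\ma(w)+\sum_i w_i$ (Lemma~\ref{lem:unique_lewis}, Theorem~\ref{thm:constrained_minimization}, Algorithm~\ref{alg:exact_weight}), with linear convergence rate $1-\Theta\left(\tfrac{1}{p+p^{-1}}\right)$ coming from the local Hessian bounds of Lemma~\ref{lem:Hessian_of_weight}; for $p=\Theta(\log m)$ this is $O(\log m \cdot \log(m/\epsilon))$ iterations of exact leverage-score computation at $O(mn^{\omega-1})$ work each, matching the stated complexity. One further caveat your proposal does not address: both the paper's gradient method and any local fixed-point scheme require a warm start $w^{(0)}$ within a constant relative distance of $\lpweight(\ma_x)$; without it the paper runs the homotopy $\code{computeInitialWeight}$, which costs an extra $\tilde O(\sqrt n)$ factor (the ``Initial Weight'' bullet of Theorem~\ref{thm:lewisbarriercomputefull}).
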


To obtain these results we provide several additional tools of possible
independent interest. In Section~\ref{sec:lewis_weights} we provide
several algebraic facts regarding Lewis weights and in Section~\ref{sec:weights_full:computing}
we provide several algorithms for computing Lewis weights in different
contexts. Further, in Section~\ref{sec:approx_path:chasing_zero}
we provide results for a natural online learning problem which we
leverage to handle approximation errors in our path finding schemes. 

Ultimately, we hope the varied results of this paper will open the
door towards developing even faster algorithms for convex programming
more broadly. While the analysis in the paper is quite technical,
ultimately the algorithms and heuristics they suggest, i.e. locally
re-weighting the central path by Lewis weights (and in the case of
maximum flow, effective resistance), are straightforward and we hope
may be used more broadly.

\subsection{Geometric Motivation\label{sec:geom_motiv}}

To motivate our approach, consider the slightly simplified problem
of designing an $\otilde(\sqrt{n}\log(1/\epsilon))=\otilde(\sqrt{\rank(\ma)}\log(1/\epsilon))$
iteration algorithm for solving \eqref{eq:intro:problem} for non-degenerate
$\ma$ where the running time of each iteration is dominated by the
time needed to solve a linear system $\ma^{\trans}\md\ma\vx=\vy$
for diagonal $\md\in\R_{\geq0}^{m\times m}$. The classic \emph{self-concordance
}theory for analyzing interior point methods established in \cite{Nesterov1994}
shows that it suffices to produce a simple enough $\otilde(n)$-self-concordant
barrier for the set $\interior\defeq\{x\in\R^{n}|\ma x>b\}$. This
seminal work of Nesterov and Nemirovski showed that given any $\nu$-\emph{self-concordant
barrier for an open convex} set $K$ there is an $\otilde(\sqrt{\nu}\log(1/\epsilon))$
iteration interior point method, based on a technique known as \emph{path
following}, for minimizing linear functions over $K$. Further, the
running time of each iteration is dominated by the time needed to
compute a gradient of the barrier and approximately solve a linear
system in its Hessian. 
\begin{defn}[Self-concordance]
\label{def:self_concordance} A convex, thrice continuously differentiable
function $\phi:K\rightarrow\R^{n}$ is a \emph{$\nu$-self-concordant
barrier function} for open convex set $K\subset\R^{n}$ if the following
conditions hold
\begin{itemize}
\item $\lim_{i\rightarrow\infty}\phi(x_{i})\rightarrow\infty$ for all sequences
$x_{i}\in K$ converging to boundary of $K$.
\item $|D^{3}\phi(x)[\vh,\vh,\vh]|\leq2|D^{2}\phi(x)[\vh,\vh]|^{3/2}$ for
all $x\in K$ and $\vh\in\Rn$,
\item $|D\phi(x)[\vh]|\leq\sqrt{\nu}|D^{2}\phi(x)[\vh,\vh]|^{1/2}$ for
all $x\in K$ and $\vh\in\Rn$.
\end{itemize}
\end{defn}

To achieve our goals, ideally we would produce a $\otilde(n)$-self-concordant
barrier function for the feasible region such that the resulting path
following scheme would have sufficiently low iteration costs. Unfortunately,
as we have discussed no such barrier is known to exist, all previous
$\otilde(n)$-self-concordant barriers are more difficult to evaluate
then linear programming, and it it would be unclear how to generalize
such an approach to solving \eqref{eq:intro:two-sided}. Deferring
this last issue to Section~\ref{subsec:Path-Finding}, here describe
how to overcome the first two issues and derive a deterministic polynomial-time
computable barrier functions with self-concordance $\tilde{O}(n)$. 

Our barrier function can be derived from the following intuition regarding
interior point methods. At a high level, interior point methods address
the key difficulty of linear programming, making progress in the presence
of non-differentiable inequality constraints, by leveraging a \emph{barrier
function}, $\phi$, which provides a local smooth approximation. These
methods solve the linear program by performing Newtons method, i.e.
solving a sequence of linear systems, which trade off the utility
of minimizing cost, $c^{\top}x$, and staying away from the constraints,
i.e. minimizing $\phi$. Since these Newton steps correspond to minimizing
linear functions over ellipsoids and these ellipsoids come from the
second-order approximations of the barrier functions, interior point
methods essentially approximate polytopes by a sequence of ellipsoids.
Self-concordance can be viewed as a geometric condition that relates
how well these ellipsoids approximate the domain. In particular, the
following lemma shows that the second-order approximation of the barrier
function at the minimum point well-approximates the domain.
\begin{thm}
[Dikin Ellipsoid Rounding {\cite[Thm 4.2.6]{Nesterov2003}}]\label{thm:rounding_self_concordant}
Given a $\nu$-self-concordant barrier function $\phi$ for convex
set $K\subset\Rn$, let $x_{\phi}$ be the minimizer of $\phi$ and
$E\defeq\{x\in\Rn:\ (x-x_{\phi})^{\top}\nabla^{2}\phi(x_{\phi})(x-x_{\phi})\leq1\}$
be the \emph{Dikin ellipsoid}. $E$ is a \emph{$\nu+2\sqrt{\nu}$-rounding
of $K$, i.e.} \emph{$E\subseteq K\subseteq(\nu+2\sqrt{\nu})E.$}
\end{thm}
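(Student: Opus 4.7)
The plan is to split the claim into the two inclusions $E \subseteq K$ and $K \subseteq (\nu+2\sqrt{\nu})E$, and reduce both to one-dimensional statements about the restriction of $\phi$ to a line through $x_\phi$. Throughout, the crucial fact is that the minimizer condition gives $\nabla\phi(x_\phi)=0$. For any direction $p\in\R^n$, I set $\psi_p(t)\defeq\phi(x_\phi+tp)$ on its maximal interval of definition $[0,t^*)$; self-concordance of $\phi$ transfers to $|\psi_p'''|\leq 2(\psi_p'')^{3/2}$ (so $\psi_p$ is a standard $1$-dimensional self-concordant function), and the barrier property becomes $|\psi_p'(t)|\leq\sqrt{\nu}\,\sqrt{\psi_p''(t)}$. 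Also $\psi_p'(0)=\nabla\phi(x_\phi)^\top p=0$ and $\psi_p''(0)=\|p\|_{\nabla^2\phi(x_\phi)}^2$, and the barrier property of $\phi$ forces $\psi_p(t)\to\infty$ as $t\to t^*$ whenever $t^*<\infty$.

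For the inner inclusion $E\subseteq K$, I would take $h$ with $\alpha\defeq\|h\|_{\nabla^2\phi(x_\phi)}<1$ and show $x_\phi+h\in K$ by contradiction; the boundary hit $t^*\leq 1$ would force $\psi_h\to\infty$ on $[0,t^*]$. The $1$-dimensional self-concordance inequality $|\psi_h'''|\leq 2(\psi_h'')^{3/2}$ is equivalent to the Lipschitz estimate $|(\psi_h''(t))^{-1/2}-(\psi_h''(0))^{-1/2}|\leq t$, so $(\psi_h''(t))^{-1/2}\geq 1/\alpha-t\geq 1-t>0$ on $[0,1)$. This makes $\psi_h''$ and hence $\psi_h$ bounded on $[0,1)$, contradicting divergence at $t^*\leq 1$. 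Closing up $\alpha\leq 1$ by continuity finishes $E\subseteq K$; this step uses only the self-concordance clause of Definition \ref{def:self_concordance}, not the barrier parameter $\nu$.

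The main effort is the outer inclusion $K\subseteq(\nu+2\sqrt{\nu})E$. I fix $y\in K$, let $p=y-x_\phi$, $\alpha=\|p\|_{\nabla^2\phi(x_\phi)}$, and $t^*\geq 1$ the maximal time. The goal is $\alpha\leq\nu+2\sqrt{\nu}$, using both the barrier parameter and the fact that $\psi_p(t^*^-)=\infty$. Because $\psi_p''\geq 0$ and $\psi_p'(0)=0$, $\psi_p'$ is nondecreasing and strictly positive for $t>0$; moreover $\psi_p'(t)\to\infty$ as $t\to t^*$ (else $\psi_p$ would stay bounded). The barrier inequality squared gives the differential inequality $u'(t)\geq u(t)^2/\nu$ for $u=\psi_p'$, and integrating $u'/u^2\geq 1/\nu$ from any $t\in(0,t^*)$ to $t^*$ yields the clean bound $t^*-t\leq \nu/\psi_p'(t)$. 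Independently, the same $1$-Lipschitz estimate used above now reads $(\psi_p''(t))^{-1/2}\leq 1/\alpha+t$, so $\psi_p''(s)\geq(1/\alpha+s)^{-2}$, and integrating from $0$ to $t$ delivers $\psi_p'(t)\geq\alpha^2 t/(1+\alpha t)$. Plugging this lower bound into the first estimate gives
\[
t^*\;\leq\;t+\frac{\nu(1+\alpha t)}{\alpha^2 t}\;=\;t+\frac{\nu}{\alpha^2 t}+\frac{\nu}{\alpha},
\]
and minimizing the right-hand side in $t>0$ at $t=\sqrt{\nu}/\alpha$ yields $t^*\leq(\nu+2\sqrt{\nu})/\alpha$. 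Combined with $t^*\geq 1$, this gives $\alpha\leq\nu+2\sqrt{\nu}$, as desired.

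The main obstacle is the outer inclusion, and specifically two careful points I would need to handle: (i) justifying $\psi_p'(t)\to\infty$ at the boundary rigorously from $\psi_p(t)\to\infty$, which I would do via the mean value theorem after verifying $\psi_p'$ is monotone; and (ii) ensuring that $t=\sqrt{\nu}/\alpha$ is an admissible point in $(0,t^*)$ for the optimization. The second is handled by a short case split: if $t^*\leq\sqrt{\nu}/\alpha$ then combined with $t^*\geq 1$ we already get $\alpha\leq\sqrt{\nu}\leq\nu+2\sqrt{\nu}$, and otherwise the minimizer lies in the admissible interval and the optimized bound applies. Both steps are standard once the one-dimensional reduction and the two estimates $\psi_p'^2\leq\nu\psi_p''$ and $(\psi_p'')^{-1/2}(t)\leq 1/\alpha+t$ are in hand.
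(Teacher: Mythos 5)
The paper does not actually prove this theorem — it is quoted directly from Nesterov's book (Thm 4.2.6) — so there is no in-paper proof to compare against; your argument is the standard textbook proof and it is correct. Both halves check out: the $1$-Lipschitz bound on $(\psi'')^{-1/2}$ along rays from $x_\phi$ (using $\nabla\phi(x_\phi)=0$) gives $E\subseteq K$, and combining $(\psi')^2\le\nu\,\psi''$ (hence $t^*-t\le\nu/\psi'(t)$) with the lower bound $\psi'(t)\ge\alpha^2 t/(1+\alpha t)$ and optimizing at $t=\sqrt{\nu}/\alpha$ yields exactly the constant $\nu+2\sqrt{\nu}$, with your case split covering admissibility of that $t$. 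The only cosmetic issue is that the closed ellipsoid $E$ cannot literally sit inside the open set $K$ at boundary points with $\alpha=1$ (continuity only lands you in $\overline{K}$), but that imprecision is already present in the statement as quoted.
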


Consequently, to obtain a $\tilde{O}(n)$-self-concordant barriers
it is necessary to obtain ellipsoids that are $\tilde{O}(n)$-roundings.
The maximum volume contained ellipsoid or \emph{John ellipsoid} has
this property.
\begin{lem}[John Ellipsoid Rounding \cite{john1948extremum}]
For convex $K\subseteq\Rn$ and John ellipsoid, $J(K)$, i.e. the
largest volume ellipsoid contained inside $K$, we have that $J(K)\subseteq K\subseteq nJ(K).$ 
\end{lem}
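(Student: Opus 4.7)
The first containment $J(K) \subseteq K$ is immediate from the very definition of the John ellipsoid as an ellipsoid contained in $K$, so the substance lies in proving $K \subseteq n\,J(K)$.

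The plan begins with a reduction by affine invariance: any nonsingular affine map sends ellipsoids to ellipsoids and scales all volumes by the same Jacobian factor, so the maximum-volume inscribed ellipsoid is covariant under affine maps and the claim is affine-invariant. Applying such a map I may assume without loss of generality that $J(K) = B$, the closed unit Euclidean ball centered at the origin, so that the goal becomes $K \subseteq nB$. I then argue by contradiction: suppose there is a point $p \in K$ with $\|p\| > n$, and after a rotation assume $p = t e_{1}$ with $t > n$. By convexity of $K$ the convex hull $C := \mathrm{conv}(B \cup \{p\}) \subseteq K$, so it suffices to exhibit an ellipsoid $E^{\star} \subseteq C$ with $\vol(E^{\star}) > \vol(B)$, contradicting the maximality of $B = J(K)$.

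The body $C$ admits an explicit cross-sectional description along the $e_{1}$-axis: for $x_{1} \in [-1, 1/t]$ the slice is the Euclidean disk $\{x_{2}^{2} + \cdots + x_{n}^{2} \leq 1 - x_{1}^{2}\}$ (the spherical part), and for $x_{1} \in [1/t, t]$ it is $\{x_{2}^{2} + \cdots + x_{n}^{2} \leq (t - x_{1})^{2}/(t^{2} - 1)\}$ (the tangent cone from $p$ to $B$). I would then consider axis-aligned ellipsoids $E_{a,b,c} = \{x \in \Rn : (x_{1} - c)^{2}/a^{2} + (x_{2}^{2} + \cdots + x_{n}^{2})/b^{2} \leq 1\}$, whose volume equals $\vol(B) \cdot a b^{n-1}$. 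Imposing $E_{a,b,c} \subseteq C$ via the two cross-section inequalities yields a finite family of scalar constraints on $(a,b,c)$, and I maximize $a b^{n-1}$ subject to them.

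The technical heart — and main obstacle — is verifying that for every $t > n$ this optimal value strictly exceeds $1$. A concrete choice shifting the center slightly toward $-e_{1}$ (for instance $c = -1/n$, $a = 1 + 1/n$, with $b$ saturating the tangent-cone constraint) realizes this; the boundary case $t = n$ gives exact equality, with the extremal body being the regular simplex whose John ellipsoid is its inscribed ball. The strict inequality for $t > n$ then follows from elementary manipulations, reducing in the borderline setting to an inequality of the form $(1 + 1/n)(1 - 1/n^{2})^{(n-1)/2} > 1$ together with monotonicity of the optimum in $t$. The resulting contradiction proves $K \subseteq n B$, and undoing the affine reduction yields $K \subseteq n\,J(K)$.
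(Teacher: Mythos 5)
The paper does not actually prove this lemma --- it is stated with a citation to John's 1948 paper --- so there is no internal proof to compare against. Your strategy (affine normalization so that $J(K)=B$ is the unit ball, then a contradiction via the ``ice-cream cone'' $C=\mathrm{conv}(B\cup\{te_{1}\})$ with $t>n$, using that containment of coaxial bodies of revolution reduces to slice-wise disk inclusions) is the standard classical route and is sound in outline. However, the one concrete step you commit to does not work as written, and it sits exactly at what you yourself identify as the technical heart. With the apex at $+te_{1}$, the cone part of $C$ lies in $\{x_{1}\geq1/t\}$ and the spherical part caps off at $x_{1}=-1$; your proposed witness has center $c=-1/n$ and semi-axis $a=1+1/n$ along $e_{1}$, so its leftmost point is at $x_{1}=c-a=-1-2/n<-1$, strictly outside $C$. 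No choice of $b>0$ makes this ellipsoid feasible: its slice at $x_{1}=-1$ is a disk of positive radius while $C$'s slice there is a single point. The shift must go \emph{toward} the apex, with the ellipsoid kept anchored at $-e_{1}$, i.e.\ $c=a-1>0$.

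The repair is the standard first-order computation, which also recovers the sharp threshold $t>n$ that you correctly anticipate. Take $a=1+s$, $c=s$, $b^{2}=1-\lambda s$ for small $s>0$. Writing the spherical constraint as $1-x_{1}^{2}-b^{2}(1-(x_{1}-c)^{2}/a^{2})\geq0$ on $[-1,1/t]$ and expanding to first order in $s$ gives the requirement $\lambda-2x_{1}-(\lambda+2)x_{1}^{2}\geq0$ there, which holds iff $\lambda\geq2/(t-1)$; with $\lambda=2/(t-1)$ the cone constraint is likewise satisfied to first order (both constraints degenerate exactly at the tangency circle $x_{1}=1/t$, so a genuinely complete argument must either check second-order terms there or take $\lambda$ slightly larger, which only perturbs the volume derivative by $O(s)$). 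The volume ratio is then
\[
a\,b^{n-1}=(1+s)\left(1-\lambda s\right)^{\frac{n-1}{2}}=1+s\left(1-\frac{n-1}{t-1}\right)+O(s^{2}),
\]
which exceeds $1$ for small $s>0$ precisely when $t>n$, yielding the contradiction. As stated, your proposal asserts this decisive inequality rather than proving it, and the specific numbers offered in its place are infeasible, so the argument has a genuine gap even though the architecture is the right one.
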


In contrast to other ellipsoids that yield approximation guarantees,
e.g. the covariance matrix of the uniform distribution on the body
\cite{vempala2010recent}, the John ellipsoid has the desirable property
of being defined by a convex optimization problem and therefore can
be computed in weakly polynomial time. There are multiple ways to
express the John ellipsoid as the solution to a convex problem. Our
barrier function is motivated by the following formulation, called
\emph{$D$-optimal design.}
\begin{lem}[Convex Formulation of John Ellipsoid \cite{khachiyan1996rounding}]
 For any $\ma\in\R^{m\times n}$, $b\in\R^{m}$, and polytope interior
$\interior=\{x\in\R^{n}:\ma\vx>\vb\}$ the John ellipsoid equals $\{\vy\in\Rn:(\vy-\vx)^{\top}\ma^{\top}\mw\ma(\vy-\vx)\leq1\}$
where $\{x\in\interior,w\in\R_{\geq0}^{m}\}$ is the saddle point
of the following convex concave problem
\begin{equation}
\min_{\vx\in\interior}\phi_{\infty}(\vx)\qquad\text{where}\qquad\phi_{\infty}(\vx)=\max_{\sum w_{i}=n,w_{i}\geq0}\ln\det\left(\ma^{\trans}\ms_{x}^{-1}\mw\ms_{x}^{-1}\ma\right)\label{eq:John_ellipsoid_formula}
\end{equation}
where $\ms_{x}$ and $\mw$ are diagonal $m\times m$ matrices with
$[\ms_{x}]_{ii}\defeq a_{i}^{\top}\vx-b_{i}$ and $\mw_{ii}\defeq w_{i}$.
\end{lem}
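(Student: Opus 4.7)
The plan is to derive the John ellipsoid as the solution of a convex optimization problem, write out its Lagrangian KKT conditions, and match them to the saddle-point conditions of the displayed $\min\max$ program.

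First, I would parametrize an arbitrary inscribed ellipsoid as $E(c,\mh)\defeq\{y\in\R^{n}:(y-c)^{\top}\mh(y-c)\le1\}$ with $c\in\interior$ and $\mh\succ0$. Its volume is proportional to $\det(\mh)^{-1/2}$, and $E(c,\mh)\subseteq\interior$ iff $a_{i}^{\top}\mh^{-1}a_{i}\le s_{i}(c)^{2}$ for every $i\in[m]$, where $s_{i}(c)\defeq a_{i}^{\top}c-b_{i}>0$. The John ellipsoid therefore solves
\[
\min_{c\in\interior,\,\mh\succ0}\ \ln\det\mh\quad\text{s.t.}\quad a_{i}^{\top}\mh^{-1}a_{i}\le s_{i}(c)^{2}\quad\forall i\in[m],
\]
a convex problem (after a standard SDP reformulation) for which Slater's condition holds since $\interior\ne\emptyset$.

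Next, I would form the Lagrangian $L(c,\mh,\mu)=\ln\det\mh+\sum_{i}\mu_{i}(a_{i}^{\top}\mh^{-1}a_{i}-s_{i}(c)^{2})$ with $\mu_{i}\ge0$ and write down KKT at a primal-dual optimum: (i) $\nabla_{\mh}L=0$ gives $\mh=\sum_{i}\mu_{i}a_{i}a_{i}^{\top}$; (ii) $\nabla_{c}L=0$ gives $\sum_{i}\mu_{i}s_{i}(c)\,a_{i}=0$; and (iii) complementary slackness gives $\mu_{i}>0\Rightarrow a_{i}^{\top}\mh^{-1}a_{i}=s_{i}(c)^{2}$. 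Taking the trace of $\mh^{-1}\mh=\mi$ via (i) and then invoking (iii) yields the global normalization $\sum_{i}\mu_{i}s_{i}(c)^{2}=n$. Reparametrizing $w_{i}\defeq\mu_{i}s_{i}(c)^{2}$ places $w$ on the simplex $\{w\ge0:\sum_{i}w_{i}=n\}$ and gives $\mh=\sum_{i}(w_{i}/s_{i}(c)^{2})a_{i}a_{i}^{\top}=\ma^{\top}\ms_{c}^{-1}\mw\ms_{c}^{-1}\ma$, exactly the Hessian appearing in $\phi_{\infty}$.

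Finally, I would verify that $(c,w)$ above is the saddle point of the displayed program. The optimality condition for the inner concave maximization, with Lagrange multiplier $\lambda$ for $\sum_{i}w_{i}=n$, reads $(1/s_{i}(c)^{2})a_{i}^{\top}\mh^{-1}a_{i}=\lambda$ on $\mathrm{supp}(w)$; weighting by $w_{i}$ and summing fixes $\lambda=1$ via the trace identity, recovering (iii). For the outer problem, Danskin's theorem applied at the optimal inner $w$ gives $\nabla_{x}\phi_{\infty}(c)=-2\sum_{i}\mu_{i}s_{i}(c)\,a_{i}$, whose vanishing is exactly (ii). Existence of a saddle point follows from Sion's minimax theorem since $\phi_{\infty}$ is concave in $w$ on a convex compact set and convex in $x$ on $\interior$. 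The main obstacle I anticipate is non-uniqueness of the inner maximizer $w$ (typical for $D$-optimal design at degenerate configurations); I would handle this by noting that the ellipsoid itself is still unique because only the product $\ma^{\top}\ms_{c}^{-1}\mw\ms_{c}^{-1}\ma$ and the center $c$ enter its defining inequality, and Danskin's theorem still yields a valid envelope identity via an appropriate selection within the subdifferential. Combined with Slater's condition ensuring strong duality, this closes the argument.
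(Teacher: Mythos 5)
Your proof is correct. Note that the paper itself does not prove this lemma---it is cited from Khachiyan (1996) and used as a black box---so there is no in-paper argument to compare against; your derivation is the standard one and it checks out. The chain (inscribed-ellipsoid program $\to$ KKT $\to$ reparametrization $\mu_i \mapsto w_i = \mu_i s_i(c)^2$ $\to$ trace identity forcing $\sum_i w_i = n$ $\to$ identification of $\mh$ with $\ma^{\top}\ms_c^{-1}\mw\ms_c^{-1}\ma$) is exactly right, and your matching of the stationarity conditions with the inner-max and outer-min optimality conditions of the displayed program is correct; the convexity in $x$ and concavity in $w$ that you need for Sion and for sufficiency of first-order conditions are, incidentally, exactly what the paper's Lemma on the derivatives of the volumetric potential establishes ($\nabla_{xx}^2 f \succeq 0$ and $\nabla_{ww}^2 f \preceq 0$ since $\mZero \preceq \mLambda \preceq \mSigma$). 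One expositional point worth making explicit: when you go from a saddle point back to the John ellipsoid, the inner-max KKT condition for coordinates \emph{outside} the support of $w$ reads $s_i(c)^{-2} a_i^{\top}\mh^{-1}a_i \leq \lambda = 1$, and this inequality is precisely primal feasibility $a_i^{\top}\mh^{-1}a_i \leq s_i(c)^2$, i.e., it is what guarantees the constructed ellipsoid is actually contained in the polytope; you only state the equality on $\mathrm{supp}(w)$, so you should add the inequality half to close that direction. Your handling of possible non-uniqueness of the inner maximizer is also fine: strict concavity of $\ln\det$ on the PD cone makes the matrix $\ma^{\top}\ms_c^{-1}\mw\ms_c^{-1}\ma$ unique even if $w$ is not.
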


Motivated by Theorem~\ref{thm:rounding_self_concordant} a natural
approach towards obtaining a polynomial time computable $\otilde(n)$-self-concordant
barrier would simply be to pick a barrier function for $\interior$
whose minimizer is the center of John ellipsoid. The function $\phi_{\infty}(\vx)$
of \eqref{eq:John_ellipsoid_formula} is such a function, but unfortunately,
simply inducing a Dikin ellipse that approximates the feasible region
is insufficient to be a self-concordant barrier. A self-concordant
barrier also needs to not change two quickly; however $\phi_{\infty}(\vx)$
is not even continuously differentiable. To see this, let $J(\Omega,x)$
be the maximum volume ellipsoid inside $K$ and centered at $x$ and
note that $\phi_{\infty}(\vx)=c\log(\text{vol}(J(\Omega,x)))$ for
a universal constant $c$. Consequently, for $\Omega=[-1,1]$ we have
$\phi_{\infty}(x)=c\log(2(1-|x|))$, i.e. it is only affected by one
constraint at each point, except at $0$, where it is non-differentiable. 

To make $\phi_{\infty}(\vx)$ smooth, we could apply a standard approach
of adding a strongly concave term, i.e. a regularizer, to the objective
function $\ln\det(\ma^{\top}\ms_{x}^{-1}\mw\ms_{x}^{-1}\ma)$. In
general, if smooth $f(x,y)$ is strongly concave in $y$, then $\max_{y}f(x,y)$
is smooth in $x$. In fact, there are multiple ways to apply this
approach to obtain a polynomial time computable universal barrier
function. For example, it can be shown that that the following is
an $\tilde{O}(n)$ self-concordant barrier function
\[
\phi_{r}(x)\defeq\max_{\sum_{i\in[m]}w_{i}=n,w_{i}\geq0}\ln\det(\ma^{\top}\ms_{x}^{-1}\mw\ms_{x}^{-1}\ma)-\frac{n}{m}\sum_{i\in[m]}w_{i}\ln w_{i}-\frac{n}{m}\sum_{i\in[m]}\ln[\ms_{x}]_{ii}\,.
\]
\textbf{Lewis Weight Barrier}: In this paper, we provide a more elegant
barrier that we believe further elucidates the geometric structure
of the problem. In Section~\ref{sec:self-concordance} for all $p>0$
we consider the function
\[
\phi_{p}(x)\defeq\begin{cases}
\max_{w\in\R^{m}:w\geq0}\frac{1}{2}f_{p}(x,w) & \text{ if }p\geq2\\
\min_{w\in\R^{m}:w\geq0}\frac{1}{2}f_{p}(x,w) & \text{ if }p\leq2
\end{cases}
\]
where 
\[
f_{p}(x,w)\defeq\ln\det\left(\ma^{\top}\ms_{x}^{-1}\mw^{1-\frac{2}{p}}\ms_{x}^{-1}\ma\right)-\left(1-\frac{2}{p}\right)\sum_{i\in[m]}w_{i}\,.
\]
We show that the maximizing ($q>2$) or minimizing ($q<2$) weights,
$w\in\R_{\geq0}^{m}$, for $\phi_{p}$ are the \emph{$\ell_{p}$-Lewis
weights} for the matrix $\ms^{-1}\ma$ \cite{Lewis1978} and hence
we call $\phi_{p}$ the \emph{Lewis weight barrier}.

Lewis weights are fundamental in the theory of Banach spaces and a
key tool for approximating a matrix in $\ell_{p}$-norms. They generalize
a fundamental $\ell_{2}$ measure of row importance known as \emph{leverage
scores} which are defined for $\ma\in\R^{m\times n}$ as $\sigma(\ma)=\diag(\ma(\ma^{\top}\ma)^{-1}\ma^{\top}),$
i.e. the diagonals of the orthogonal projection matrix onto the image
of $\ma$. For all $p>0$ the $\ell_{p}-$Lewis weights of $\ma$
are the unique vector $\lpweight(\ma)$ which is the leverage scores
of $\mw_{p}^{(1/2)-(1/p)}\ma$ for $\mw_{p}=\mDiag(w_{p})$. Intuitively,
the \emph{$\ell_{p}$-Lewis weight of a row $i$}, $\lpweight(\ma)_{i}$,
denotes the importance of the $i^{th}$ row under $\ell_{p}$ norm
and it is known that sampling $\widetilde{O}(n^{\max\left\{ p/2,1\right\} })$
rows of $\ma$ with probability proportional to $\ell_{p}$ Lewis
weight and reweighting yields a matrix $\mb$ such that with high
probability $\norm{\mb x}_{p}\approx\norm{\ma x}_{p}$ multiplicatively
for all $x$ \cite{bourgain1989approximation}. Recently, Cohen and
Peng \cite{CohenP15} studied Lewis weights in the context of solving
$\ell_{p}$-regression, showed that Lewis weights computation can
be written as a convex optimization problem for $p\geq2$, and provided
a nearly constant iteration algorithm for computing Lewis weights
for $p\in(0,4)$.

In this paper we provide several complementary results regarding Lewis
weights, including formulating their computation as a convex optimization
problem for all $p>0$ (Section~\ref{sec:lewis_weights}) and providing
additional algorithms for computing them (Section~\ref{sec:weights_full:computing}).
Further, we study the stability of Lewis weights under re-scalings
and show that they induce ellipsoids that well approximate to the
polytope $\Omega=\{x\in\R^{n}\,|\,\norm{\ma x}_{\infty}\leq1\}$ for
large $p$ (Section~\ref{sec:lewis_weights}). Leveraging this analysis
we show that the Lewis weight barrier for $p=\Theta(\log m)$ is an
$O(n\log^{5}m)$-self-concordant barrier for $\interior$ (Section~\ref{sec:self-concordance})
and prove Theorem~\ref{thm:lewisbarriercomp}. The barrier $\phi_{\infty}$
is essentially the limit of $\phi_{p}$ for $p\rightarrow\infty$
and consequently our analysis shows that the $\ell_{\Theta(\log m)}$
generalization of the John ellipse yields a nearly universal barrier.

\subsection{Path Finding}

\label{subsec:Path-Finding}

Though the explanation of the previous section suffices to prove Theorem~\ref{thm:lewisbarriercomp},
it is unclear how to leverage this analysis to prove Theorem~\ref{eq:intro:problem}.
As discussed, there is no $O(n)$-self-concordant barrier for the
feasible region of \eqref{eq:intro:two-sided} and even if this issue
could be overcome, naively implementing such a method would require
the expensive operation of computing Lewis weights. However, computing
these weights to high precision (or even certifying their properties)
necessitates computing leverage scores which naively yields iteration
costs comparable to that of Vaidya and Anstreicher's interior point
methods \cite{vaidya90parallel,vaidya1993technique,vaidya89convexSet,anstreicher96},
i.e. slower then solving $\otilde(1)$-linear systems.

To overcome these issues we develop a scheme for dynamically re-weighting
self-concordant barriers for $\dom(x_{i})$ in \eqref{eq:intro:two-sided}.
We provide $1$-self-concordant barriers $\phi_{i}$ for each $\dom(x_{i})$
(see Section~\ref{sec:path:prelim}) and study the central path they
induce, i.e. $x_{t}$ for $t>0$, where 
\begin{equation}
x_{t}\defeq\argmin_{\ma^{\top}x=b}f_{t}(x)\enspace\text{where }\enspace f_{t}(\vx)\defeq t\cdot\vc^{\top}\vx+\sum_{i\in[m]}\phi_{i}(\vx_{i})\,.\label{eq:penalized_objective-1}
\end{equation}
Self-concordance theory yields that $\sum_{i\in[m]}\phi_{i}(\vx_{i})$
is a $m$-self-concordant barrier and therefore this yields an $\otilde(\sqrt{m})$
iteration method; we directly attempt to improve this bound.

To motivate our improvement, note that the performance of this method
is highly dependent on the representation of \eqref{eq:intro:problem}.
Duplicating a constraint, i.e. a row of $\ma$ and the corresponding
entry in $\vb_{i}$, $\ell_{i}$ and $u_{i}$, corresponds to doubling
the contribution of some $\phi_{i}$. Repeating a constraint many
times can actually slow down the convergence of standard path following
methods and in a series of papers \cite{deza2006central,deza2008good,nematollahi2008redundant,nematollahi2008simpler,mut2013tight},
it was shown that by carefully duplicating constraints on Klee-Minty
cubes standard interior point methods for the dual can take $\Omega(\sqrt{m})$
iterations.

Since the weighting of $\phi_{i}$ can affect convergence, we provide
algorithms which dynamically re-weight the $\phi_{i}$. We show that
this can improve the convergence rate from $\Omega(\sqrt{m})$ to
$\otilde(\sqrt{n})$. In Section~\ref{sec:weighted_path_finding},
we study the \emph{weighted barrier function,} $\phi(x)=\sum_{i\in[m]}g_{i}(x)\phi_{i}(x)$
where $\vg:\R_{\geq0}^{m}\rightarrow\R_{\geq0}^{m}$ is a \emph{weight
function} of the current point, and the \emph{weighted central path}
they induce, i.e.

\[
x_{t}^{g}\defeq\argmin_{\ma^{\top}x=b}f_{t}(x)\enspace\text{where }\enspace f_{t}(\vx)\defeq t\cdot\vc^{\top}\vx+\sum_{i\in[m]}g_{i}(x)\phi_{i}(\vx_{i})\text{ for all }t\geq0\,.
\]
To obtain our improved running times we investigate what properties
of $\vg(\vx)$ improve convergence. Standard analysis suggests that
$g$ should have small total size, i.e. $\norm{g(x)}_{1}=O(n)$, and
induce Newton steps that do not change the Hessian much. Optimizing
weights for these conditions suggest that $g(x)$ should be the $\ell_{1}$-Lewis
weights for the local re-weighting of the constraint matrix. In the
special case where all $\ell_{i}=0$ and $u_{i}=+\infty$ this recovers
the motivation for $\phi_{\infty}$! Here, we run into the same issues
discussed in Section~\ref{sec:geom_motiv}, e.g. instability of John
ellipse and $\ell_{1}$-Lewis weights. Consequently, we consider the
dual analog of the approach of Section~\ref{sec:geom_motiv} and
let $g$ be the $\ell_{p}$-Lewis weights for $p=1-1/\log(4m)$ plus
a fixed amount and show these regularized Lewis weights have the desired
properties. Interestingly, when $\ell_{i}=0$ and $u_{i}=+\infty$,
ignoring the constant regularization, the $x_{t}^{g}$ are dual to
the central path induced by the $\ell_{q}$-Lewis weight barrier.

This reasoning yields a dual algorithm related to path following methods
with the Lewis weight barrier: Newton step $x$ for fixed $g(x)$,
update $g(x)$, update $t$, and repeat. All that remains is the issue
of computing $\ell_{p}$-Lewis weights. To overcome this issue, we
exploit that leverage scores and consequently $\ell_{p}$-Lewis weights
for small $p$ can be efficiently approximated for small $p$, as
was shown in the aforementioned exciting result \cite{CohenP15}.
In Section~\ref{sec:weights_full:computing} we provide additional
Lewis weight computation algorithms for all $p$ which we leverage
to compute multiplicative approximations to $\ell_{p}$-Lewis weights
in our methods. Unfortunately, this error is still too much for our
methods to handle directly, as such large weights changes can greatly
decrease centrality measures.

To overcome this final issue, rather then using the weighted barrier
$\phi(x)=\sum_{i\in[m]}g_{i}(x)\phi_{i}(x)$ where the weights $\vg(\vx)$
depends on the $\vx$ directly, we instead maintain separate weights
$w\in\R_{>0}^{m}$ and current point $\vx$ and use the barrier $\phi(\vx,w)=\sum_{i\in[m]}w_{i}\phi_{i}(x_{i})$.
We then design a method where we maintain the invariants that $x$
is close to the minimum of $\phi(\vx,w)$ over $\ma^{\top}x=b$ and
$w$ is multiplicatively close to $\vg(\vx)$. Since, each fixed $w\in\R_{\geq0}^{m}$
induces a particular weighted central path, i.e. the minimizers of
$t\cdot c^{\top}x+\phi(x,w)$, our method can be viewed as alternating
between advancing along a weighted central path and changing the path.
We call this technique, \emph{path} \emph{finding.}

We design this path-finding method in two steps. First, we show that
we can take a Newton step on $x$ and update $w$ while improving
centrality and not changing $w$ too much. This requires care, as
with the weighted barrier it is difficult to certify that Newton steps
are stable, i.e. does not change points multiplicatively. To overcome
this, we explicitly measure the centrality of our points by the size
of the Newton step in a mixed norm of the form $\norm{\cdot}=\norm{\cdot}_{\infty}+\cnorm\norm{\cdot}_{\mw}$
to keep track of both the standard measure of centrality and this
multiplicative change. Second, we show that given a multiplicative
approximation to $\vg(\vx)$ and bounds on the change of $\vg(\vx)$,
we can maintain the invariant that $\vg(\vx)$ is close to $w$ multiplicatively
without moving $w$ too much. We formulate this as a general two player
game and provide an efficient solution in Section~\ref{sec:approx_path:chasing_zero}.

By combining these insights and formulating minimum cost flow as a
linear program, we prove Theorem~\ref{thm:LPSolve_detailed} and
Theorem~\ref{thm:maxflow}. Measuring Newton step sizes with respect
to the mixed norm helps explain how our method outperforms the self-concordance
of the best barrier for \eqref{eq:intro:two-sided}. Self-concordance
is based on $\ell_{2}$ analysis and lower bounds for self-concordance
stem from the failure of $\ell_{2}$ to approximate $\ell_{\infty}$.
While ideally our methods might optimize over $\ellInf$ directly,
$\ellInf$ is rife with degeneracies impairing this analysis. However,
unconstrained minimization over a box is simple and by working with
this mixed norm and carefully choosing weights we are taking advantage
of the simplicity of minimizing $\ellInf$ over most of the domain
and only paying for the $\otilde(n)$-self-concordance of a barrier
for the subspace induced by the $\ma^{\trans}\vx=\vb$ constraint.

\subsection{Paper Organization}

After providing notation in Section~\ref{sec:notation}, in Section~\ref{sec:weighted_path_finding}
we provide our analysis of weighted path finding, in Section~\ref{sec:lewis_weights}
we provide our analysis of Lewis weights, and in Section~\ref{sec:self-concordance}
we prove the self-concordance of the Lewis weight barrier. The proofs
of Theorems \ref{thm:LPSolve_detailed}, \ref{thm:maxflow}, and \ref{thm:lewisbarriercomp}
are then given in Section~\ref{sec:weights_full:computing}. Algorithms
for computing Lewis weights and many technical details are deferred
to the appendix. Note that throughout we made only limited attempts
to reduce polylogarithmic factors.

\section{Notation\label{sec:notation} }

\textbf{Vector Operations:} We frequently apply scalar operations
to vectors with the interpretation that these operations should be
applied coordinate-wise, e.g. for $\vx,\vy\in\Rn$ we let $\vx/\vy\in\Rn$
with $[\vx/\vy]_{i}\defeq(x_{i}/y_{i})$, $xy\in\R^{n}$ with $[xy]_{i}=x_{i}y_{i}$,
and $\log(\vx)\in\Rn$ with $[\log(\vx)]_{i}=\log(x_{i})$ for all
$i\in[n]$ .\textbf{\vspace{10pt}}\\
\textbf{Matrices}: We call a matrix $\ma$\emph{ non-degenerate} if
it has full column-rank and no zero rows. We call symmetric matrix
$\mb\in\Rnn$ positive semidefinite (PSD) if $\vx^{\top}\mb\vx\geq0$
for all $\vx\in\Rn$ and positive definite (PD) if $\vx^{\top}\mb\vx>0$
for all $\vx\in\Rn$.\textbf{\vspace{10pt}}\\
\textbf{Matrix Operations:} For symmetric matrices $\ma,\mb\in\Rnn$
we write $\ma\preceq\mb$ to indicate that $\vx^{\top}\ma\vx\leq\vx^{\top}\mb\vx$
for all $\vx\in\Rn$ and define $\prec$, $\preceq$, and $\succeq$
analogously. For $\ma,\mb\in\R^{n\times m}$, we let $\ma\shurProd\mb$
denote the Schur product, i.e. $[\ma\circ\mb]_{ij}\defeq\ma_{ij}\cdot\mb_{ij}$
for all $i\in[n]$ and $j\in[m]$, and we let $\shurSquared{\ma}\defeq\ma\circ\ma$.
We use $\nnz(\ma)$ to denote the number of nonzero entries in $\ma$.\textbf{\vspace{10pt}}\\
\textbf{Diagonals:} For $\ma\in\R^{n\times n}$ we define $\diag(\ma)\in\R^{n}$
with $\diag(\ma)_{i}=\ma_{ii}$ for all $i\in[n]$ and for $\vx\in\R^{n}$
we define $\mDiag(\vx)\in\R^{n\times n}$ as the diagonal matrix with
$\diag(\mDiag(\vx))=\vx$. We often use upper case to denote a vectors
associated diagonal matrix, e.g. $\mx\defeq\mDiag(x)$ and $\ms=\mDiag(s)$.
\textbf{\vspace{10pt}}\\
\textbf{Fundamental Matrices}: For non-degenerate $\ma$ we let $\mProj(\ma)\defeq\ma(\ma^{\top}\ma)^{-1}\ma^{\top}$
denote the orthogonal projection matrix onto $\ma$'s image and $\sigma(\ma)\defeq\diag(\mProj(\ma))$
denote $\ma$'s \emph{leverage scores}. We let $\mSigma(\ma)\defeq\mDiag(\sigma(\ma))$,
$\mProj^{(2)}(\ma)\defeq\mProj(\ma)\circ\mProj(\ma)$, $\mLambda(\ma)\defeq\mSigma(\ma)-\mProj^{(2)}(\ma)$,
and $\mNormProjLap(\ma)\defeq\mSigma(\ma)^{-1/2}\mLambda(\ma)\mSigma(\ma)^{-1/2}$.
$\mLambda(\ma)$ is a Laplacian matrix and $\mNormProjLap(\ma)$ is
a normalized Laplacian matrix.\textbf{\vspace{11pt}}\\
\textbf{Norms}: For PD $\ma\in\Rnn$ we let $\|\cdot\|_{\ma}$ denote
the norm where $\norm{\vx}_{\ma}^{2}\defeq\vx^{\top}\ma\vx$ for all
$\vx\in\Rn$. For positive $w\in\R_{>0}^{n}$ we let $\|\cdot\|_{w}$
denote the norm where $\norm{\vx}_{w}^{2}\defeq\sum_{i\in[n]}w_{i}x_{i}^{2}$
for all $\vx\in\Rn$. For any norm $\|\cdot\|$ and matrix $\mm$,
its induced operator norm of $\mm$ is defined by $\norm{\mm}=\sup_{\|\vx\|=1}\norm{\mm\vx}$.\textbf{\vspace{11pt}}\\
\textbf{Calculus:} For a function of two vectors, i.e. $g(x,y)\in\R$
for all $\vx\in\R^{n_{1}}$ and $\vy\in\R^{n_{2}}$, we let $\grad_{\vx}\vg(\va,\vb)\in\R^{n_{1}}$
denote the gradient of $\vg$ as a function of $x$ for fixed $\vy$
at $(\va,\vb)\in\R^{n_{1}\times n_{2}}$, i.e. $[\grad_{\vx}\vg(\va,\vb)]_{i}=\frac{\partial}{\partial x_{i}}g(a,b)$,
and define $\grad_{\vy}$, $\hessian_{\vx\vx}$, and $\hess_{\vy\vy}$
analogously. For $h:\R^{n}\rightarrow\R^{m}$ and $\vx\in\Rn$ we
let $\mj_{h}(\vx)\in\R^{m\times n}$ denote the Jacobian of $\vh$
at $\vx$, i.e. $[\mj_{h}(\vx)]_{ij}\defeq\frac{\partial}{\partial x_{j}}h(\vx)_{i}$
for all $i\in[m]$ and $j\in[n]$. For $f:\R^{n}\rightarrow\R$ and
$x,h\in\R^{n}$ we let $Df(x)[h]$ denote the directional derivative
of $f$ in direction $h$ at $x$, i.e. $Df(x)[h]\defeq\lim_{t\rightarrow0}[f(x+th)-f(x)]/t$.\textbf{\vspace{10pt}}\\
\textbf{Convex Sets:} We call $U\subseteq\R^{k}$ \emph{convex} if
$t\cdot\vx+(1-t)\cdot\vy\in U$ for all $\vx,\vy\in U$ and $t\in[0,1]$
and \emph{symmetric} if $\vx\in\R^{k}\Leftrightarrow-\vx\in\R^{k}$.
For all $\alpha>0$ and $U\subseteq\R^{k}$ we let $\alpha U\defeq\{\vx\in\R^{k}|\alpha^{-1}\vx\in U\}$.
For all $p\in[1,\infty]$ and $r>0$ we call the symmetric convex
set $\{\vx\in\R^{k}|\|\vx\|_{p}\leq r\}$ \emph{the $\ellP$ ball
of radius $r$.}\textbf{\vspace{10pt}}\\
\textbf{Misc:} For $z\in\Z$ we let $[z]\defeq\{1,2,..,z\}$. We let
$\indicVec i$ denote the vector that has value $1$ in coordinate
$i$ and $0$ elsewhere. We use $\otilde$ to hide factors polylogarithmic
in $m$, $n$, $U$, $|V|$, $|E|$, and $M$. 

\section{Weighted Path Finding\label{sec:weighted_path_finding}}

Here we introduce our \emph{weighted path finding} scheme for solving\emph{
}\eqref{eq:intro:two-sided}. First we introduce the weighted central
path (Section~\ref{subsec:weighted_central_path}) and provide key
properties of the path (Section~\ref{subsec:centrality}) and weight
functions (Section~\ref{subsec:weight_function}). Assuming a weight
function (shown to exist in Section~\ref{sec:weight-function}) we
then provide the main lemmas we need for an $\tilde{O}(\sqrt{\rank(\ma)}\log(U/\varepsilon))$
iteration weighted path following algorithm for \eqref{eq:intro:two-sided}.
In Section~\ref{subsec:weighted-path:t-delta}, \ref{subsec:weighted-path:x-delta}
and \ref{subsec:weighted-path:w-delta} we study the effect of changing
the path parameter, the point, and the weights, and in Section~\ref{subsec:weighted-path:centering}
we give our main subroutine for following the path.

\subsection{Preliminaries\label{sec:path:prelim}}

Recall that our goal is to efficiently solve \eqref{eq:intro:two-sided}
repeated below

\[
\min_{\begin{array}{c}
\vx\in\Rm~:~\ma^{\top}\vx=\vb\\
\forall i\in[m]~:~l_{i}\leq x_{i}\leq u_{i}
\end{array}}\vc^{\top}\vx\,.
\]
Here $\ma\in\Rmn$, $\vb\in\Rn$, $\vc\in\Rm$, $l_{i}\in\R\cup\{-\infty\}$,
and $u_{i}\in\R\cup\{+\infty\}$ and we assume that $\ma$ is non-degenerate,
that $\dom(x_{i})\defeq\{x\,:\,l_{i}<x<u_{i}\}$ is neither the empty
set or the entire real line for all $i\in[m]$, and the interior of
the polytope, $\dInterior\defeq\{\vx\in\Rm~:~\ma^{\top}\vx=\vb,l_{i}<x_{i}<u_{i}\}$
is non-empty.

Rather than working directly with the different domains of the $x_{i}$
we take a slightly more general approach and let $\phi_{i}:\dom(x_{i})\rightarrow\R$
for all $i\in[m]$ denote a 1-self-concordant barrier function for
$\dom(x_{i})$ (See Definition~\ref{def:self_concordance}). In the
remainder of the paper we will simply leverage that each $\phi_{i}$
is a $1$-self-concordant barrier for each of the $\dom(\phi_{i})$
and not use any further structure about the barriers or the domains.
It is easy to show that such $\phi_{i}$ exist and for completeness,
below we provide an explicit 1-self-concordant barrier function for
each possible $\dom(x_{i})$:
\begin{itemize}
\item \emph{Case (1):}\textbf{\emph{ }}\emph{$l_{i}$ finite and $u_{i}=+\infty$}:
We use a\emph{ log barrier }defined as $\phi_{i}(x)\defeq-\log(x-l_{i})$.
Here
\[
\phi_{i}'(x)=-\frac{1}{x-l_{i}}\enspace\text{,}\enspace\phi_{i}''(x)=\frac{1}{(x-l_{i})^{2}}\enspace\text{, and}\enspace\phi_{i}'''(x)=-\frac{2}{(x-l_{i})^{3}}
\]
and therefore clearly $|\phi_{i}'''(x)|=2(\phi_{i}''(x))^{3/2}$ ,
$|\phi_{i}'(x)|=\sqrt{\phi''_{i}(x)}$, and $\lim_{x\rightarrow l_{i}^{+}}\phi_{i}(x)=+\infty.$ 
\item \emph{Case (2): $l_{i}=-\infty$ and $u_{i}$ finite}: We use a\emph{
log barrier }defined as $\phi_{i}(x)\defeq-\log(u_{i}-x)$. Here
\[
\phi_{i}'(x)=\frac{1}{u_{i}-x}\enspace\text{,}\enspace\phi_{i}''(x)=\frac{1}{(u_{i}-x)^{2}}\enspace\text{, and}\enspace\phi_{i}'''(x)=-\frac{2}{(u_{i}-x)^{3}}
\]
and therefore clearly $|\phi_{i}'''(x)|=2(\phi_{i}''(x))^{3/2}$,
$|\phi_{i}'(x)|=\sqrt{\phi''_{i}(x)}$, and $\lim_{x\rightarrow u_{i}^{-}}\phi_{i}(x)=+\infty.$ 
\item \emph{Case (3): $l_{i}$ finite and $u_{i}$ finite}: We use a\emph{
trigonometric barrier }defined as $\phi_{i}(x)\defeq-\log\cos(a_{i}x+b_{i})$
for $a_{i}=\frac{\pi}{u_{i}-l_{i}}$ and $b_{i}=-\frac{\pi}{2}\frac{u_{i}+l_{i}}{u_{i}-l_{i}}$.
As $x\rightarrow u_{i}^{-}$ we have $a_{i}x+b_{i}\rightarrow\frac{\pi}{2}$
and as $x\rightarrow l_{i}^{+}$ we have $a_{i}x+b_{i}\rightarrow\frac{-\pi}{2}$
and therefore, in both cases $\phi_{i}(x)\rightarrow+\infty.$ Further,
\emph{
\[
\phi_{i}'(x)=a_{i}\tan\left(a_{i}x+b_{i}\right)\enspace,\enspace\phi_{i}''(x)=\frac{a_{i}^{2}}{\cos^{2}(a_{i}x+b_{i})}\enspace\text{, and}\enspace\phi_{i}'''=\frac{2a_{i}^{3}\sin(a_{i}x+b_{i})}{\cos^{3}(a_{i}x+b_{i})}.
\]
}Therefore, $|\phi_{i}'(x)|\leq a_{i}/\left|\cos\left(a_{i}x+b_{i}\right)\right|=\sqrt{\phi_{i}''(x)}$
and we have
\[
\left|\phi_{i}'''(x)\right|=\left|\frac{2a_{i}^{3}\sin(a_{i}x+b_{i})}{\cos^{3}(a_{i}x+b_{i})}\right|\leq\frac{2a_{i}^{3}}{|\cos(a_{i}x+b_{i})|^{3}}=2(\phi''(x))^{3/2}\,.
\]
\end{itemize}
While there is rich theory regarding self-concordance we will primarily
use only following two lemmas regarding $\phi_{i}$. Lemma~\ref{lem:gen:phi_properties_sim}
bounds the change in the Hessian of $\phi_{i}$ Lemma~\ref{lem:gen:phi_properties_force}
bounds the gradient of $\phi_{i}$.
\begin{lem}
[{\cite[Theorem 4.1.6]{Nesterov2003}}]\label{lem:gen:phi_properties_sim}
If $s\in\dom(\phi_{i})$ for $i\in[m]$, and $r\defeq\sqrt{\phi''_{i}(s)}\left|s-t\right|<1$
then $t\in\dom(\phi_{i})$ and $(1-r)\sqrt{\phi''_{i}(s)}\leq\sqrt{\phi''_{i}(t)}\leq(1-r)^{-1}\sqrt{\phi''_{i}(s)}$.
Therefore $\sqrt{\phi''_{i}(s)}\geq1/U$ where $U$ is the diameter
of $\dom(\phi_{i})$.
\end{lem}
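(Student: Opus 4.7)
The plan is to prove this standard self-concordance stability lemma by analyzing how $\phi_i''$ changes along the segment from $s$ to $t$, using the self-concordance inequality $|\phi_i'''(x)| \leq 2(\phi_i''(x))^{3/2}$. The key trick is to study the reciprocal of the square root of $\phi_i''$, since this is exactly the quantity that becomes Lipschitz under the self-concordance bound.

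First, assume momentarily that the whole segment $s + \tau(t-s)$ for $\tau \in [0,1]$ lies in $\dom(\phi_i)$. Define $g(\tau) \defeq \phi_i''(s+\tau(t-s))^{-1/2}$. By the chain rule and the self-concordance condition,
\[
|g'(\tau)| = \tfrac{1}{2}\,|\phi_i'''(s+\tau(t-s))|\cdot |t-s|\cdot \phi_i''(s+\tau(t-s))^{-3/2} \leq |t-s|.
\]
Integrating from $0$ to any $\tau \in [0,1]$ yields $|g(\tau)-g(0)| \leq \tau|t-s|$, and in particular $g(\tau) \geq g(0) - |t-s| = (1-r)/\sqrt{\phi_i''(s)} > 0$ since $r<1$. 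Equivalently, $\sqrt{\phi_i''(s+\tau(t-s))} \leq (1-r)^{-1}\sqrt{\phi_i''(s)}$ along the segment.

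The next step is to remove the assumption that the segment stays in the domain. Let $\tau^\star = \sup\{\tau \in [0,1] : s + \tau(t-s) \in \dom(\phi_i)\}$. The argument above applies verbatim on $[0,\tau^\star)$, so $\phi_i''$ remains uniformly bounded above there. If $\tau^\star < 1$, then $s+\tau(t-s)$ approaches the boundary of $\dom(\phi_i)$ as $\tau \uparrow \tau^\star$, so (by the barrier property $\phi_i \to \infty$ at the boundary, together with the standard consequence that $\phi_i''$ also blows up) we would get $\phi_i''(s+\tau(t-s)) \to \infty$, contradicting the uniform bound. Hence $\tau^\star = 1$ and $t \in \dom(\phi_i)$, giving the upper bound. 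The lower bound $(1-r)\sqrt{\phi_i''(s)} \leq \sqrt{\phi_i''(t)}$ follows by applying the same argument in the reverse direction from $t$ to $s$, using that the roles are symmetric once both endpoints are known to be interior points.

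Finally, for the claim $\sqrt{\phi_i''(s)} \geq 1/U$, I would argue by contradiction: if $\sqrt{\phi_i''(s)} < 1/U$, then for every $t \in \dom(\phi_i)$ we would have $r = \sqrt{\phi_i''(s)}\,|s-t| < 1$ (since $|s-t| \leq U$), so the upper bound just proved forces $\sqrt{\phi_i''(t)} \leq (1-r)^{-1}\sqrt{\phi_i''(s)}$ to remain bounded even as $t$ approaches the boundary, contradicting the blow-up of $\phi_i''$ at the boundary. The only subtle point in the whole proof is handling the extension to the full segment, i.e.\ using the a priori bound on $\phi_i''$ to rule out $\tau^\star < 1$; once that continuation argument is in place, everything else is a direct consequence of the one-dimensional self-concordance inequality.
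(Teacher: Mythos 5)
The paper does not prove this lemma; it simply cites {\cite[Theorem 4.1.6]{Nesterov2003}}, so I am comparing your argument to the standard one. Your first paragraph is exactly the standard core: defining $g(\tau)=\phi_i''(s+\tau(t-s))^{-1/2}$ and showing $|g'(\tau)|\leq|t-s|$ via the self-concordance inequality is precisely how Nesterov's proof proceeds. The soft spot is in your continuation argument. You invoke ``the standard consequence that $\phi_i''$ also blows up'' at the boundary, but this is not a direct consequence of the barrier property $\phi_i\to\infty$; it is a fact that itself requires an argument (indeed, it is usually derived as a \emph{corollary} of exactly the lemma you are proving). The fact is true and can be proved without circularity --- since $g$ is Lipschitz it extends continuously to the boundary, and if its limit were positive then $\phi_i''$, $\phi_i'$, and hence $\phi_i$ would all remain bounded, contradicting the barrier property --- but notice that this argument already contains the cleaner step you should be using directly. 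The standard route is: from $\phi_i''(s+\tau(t-s))\leq(1-r\tau)^{-2}\phi_i''(s)$ for $\tau\in[0,\tau^\star)$, integrate twice to conclude that $\phi_i'$ and $\phi_i$ remain bounded on that interval whenever $\tau^\star<1/r$; this directly contradicts $\phi_i\to\infty$ at the boundary if $\tau^\star<1$, so $\tau^\star\geq 1$ (in fact $\geq 1/r>1$). This avoids appealing to Hessian blow-up entirely.

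The final claim can also be reached more directly than your contradiction argument: the first part of the lemma already shows the open interval of radius $1/\sqrt{\phi_i''(s)}$ about $s$ lies inside $\dom(\phi_i)$, and a sub-interval cannot have length exceeding the diameter $U$ of the domain, so $2/\sqrt{\phi_i''(s)}\leq U$, i.e.\ $\sqrt{\phi_i''(s)}\geq 2/U\geq 1/U$. Your argument by contradiction reaches the same (weaker) conclusion but again routes through the Hessian-blow-up fact, which is more machinery than needed.
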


\begin{lem}
[{\cite[Theorem 4.2.4]{Nesterov2003}}]\label{lem:gen:phi_properties_force}
$\phi'_{i}(x)\cdot(y-x)\leq1$ for all $x,y\in\dom(\phi_{i})$ and
$i\in[m]$.
\end{lem}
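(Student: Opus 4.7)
The plan is to verify the inequality case-by-case using the three explicit $1$-self-concordant barriers constructed just before the lemma statement. In each case the derivative $\phi_i'$ has a simple closed form, and the inequality reduces to an elementary algebraic or trigonometric fact that can be checked directly.

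For the two log-barrier cases, $\phi_i(x) = -\log(x - l_i)$ and $\phi_i(x) = -\log(u_i - x)$, the computation is immediate:
\[
\phi_i'(x)(y - x) \;=\; -\frac{y - x}{x - l_i} \;=\; 1 - \frac{y - l_i}{x - l_i}, \qquad \text{respectively} \qquad \phi_i'(x)(y - x) \;=\; 1 - \frac{u_i - y}{u_i - x},
\]
and in both cases the subtracted ratio is strictly positive because $x, y \in \dom(\phi_i)$ forces both numerator and denominator to be positive; the right-hand side is therefore at most $1$.

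The trigonometric case requires more care. Setting $s \defeq a_i x + b_i$ and $t \defeq a_i y + b_i$, both of which lie in $(-\pi/2, \pi/2)$, the claim reduces to $\tan(s)(t - s) \leq 1$ on this square. By the symmetry $(s, t) \mapsto (-s, -t)$ it suffices to take $s \geq 0$. If $t \leq s$ then $\tan(s) \geq 0$ and $t - s \leq 0$, so the product is nonpositive. If $t > s$, bound $\tan(s)(t - s) < \tan(s)(\pi/2 - s)$, and it remains to show $\tan(s)(\pi/2 - s) \leq 1$ on $[0, \pi/2)$. I would reduce this to showing nonnegativity of $h(s) \defeq \cos(s) - \sin(s)(\pi/2 - s)$, which is verified by computing $h(0) = 1$, $h(\pi/2) = 0$, and $h'(s) = -\cos(s)(\pi/2 - s) \leq 0$, so $h$ decreases monotonically to $0$ and stays nonnegative throughout.

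The only real obstacle is finding the right auxiliary function for the trigonometric case; once $h$ is identified, its monotonicity is one line. Alternatively, one can bypass the case analysis and invoke the general theory of $\nu$-self-concordant barriers: combine the third-order bound $|\phi_i'(x)| \leq \sqrt{\phi_i''(x)}$ with the Dikin-ball containment supplied by Lemma~\ref{lem:gen:phi_properties_sim}, which gives $\sqrt{\phi_i''(x)} \geq 1/\mathrm{dist}(x, \partial\dom(\phi_i))$, and then observe that $(y - x)$ is bounded by that distance in the appropriate direction. This is essentially the route taken in the cited Nesterov proof and has the advantage of applying to any $\nu$-self-concordant barrier, not just the three explicit ones at hand.
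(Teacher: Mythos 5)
Your case-by-case verification is correct in all three cases, including the trigonometric one: the identity $\phi_i'(x)(y-x) = 1 - (y-l_i)/(x-l_i)$ (resp.\ $1 - (u_i-y)/(u_i-x)$) and the auxiliary function $h(s) = \cos(s) - \sin(s)(\pi/2 - s)$ with $h' = -\cos(s)(\pi/2-s) \le 0$ both check out. This differs from what the paper does, which is simply to cite Nesterov's Theorem~4.2.4: the paper has already verified that each $\phi_i$ is a $1$-self-concordant barrier, and that theorem gives $\phi_i'(x)(y-x) \le \nu = 1$ for any $\nu$-self-concordant barrier without reference to the explicit formulas. Your route is more elementary and self-contained; the citation route is shorter and does not need to be redone if the set of barriers changes.

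However, your parenthetical sketch of ``the general route'' does not work as written, and it is not what Nesterov's proof does. You propose to bound $\phi_i'(x)(y-x) \le |\phi_i'(x)|\,|y-x| \le \sqrt{\phi_i''(x)}\,|y-x|$ and then argue $|y-x| \le 1/\sqrt{\phi_i''(x)}$ from the Dikin-ball containment. But the Dikin-ball property and the consequence recorded in Lemma~\ref{lem:gen:phi_properties_sim} give $\sqrt{\phi_i''(x)} \ge 1/\mathrm{dist}(x,\partial\dom(\phi_i))$, i.e.\ the domain \emph{contains} the Dikin ball and hence can be strictly larger; this provides no upper bound on $|y-x|$ of the form $1/\sqrt{\phi_i''(x)}$. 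Indeed, for the trigonometric barrier at the midpoint $x = (l_i+u_i)/2$ one has $1/\sqrt{\phi_i''(x)} = (u_i-l_i)/\pi$ while admissible $y$ can be as far as $(u_i-l_i)/2$ away, so the asserted distance bound fails (the lemma still holds there, but only because $\phi_i'(x) = 0$). The actual proof of the cited theorem is a Gronwall-type argument: set $\psi(t) \defeq \phi_i'(x+t(y-x))(y-x)$, observe $\psi'(t) = \phi_i''(x+t(y-x))(y-x)^2 \ge \psi(t)^2/\nu$ by the barrier inequality, and integrate the differential inequality over $[0,1]$ to conclude $\psi(0) \le \nu$. If you want to invoke the general theory, that is the argument to reproduce; the Dikin-ball shortcut does not close.
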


\subsection{The Weighted Central Path\label{subsec:weighted_central_path}}

Our path-finding algorithm maintains a feasible point $\vx\in\dInterior$,
weights $w\in\R_{>0}^{m}$, and minimizes the following \emph{penalized
objective function }for increasing $t$ and small $w$
\begin{equation}
\min_{\ma^{\top}\vx=\vb}f_{t}(\vx,w)\enspace\text{where }\enspace f_{t}(\vx,w)\defeq t\cdot\vc^{\top}\vx+\sum_{i\in[m]}w_{i}\phi_{i}(\vx_{i})\,.\label{eq:penalized_objective}
\end{equation}
For every fixed set of \emph{weights}, $w\in\R_{>0}^{m}$ the set
of points $\vx_{w}(t)=\argmin_{\vx\in\dInterior}f_{t}(\vx,w)$ for
$t\in[0,\infty)$ form a path through the interior of the polytope
that we call the \emph{weighted central path}. We call $\vx_{w}(0)$
a \emph{weighted center} of $\interior$ and note that $\lim_{t\rightarrow\infty}\vx_{w}(t)$
is a solution to \eqref{eq:intro:two-sided} (Lemma \ref{lem:weighted_path:duality_gap}). 

While all weighted central paths converge to a solution of the linear
program, different paths may have different algebraic properties which
either improve or impair the convergence of a path following scheme.
Consequently, our algorithm alternates between advancing down a central
path (i.e. increasing $t$), moving closer to the weighted central
path (i.e. updating $\vx$), and picking a better path (i.e. updating
the weights $w$). More formally, we assume we have a feasible point
$\{\vx,w\}\in\dFull$ and a weight function $\vg(\vx):\dInterior\rightarrow\R_{>0}^{m}$,
such that for any point $\vx\in\R_{>0}^{m}$ the function $\vg(\vx)$
returns a good set of weights that suggest a possibly better weighted
path. Our algorithm then repeats the following: (1) if $\vx$ close
to $\argmin_{\vy\in\Omega}f_{t}\left(\vy,w\right)$, then increase
$t$ (2) otherwise, use projected Newton step to update $\vx$ and
move $w$ closer to $\vg(\vx)$.

In the remainder of this section we present how we measure both the
quality of a current feasible point $\{\vx,w\}\in\dFull$, the quality
of the weight function, and with a weight function control centrality.
In Section~\ref{subsec:centrality} we derive and present both how
we measure how close $\{\vx,w\}$ is to the weighted central path
and the step we take to improve this \emph{centrality} and in Section~\ref{subsec:weight_function}
we present how we measure the quality of a weight function, i.e. how
good the weighted paths it finds are. The remaining subsection analyze
controlling centrality under changes to $x$, $w$, and $t$.

\subsection{Measuring Centrality\label{subsec:centrality}}

Here we explain how we measure the distance from $\vx$ to the minimum
of $f_{t}\left(\vx,\vWeight\right)$ for fixed $\vWeight$, denoted
$\delta_{t}(\vx,\vWeight)$. As $\delta_{t}(x,w)$ measures the proximity
of $\vx$ to the weighted central path, we call it a \emph{centrality}.
measure of $x$ and $w$. To motivate $\delta_{t}(\vx,\vWeight)$
we first compute a projected Newton step for $\vx$. For all $\vx\in\dInterior$,
we define $\vphi(\vx)\in\Rm$ by $\vphi(\vx)_{i}=\phi_{i}(\vx_{i})$
for $i\in[m]$, define $\vphi'(\vx)$, $\vphi''(\vx)$, and $\vphi'''(\vx)$
analogously, and let $\mPhi',\mPhi'',\mPhi'''$ denote their associated
diagonal matrices. This yields
\[
\grad_{x}f_{t}(\vx,w)=t\cdot\vc+w\vphi'(\vx)\enspace\text{ and }\enspace\grad_{xx}^{2}f_{t}(\vx,w)=\mw\mPhi''(\vx)\,.
\]
Lemma~\ref{lem:newton_step} (proved in the appendix) shows that
a Newton step for $\vx$ is given by
\begin{align}
\vh_{t}(\vx,\vWeight) & =-\left(\mi-\left(\mw\mPhi''(\vx)\right)^{-1}\ma(\ma^{\top}\left(\mw\mPhi''(\vx)\right)^{-1}\ma)^{-1}\ma^{\top}\right)\left(\mw\mPhi''(\vx)\right)^{-1}\grad_{x}f_{t}(\vx,\vWeight)\nonumber \\
 & =-\mPhi''(\vx)^{-1/2}\Pxw\mWeight^{-1}\mPhi''(\vx)^{-1/2}\grad_{x}f_{t}(\vx,\vWeight)\label{eq:newton_step}
\end{align}
where 
\begin{equation}
\mProj_{\vx,\vWeight}\defeq\iMatrix-\mWeight^{-1}\ma_{x}\left(\ma_{x}^{\top}\mWeight^{-1}\ma_{x}\right)^{-1}\ma_{x}^{\top}\enspace\text{ for }\enspace\ma_{x}\defeq\mPhi''(\vx)^{-1/2}\ma\,.\label{eq:def_Pxw}
\end{equation}
As with standard convergence analysis of interior point methods, we
wish to keep the Newton step size in the Hessian norm, i.e. $\norm{\vh_{t}(\vx,\vWeight)}_{w\phi''(\vx)}=\norm{\sqrt{\vphi''(\vx)}\vh_{t}(\vx,\vWeight)}_{w}$,
small and the multiplicative change in the Hessian, $\norm{\sqrt{\phi''(\vx)}\vh_{t}(\vx,\vWeight)}_{\infty}$,
small. While in standard logarithmic barrier analysis, i.e. $w_{i}=\onesVec$
for all $i$, we can bound the multiplicative change by the change
in the hessian norm (since $\|\cdot\|_{\infty}\leq\|\cdot\|_{2}$),
here we would like to use small weights and this comparison would
be insufficient.

To track both these quantities simultaneously, we define the \emph{mixed
norm} for all $\vy\in\Rm$ by
\begin{equation}
\mixedNorm{\vy}w\defeq\norm{\vy}_{\infty}+\cnorm\norm{\vy}_{w}\label{eq:mixed_norm}
\end{equation}
for $\cnorm>0$ defined in Definition~\ref{def:gen:weight_function}.
Note that $\mixedNorm{\cdot}{\vWeight}$ is indeed a norm for $\vWeight\in\R_{>0}^{m}$
as in this case both $\normInf{\cdot}$and $\norm{\cdot}_{w}$ are
norms. However, rather than measuring centrality by the quantity 
\[
\mixedNorm{\sqrt{\vphi''(\vx)}\vh_{t}(\vx,\vWeight)}w=\mixedNormFull{\mProj_{\vx,\vWeight}\left(\frac{\grad_{x}f_{t}(\vx,\vWeight)}{\vWeight\sqrt{\phi''(x)}}\right)}w
\]
we instead find it more convenient to use the following idealized
form
\[
\delta_{t}(\vx,\vWeight)\defeq\min_{\veta\in\Rn}\normFull{\frac{\grad_{x}f_{t}(\vx,\vWeight)-\ma\veta}{\vWeight\sqrt{\vphi''(\vx)}}}_{w+\infty}.
\]
This definition is justified by the following lemma which shows that
these two quantities differ by at most a multiplicative factor of
$\mixedNorm{\mProj_{\vx,\vWeight}}w$.
\begin{lem}
\label{lem:max_flow:projection_lemma} For any norm $\norm{\cdot}$
and $\norm{\vy}_{Q}\defeq\min_{\veta\in\Rn}\normFull{\vy-\frac{\ma\veta}{\vWeight\sqrt{\phi''(\vx)}}}$,
we have 
\[
\norm{\vy}_{Q}\leq\normFull{\Pxw\vy}\leq\normFull{\Pxw}\cdot\normFull{\vy}_{Q}
\]
 and therefore for all $\{\vx,\vWeight\}\in\dFull$ we have\textup{
\begin{equation}
\delta_{t}(\vx,\vWeight)\leq\mixedNorm{\sqrt{\vphi''(\vx)}\vh_{t}(\vx,\vWeight)}w\leq\mixedNorm{\mProj_{\vx,\vWeight}}w\cdot\delta_{t}(\vx,\vWeight).\label{eq:centrality_equivalence}
\end{equation}
}
\end{lem}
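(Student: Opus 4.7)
The plan is to first prove the abstract norm inequality by exploiting that $\Pxw$ is an oblique projection whose kernel contains exactly the subspace $\{\mWeight^{-1}\ma_x\veta : \veta \in \Rn\}$, and then to specialize it to the centrality setting by verifying that $\Pxw\vy$ agrees (up to sign) with $\sqrt{\vphi''(\vx)}\vh_t(\vx,\vWeight)$ when $\vy = \grad_x f_t(\vx,\vWeight)/(\vWeight\sqrt{\vphi''(\vx)})$.

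For the abstract part, I would first observe that if $M \defeq \mWeight^{-1}\ma_x(\ma_x^{\top}\mWeight^{-1}\ma_x)^{-1}\ma_x^{\top}$, then a direct computation gives $M^2=M$ and $M(\mWeight^{-1}\ma_x\veta) = \mWeight^{-1}\ma_x\veta$ for every $\veta\in\Rn$; hence $\Pxw = \iMatrix - M$ is idempotent and $\Pxw \mWeight^{-1}\ma_x = 0$. Consequently, for every $\veta$ one has $\Pxw\vy = \Pxw\bigl(\vy - \mWeight^{-1}\ma_x\veta\bigr)$, and taking the induced operator norm yields $\|\Pxw\vy\| \leq \|\Pxw\|\cdot\|\vy - \mWeight^{-1}\ma_x\veta\|$. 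Minimizing over $\veta$, and noting that $\ma\veta/(\vWeight\sqrt{\phi''(\vx)}) = \mWeight^{-1}\ma_x\veta$, gives $\|\Pxw\vy\|\leq \|\Pxw\|\cdot \|\vy\|_Q$. For the reverse bound, writing the standard decomposition $\vy = \Pxw\vy + (\iMatrix-\Pxw)\vy$ with $(\iMatrix-\Pxw)\vy = \mWeight^{-1}\ma_x\veta^{*}$ for $\veta^{*} \defeq (\ma_x^{\top}\mWeight^{-1}\ma_x)^{-1}\ma_x^{\top}\vy$ shows that the choice $\veta=\veta^{*}$ in the minimization achieves value $\|\Pxw\vy\|$, so $\|\vy\|_Q \leq \|\Pxw\vy\|$.

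For the centrality consequence, I would substitute $\vy \defeq \grad_x f_t(\vx,\vWeight)/(\vWeight\sqrt{\vphi''(\vx)})$ and use equation (3.3), which factors the Newton step as $\vh_t(\vx,\vWeight) = -\mPhi''(\vx)^{-1/2}\Pxw\mWeight^{-1}\mPhi''(\vx)^{-1/2}\grad_x f_t(\vx,\vWeight)$. Multiplying on the left by $\mPhi''(\vx)^{1/2}$ gives $\sqrt{\vphi''(\vx)}\,\vh_t(\vx,\vWeight) = -\Pxw\vy$, so $\mixedNorm{\sqrt{\vphi''(\vx)}\vh_t(\vx,\vWeight)}{w} = \mixedNorm{\Pxw\vy}{w}$. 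Applied with the specific norm $\|\cdot\|_{w+\infty}$, the abstract inequality becomes exactly \eqref{eq:centrality_equivalence}, since by definition $\delta_t(\vx,\vWeight) = \|\vy\|_Q$ in this norm.

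The only moving part is keeping the algebra of $\mWeight^{-1}\ma_x$ straight, since this operator appears both inside the minimization defining $\|\cdot\|_Q$ (via the identification $\ma\veta/(\vWeight\sqrt{\vphi''(\vx)}) = \mWeight^{-1}\ma_x\veta$) and as the object whose image $\Pxw$ annihilates; once that identification is made the argument is essentially the classical fact that any idempotent achieves the minimum-norm residual up to its operator norm. There is no genuine obstacle beyond this bookkeeping, and no appeal to self-concordance is required — the lemma is purely linear-algebraic.
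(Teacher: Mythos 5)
Your proof is correct and follows the same route as the paper: the lower bound comes from recognizing $\Pxw\vy$ as a feasible point in the minimization (since $\Pxw\vy = \vy - \mWeight^{-1}\ma_x\veta_y$ for a specific $\veta_y$), and the upper bound comes from $\Pxw\mWeight^{-1}\ma_x = 0$ so that $\Pxw\vy = \Pxw(\vy - \mWeight^{-1}\ma_x\veta)$ for the minimizing $\veta$, bounded by the operator norm. Your write-up spells out the idempotence and the explicit form of $\veta^*$ more fully than the paper, but this is elaboration, not a different argument.
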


\begin{proof}
By definition $\Pxw\vy=\vy-\frac{\ma\veta_{y}}{\vWeight\sqrt{\phi''(\vx)}}$
for some $\veta_{y}\in\Rn$. Consequently,
\begin{eqnarray*}
\norm{\vy}_{Q} & = & \min_{\veta\in\Rn}\normFull{\vy-\frac{\ma\eta}{\vWeight\sqrt{\phi''(\vx)}}}\leq\normFull{\Pxw\vy}.
\end{eqnarray*}
Further, letting $\eta_{q}$ be such that $\norm{\vy}_{Q}=\normFull{\vy-\frac{\ma\veta_{q}}{\vWeight\sqrt{\phi''(\vx)}}}$
and noting $\Pxw\mWeight^{-1}(\mPhi'')^{-1/2}\ma=\mZero$ yields
\begin{eqnarray*}
\normFull{\Pxw\vy} & = & \normFull{\Pxw\left(\vy-\frac{\ma\veta_{q}}{\vWeight\sqrt{\phi''}}\right)}\leq\normFull{\Pxw}\cdot\normFull{\vy-\frac{\ma\veta_{q}}{\vWeight\sqrt{\phi''}}}=\normFull{\Pxw}\cdot\normFull{\vy}_{Q}.
\end{eqnarray*}
\end{proof}
We summarize this section with the following definition.
\begin{defn}[Centrality Measure]
\label{Def:centrality_measure} For $\{\vx,\vWeight\}\in\dFull$
and $t\geq0$, we let $\vh_{t}(\vx,\vWeight)$ denote the \emph{projected
newton step} for $\vx$ on the penalized objective $f_{t}$ given
by
\[
\vh_{t}(\vx,\vWeight)\defeq-\frac{1}{\sqrt{\phi''(\vx)}}\mProj_{\vx,\vWeight}\left(\frac{\grad_{x}f_{t}(\vx,\vWeight)}{\vWeight\sqrt{\phi''(\vx)}}\right)
\]
where $\mProj_{\vx,\vWeight}$ is defined in \eqref{eq:def_Pxw}.
We measure the \emph{centrality} of $\{\vx,\vWeight\}$ by
\begin{equation}
\delta_{t}(\vx,\vWeight)\defeq\min_{\veta\in\Rn}\normFull{\frac{\grad_{x}f_{t}(\vx,\vWeight)-\ma\veta}{\vWeight\sqrt{\vphi''(\vx)}}}_{\vWeight+\infty}\label{eq:centrality_definition}
\end{equation}
where for all $\vy\in\Rm$ we let $\mixedNorm{\vy}{\vWeight}\defeq\norm{\vy}_{\infty}+\cnorm\norm{\vy}_{\mWeight}$
for $\cnorm>0$ defined in Definition~\ref{def:gen:weight_function}.
\end{defn}

\subsection{The Weight Function\label{subsec:weight_function}}

With the Newton step and centrality conditions defined, the specification
of our algorithm becomes cleaerr. Our algorithm simply repeatedly
(1) increases $t$ provided $\delta_{t}(\vx,\vWeight)$ is small and
(2) decreases $\delta_{t}(\vx,\vWeight)$ by setting $\next{\vx}\leftarrow\vx+\vh_{t}(\vx,\vWeight)$
and (3) moving $\next{\vWeight}$ towards $\vg(\next{\vx})$ for some
\emph{weight function} $\vg(\vx):\dInterior\rightarrow\R_{>0}^{m}$.
To prove this algorithm converges, we need to show what happens to
$\delta_{t}\left(\vx,\vWeight\right)$ when we change $t$, $\vx$,
$\vWeight$. At the heart of this paper is understanding what conditions
we need to impose on the weight function $g$ so that we can bound
this change in $\delta_{t}(\vx,\vWeight)$ and hence achieve a fast
convergent rate. In Lemma~\ref{lem:gen:t_step} we show that the
effect of changing $t$ on $\delta_{t}$ is bounded by $\cnorm$ and
$\norm{\vg(\vx)}_{1}$, in Lemma~\ref{lem:gen:x_progress} we show
that the effect that a Newton Step on $\vx$ has on $\delta_{t}$
is bounded by $\mixedNorm{\mProj_{\vx,\vg(\vx)}}{\vg(\vx)}$, and
in Lemma~\ref{lem:gen:w_step} and \ref{lem:gen:w_change} we show
the change of $\vWeight$ as $\vg(\vx)$ changes is bounded by $\mixedNorm{\fmWeight(\vx)^{-1}\mj_{g}(x)(\mPhi''(\vx))^{-1/2}}{\vg(\vx)}$. 

For the remainder of the paper we assume we have a weight function
$\vg(\vx):\dInterior\rightarrow\R_{>0}^{m}$ and make the following
assumptions regarding our weight function. In Section~\ref{sec:weight-function}
we prove that one exists. 
\begin{defn}
[Weight Function]\label{def:gen:weight_function} Differentiable
$\fvWeight:\dInterior\rightarrow\R_{>0}^{m}$ is a \emph{$(\cWeightSize,\cWeightStab$,$c_{k}$)
-weight function} if the following hold for all $\vx\in\dInterior$
and $i\in[m]$:

\begin{itemize}
\item The \emph{size, $c_{1}$, }satisfies \emph{$\cWeightSize\geq\max\{1,\normOne{\fvWeight(\vx)}\}$.
}This bounds how quickly centrality changes as $t$ changes.
\item The \emph{sensitivity,} $c_{s}$, satisfies $c_{s}\geq\cordVec i^{\top}\mg(x)^{-1}\ma_{x}\left(\ma_{x}^{\top}\mg(x)^{-1}\ma_{x}\right)^{-1}\ma_{x}^{\top}\mg(x)^{-1}\cordVec i$.
This bounds how quickly Hessian change as $x$ changes.
\item The\emph{ consistency, }$c_{k}$,\emph{ }satisfies $\mixedNorm{\fmWeight(\vx)^{-1}\mj_{g}(x)(\mPhi''(\vx))^{-1/2}}{\vg(\vx)}\leq1-c_{k}^{-1}<1$.
This bounds how much the weights change as $x$ changes, thereby governing
how consistent the weights are with changes to $x$ along the weighted
central path.
\end{itemize}
Through we assume we have such a weight function and define $\cnorm\defeq24\sqrt{c_{s}}c_{k}$.
\end{defn}

To motivate slack sensitivity, we show that it bounds $\mixedNorm{\mProj_{\vx,\vWeight}}{\vWeight}$.
This is used in Lemma~\ref{lem:gen:x_progress}.
\begin{lem}
\label{lem:c_gamma} For any $\vWeight$ such that $\frac{4}{5}\vg(x)\leq\vWeight\leq\frac{5}{4}\vg(x)$,
we have that 
\[
\mixedNorm{\mProj_{\vx,\vWeight}}{\vWeight}\leq\cWeightStab\quad\text{where}\quad\cWeightStab\defeq1+\frac{\sqrt{2c_{s}}}{\cnorm}\leq1+\frac{1}{16c_{k}}.
\]
\end{lem}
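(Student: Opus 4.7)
The plan is to bound the induced mixed-norm operator norm of $\mProj_{\vx,\vWeight}$ by splitting the projection into its two natural pieces. Write $\mProj_{\vx,\vWeight} = \mi - \mq$ with $\mq \defeq \mWeight^{-1}\ma_x(\ma_x^{\top}\mWeight^{-1}\ma_x)^{-1}\ma_x^{\top}$, and observe that $\mWeight^{1/2}\mProj_{\vx,\vWeight}\mWeight^{-1/2} = \mi - \mr$, where $\mr \defeq \mWeight^{-1/2}\ma_x(\ma_x^{\top}\mWeight^{-1}\ma_x)^{-1}\ma_x^{\top}\mWeight^{-1/2}$ is the orthogonal projection onto the range of $\mWeight^{-1/2}\ma_x$. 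In particular, $\|\mProj_{\vx,\vWeight}\vy\|_{\vWeight} = \|(\mi-\mr)\mWeight^{1/2}\vy\|_2 \leq \|\mWeight^{1/2}\vy\|_2 = \|\vy\|_{\vWeight}$, so the $\vWeight$-norm component of the mixed norm is controlled for free.

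For the $\ell_{\infty}$ component, the plan is to use the triangle inequality $\|\mProj_{\vx,\vWeight}\vy\|_{\infty} \leq \|\vy\|_{\infty} + \|\mq\vy\|_{\infty}$ and bound each coordinate of $\mq\vy$ by Cauchy--Schwarz. Writing $[\mq\vy]_i = \cordVec{i}^{\top}\mWeight^{-1/2}\mr\mWeight^{1/2}\vy = w_i^{-1/2}(\mr\cordVec{i})^{\top}(\mr\mWeight^{1/2}\vy)$ and using $\mr^{\top}\mr = \mr$ and $\|\mr\mWeight^{1/2}\vy\|_2 \leq \|\vy\|_{\vWeight}$, one obtains
\[
|[\mq\vy]_i|^2 \leq w_i^{-1}[\mr]_{ii}\cdot \|\vy\|_{\vWeight}^2.
\]

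The crux of the argument is to bound $w_i^{-1}[\mr]_{ii}$ by $c_{s}$ up to a small multiplicative loss using the closeness hypothesis $\tfrac{4}{5}\vg(\vx) \leq \vWeight \leq \tfrac{5}{4}\vg(\vx)$. A direct computation gives $w_i^{-1}[\mr]_{ii} = \cordVec{i}^{\top}\mWeight^{-1}\ma_x(\ma_x^{\top}\mWeight^{-1}\ma_x)^{-1}\ma_x^{\top}\mWeight^{-1}\cordVec{i}$, which is exactly the expression bounded by $c_{s}$ in Definition~\ref{def:gen:weight_function} but with $\mWeight$ in place of $\fmWeight(\vx)$. The hypothesis yields $\mWeight^{-1} \succeq \tfrac{4}{5}\fmWeight(\vx)^{-1}$ (hence $(\ma_x^{\top}\mWeight^{-1}\ma_x)^{-1} \preceq \tfrac{5}{4}(\ma_x^{\top}\fmWeight(\vx)^{-1}\ma_x)^{-1}$) and, separately, $w_i^{-1} \leq \tfrac{5}{4}g_i(\vx)^{-1}$ applied to each copy of the vector $\mWeight^{-1}\cordVec{i} = w_i^{-1}\cordVec{i}$. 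Chaining these three $5/4$ factors across the quadratic form yields $w_i^{-1}[\mr]_{ii} \leq (5/4)^3 c_{s} = \tfrac{125}{64}c_{s} \leq 2 c_{s}$.

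Combining the two parts gives $\mixedNorm{\mProj_{\vx,\vWeight}\vy}{\vWeight} \leq \|\vy\|_{\infty} + \sqrt{2c_{s}}\|\vy\|_{\vWeight} + \cnorm\|\vy\|_{\vWeight} \leq (1 + \sqrt{2c_{s}}/\cnorm)\mixedNorm{\vy}{\vWeight}$, establishing the first inequality. The second inequality then follows by plugging in $\cnorm = 24\sqrt{c_{s}}c_{k}$ to get $\sqrt{2c_{s}}/\cnorm = 1/(12\sqrt{2}\,c_{k}) \leq 1/(16c_{k})$. The only real subtlety is the bookkeeping in the third paragraph: verifying that the three $5/4$ factors accumulating in the quadratic form $w_i^{-1}[\mr]_{ii}$ multiply to a constant strictly below $2$, which is precisely what makes the clean bound $\sqrt{2c_{s}}$ (rather than some larger constant multiple) sufficient in the final estimate.
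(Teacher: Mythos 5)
Your proof is correct and follows essentially the same route as the paper's: you isolate the $\vWeight$-norm component (controlled for free because $\mWeight^{1/2}\mProj_{\vx,\vWeight}\mWeight^{-1/2}$ is $\iMatrix$ minus an orthogonal projection), bound the $\ell_\infty$ component via Cauchy--Schwarz on $\cordVec{i}^{\top}(\iMatrix-\mProj_{\vx,\vWeight})\vy$, and transfer the resulting quadratic form from $\mWeight$ to $\fmWeight(\vx)$ using the $4/5$--$5/4$ hypothesis. The paper packages the second step as the operator-norm bound $\normFull{\iMatrix-\mProj_{\vx,\vWeight}}_{\vWeight\rightarrow\infty}^{2}\leq\max_{i}\cordVec i^{\top}\mWeight^{-1}\ma_{x}(\ma_{x}^{\top}\mWeight^{-1}\ma_{x})^{-1}\ma_{x}^{\top}\mWeight^{-1}\cordVec i$ and then states the factor of $2$ without explicitly tracking it; you do the per-coordinate version and carry the $(5/4)^3=125/64<2$ bookkeeping explicitly, which is a slightly sharper accounting of the same estimate but not a different argument.
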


\begin{proof}
Letting $\normFull{\iMatrix-\mProj_{\vx,\vWeight}}_{w\rightarrow\infty}\defeq\max_{\norm z_{w}=1}\norm{(\iMatrix-\mProj_{\vx,\vWeight})z}_{\infty}$
we see that for any $y\in\Rm$, we have 
\begin{align}
\mixedNorm{\mProj_{\vx,\vWeight}\vy}{\vWeight} & =\normFull{\mProj_{\vx,\vWeight}\vy}_{\infty}+\cnorm\normFull{\mProj_{\vx,\vWeight}\vy}_{\vWeight}\leq\normFull{\vy}_{\infty}+\normFull{(\iMatrix-\mProj_{\vx,\vWeight})\vy}_{\infty}+\cnorm\normFull{\vy}_{\vWeight}\nonumber \\
 & \leq\normFull{\vy}_{\infty}+(\mixedNorm{\mi-\mProj_{x,w}}w+\cnorm)\normFull{\vy}_{\vWeight}\leq\left(1+\frac{\normFull{\iMatrix-\mProj_{\vx,\vWeight}}_{\vWeight\rightarrow\infty}}{\cnorm}\right)\mixedNorm{\vy}{\vWeight}\,.\label{eq:gamma_1}
\end{align}
where we used the fact that $\normFull{\mProj_{\vx,\vWeight}\vy}_{\vWeight}\leq\normFull{\vy}_{\vWeight}$
for all $y$ in the first inequality. Further, note that
\begin{align*}
\normFull{\iMatrix-\mProj_{\vx,\vWeight}}_{\vWeight\rightarrow\infty}^{2} & =\max_{i\in[m]}\max_{\|\vy\|_{\vWeight}\leq1}(\cordVec i^{\top}(\iMatrix-\mProj_{\vx,\vWeight})\vy)^{2}\leq\max_{i\in[m]}\norm{\cordVec i^{\top}(\iMatrix-\mProj_{\vx,\vWeight})\mw^{-\frac{1}{2}}}_{2}^{2}\\
 & =\max_{i\in[m]}\cordVec i^{\top}\mWeight^{-1}\ma_{x}\left(\ma_{x}^{\top}\mWeight^{-1}\ma_{x}\right)^{-1}\ma_{x}^{\top}\mWeight^{-1}\cordVec i.
\end{align*}
Since $\frac{4}{5}\vg(x)\leq\vWeight\leq\frac{5}{4}\vg(x)$, we have
that 
\begin{equation}
\normFull{\iMatrix-\mProj_{\vx,\vWeight}}_{\vWeight\rightarrow\infty}^{2}\leq2\max_{i\in[m]}\cordVec i^{\top}\mg(x)^{-1}\ma_{x}\left(\ma_{x}^{\top}\mg(x)^{-1}\ma_{x}\right)^{-1}\ma_{x}^{\top}\mg(x)^{-1}\cordVec i\leq2c_{s}.\label{eq:gamma_2}
\end{equation}
Combing \eqref{eq:gamma_1} and \eqref{eq:gamma_2} and using $\cnorm=24\sqrt{c_{s}}c_{k}$
yields the claims. 
\end{proof}

\subsection{Changing $t$\label{subsec:weighted-path:t-delta}}

Here we bound how much centrality increases as we increase $t$. We
show that this rate of increase is governed by $\cnorm$ and $\norm{\vWeight}_{1}.$
\begin{lem}
\label{lem:gen:t_step}For all $\{\vx,\vWeight\}\in\dFull$, $t>0$
and $\alpha\geq0$, we have
\[
\delta_{(1+\alpha)t}(\vx,\vWeight)\leq(1+\alpha)\delta_{t}(\vx,\vWeight)+\alpha(1+\cnorm\sqrt{\norm{\vWeight}_{1}})\,.
\]
\end{lem}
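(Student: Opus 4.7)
The plan is to exploit linearity of $\grad_x f_t(\vx,\vWeight) = t\cdot\vc + \vWeight\vphi'(\vx)$ in $t$, so that increasing $t$ by a factor $(1+\alpha)$ splits the new gradient into a scaled copy of the old one plus a correction term involving only the barrier gradient. Concretely, I would first write
\[
\grad_{x}f_{(1+\alpha)t}(\vx,\vWeight)=(1+\alpha)\grad_{x}f_{t}(\vx,\vWeight)-\alpha\,\vWeight\vphi'(\vx).
\]

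Next, letting $\veta^{\star}\in\Rn$ achieve the minimum in the definition of $\delta_{t}(\vx,\vWeight)$, I would plug $\veta=(1+\alpha)\veta^{\star}$ into the definition of $\delta_{(1+\alpha)t}(\vx,\vWeight)$. Dividing the above identity by $\vWeight\sqrt{\vphi''(\vx)}$ and subtracting $\ma(1+\alpha)\veta^{\star}$ yields
\[
\frac{\grad_{x}f_{(1+\alpha)t}(\vx,\vWeight)-\ma(1+\alpha)\veta^{\star}}{\vWeight\sqrt{\vphi''(\vx)}}=(1+\alpha)\cdot\frac{\grad_{x}f_{t}(\vx,\vWeight)-\ma\veta^{\star}}{\vWeight\sqrt{\vphi''(\vx)}}-\alpha\cdot\frac{\vphi'(\vx)}{\sqrt{\vphi''(\vx)}}.
\]
Applying the triangle inequality for the mixed norm $\mixedNorm{\cdot}{\vWeight}$ and taking the infimum on the left gives
\[
\delta_{(1+\alpha)t}(\vx,\vWeight)\leq(1+\alpha)\,\delta_{t}(\vx,\vWeight)+\alpha\,\mixedNormFull{\vphi'(\vx)/\sqrt{\vphi''(\vx)}}{\vWeight}.
\]

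It then remains to show $\mixedNorm{\vphi'(\vx)/\sqrt{\vphi''(\vx)}}{\vWeight}\leq1+\cnorm\sqrt{\norm{\vWeight}_{1}}$. This follows from the fact that each $\phi_{i}$ is $1$-self-concordant (verified explicitly in Section~\ref{sec:path:prelim} by $|\phi_{i}'(x)|\leq\sqrt{\phi_{i}''(x)}$), which gives $|\phi_{i}'(\vx_{i})/\sqrt{\phi_{i}''(\vx_{i})}|\leq1$ coordinate-wise. Hence the $\ell_{\infty}$ component is bounded by $1$ and the $\norm{\cdot}_{\vWeight}$ component is bounded by $\sqrt{\sum_{i}w_{i}\cdot1}=\sqrt{\norm{\vWeight}_{1}}$, yielding the claimed bound by the definition of $\mixedNorm{\cdot}{\vWeight}$.

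There is no real obstacle here; the only subtle point is choosing $\veta=(1+\alpha)\veta^{\star}$ (rather than $\veta^{\star}$) so that the $(1+\alpha)$-scaling factors out cleanly from both the cost term and the dual-variable term, leaving a correction that depends only on $\vphi'/\sqrt{\vphi''}$ rather than on the full gradient of $f_{t}$. Once that is done, the bound is a direct triangle-inequality argument combined with the elementary coordinate-wise self-concordance inequality.
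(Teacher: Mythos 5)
Your proof is correct and matches the paper's argument essentially step for step: the paper also chooses $\veta=(1+\alpha)\veta_t$ for the minimizing $\veta_t$ of $\delta_t$, splits off the $\alpha\,\vphi'/\sqrt{\vphi''}$ correction term by triangle inequality, and bounds its mixed norm by $1+\cnorm\sqrt{\norm{\vWeight}_1}$ using $|\phi_i'|\leq\sqrt{\phi_i''}$. The only cosmetic difference is that you phrase the decomposition as a gradient identity up front, while the paper rearranges inside the norm, but the content is identical.
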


\begin{proof}
Let $\veta_{t}\in\Rn$ be such that
\[
\delta_{t}(\vx,\vWeight)=\mixedNormFull{\frac{\grad_{x}f_{t}(\vx,\vWeight)-\ma\veta_{t}}{\vWeight\sqrt{\vphi''(\vx)}}}w=\mixedNormFull{\frac{t\cdot\vc+\vWeight\vphi'(\vx)-\ma\veta_{t}}{\vWeight\sqrt{\vphi''(\vx)}}}w.
\]
Applying this to the definition of $\delta_{(1+\alpha)t}$ and using
that $\ensuremath{\mixedNorm{\cdot}{\vWeight}}$ is a norm then yields
\begin{align*}
\delta_{(1+\alpha)t}(\vx,\vWeight) & =\min_{\veta\in\Rn}\mixedNormFull{\frac{(1+\alpha)t\cdot\vc+\vWeight\vphi'(\vx)-\ma\veta}{\vWeight\sqrt{\vphi''(\vx)}}}w\leq\mixedNormFull{\frac{(1+\alpha)t\cdot\vc+\vWeight\vphi'(\vx)-(1+\alpha)\ma\veta_{t}}{\vWeight\sqrt{\vphi''(\vx)}}}w\\
 & \leq(1+\alpha)\mixedNormFull{\frac{t\cdot\vc+\vWeight\vphi'(\vx)+\ma\veta_{t}}{\vWeight\sqrt{\vphi''(\vx)}}}w+\alpha\mixedNormFull{\frac{\vphi'(\vx)}{\sqrt{\vphi''(\vx)}}}w\\
 & =(1+\alpha)\delta_{t}(\vx,\vWeight)+\alpha\left(\normFullInf{\frac{\vphi'(\vx)}{\sqrt{\vphi''(\vx)}}}+\cnorm\normFull{\frac{\vphi'(\vx)}{\sqrt{\vphi''(\vx)}}}_{w}\right)
\end{align*}
The result follows from the fact that $|\phi_{i}'(\vx)|\leq\sqrt{\phi_{i}''(\vx)}$
for all $i\in[m]$ and $\vx\in\Rm$ by Definition~\ref{def:self_concordance}.
\end{proof}

\subsection{Changing $\protect\vx$\label{subsec:weighted-path:x-delta}}

Here we analyze the effect of a Newton step of $\vx$ on centrality.
We show for sufficiently central $\{\vx,\vWeight\}\in\dFull$ and
$\vWeight$ sufficiently close to $\vg(\vx)$ Newton steps converge
quadratically.
\begin{lem}
\label{lem:gen:x_progress}Let $\{\vx_{0},\vWeight\}\in\dFull$ such
that $\delta_{t}(\vx_{0},\vWeight)\leq\frac{1}{10}$ and $\frac{4}{5}\vg(x)\leq\vWeight\leq\frac{5}{4}\vg(x)$
and consider a Newton step $\vx_{1}=\vx_{0}+\vh_{t}(\vx,\vWeight)$.
Then, $\delta_{t}(\vx_{1},\vWeight)\leq4(\delta_{t}(\vx_{0},\vWeight))^{2}.$
\end{lem}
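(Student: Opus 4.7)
The plan is to establish the quadratic convergence of the Newton step via a standard Taylor expansion argument, but carefully adapted to the mixed norm $\|\cdot\|_{w+\infty}$. Write $\vh = \vh_t(\vx_0, \vWeight)$ and $\vx_1 = \vx_0 + \vh$, and let $\vg_0 \defeq \grad_x f_t(\vx_0,\vWeight) = t\vc + \vWeight \vphi'(\vx_0)$ and $\vg_1 \defeq \grad_x f_t(\vx_1,\vWeight)$. Integrating, $\vg_1 - \vg_0 = \vWeight \vphi''(\vx_0)\vh + \vWeight \vR$, where the remainder $\vR \defeq \int_0^1 [\vphi''(\vx_0+s\vh) - \vphi''(\vx_0)]\vh\, ds$ is coordinate-wise defined.

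Next, I would use the Newton step definition \eqref{eq:newton_step} to eliminate the linear term. Since $\sqrt{\vphi''(\vx_0)}\vh = -\mProj_{\vx_0,\vWeight}\,(\vg_0/(\vWeight\sqrt{\vphi''(\vx_0)}))$ and $\iMatrix - \mProj_{\vx_0,\vWeight}$ has image in the column span of $\vWeight^{-1}\vphi''(\vx_0)^{-1/2}\ma$, there exists $\veta_0 \in \R^n$ with
\[
\frac{\vg_0}{\vWeight\sqrt{\vphi''(\vx_0)}} + \sqrt{\vphi''(\vx_0)}\vh = \frac{\ma\veta_0}{\vWeight\sqrt{\vphi''(\vx_0)}}, \qquad \text{i.e.} \qquad \vg_0 + \vWeight\vphi''(\vx_0)\vh - \ma\veta_0 = \vzero.
\]
Therefore $\vg_1 - \ma\veta_0 = \vWeight\vR$, so by the definition \eqref{eq:centrality_definition} of $\delta_t$ with the choice $\veta = \veta_0$,
\[
\delta_t(\vx_1,\vWeight) \;\leq\; \normFull{\tfrac{\vR}{\sqrt{\vphi''(\vx_1)}}}_{\vWeight+\infty}.
\]

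The technical heart is bounding $\vR$ coordinate-wise using self-concordance. By Lemma~\ref{lem:max_flow:projection_lemma} and Lemma~\ref{lem:c_gamma}, $\mixedNorm{\sqrt{\vphi''(\vx_0)}\vh}{\vWeight} \leq \cWeightStab \delta_t(\vx_0,\vWeight) \leq \cWeightStab/10$, so in particular each $r_i \defeq \sqrt{\phi''_i((\vx_0)_i)}\,|h_i|$ satisfies $r_i \leq \cWeightStab/10 < 1$. Hence $\vx_0 + s\vh \in \dInterior$ for all $s \in [0,1]$ by Lemma~\ref{lem:gen:phi_properties_sim}, and that lemma further gives $(1-r_i)^2 \phi''_i((\vx_0)_i) \leq \phi''_i((\vx_0+s\vh)_i) \leq (1-r_i)^{-2}\phi''_i((\vx_0)_i)$. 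A short computation then yields
\[
\left|\frac{R_i}{\sqrt{\phi''_i((\vx_1)_i)}}\right| \;\leq\; \frac{(2-r_i)\,r_i^{2}}{(1-r_i)^{3}} \;\leq\; K\, r_i^{2}
\]
for a small explicit constant $K$ (roughly $3$) whenever $r_i \leq \cWeightStab/10$.

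Finally, I would combine these coordinate bounds in both factors of the mixed norm: letting $M \defeq \|\sqrt{\vphi''(\vx_0)}\vh\|_\infty$, we get $\|\vR/\sqrt{\vphi''(\vx_1)}\|_\infty \leq K M^2$ and $\|\vR/\sqrt{\vphi''(\vx_1)}\|_{\vWeight} \leq K M \,\|\sqrt{\vphi''(\vx_0)}\vh\|_{\vWeight}$, so
\[
\delta_t(\vx_1,\vWeight) \;\leq\; K M \bigl(M + \cnorm\,\|\sqrt{\vphi''(\vx_0)}\vh\|_{\vWeight}\bigr) \;\leq\; K\,\bigl(\mixedNorm{\sqrt{\vphi''(\vx_0)}\vh}{\vWeight}\bigr)^{2} \;\leq\; K\cWeightStab^{2}\,(\delta_t(\vx_0,\vWeight))^{2}.
\]
Since $\cWeightStab \leq 1 + 1/(16c_k) \leq 17/16$, the product $K\cWeightStab^{2}$ is at most $4$, giving the claimed bound. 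The main obstacle is the constant accounting: one must ensure that the estimate from self-concordance, the factor $\cWeightStab^2$ from passing between $\mixedNorm{\sqrt{\vphi''(\vx_0)}\vh}{\vWeight}$ and $\delta_t(\vx_0,\vWeight)$, and the loss in $(1-r_i)^{-3}$ together do not exceed the stated constant $4$; this relies on the assumption $\delta_t(\vx_0,\vWeight) \leq 1/10$ to keep every $r_i$ small enough that all these losses remain under control.
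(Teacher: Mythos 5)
Your proposal is correct and takes essentially the same approach as the paper. The paper writes the second-order error via the mean value theorem, $\phi_1' - \phi_0' = \phi''(\theta)\vh$, and then bounds $\mixedNormFull{\frac{\phi''(\theta)-\phi_0''}{\sqrt{\phi_1''}\sqrt{\phi_0''}}\,(\sqrt{\phi_0''}\vh)}{w}$ by factoring out the $\ell_\infty$ operator norm of the diagonal factor; your integral remainder $\vR$ is exactly the same quantity ($\vR/\sqrt{\phi_1''}=\frac{\phi''(\theta)-\phi_0''}{\sqrt{\phi_1''}\sqrt{\phi_0''}}\sqrt{\phi_0''}\vh$), and your coordinate-wise bound $|R_i|/\sqrt{\phi_i''}\leq K r_i^2$ followed by reassembling the mixed norm is just the same factoring done by hand. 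The choice of $\veta_0$ from the Newton step, the appeal to Lemma~\ref{lem:gen:phi_properties_sim} for the Hessian multiplicative stability, and the use of Lemmas~\ref{lem:max_flow:projection_lemma} and \ref{lem:c_gamma} to pass between $\mixedNorm{\sqrt{\phi_0''}\vh}{w}$ and $\delta_t$ are all identical to the paper's steps, and your bookkeeping of the constant (the factor $K\cWeightStab^2\leq 4$) is, if anything, more careful than what the paper makes explicit.
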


\begin{proof}
Let $\vphi_{0}\defeq\vphi(\vx_{0})$ and let $\vphi_{1}\defeq\vphi(\vx_{1})$.
By the definition of $\vh_{t}(\vx_{0},\vWeight)$ and the formula
of $\mProj_{\vx_{0},\vWeight}$ we know that there is some $\veta_{0}\in\Rn$
such that
\[
-\sqrt{\phi_{0}''}\vh_{t}(\vx_{0},\vWeight)=\frac{t\cdot\vc+\vWeight\vphi_{0}'-\ma\veta_{0}}{\vWeight\sqrt{\vphi_{0}''}}.
\]
Therefore, $\ma\veta_{0}=t\cdot\vc+\vWeight\phi_{0}'+\vWeight\phi_{0}''h_{t}(\vx_{0},\vWeight)$.
Recalling the definition of $\delta_{t}$ this implies that
\begin{eqnarray*}
\delta_{t}(\vx_{1},\vWeight) & = & \min_{\veta\in\Rn}\mixedNormFull{\frac{t\cdot\vc+\vWeight\phi_{1}'-\ma\veta}{\vWeight\sqrt{\phi_{1}''}}}w\leq\mixedNormFull{\frac{t\cdot\vc+\vWeight\phi_{1}'-\ma\veta_{0}}{\vWeight\sqrt{\phi_{1}''}}}w\\
 & \leq & \mixedNormFull{\frac{\vWeight(\phi_{1}'-\vphi_{0}')-\vWeight\vphi_{0}''\vh_{t}(\vx_{0},\vWeight)}{\vWeight\sqrt{\phi_{1}''}}}w=\mixedNormFull{\frac{(\phi_{1}'-\vphi_{0}')-\vphi_{0}''\vh_{t}(\vx_{0},\vWeight)}{\sqrt{\phi_{1}''}}}w\,.
\end{eqnarray*}
By mean value theorem $\phi_{1}'-\vphi_{0}'=\phi''(\theta)\vh_{t}(\vx_{0},\vWeight)$
for $\theta$ between $\vx_{0}$ and $\vx_{1}$ coordinate-wise. Hence,
\begin{eqnarray}
\delta_{t}(\vx_{1},\vWeight) & \leq & \mixedNormFull{\frac{\phi''(\theta)\vh_{t}(\vx_{0},\vWeight)-\vphi_{0}''\vh_{t}(\vx_{0},\vWeight)}{\sqrt{\phi_{1}''}}}w=\mixedNormFull{\frac{\left(\phi''(\theta)-\phi_{0}''\right)}{\sqrt{\phi_{1}''}\sqrt{\vphi_{0}''}}(\sqrt{\phi_{0}''}\vh_{t}(\vx_{0},\vWeight))}w\nonumber \\
 & \leq & \normFull{\frac{\phi''(\theta)-\phi_{0}''}{\sqrt{\phi_{1}''}\sqrt{\vphi_{0}''}}}_{\infty}\cdot\mixedNormFull{\sqrt{\phi_{0}''}\vh_{t}(\vx_{0},\vWeight)}w.\label{eq:gen:x_progress1}
\end{eqnarray}
To bound the first term, we use Lemma~\ref{lem:gen:phi_properties_sim}
as follows
\begin{eqnarray*}
\normFull{\frac{\phi''(\theta)-\phi_{0}''}{\sqrt{\phi_{1}''}\sqrt{\phi_{0}''}}}_{\infty} & \leq & \normFull{\frac{\vphi''(\theta)}{\vphi_{0}''}-\onesVec}_{\infty}\cdot\normFull{\frac{\sqrt{\phi_{0}''}}{\sqrt{\phi_{1}''}}}_{\infty}\\
 & \leq & \left|\left(1-\normFull{\sqrt{\phi_{0}''}\vh_{t}(\vx_{0},\vWeight)}_{\infty}\right)^{-2}-1\right|\cdot\left(1-\normFull{\sqrt{\phi_{0}''}\vh_{t}(\vx_{0},\vWeight)}_{\infty}\right)^{-1}.
\end{eqnarray*}
Using (\ref{eq:centrality_equivalence}), i.e. Lemma \ref{lem:max_flow:projection_lemma},
the bound $c_{\gamma}\leq2$ (Lemma \ref{lem:c_gamma}), and that
$\delta_{t}(\vx_{0},\vWeight)\leq\frac{1}{10}$ yields
\[
\normFull{\sqrt{\phi_{0}''}\vh_{t}(\vx_{0},\vWeight)}_{\infty}\leq\mixedNormFull{\sqrt{\phi_{0}''}\vh_{t}(\vx_{0},\vWeight)}w\leq c_{\gamma}\cdot\delta_{t}(\vx_{0},\vWeight)\leq\frac{1}{5}.
\]
Using $\left((1-t)^{-2}-1\right)\cdot(1-t)^{-1}\leq4t$ for $t\leq1/5$,
we have
\begin{equation}
\normFull{\frac{\phi''(\theta)-\phi_{0}''}{\sqrt{\phi_{1}''}\sqrt{\phi_{0}''}}}_{\infty}\leq4\normFull{\sqrt{\phi_{0}''}h_{t}(\vx_{0},\vWeight)}_{\infty}.\label{eq:gen:x_progress2}
\end{equation}
Combining the formulas (\ref{eq:gen:x_progress1}) and (\ref{eq:gen:x_progress2})
yields that $\delta_{t}(\vx_{1},\vWeight)\leq4(\delta_{t}(\vx_{0},\vWeight))^{2}$
as desired.
\end{proof}

\subsection{Changing $\protect\vWeight$\label{subsec:weighted-path:w-delta}}

In Section~\ref{subsec:weighted-path:x-delta} we used the assumption
that the weights, $\vWeight$, were multiplicatively close to $\vg(\vx)$,
for the current point $\vx\in\dInterior.$ To maintain this invariant
when we change $\vx$ we will need to change $\vWeight$ to move it
closer to $\vg(\vx).$ Here we bound how much $\vg(\vx)$ can move
as we move $\vx$ (Lemma \ref{lem:gen:w_step}) and we bound how much
changing $\vWeight$ can decrease centrality (Lemma \ref{lem:gen:w_change}).
Together these lemmas will allow us to show that we can keep $\vWeight$
close to $\vg(\vx)$ while still improving centrality (Section~\ref{subsec:weighted-path:centering}).
\begin{lem}
\label{lem:gen:w_step} For all $t\in[0,1]$, let $\vx_{t}\defeq\vx_{0}+t\vDelta$
for $\vDelta\in\Rm$ and $\vg_{t}=\vg(\vx_{t})$ such that $\vx_{t}\in\dInterior$.
Then for $\epsilon=\mixedNorm{\sqrt{\phi_{0}''}\vDelta}{\vg_{0}}\leq\frac{1}{10}$
we have $\mixedNorm{\log(g_{1})-\log(g_{0})}{\vg_{0}}\leq(1-c_{k}(g)^{-1}+4\epsilon)\epsilon\leq\frac{1}{5}$
\textup{and for all $s,t\in[0,1]$ and for all $\vy\in\Rm$ we ha}\emph{ve}\textup{
$\mixedNorm{\vy}{\vg_{s}}\leq(1+2\epsilon)\mixedNorm{\vy}{\vg_{t}}$.}
\end{lem}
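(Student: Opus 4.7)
The plan is to reduce the weight change to an integral of $G_t^{-1}J_g(x_t)\Delta$ along the segment, apply the consistency bound of Definition~\ref{def:gen:weight_function} at every point of the segment, and then convert everything back to the base point $x_0$ using self-concordance of the $\phi_i$'s together with a short continuity (bootstrapping) argument.

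First I would set up the initial estimates from $\epsilon\le 1/10$. Since $\|\sqrt{\phi_0''}\Delta\|_\infty\le \mixedNorm{\sqrt{\phi_0''}\Delta}{g_0}=\epsilon$, Lemma~\ref{lem:gen:phi_properties_sim} applied coordinate-wise gives, for every $u\in[0,1]$ and every $i$, $(1-u\epsilon)\sqrt{\phi_0''}\le \sqrt{\phi''(x_u)}\le (1-u\epsilon)^{-1}\sqrt{\phi_0''}$, so the Hessian of the barrier moves by at most a factor $1+O(\epsilon)$ along the segment. Next, by the fundamental theorem of calculus applied to $u\mapsto\log g(x_u)$,
\[
\log g_s-\log g_0=\int_0^{s}G(x_u)^{-1}J_g(x_u)\,\Delta\,du
=\int_0^{s}\bigl[G(x_u)^{-1}J_g(x_u)(\phi''(x_u))^{-1/2}\bigr]\,\sqrt{\phi''(x_u)}\Delta\,du.
\]
The bracket is precisely the matrix whose mixed operator norm (in $\mixedNorm{\cdot}{g(x_u)}$) is bounded by $1-c_k^{-1}$ by the consistency bullet.

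The main obstacle, and the reason a bootstrap is needed, is that the operator-norm bound is phrased in the norm $\mixedNorm{\cdot}{g_u}$, while we want the output norm to be $\mixedNorm{\cdot}{g_0}$; to convert between the two I need to already know that $g_u$ and $g_0$ are multiplicatively close, which is essentially the conclusion. To close this loop, let $T=\sup\{s\in[0,1]:\mixedNorm{\log g_u-\log g_0}{g_0}\le 1/5\text{ for all }u\le s\}$. For $u\le T$ one has $g_u/g_0\in[e^{-1/5},e^{1/5}]$ coordinate-wise, so the two seminorms comprising the mixed norm (the $\ell_\infty$ piece is unaffected, and the weighted $\ell_2$ piece scales by at most $e^{1/10}=1+O(\epsilon\text{-independent})$; a sharper estimate in terms of $\|\log g_u-\log g_0\|_\infty$ yields a factor $1+O(\epsilon)$ once the improved bound is in hand). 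Combining this with the Hessian estimate from the previous paragraph and the consistency bound, the integrand at each $u$ is bounded in $\mixedNorm{\cdot}{g_0}$ by $(1-c_k^{-1})(1+O(\epsilon))\,\epsilon$, and integrating over $[0,s]$ gives
\[
\mixedNorm{\log g_s-\log g_0}{g_0}\le (1-c_k^{-1}+4\epsilon)\,\epsilon \cdot s.
\]
At $s=T$ this is at most $(1-c_k^{-1}+4\cdot\tfrac{1}{10})\cdot\tfrac{1}{10}<\tfrac{1}{5}$, contradicting maximality unless $T=1$. So the bound extends to all of $[0,1]$, giving the first claim. The tight bookkeeping of the $4\epsilon$ slack is the step that must be done carefully; I would split the multiplicative errors (Hessian deformation, $g_u\approx g_0$) each into contributions of $O(\epsilon)$ and verify the numerical constant is at most $4$.

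For the second claim, note that for any $s,t\in[0,1]$,
\[
\mixedNorm{\log g_s-\log g_t}{g_0}\le \mixedNorm{\log g_s-\log g_0}{g_0}+\mixedNorm{\log g_t-\log g_0}{g_0}\le 2(1-c_k^{-1}+4\epsilon)\epsilon,
\]
and in particular $\|\log g_s-\log g_t\|_\infty\le 2\epsilon\cdot(1-c_k^{-1}+4\epsilon)$. Since $1-c_k^{-1}<1$ and $\epsilon\le 1/10$, this quantity $\eta$ is very small, so $g_s/g_t\in[e^{-\eta},e^\eta]$. The $\ell_\infty$ piece of the mixed norm is independent of the weights, while the weighted $\ell_2$ piece satisfies $\|y\|_{g_s}\le e^{\eta/2}\|y\|_{g_t}$; putting the two together and checking the constants gives $\mixedNorm{y}{g_s}\le (1+2\epsilon)\mixedNorm{y}{g_t}$, as desired.
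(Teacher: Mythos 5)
Your proposal takes essentially the same route the paper does: express $\log g_s-\log g_0$ as an integral of $G_u^{-1}J_g(x_u)\Delta$, split the integrand into the consistency-bounded operator $G_u^{-1}J_g(x_u)(\Phi''_u)^{-1/2}$ against $\sqrt{\phi''_u}\Delta$, use self-concordance (Lemma~\ref{lem:gen:phi_properties_sim}) to compare $\phi''_u$ to $\phi''_0$, and close the loop between ``$g_u$ is close to $g_0$'' and ``the mixed norms at $g_u$ and $g_0$ are comparable'' with a continuity/bootstrap argument. The overall decomposition, the use of the consistency bullet, and the derivation of the second claim from the first by a coordinate-wise $\exp$ bound all match the paper.

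There is, however, a genuine gap in the bootstrap as you set it up, and it is the one place where the paper's choice is not cosmetic. You define $T=\sup\{s:\mixedNorm{\log g_u-\log g_0}{g_0}\le 1/5\ \forall u\le s\}$. For $u\le T$ this gives only $g_u/g_0\in[e^{-1/5},e^{1/5}]$, hence a norm-conversion factor bounded by the fixed constant $e^{1/5}$, not by $1+O(\epsilon)$; you acknowledge this in the parenthetical but then nevertheless assert that the integrand is bounded by $(1-c_k^{-1})(1+O(\epsilon))\epsilon$ and deduce the target bound $(1-c_k^{-1}+4\epsilon)\epsilon\cdot s$. That step is circular: the $1+O(\epsilon)$ conversion factor needs $\|\log g_u-\log g_0\|_\infty=O(\epsilon)$, which is (part of) the conclusion. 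With the coarse $1/5$ threshold the honest integrand bound is of order $\frac{\epsilon}{1-\epsilon}\,e^{1/5}\,(1-c_k^{-1})$, whose coefficient can exceed $1-c_k^{-1}+4\epsilon$ when $\epsilon$ is small and $c_k$ is large, so you cannot conclude $(1-c_k^{-1}+4\epsilon)\epsilon$ directly. The paper avoids this by bootstrapping on the \emph{fine} quantity: it sets $\theta=\sup_u\{\overline Q(u)\le(1-c_k^{-1}+4\epsilon)\epsilon\}$ where $\overline Q$ is the running integral bound, so that on $[0,\theta]$ the $\ell_\infty$ gap $\|q(u)-q(0)\|_\infty$ is already $\le(1-c_k^{-1}+4\epsilon)\epsilon=O(\epsilon)$ and the conversion factor is genuinely $1+O(\epsilon)$, after which one checks $\overline Q(\theta)<(1-c_k^{-1}+4\epsilon)\epsilon$ and concludes $\theta=1$. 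Your argument can be repaired either by changing the threshold to the fine one (cleanest, and what the paper does) or by explicitly running the bootstrap twice — first with the coarse threshold to get an $O(\epsilon)$ bound on $\|\log g_u-\log g_0\|_\infty$, then a second pass to tighten it to $(1-c_k^{-1}+4\epsilon)\epsilon$ — but as written the single pass with threshold $1/5$ does not establish the claimed constant.

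Two smaller remarks. First, when converting the operator norm $\mixedNorm{\cdot}{g_u}\to\mixedNorm{\cdot}{g_0}$, remember the factor enters squared (once for the input vector, once for the output), so tracking the exponents carefully is necessary to land inside the $4\epsilon$ slack. Second, for the final claim with general $s,t\in[0,1]$, you correctly bound $\|\log g_s-\log g_t\|_\infty$ by a triangle inequality through $g_0$; the resulting estimate is twice what one gets comparing to $g_0$, so make sure the doubled quantity still fits under $2\epsilon$ after taking the square root for the weighted $\ell_2$ piece — this is where the restriction $\epsilon\le 1/10$ (together with $c_k\ge 1$) is used.
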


\begin{proof}
Let $\vq:[0,1]\rightarrow\R^{m}$ be given by $\vq(t)\defeq\log(\vg_{t})$
for all $t\in[0,1]$. Then, $\vq'(t)=\mg_{t}^{-1}\mj_{g}(x_{t})\vDelta_{x}$.
Letting $Q(t)\defeq\mixedNorm{q(t)-q(0)}{\vg_{0}}$ and using Jensen's
inequality yields that for all $u\in[0,1]$,
\begin{eqnarray*}
Q(u) & \leq & \overline{Q}(u)\defeq\int_{0}^{u}\mixedNormFull{\mg_{t}^{-1}\mj_{g}(x_{t})(\mPhi_{t}'')^{-1/2}}{g_{0}}\mixedNormFull{\sqrt{\phi_{t}''}\vDelta}{g_{0}}dt.
\end{eqnarray*}
Using Lemma~\ref{lem:gen:phi_properties_sim} and $\epsilon\leq\frac{1}{10}$,
we have for all $t\in[0,1]$,
\[
\mixedNormFull{\sqrt{\phi_{t}''}\vDelta_{x}}{g_{0}}\leq\normFull{\sqrt{\frac{\phi_{t}''}{\phi_{0}''}}}_{\infty}\mixedNormFull{\sqrt{\phi_{0}''}\vDelta_{x}}{g_{0}}\leq\frac{\mixedNorm{\sqrt{\phi_{0}''}\vDelta_{x}}{\vg_{0}}}{1-\normFull{\sqrt{\phi_{0}''}\vDelta_{x}}_{\infty}}\leq\frac{\epsilon}{1-\epsilon}.
\]
Thus, we have
\begin{eqnarray}
\overline{Q}(u) & \leq & \frac{\epsilon}{1-\epsilon}\int_{0}^{u}\mixedNorm{\mg_{t}^{-1}\mj_{g}(x_{t})(\mPhi_{t}'')^{-1/2}}{\vg_{0}}dt.\label{eq:change_of_g_norm_est_int-1}
\end{eqnarray}
Note that $\overline{Q}$ is monotonically increasing. Let $\theta=\sup_{u\in[0,1]}\{\overline{Q}(u)\leq(1-c_{k}^{-1}+4\epsilon)\epsilon\}$.
Since $\|q(t)-q(0)\|_{\infty}\leq\overline{Q}(\theta)\leq\frac{1}{2}$
and $q(t)=\log(g_{t})$, we know that for all $s,t\in[0,\theta]$,
we have
\[
\normFull{\frac{g_{s}-g_{t}}{g_{t}}}_{\infty}\leq\norm{q(s)-q(t)}_{\infty}+\norm{q(s)-q(t)}_{\infty}^{2}
\]
and therefore $\normFull{\vg_{s}/\vg_{t}}_{\infty}\leq(1+\norm{q(s)-q(t)}_{\infty})^{2}\leq(1+(1-c_{k}^{-1}+4\epsilon)\epsilon)^{2}$.
Consequently,
\begin{eqnarray*}
\mixedNorm{\vy}{\vg_{s}} & \leq & (1+(1-c_{k}^{-1}+4\epsilon)\epsilon)\mixedNorm{\vy}{\vg_{t}}\leq\left(1+2\epsilon\right)\mixedNorm{\vy}{\vg_{t}}.
\end{eqnarray*}
Using (\ref{eq:change_of_g_norm_est_int-1}), we have for all $u\in[0,\theta]$,
\begin{eqnarray*}
Q(u)\leq\overline{Q}(u) & \leq & \frac{\epsilon}{1-\epsilon}\int_{0}^{u}\mixedNorm{\mg_{t}^{-1}\mj_{g}(x_{t})(\mPhi_{t}'')^{-1/2}}{\vg_{0}}dt\\
 & \leq & \frac{\epsilon}{1-\epsilon}\int_{0}^{u}\left(1+2\epsilon\right)\mixedNorm{\mg_{t}^{-1}\mj_{g}(x_{t})(\mPhi_{t}'')^{-1/2}}{\vg_{t}}dt\\
 & \leq & \frac{\epsilon}{1-\epsilon}(1+2\epsilon)(1-c_{k}^{-1})\theta<(1-c_{k}^{-1}+4\epsilon)\epsilon.
\end{eqnarray*}
Consequently, $\theta=1$ and we have the desired result by the above
bound on $Q(1)$.
\end{proof}
\begin{lem}
\label{lem:gen:w_change}Let $\vv,\vWeight\in\R_{>0}^{m}$ such that
$\mbox{\ensuremath{\epsilon=\mixedNorm{\log(\vWeight)-\log(\vv)}{\vWeight}}}\le\frac{1}{10}$.
Then for $\vx\in\dInterior$ we have
\[
\delta_{t}(\vx,\vv)\leq(1+4\epsilon)(\delta_{t}(\vx,\vWeight)+\epsilon).
\]
\end{lem}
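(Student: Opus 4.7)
The plan is to use the minimizer $\veta^*$ of the problem defining $\delta_t(\vx,\vWeight)$ as a (not necessarily optimal) feasible choice in the variational definition of $\delta_t(\vx,\vv)$, and then to estimate the resulting vector coordinate by coordinate via a suitable algebraic decomposition. To that end I would let $\vu \defeq \log \vv - \log \vWeight$, $\vr \defeq \vphi'(\vx)/\sqrt{\vphi''(\vx)}$, and $\vz \defeq (t\vc + \vWeight\vphi'(\vx) - \ma\veta^*)/(\vWeight\sqrt{\vphi''(\vx)})$, so that by construction $\mixedNorm{\vz}{\vWeight} = \delta_t(\vx,\vWeight)$ and by hypothesis $\mixedNorm{\vu}{\vWeight} \leq \epsilon$. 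The key identity, obtained by substituting $t\vc - \ma\veta^* = \vWeight\sqrt{\vphi''(\vx)}\,\vz - \vWeight\vphi'(\vx)$ coordinatewise, is
\[
\frac{t\vc + \vv\vphi'(\vx) - \ma\veta^*}{\vv\sqrt{\vphi''(\vx)}} \;=\; \frac{\vWeight}{\vv}\,\vz + \left(1 - \frac{\vWeight}{\vv}\right)\vr \;=\; e^{-\vu}\,\vz + (1 - e^{-\vu})\,\vr.
\]

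From $\mixedNorm{\cdot}{\vWeight} \geq \|\cdot\|_\infty$ I obtain $\|\vu\|_\infty \leq \epsilon \leq \tfrac{1}{10}$, so coordinatewise $\vWeight_i/\vv_i, \vv_i/\vWeight_i \in [e^{-\epsilon}, e^{\epsilon}]$. The first piece is then handled by a straightforward change of measure: $\|e^{-\vu}\vz\|_\infty \leq e^{\epsilon}\|\vz\|_\infty$ and $\|e^{-\vu}\vz\|_\vv^2 = \sum_i (\vWeight_i^2/\vv_i)z_i^2 \leq e^\epsilon \|\vz\|_\vWeight^2$, which together yield $\mixedNorm{e^{-\vu}\vz}{\vv} \leq e^\epsilon\,\mixedNorm{\vz}{\vWeight} = e^\epsilon\,\delta_t(\vx,\vWeight)$. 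For the second piece I use the self-concordance bound $|\vr_i| = |\phi_i'(\vx_i)|/\sqrt{\phi_i''(\vx_i)} \leq 1$ (Definition~\ref{def:self_concordance}) together with the elementary inequality $|1-e^{-s}| \leq |s|e^{|s|}$, which gives $\|(1-e^{-\vu})\vr\|_\infty \leq e^\epsilon\|\vu\|_\infty$ and
\[
\|(1-e^{-\vu})\vr\|_\vv^2 \;\leq\; e^{2\epsilon}\sum_i \vv_i u_i^2 \;=\; e^{2\epsilon}\sum_i \vWeight_i e^{u_i} u_i^2 \;\leq\; e^{3\epsilon}\,\|\vu\|_\vWeight^2,
\]
hence $\mixedNorm{(1-e^{-\vu})\vr}{\vv} \leq e^{3\epsilon/2}\,\mixedNorm{\vu}{\vWeight} \leq e^{3\epsilon/2}\epsilon$.

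Combining the two estimates through the triangle inequality for $\mixedNorm{\cdot}{\vv}$ yields $\delta_t(\vx,\vv) \leq e^\epsilon\,\delta_t(\vx,\vWeight) + e^{3\epsilon/2}\epsilon$. Since $\epsilon \leq \tfrac{1}{10}$, I have $e^\epsilon \leq 1 + 1.1\epsilon$ and $e^{3\epsilon/2} \leq 1+2\epsilon$, so the right-hand side is at most $(1+1.1\epsilon)\delta_t(\vx,\vWeight) + (1+2\epsilon)\epsilon$, which is bounded above by $(1+4\epsilon)(\delta_t(\vx,\vWeight) + \epsilon)$ since $(1+4\epsilon)(\delta+\epsilon) - [(1+1.1\epsilon)\delta + (1+2\epsilon)\epsilon] = 2.9\epsilon\delta + 2\epsilon^2 \geq 0$. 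There is no real obstacle here — the only thing that requires care is tracking the two mixed-norm pieces through the change of measure between $\vWeight$ and $\vv$; the self-concordance bound $|\vr|\leq 1$ and the convex inequality $|1-e^{-s}|\leq|s|e^{|s|}$ together ensure that the contribution of the second term is controlled by $\epsilon$ rather than by $\delta_t(\vx,\vWeight)$.
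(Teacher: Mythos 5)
Your proof is correct and follows essentially the same route as the paper's: both use the optimal $\veta$ from the definition of $\delta_t(\vx,\vWeight)$ as a feasible candidate for $\delta_t(\vx,\vv)$, and both split the resulting vector into the rescaled main term $\tfrac{\vWeight}{\vv}\vz = e^{-\vu}\vz$ and the error term $\bigl(1-\tfrac{\vWeight}{\vv}\bigr)\vr = (1-e^{-\vu})\vr$, bounding the latter via $|\phi_i'|\le\sqrt{\phi_i''}$. Your version is somewhat more explicit in tracking the change of measure between the $\vWeight$- and $\vv$-weighted norms and in carrying out the final arithmetic with the $|1-e^{-s}|\le|s|e^{|s|}$ bound; the paper instead converts to the $\vWeight$-norm up front at the cost of a $(1+\epsilon)$ factor and leaves some of the bookkeeping implicit, but the underlying decomposition and ingredients are identical.
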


\begin{proof}
Let $\veta_{w}$ be such that
\begin{equation}
\delta_{t}(\vx,\vWeight)=\mixedNormFull{\frac{\vc+\vWeight\phi'(\vx)-\ma\veta_{w}}{\vWeight\sqrt{\phi''(\vx)}}}w\,.\label{eq:gen:w_change:1}
\end{equation}
The assumptions imply that $(1+\epsilon)^{-2}\vWeight_{i}\leq\vv_{i}\leq(1+\epsilon)^{2}\vWeight_{i}$
for all $i$ and consequently
\begin{align*}
\delta_{t}(\vx,\vv) & =\min_{\eta}\mixedNormFull{\frac{\vc+\vv\phi'(\vx)-\ma\veta}{\vv\sqrt{\phi''(\vx)}}}v\leq\mixedNormFull{\frac{\vc+\vv\phi'(\vx)-\ma\veta_{w}}{\vv\sqrt{\phi''(\vx)}}}v\leq(1+\epsilon)\mixedNormFull{\frac{\vc+\vv\phi'(\vx)-\ma\veta_{w}}{\vv\sqrt{\phi''(\vx)}}}w\\
 & \leq\left(1+\epsilon\right)\cdot\left(\mixedNormFull{\frac{\vc+\vWeight\phi'(\vx)-\ma\veta_{w}}{\vv\sqrt{\phi''(\vx)}}}w+\mixedNormFull{\frac{(\vv-\vWeight)\phi'(\vx)}{\vv\sqrt{\phi''(\vx)}}}w\right)\\
 & \leq\left(1+\epsilon\right)^{3}\delta_{t}(\vx,\vWeight)+(1+\epsilon)\cdot\normFullInf{\frac{\vphi'(\vx)}{\sqrt{\vphi''(\vx)}}}\cdot\mixedNormFull{\frac{(\vv-\vWeight)}{\vv}}w\,.
\end{align*}
The result follows from the fact that $|\phi_{i}'(\vx)|\leq\sqrt{\phi_{i}''(\vx)}$
for all $i\in[m]$ and $\vx\in\Rm$ by Definition~\ref{def:self_concordance}.
\end{proof}

\subsection{Centering\label{subsec:weighted-path:centering}}

The results of the previous sections imply an efficient linear programming
algorithm provided the weight function can be computed efficiently
to high-precision. Unfortunately such a weight computation algorithm
is unknown and instead only efficient approximate weight computation
algorithms are presently available (See Appendix~\ref{sec:weights_full:computing}).
Consequently, here we show how to improve centrality even when the
weight function is only computed approximately. Our algorithm is based
on a solution to the ``$\ell_{\infty}$ Chasing Game'' summarized
in Theorem~\ref{thm:zerogame} and proved in Appendix~\ref{sec:approx_path:chasing_zero}.

\begin{restatable}[$\ell_\infty$ Chasing Game]{thm}{zerogame}\label{thm:zerogame}For
$x^{(0)},y^{(0)}\in\Rm$ and $\epsilon\in(0,1/5)$, consider the two
player game consisting of repeating the following for $k=1,2,\ldots$

\begin{enumerate}
\item The adversary chooses $\trSetCurr\subseteq\R^{m}$, $\vu^{(k)}\in U^{(k)}$,
and sets $\trAdve=y^{(k-1)}+\vu^{(k)}$.
\item The adversary chooses $\trMeas$ with $\norm{\trMeas-\trAdve}_{\infty}\leq R$
and reveals $\vz^{(k)}$ and $U^{(k)}$ to the player.
\item The player chooses $\Delta^{(k)}\in(1+\epsilon)\trSetCurr$ and sets
$x^{(k)}=x^{(k-1)}+\Delta^{(k)}.$
\end{enumerate}
Suppose that each $\trSetCurr$ is a symmetric convex set that contains
an $\ellInf$ ball of radius $r_{k}$ and is contained in a $\ellInf$
ball of radius $R_{k}\leq R$ and the player plays the strategy
\[
\Delta^{(k)}=\argmin_{\Delta\in(1+\epsilon)U^{(k)}}\left\langle \nabla\Phi_{\mu}(x^{(k-1)}-z^{(k)}),\Delta\right\rangle \text{ where }\Phi_{\mu}(\vx)\defeq\sum_{i\in[m]}\left(e^{\mu x_{i}}+e^{-\mu x_{i}}\right)\text{ and }\mu\defeq\frac{\epsilon}{12R}\,.
\]
If $\Phi_{\mu}(x^{(0)}-y^{(0)})\leq\frac{12m\tau}{\epsilon}$ for
$\tau=\max_{k}\frac{R_{k}}{r_{k}}$ then this strategy guarantees
that for all $k$ we have
\[
\Phi_{\mu}(x^{(k)}-y^{(k)})\leq\frac{12m\tau}{\epsilon}\quad\text{and}\quad\norm{x^{(k)}-y^{(k)}}_{\infty}\leq\frac{12R}{\epsilon}\log\left(\frac{12m\tau}{\epsilon}\right).
\]
\end{restatable}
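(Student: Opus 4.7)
I plan to prove the statement by induction on $k$, with the base case given by hypothesis. The $\ell_\infty$ bound on $x^{(k)} - y^{(k)}$ follows immediately from the potential bound, since $\Phi_\mu(v) \ge 2\cosh(\mu\|v\|_\infty) \ge e^{\mu\|v\|_\infty}$ gives $\|v\|_\infty \le \mu^{-1}\log\Phi_\mu(v) = (12R/\epsilon)\log(12m\tau/\epsilon)$. So the substantive content lies entirely in maintaining $\Phi_\mu(x^{(k)} - y^{(k)}) \le 12m\tau/\epsilon$ under the induction.

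For the inductive step, set $v := x^{(k-1)} - y^{(k-1)}$, $\delta := \Delta^{(k)} - u^{(k)}$, $h := \nabla\Phi_\mu(v)$ and $g := \nabla\Phi_\mu(x^{(k-1)} - z^{(k)})$, so that $x^{(k)} - y^{(k)} = v + \delta$. Since $\|\delta\|_\infty \le (1+\epsilon)R_k + R_k \le 3R$ and $\mu R = \epsilon/12$, we have $\mu\|\delta\|_\infty \le \epsilon/4$. Using the hyperbolic identity $\cosh(a+b) = \cosh(a)\cosh(b) + \sinh(a)\sinh(b)$ together with second-order Taylor expansions of $\cosh$ and $\sinh$ (valid on the small range just established), and using $\sum_i 2\cosh(\mu v_i) = \Phi_\mu(v)$ and $\sum_i 2|\sinh(\mu v_i)| \le \Phi_\mu(v)$ to absorb the higher-order coefficients, I will derive a smoothness estimate of the form
\[
\Phi_\mu(v+\delta) - \Phi_\mu(v) \le \langle h, \delta\rangle + C(\mu\|\delta\|_\infty)^2\,\Phi_\mu(v)
\]
for an absolute constant $C$.

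Next I will control $\langle h, \delta\rangle$ via the player's optimality. Because $U^{(k)}$ is symmetric convex and contains the $\ell_\infty$-ball of radius $r_k$ and is contained in the ball of radius $R_k$, both $-(1+\epsilon)r_k\sign(g)$ and $\pm u^{(k)}$ lie in $(1+\epsilon)U^{(k)}$, so the player's rule gives $\langle g, \Delta^{(k)}\rangle \le \min\!\bigl(-(1+\epsilon)r_k\|g\|_1,\,\langle g, u^{(k)}\rangle\bigr)$. The approximation error $\|h - g\|_1$ is bounded using the coordinatewise Lipschitz estimate $|\sinh(\mu a) - \sinh(\mu b)| \le \mu|a-b|\cosh(\mu\max(|a|,|b|))$ together with $\|y^{(k-1)} - z^{(k)}\|_\infty \le R_k + R \le 2R$, yielding $\|h-g\|_1 \le c\mu\epsilon\,\Phi_\mu(v)$ for an absolute $c$. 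Combining, I can rewrite $\langle h, \delta\rangle = \langle g, \Delta^{(k)}\rangle - \langle h, u^{(k)}\rangle + \langle h-g, \Delta^{(k)}\rangle$ and bound it in terms of $r_k\|g\|_1$, $R_k\|h\|_1$, and $R_k\|h-g\|_1$.

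The closing step is a drift/threshold analysis based on the pointwise identity $\cosh(x) - |\sinh(x)| = e^{-|x|}\le 1$, which gives $\|\nabla\Phi_\mu(v)\|_1 \ge \mu(\Phi_\mu(v) - 2m)$; hence whenever $\Phi_\mu(v)$ is near the threshold $12m\tau/\epsilon$, $\|h\|_1 \ge \mu\Phi_\mu(v)/2$. I will combine this with the preceding bounds to establish $\Phi_\mu(v+\delta) \le \max\!\bigl(\Phi_\mu(v),\,12m\tau/\epsilon\bigr)$, closing the induction. The main obstacle is the bookkeeping: when $\tau$ is large, the contribution $[R_k - (1+\epsilon)r_k]\|h\|_1$ from comparing the player's $(1+\epsilon)r_k$ pull against the adversary's $R_k$ reach is positive and scales with $\tau_k$, so one must verify that the factor $\tau$ already built into the threshold $12m\tau/\epsilon$, together with the specific normalization $\mu = \epsilon/(12R)$ and the $(1+\epsilon)$ over-reach granted to the player, exactly absorbs this drift. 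I expect this delicate constant-tracking --- balancing the linear term, the quadratic term $O((\mu R_k)^2)\Phi_\mu$, and the approximation error $O(\mu\epsilon)\Phi_\mu$ --- to be the bulk of the technical work.
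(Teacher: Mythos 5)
Your high-level plan (induction, potential drift, deducing the $\ell_\infty$ bound from $\Phi_\mu$) is correct, and your self-identified concern at the end turns out to be fatal rather than a bookkeeping inconvenience. The error terms you produce scale \emph{multiplicatively} with $\Phi_\mu(v)$ with coefficients that a threshold on $\Phi_\mu(v)$ cannot absorb. Concretely: your smoothness error is $C(\mu\|\delta\|_\infty)^2\Phi_\mu(v) = O((\mu R_k)^2)\Phi_\mu(v)$, and your gradient-approximation error, after pairing against $\Delta^{(k)}-u^{(k)}\in O(1)\cdot R_k B_\infty$, is $O(R_k)\|h-g\|_1 = O(R_k\mu^2 R)\Phi_\mu(v)$. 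Both must be dominated by the negative drift you extract, which is at best $\epsilon r_k\|g\|_1 \approx \epsilon r_k\mu\Phi_\mu(v)$. Dividing out $\Phi_\mu(v)$, you need $\mu R_k^2 \lesssim \epsilon r_k$ and $\mu R R_k \lesssim r_k$, i.e.\ $\tau = R_k/r_k = O(1)$. Since all three terms are linear in $\Phi_\mu(v)$, the threshold $12m\tau/\epsilon$ does nothing: a linear drift cannot overpower a linear error with a strictly larger slope.

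The paper's proof (and its Lemma~\ref{lem:smoothing:helper}) avoids this by never passing through $\|h-g\|_1$ or a second-order Taylor remainder at all. Instead it compares gradients at nearby points \emph{multiplicatively} per coordinate, using the identity-type estimate $|p'_\mu(x)|\le e^{\mu\delta}|p'_\mu(y)|+\mu e^{\mu\delta}$, whose additive error is $\mu e^{\mu\delta}$ per coordinate --- a constant, not proportional to $\cosh(\mu y_i)$. This propagates to $\|\grad\Phi_\mu(\zeta)\|_{U^{(k)}} = e^{\pm O(\mu R)}\|\grad\Phi_\mu(v)\|_{U^{(k)}} \pm O(\mu m R_k)$, where the error is a fixed additive term $O(\mu m R_k)$ independent of the potential. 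The resulting recursion has drift $-\Theta(\epsilon)\|\grad\Phi_\mu(v)\|_{U^{(k)}}$ (which, via $\|\cdot\|_{U^{(k)}} \ge r_k\|\cdot\|_1 \ge r_k\mu(\Phi_\mu(v)-2m)$, grows with $\Phi_\mu(v)$) against a constant additive error $O(\mu m R_k)$, and \emph{that} is why the threshold $12m\tau/\epsilon$ is where the drift overtakes the error. Two further points: (i) your decomposition $\langle h,\delta\rangle = \langle g,\Delta^{(k)}\rangle - \langle h,u^{(k)}\rangle + \langle h-g,\Delta^{(k)}\rangle$ is already lossy --- you should instead write $\langle g,\Delta^{(k)}-u^{(k)}\rangle \le -\epsilon\|g\|_{U^{(k)}}$ by noting the optimizer gives exactly $-(1+\epsilon)\|g\|_{U^{(k)}}$ while $|\langle g,u^{(k)}\rangle|\le\|g\|_{U^{(k)}}$, rather than bounding $\langle g,u^{(k)}\rangle$ by $R_k\|g\|_1$ or even $\langle g,\Delta^{(k)}\rangle$ separately --- but (ii) even with that fix the $\|h-g\|_1\cdot R_k$ error is still too large for $\tau\gg1$, so the route through $\ell_1$-Lipschitz gradient bounds cannot be patched; you need the paper's multiplicative-per-coordinate gradient comparison.
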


We can think of the problem of maintaining weights as playing this
game; we want to keep $\vWeight$ is close to $\vg(\vx)$ while the
adversary controls the change in $\vg(\vx)$ and the noise in approximating
$\vg(\vx)$. Theorem~\ref{thm:zerogame} shows that we control the
error $\ell_{\infty}$ if we can approximate $\vg(\vx)$ in $\ell_{\infty}$.
Since we wish to maintain multiplicative approximations to $g(x)$
we play this game with the $\log$ of $w$ and $g(x)$. Formally,
our goal is to not move $\vWeight$ too much in $\mixedNorm{\cdot}w$
while keeping $\|\log(g(x))-\log(\vWeight)\|_{\infty}\leq K$ for
some error $K$ just small enough to not impair our ability to decrease
$\delta_{t}$ and approximate $\vg$.

\begin{algorithm2e}[H]

\caption{$(\next{\vx},\next{\vWeight})=\code{centeringInexact}(\vx,\vWeight,K)$}

\SetAlgoLined

$R=\frac{K}{48c_{k}\log(36c_{1}c_{s}c_{k}m)}$, $\delta=\delta_{t}(\vx,\vWeight)$
and $\epsilon=\frac{1}{2c_{k}}$.

$\next{\vx}=\vx-\frac{1}{\sqrt{\phi''(\vx)}}\mProj_{\vx,\vWeight}\left(\frac{t\vc-\vWeight\phi'(\vx)}{\vWeight\sqrt{\phi''(\vx)}}\right).$

Let $U=\{\vx\in\Rm~|~\mixedNorm{\vx}{\vWeight}\leq(1-\frac{6}{7c_{k}})\delta\}$.

Find $\vz$ such that $\normInf{\vz-\log(\vg(\next{\vx}))}\leq R$.

$\next{\vWeight}=\exp\left(\log(\vWeight)+\argmin_{\vu\in(1+\epsilon)U}\left\langle \nabla\Phi_{\frac{\epsilon}{12R}}(\vz-\log(w)),\vu\right\rangle \right)$. 

\end{algorithm2e}

In the following Theorem~\ref{thm:smoothing:centering_inexact_weight}
we show that the above algorithm, $\code{centeringInexact}$, achieves
precisely these goals. The algorithm consists primarily of taking
a projected Newton step, which corresponds to solving a linear system,
and a projection onto $U$ (step 5), which in Section~\ref{sec:app:Project_ball_box}
we show can be done polylogarithmic depth and nearly linear work.
Consequently, Theorem~\ref{thm:smoothing:centering_inexact_weight}
is our primary subroutine for designing efficient linear programming
algorithms in Section~\ref{sec:master_thm}.
\begin{thm}
\label{thm:smoothing:centering_inexact_weight} Assume that $K\leq\frac{1}{16c_{k}}$.
Suppose that
\[
\delta\defeq\delta_{t}(\vx,\vWeight)\leq R\enspace\text{ and }\enspace\Phi_{\mu}(\log(\vg(x))-\log(\vWeight))\leq36c_{1}c_{s}c_{k}m
\]
where $\mu=\frac{\epsilon}{12R}$ and $R=\frac{K}{48c_{k}\log(36c_{1}c_{s}c_{k}m)}$.
Let $(\next{\vx},\next{\vWeight})=\code{centeringInexact}(\vx,\vWeight,K)$,
then
\[
\delta_{t}(\next{\vx},\next{\vWeight})\leq\left(1-\frac{1}{4c_{k}}\right)\delta\enspace\text{ and }\enspace\Phi_{\mu}(\log(\vg(\next{\vx}))-\log(\next{\vWeight}))\leq36c_{1}c_{s}c_{k}m.
\]
\end{thm}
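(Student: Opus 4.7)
The plan is to decouple the two conclusions. The centrality bound will come from combining quadratic convergence of the projected Newton step (Lemma~\ref{lem:gen:x_progress}) with the weight-perturbation estimate (Lemma~\ref{lem:gen:w_change}), while the potential bound will come from instantiating the $\ell_\infty$ Chasing Game (Theorem~\ref{thm:zerogame}) with player position $\log\vWeight$ and adversary position $\log\vg(\vx)$. The initial setup is to observe that $\Phi_\mu(\log\vg(\vx)-\log\vWeight)\le 36c_1c_sc_km$ together with the choices of $\mu$ and $R$ forces $\normInf{\log\vWeight-\log\vg(\vx)}\le K/2\le 1/(32c_k)$, so in particular $\tfrac{4}{5}\vg(\vx)\le\vWeight\le\tfrac{5}{4}\vg(\vx)$, which makes Lemma~\ref{lem:c_gamma}, Lemma~\ref{lem:gen:x_progress}, and Lemma~\ref{lem:max_flow:projection_lemma} all applicable, and which ensures that converting between $\mixedNorm{\cdot}{\vWeight}$ and $\mixedNorm{\cdot}{\vg(\vx)}$ costs only a factor $1+O(1/c_k)$.

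For the centrality bound, since $\next{\vx}=\vx+\vh_t(\vx,\vWeight)$ is a projected Newton step in $\vx$ with fixed weight $\vWeight$, Lemma~\ref{lem:gen:x_progress} yields $\delta_t(\next{\vx},\vWeight)\le 4\delta^2\le 4R\delta$. The weight update $\vu^{*}=\log\next{\vWeight}-\log\vWeight$ lies in $(1+\epsilon)U$ with $\epsilon=1/(2c_k)$, so $\mixedNorm{\vu^{*}}{\vWeight}\le (1+\tfrac{1}{2c_k})(1-\tfrac{6}{7c_k})\delta\le(1-\tfrac{5}{14c_k})\delta$. Applying Lemma~\ref{lem:gen:w_change} with $\vv=\next{\vWeight}$ then gives $\delta_t(\next{\vx},\next{\vWeight})\le (1+O(\delta))\bigl[4R\delta+(1-\tfrac{5}{14c_k})\delta\bigr]$; since $R\le 1/(16c_k)$ is tiny, both the Newton residual $4R\delta$ and the $(1+O(\delta))$ amplification are $o(\delta/c_k)$, yielding the target $(1-\tfrac{1}{4c_k})\delta$.

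For the potential bound, apply Theorem~\ref{thm:zerogame} with player point $\log\vWeight$, adversary point $\log\vg(\vx)$, measurement $\vz$ from step~4 of the algorithm (so $\normInf{\vz-\log\vg(\next{\vx})}\le R$), and $U$ from step~3. The adversary's move $\log\vg(\next{\vx})-\log\vg(\vx)$ is controlled by Lemma~\ref{lem:gen:w_step}: its $\mixedNorm{\cdot}{\vg(\vx)}$-norm is at most $(1-c_k^{-1}+4\epsilon')\epsilon'$ with $\epsilon'=\mixedNorm{\sqrt{\vphi''(\vx)}\vh_t(\vx,\vWeight)}{\vg(\vx)}$, and Lemma~\ref{lem:max_flow:projection_lemma} together with Lemma~\ref{lem:c_gamma} and the norm equivalence of the previous paragraph give $\epsilon'\le(1+O(1/c_k))\delta$. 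Converting back to $\mixedNorm{\cdot}{\vWeight}$ then shows the adversary's move lies in $U$. To apply Theorem~\ref{thm:zerogame} we additionally verify that $U$ is a symmetric convex set contained in an $\ell_\infty$-ball of radius $\le R$ (using $\normInf{\vy}\le\mixedNorm{\vy}{\vWeight}$ and $\delta\le R$) and containing an $\ell_\infty$-ball of radius $\ge(1-\tfrac{6}{7c_k})\delta/(1+\cnorm\sqrt{c_1})$ (using $\normFull{\vy}_{\vWeight}\le\sqrt{c_1}\normInf{\vy}$ from $\normOne{\vWeight}\le c_1$). The ratio is $\tau=O(\cnorm\sqrt{c_1})=O(c_k\sqrt{c_1c_s})$, and the potential bound of Theorem~\ref{thm:zerogame}, namely $\Phi_\mu\le 12m\tau/\epsilon=O(mc_k^2\sqrt{c_1c_s})$, matches the desired $36c_1c_sc_km$ by the explicit definitions of $\cnorm$ and $\epsilon$.

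The delicate part is the arithmetic bookkeeping of the constants. The factor $6/(7c_k)$ inside $U$ must simultaneously be small enough that, after the unavoidable $(1+O(1/c_k))$ losses from switching between $\vg(\vx)$- and $\vWeight$-norms, the adversary's actual move from Lemma~\ref{lem:gen:w_step} still fits in $U$; and large enough that combined with the $(1+4\epsilon')$ amplification in Lemma~\ref{lem:gen:w_change} and the Newton residual $4R\delta$, the total centrality still shrinks by a factor $(1-\tfrac{1}{4c_k})$. Verifying that all these losses propagate correctly with the specified numerical constants is the main technical obstacle.
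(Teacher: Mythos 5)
Your proposal follows essentially the same route as the paper's proof: establish that $\Phi_\mu\le 36c_1c_sc_km$ forces $w$ to be within a $(1\pm\tfrac{1}{32c_k})$ multiplicative band of $g(x)$, so that Lemma~\ref{lem:c_gamma}, Lemma~\ref{lem:gen:x_progress}, and the $w$-vs-$g(x)$ norm comparisons all apply; then (i) obtain the potential invariant by instantiating Theorem~\ref{thm:zerogame} with $\log w$ chasing $\log g(x)$, bounding the adversary's move via Lemma~\ref{lem:gen:w_step} and bounding $\tau$ via $\|w\|_1\le 2c_1$ plus $\|u\|_\infty\le\mixedNorm{u}{w}$, and (ii) obtain the centrality contraction by combining the quadratic Newton bound $\delta_t(\next{x},w)\le 4\delta^2$ from Lemma~\ref{lem:gen:x_progress} with the step-size cap $\mixedNorm{\log\next{w}-\log w}{w}\le(1+\epsilon)(1-\tfrac{6}{7c_k})\delta$ imposed by the algorithm and the perturbation estimate of Lemma~\ref{lem:gen:w_change}. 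That is exactly the paper's decomposition, with the same key lemmas playing the same roles; the only real difference is that you flag the numerical bookkeeping (e.g., verifying that $(1+\tfrac{1}{40c_k})(\tfrac{\delta}{40c_k}+(1-\tfrac{5}{14c_k})\delta)\le(1-\tfrac{1}{4c_k})\delta$, or that $12m\tau/\epsilon\le 36c_1c_sc_km$ under $\cnorm=24\sqrt{c_s}c_k$) rather than carrying it out explicitly, whereas the paper grinds through those chains in full.
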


\begin{proof}
By Lemma~\ref{lem:gen:w_step}, inequality (\ref{eq:centrality_equivalence}),
$\cWeightStab(\fvWeight)\leq1+(1/16c_{k}(g))$ (Lemma \ref{lem:c_gamma})
and $\delta\leq\frac{1}{160c_{k}}$, we have
\begin{align}
\mixedNorm{\log(\vg(\next{\vx}))-\log(\vg(\vx))}{\vg(\vx)} & \leq(1-\frac{1}{c_{k}}+4c_{\gamma}\delta)\cdot c_{\gamma}\delta\leq(1-\frac{1}{c_{k}})\cdot c_{\gamma}\delta+5\delta^{2}\nonumber \\
 & \leq\left(1-\frac{13}{14c_{k}}\right)\delta.\label{eq:log_g_change}
\end{align}
Using $\Phi_{\mu}(\log(\vg(x))-\log(\vWeight))\leq36c_{1}c_{s}c_{k}m$,
the definition of $\mu$ and $R$, and $K\leq\frac{1}{16c_{k}}$,
we have
\[
\normFull{\frac{\vWeight-\vg(x)}{\vg(x)}}_{\infty}\leq\norm{\log(w)-\log(g(x))}_{\infty}+\norm{\log(w)-\log(g(x))}_{\infty}^{2}\leq\frac{17}{16}K\leq\frac{1}{14c_{k}}.
\]
Hence, we have
\[
\mixedNorm{\log(\vg(\next{\vx}))-\log(\vg(\vx))}w\leq\left(1+\frac{1}{14c_{k}}\right)\left(1-\frac{13}{14c_{k}}\right)\delta\leq\left(1-\frac{6}{7c_{k}}\right)\delta.
\]
Therefore, we know that for the Newton step, we have $\log(\vg(\next{\vx}))-\log(\vg(\vx))\in U$
where $U$ is the symmetric convex set given by $U\defeq\{\vx\in\R^{m}\,|\,\mixedNorm{\vx}{\vWeight}\leq C\}$
where $C\defeq(1-\frac{6}{7c_{k}})\delta.$ Note that from our assumption
on $\delta$, we have
\[
C\leq\delta\leq\frac{K}{48c_{k}\log(36c_{1}c_{s}c_{k}m)}=R.
\]
and therefore $U$ is contained in a $\ellInf$ ball of radius $R$.
Therefore, we can play the $\ell_{\infty}$ chasing game on $\log(g(x))$
attempting to maintain the invariant that $\normInf{\log(w)-\log(g(x))}\leq K$
without taking steps that are more than $1+\epsilon$ times the size
of $U$ where we pick $\epsilon=\frac{1}{2c_{k}}$ so to not interfere
with our ability to decrease $\delta_{t}$ linearly. To apply Theorem~\ref{thm:zerogame}
we need to ensure that $R$ satisfies $\frac{12R}{\epsilon}\log\left(\frac{12m\tau}{\epsilon}\right)\leq K$
where here $\tau$ is as defined in Theorem~\ref{thm:zerogame}.

To bound $\tau$, we need to lower bound the radius of $\ellInf$
ball it contains. Since by assumption $\|g(x)\|_{1}\leq\cWeightSize(\fvWeight)$
and $\normInf{\log(\vg(\next{\vx}))-\log(\vWeight)}\leq\frac{1}{2}$,
we have that $\|w\|_{1}\leq2c_{1}(g)$. Hence, we have $\norm{\vu}_{\infty}^{2}\geq\frac{1}{2c_{1}(g)}\norm{\vu}_{\vWeight}^{2}$
for all $u\in\Rm$ and consequently, if $\norm{\vu}_{\infty}\leq\frac{\delta}{4\cnorm\sqrt{c_{1}(g)}}$,
then $\vu\in U$. Therefore, $U$ contains a box of radius $\frac{\delta}{4\cnorm\sqrt{c_{1}(g)}}$
and since $U$ is contained in a box of radius $\delta$, we have
that
\begin{eqnarray*}
\tau & \leq & 4\cnorm\sqrt{c_{1}}\leq96\sqrt{c_{1}}\sqrt{c_{s}}c_{k}.
\end{eqnarray*}
where we used the fact that $\cnorm=24\sqrt{c_{s}}c_{k}$. Using that
$\epsilon=\frac{1}{2c_{k}}$, we have that
\[
\frac{12R}{\epsilon}\log\left(\frac{12m\tau}{\epsilon}\right)\leq48Rc_{k}\log(36c_{1}c_{s}c_{k}m)=K\text{ and }\frac{12m\tau}{\epsilon}\leq36c_{1}c_{s}c_{k}m\,.
\]
This proves that we meet the conditions of Theorem \ref{thm:zerogame}.
Consequently, $\normInf{\log(\vg(\next{\vx}))-\log(\next w)}\leq K$
and $\Phi_{\mu}\leq36c_{1}c_{s}c_{k}m$. 

Since $K\leq\frac{1}{4}$, Lemma \ref{lem:gen:x_progress} and $\delta\leq\frac{1}{160c_{k}}$
shows that 
\begin{equation}
\delta_{t}(\next{\vx},\vWeight)\leq4\delta_{t}(\vx,\vWeight)^{2}\leq\frac{\delta}{40c_{k}}.\label{eq:smoothing:centering_inexact_weight1}
\end{equation}
Step 5 shows that
\[
\mixedNorm{\log(\vWeight)-\log(\next{\vWeight})}{\vWeight}\leq\left(1+\frac{1}{2c_{k}}\right)\left(1-\frac{6}{7c_{k}}\right)\delta\leq\left(1-\frac{5}{14c_{k}}\right)\delta.
\]
Using this and (\ref{eq:smoothing:centering_inexact_weight1}), Lemma
\ref{lem:gen:w_change} shows that
\begin{align*}
\delta_{t}(\next{\vx},\next{\vWeight}) & \leq\left(1+4\left(1-\frac{5}{14c_{k}}\right)\delta\right)\left(\delta_{t}(\next{\vx},\vWeight)+\left(1-\frac{5}{14c_{k}}\right)\delta\right)\\
 & \leq\left(1+\frac{1}{40c_{k}}\right)\left(\frac{\delta}{40c_{k}}+\left(1-\frac{5}{14c_{k}}\right)\delta\right)\\
 & \leq\left(1+\frac{1}{40c_{k}}-\left(\frac{5}{14c_{k}}-\frac{1}{40c_{k}}\right)\right)\delta\leq\left(1-\frac{1}{4c_{k}}\right)\delta.
\end{align*}
\end{proof}

\section{Lewis Weights\label{sec:lewis_weights}}

As discussed in Section~\ref{sec:geom_motiv} and Section~\ref{subsec:Path-Finding},
Lewis weights play a key role in designing a weight function needed
to obtain our fastest linear programming algorithms and designing
an efficiently computable self-concordant barrier. Formally, Lewis
weights are defined as follows.
\begin{defn}[Lewis Weight]
\label{def:lewis_weight} For all $p>0$ and non-degenerate\footnote{The non-degeneracy assumptions on $\ma$ are mild and made primarily
for notational convenience. If $\ma$ has a zero row its corresponding
Lewis weight is defined to be $0$ and if $\ma$ is not full rank,
much of the definitions and analysis still apply where inverses and
determinants are replaced with pseudoinverse and pseudodeterminants
respectively. } $\ma\in\Rmn$ we define the \emph{$\ell_{p}$ Lewis weight} $w^{(p)}(\ma)$
as the unique vector $w\in\Rpm$ such that $w=\sigma(\mw^{\frac{1}{2}-\frac{1}{p}}\ma)$
where $\mw=\mDiag(w)$.
\end{defn}

In this section we provide facts about Lewis weights that we use throughout
the paper. First in Section~\ref{subsec:lewis_weight_convex} we
show how Lewis weights can be written as the minimizer of a convex
problem for all $p>0$. This convex formulation is a critical for
our self-concordant barrier construction. Then in Section~\ref{subsec:lewis_weight_convex}
we provide facts about the stability of Lewis weights which are critical
for analyzing the performance of this barrier. Further, in Section~\ref{sec:lewis_rounding}
we show that Lewis weights yield ellipsoids that are provably good
approximations to the polytope $\Omega=\{x\in\R^{n}\,|\,\norm{\ma x}_{\infty}\leq1\}$.
Finally, in Section~\ref{sec:weight-function} we use this analysis
to show that Lewis weights yield a weight function in the context
of Definition~\ref{def:gen:weight_function}. In Appendix~\ref{sec:extreme_lewis_weights}
we shed further light on Lewis weights and the algorithms we build
with them showing that they interpolate between the natural uniform
distribution over the rows of the matrix (i.e. $p\rightarrow0$) and
the weights that yield a John ellipse of $\Omega$ (i.e. $p\rightarrow\infty$). 

\subsection{Convex Formulation of Lewis Weights}

\label{subsec:lewis_weight_convex}Here we show that Lewis weights
are the result of solving a particular convex optimization problem
for $p>0$ with $p\neq2$. This convex formulation relies on the following
potential. 
\begin{defn}[Volumetric Potential]
 For non-degenerate $\ma\in\R^{m\times n}$ and $p>0$ with $p\neq2$
we define the \emph{volumetric potential }as
\[
\mathcal{V}_{p}^{\ma}(w)\defeq-\frac{1}{1-\frac{2}{p}}\log\det\left(\ma^{\top}\mw^{1-\frac{2}{p}}\ma\right).
\]
We will omit $\ma$ and $p$ when they are clear from context.
\end{defn}

The main result of this section is the following lemma, claiming that
Lewis weights are the unique solution to $\min_{w_{i}\geq0}\mathcal{V}_{p}^{\ma}(w)+\sum_{i\in[m]}w_{i}$,
or equivalently, $\min_{w_{i}\geq0,\sum_{i\in[m]}w_{i}=n}\mathcal{V}_{p}^{\ma}(w)$. 
\begin{lem}
\label{lem:unique_lewis} For all non-degenerate $\ma\in\Rmn$ its
$\ell_{p}$ Lewis weights exist and are unique for $p>0$. For $p\neq2$
the weights $\lpweight(\ma)$ are the unique minimizer of the following
equivalent convex problems: 
\[
\min_{w\in\R_{>0}^{m}}\mathcal{V}_{p}^{\ma}(w)+\sum_{i\in[m]}w_{i}\text{ \,and\, }\min_{w\in\R_{>0}^{m}:\sum_{i\in[m]}w_{i}=n}\mathcal{V}_{p}^{\ma}(w)\,.
\]
\end{lem}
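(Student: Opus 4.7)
The proof reduces, for $p \neq 2$, to two main tasks: identifying the critical points of $\mathcal{V}_p(w) + \sum_i w_i$ with the Lewis weight equation, and establishing strict convexity of $\mathcal{V}_p$ on $\Rpm$. The case $p=2$ is trivial since $\mw^{1/2-1/p}=\mi$, so the defining equation collapses to $w = \sigma(\ma)$, whose unique solution is the leverage score vector of $\ma$. Writing $\alpha \defeq 1 - 2/p$ and $\mm(w) \defeq \ma^\top \mw^\alpha \ma$ for the sequel, the equivalence of the two stated problems will fall out at the end from the homogeneity identity $\mathcal{V}_p(cw) = \mathcal{V}_p(w) - n\log c$ (obtained from $\det(c^\alpha B)=c^{n\alpha}\det B$): rescaling by $c = n/\sum_i w_i$ always decreases the objective and lands on the simplex $\sum_i w_i = n$.

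For the KKT step, Jacobi's formula gives $\partial_{w_i}\log\det\mm(w) = \alpha\, w_i^{\alpha-1}\, a_i^\top \mm(w)^{-1} a_i$, and since by definition $\sigma_i(\mw^{\alpha/2}\ma) = w_i^\alpha\, a_i^\top \mm(w)^{-1} a_i$, this simplifies to $\partial_{w_i}\mathcal{V}_p(w) = -\sigma_i(\mw^{\alpha/2}\ma)/w_i$. Setting $\nabla(\mathcal{V}_p + \sum_i w_i) = 0$ yields exactly $w_i = \sigma_i(\mw^{1/2-1/p}\ma)$, the Lewis weight equation.

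The main technical step is the Hessian computation and its sign. Differentiating $\sigma_i/w_i$ once more using $\partial_{w_j}\mm(w)^{-1} = -\alpha w_j^{\alpha-1}\mm(w)^{-1}a_j a_j^\top \mm(w)^{-1}$ and introducing the projection $\mProj \defeq \mw^{\alpha/2}\ma\,\mm(w)^{-1}\ma^\top\mw^{\alpha/2}$ (with $\mSigma = \mDiag(\sigma)$), I expect to arrive at
\[
\mw\,\nabla^2\mathcal{V}_p(w)\,\mw \;=\; \alpha\,\shurSquared{\mProj}\;+\;(1-\alpha)\,\mSigma,
\]
where $\shurSquared{\mProj} = \mProj \shurProd \mProj$ is the Schur square. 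The delicate case is $\alpha<0$ (i.e., $p<2$), when the first summand is negative semidefinite; I would handle this by regrouping as $|\alpha|(\mSigma-\shurSquared{\mProj})+\mSigma$ and invoking that $\mSigma - \shurSquared{\mProj} = \mLambda$ is a Laplacian (as recorded in the paper's notation section) and hence PSD, while $\mSigma \succ 0$ because non-degeneracy of $\ma$ yields all $\sigma_i>0$. For $p>2$, instead $\shurSquared{\mProj}\succeq 0$ by Schur's product theorem applied to the PSD projection $\mProj$, and $(1-\alpha)\mSigma \succ 0$ provides strict positivity. Either way, $\nabla^2\mathcal{V}_p(w)\succ 0$ on $\Rpm$, giving strict convexity.

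For existence, I would verify that $\mathcal{V}_p(w) + \sum_i w_i$ is coercive on $\Rpm$: the linear term drives the objective to $+\infty$ as $\|w\|\to\infty$ (since $\mathcal{V}_p$ grows only logarithmically in each coordinate), while at the boundary the derivative $1 - \sigma_i(\mw^{\alpha/2}\ma)/w_i$ tends to $-\infty$ as $w_i \to 0^+$ (a short calculation in the two cases $\alpha>0$ and $\alpha<0$, the latter via Sherman--Morrison, shows $\sigma_i/w_i \to \infty$), ruling out boundary minima. Strict convexity plus coercivity gives a unique interior minimizer, which by the KKT step is precisely the $\ell_p$ Lewis weight vector; the equivalence of the two stated problems then follows from the homogeneity identity above, completing the proof.
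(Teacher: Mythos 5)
Your proposal matches the paper's proof on the mathematical core: the gradient formula $\partial_{w_i}\mathcal{V}_p = -\sigma_i/w_i$, the Hessian identity $\mw\nabla^2\mathcal{V}_p\mw = (1-\alpha)\mSigma + \alpha\mProj^{(2)}$ (which is exactly the paper's $\mSigma_w - (1-2/p)\mLambda_w$ after substituting $\mLambda = \mSigma - \mProj^{(2)}$), and the positive-definiteness argument splitting on the sign of $\alpha$ — the paper bundles this into Lemma~\ref{lem:lewis_potential_derivatives}. Your homogeneity identity $\mathcal{V}_p(cw)=\mathcal{V}_p(w)-n\log c$ is a clean alternative to the paper's route for the equivalence of the two problems, which simply observes that the unconstrained minimizer satisfies $\sum_i w_i = \sum_i\sigma_i = n$ and hence already lies on the simplex.

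Where your argument has a gap is existence. For $p>2$ the function $\mathcal{V}_p + \sum_i w_i$ need \emph{not} diverge as a single $w_i \to 0^+$ (if the remaining rows still span, $\det(\ma^\top\mw^{1-2/p}\ma)$ stays bounded away from $0$), so ruling out a boundary infimum cannot rest on the function value blowing up; and a derivative $\partial_{w_i}f\to -\infty$ at a boundary point is not by itself inconsistent with the infimum being approached there (cf.\ $f(x)=1+\sqrt{x}$ at $x=0$). What you actually need is a \emph{uniform} threshold below which the gradient coordinate is negative, so that the infimum over $\Rpm$ equals the infimum over a compact box $\prod_i[\epsilon_i,1]$; your asymptotic argument fixes the other coordinates and does not yield such a uniform bound, nor does it address several coordinates approaching $0$ simultaneously, nor the case where dropping row $i$ makes $\mm_0$ singular (your Sherman--Morrison step silently assumes invertibility of $\mm_0$). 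The paper sidesteps all of this with the inequality $\sigma_i(\mw^{\frac12-\frac1p}\ma) \ge w_i^{1-2/p}\sigma(\ma)_i$ (valid on the whole box $0\le w \le 1$ when $p>2$, and a dual version via $w_{\min}$ when $p<2$), which pins the minimizer into the compact set $\{\sigma(\ma)_i^{p/2}\le w_i\le 1\}$ (resp.\ $\{\sigma(\ma)_i\le w_i\le 1\}$) in one stroke. You should replace your asymptotic boundary argument with such a uniform bound to make the existence step airtight.
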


To prove Lemma~\ref{lem:unique_lewis} we first compute and bound
the gradient and Hessian of $\volPot_{p}^{\ma}(w)$.
\begin{lem}[Gradient and Hessian of Volumetric Potential]
\label{lem:lewis_potential_derivatives} For all non-degenerate $\ma\in\Rmn$,
$w\in\Rpm$, and $p>0$ with $p\neq2$ we have
\[
\nabla\mathcal{V}_{p}^{\ma}(w)=-\mw^{-1}\sigma_{w}\enspace\text{ and }\enspace\nabla^{2}\mathcal{V}_{p}^{\ma}(w)=\mw^{-1}\left(\mSigma_{w}-\left(1-\frac{2}{p}\right)\mLambda_{w}\right)\mw^{-1}
\]
where $\mw\defeq\mDiag(w)$, $\sigma_{w}\defeq\sigma(\mw^{\frac{1}{2}-\frac{1}{p}}\ma)$,
$\mSigma_{w}\defeq\mSigma(\mw^{\frac{1}{2}-\frac{1}{p}}\ma)$, and
$\mLambda_{w}\defeq\mProj(\mw^{\frac{1}{2}-\frac{1}{p}}\ma$). Consequently,
$\mathcal{V}_{p}^{\ma}$ is convex in $w$ and
\begin{equation}
\frac{2}{\max\{p,2\}}\cdot\mw^{-1}\mSigma_{w}\mw^{-1}\preceq\nabla^{2}\mathcal{V}_{p}^{\ma}(w)\preceq\frac{2}{\min\{p,2\}}\cdot\mw^{-1}\mSigma_{w}\mw^{-1}.\label{eq:hess_lower}
\end{equation}
\end{lem}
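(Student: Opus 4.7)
The plan is to prove this lemma by direct calculation: compute $\nabla\volPot_p^{\ma}$ and $\hess\volPot_p^{\ma}$ using standard matrix calculus for log-determinants, recognize the resulting expression as $\mw^{-1}(\mSigma_w - (1-2/p)\mLambda_w)\mw^{-1}$ by reading off diagonal and off-diagonal entries, and then establish the spectral bounds via the PSD ordering $0\preceq\mLambda_w\preceq\mSigma_w$ together with a case analysis on the sign of $q\defeq 1-\tfrac{2}{p}$.

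For the gradient, I would set $M(w)\defeq\ma^\top\mw^q\ma$ and use Jacobi's formula
$\partial_{w_i}\log\det M=\tr\!\left(M^{-1}\,\partial_{w_i}M\right)=q\,w_i^{q-1}\,a_i^\top M^{-1}a_i.$
Since the $i$th leverage score of $\mw^{q/2}\ma$ is $(\sigma_w)_i=w_i^{q}\,a_i^\top M^{-1}a_i$, this immediately gives $\partial_{w_i}\volPot_p^{\ma}(w)=-(\sigma_w)_i/w_i$, i.e., $\nabla\volPot_p^{\ma}=-\mw^{-1}\sigma_w$.

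For the Hessian, I would differentiate again, using $\partial_{w_j}M^{-1}=-M^{-1}(q\,w_j^{q-1}a_ja_j^\top)M^{-1}$ to obtain
$\partial_{w_j}\partial_{w_i}\volPot_p^{\ma}(w)=-(q-1)\,\delta_{ij}\,\frac{(\sigma_w)_i}{w_i^2}+q\,\frac{\mProj_{ij}^{2}}{w_iw_j},$
where $\mProj=\mProj(\mw^{q/2}\ma)$ and we used $\mProj_{ij}=w_i^{q/2}a_i^\top M^{-1}a_j\,w_j^{q/2}$. Reading off the two cases $i=j$ and $i\neq j$, one checks directly that this matches the $(i,j)$ entry of $\mw^{-1}(\mSigma_w-q\mLambda_w)\mw^{-1}$, using $(\mLambda_w)_{ii}=(\sigma_w)_i-(\sigma_w)_i^{2}$ and $(\mLambda_w)_{ij}=-\mProj_{ij}^{2}$ for $i\neq j$. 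This is just bookkeeping; the only care needed is in the signs and the $q$ versus $q-1$ coefficients.

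The key step for the spectral bounds is to establish $0\preceq\mLambda_w\preceq\mSigma_w$. Positivity follows from the fact that $\mLambda_w$ is a genuine (graph) Laplacian: its off-diagonals $-\mProj_{ij}^{2}$ are nonpositive, and its row sums vanish because $\sum_j\mProj_{ij}^{2}=(\mProj\mProj)_{ii}=\mProj_{ii}=(\sigma_w)_i$, where I used that $\mProj$ is an orthogonal projection. For the upper bound, applying the elementary inequality $x_ix_j\leq\tfrac12(x_i^2+x_j^2)$ to $x^\top\mProj_w^{(2)}x=\sum_{i,j}\mProj_{ij}^{2}x_ix_j$ and again invoking $\sum_j\mProj_{ij}^{2}=(\sigma_w)_i$ yields $\mProj_w^{(2)}\preceq\mSigma_w$, whence $\mLambda_w\preceq\mSigma_w$. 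Finally I would split on the sign of $q$: if $p>2$ then $q\in(0,1)$ and $\mSigma_w-q\mLambda_w$ is sandwiched between $(1-q)\mSigma_w=\tfrac{2}{p}\mSigma_w$ and $\mSigma_w$; if $p<2$ then $q<0$ and the same two inequalities flip to give $\mSigma_w\preceq\mSigma_w-q\mLambda_w\preceq(1-q)\mSigma_w=\tfrac{2}{p}\mSigma_w$. Conjugating by $\mw^{-1}$ produces the claimed bounds, and the lower bound in particular shows $\hess\volPot_p^{\ma}\succ 0$, proving convexity. The main obstacle is nothing conceptually deep — just keeping the signs straight in the case $p<2$ where $q$ is negative and the roles of the two inequalities on $\mLambda_w$ interchange.
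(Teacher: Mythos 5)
Your calculation of the gradient and Hessian is correct and is the same direct matrix-calculus derivation the paper carries out (deferred there to Lemma~\ref{lem:all_volumetric_derivs}); the bookkeeping with $q=1-2/p$, Jacobi's formula, $\partial M^{-1}$, and the identification of entries with $\mw^{-1}(\mSigma_w-q\mLambda_w)\mw^{-1}$ all check out, as does the case split on $\mathrm{sign}(q)$ to get the two spectral bounds from $0\preceq\mLambda_w\preceq\mSigma_w$.

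The one genuine gap is in establishing $\mLambda_w\preceq\mSigma_w$. You correctly prove $\mLambda_w\succeq 0$ via the Laplacian/diagonal-dominance argument. You then write that the AM--GM bound $\sum_{i,j}\mProj_{ij}^2x_ix_j\leq\tfrac12\sum_{i,j}\mProj_{ij}^2(x_i^2+x_j^2)$ gives $\mProj_w^{(2)}\preceq\mSigma_w$, ``whence $\mLambda_w\preceq\mSigma_w$.'' But since $\mLambda_w=\mSigma_w-\mProj_w^{(2)}$, the inequality $\mProj_w^{(2)}\preceq\mSigma_w$ is equivalent to $\mLambda_w\succeq 0$, not to $\mLambda_w\preceq\mSigma_w$ --- you have derived the same fact a second time and then misread it. What you actually need for $\mLambda_w\preceq\mSigma_w$ is $\mProj_w^{(2)}\succeq 0$, which does not follow from diagonal dominance or AM--GM; it is the Schur product theorem applied to $\mProj\circ\mProj$ with $\mProj\succeq 0$. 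Both bounds $\mLambda_w\succeq 0$ and $\mLambda_w\preceq\mSigma_w$ are genuinely used in your case analysis (the first for the upper bound when $p>2$ and the lower bound when $p<2$, the second for the opposite pair), so the missing $\mProj_w^{(2)}\succeq 0$ is load-bearing. This is exactly the role the paper's Lemma~\ref{lem:tool:projection_matrices} plays: it records both $\mSigma-\mProj^{(2)}$ diagonally dominant and $\mProj^{(2)}\succeq 0$ via the Schur product theorem. Once you add that one line, the rest of your argument goes through.
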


\begin{proof}
The formulas for $\nabla\mathcal{V}_{p}^{\ma}(w)$ and $\nabla^{2}\mathcal{V}_{p}^{\ma}(w)$
follow from a more general Lemma~\ref{lem:all_volumetric_derivs}.
Now, $\mZero\preceq\mLambda_{w}\preceq\mSigma_{w}$ by Lemma~\ref{lem:tool:projection_matrices};
consequently $\mWeight^{-1}\mSigma_{w}\mw^{-1}\preceq\nabla^{2}\mathcal{V}_{p}^{\ma}(w)\preceq\frac{2}{p}\mWeight^{-1}\mSigma_{w}\mw^{-1}$
when $p<2$ and $\frac{2}{p}\mWeight^{-1}\mSigma_{w}\mw^{-1}\preceq\nabla^{2}\mathcal{V}_{p}^{\ma}(w)\preceq\mWeight^{-1}\mSigma_{w}\mw^{-1}$
when $p>2$. In either case (\ref{eq:hess_lower}) holds.
\end{proof}
Using Lemma~\ref{lem:lewis_potential_derivatives} we prove Lemma~\ref{lem:unique_lewis},
our main result of this section.
\begin{proof}[Proof of Lemma~\ref{lem:unique_lewis}]
 For all $w\in\R_{>0}^{m}$ let $f(w)=\mathcal{V}_{p}^{\ma}(w)+\sum_{i=1}^{m}w_{i}$
and $\sigma_{w}\defeq\sigma(\mw^{\frac{1}{2}-\frac{1}{p}}\ma)$ where
$\mw=\mDiag(w)$. Lemma~\ref{lem:lewis_potential_derivatives} and
$[\sigma_{w}]_{i}\leq1$ (Lemma~\ref{lem:tool:projection_matrices})
yields that for all $w\in\R_{>0}^{m}$ if $w_{i}>1$ then
\[
\frac{\partial f(w)}{\partial w_{i}}=1-\frac{[\sigma_{w}]_{i}}{w_{i}}\geq1-\frac{1}{w_{i}}>0\,.
\]
Hence, we have $\inf_{w_{i}>0}f(w)=\inf_{1>w_{i}>0}f(w)$.

Now, if $p>2$ and $w_{i}\in[0,1]$ for all $i\in[m]$ then since
$1-\frac{2}{p}>0$
\begin{align}
[\sigma_{w}]_{i} & =\sigma\left(\mw^{\frac{1}{2}-\frac{1}{p}}\ma\right)_{i}=w_{i}^{1-\frac{2}{p}}\left[\ma(\ma^{\top}\mw^{1-\frac{2}{p}}\ma)^{-1}\ma^{\top}\right]_{ii}\nonumber \\
 & \geq w_{i}^{1-\frac{2}{p}}\left[\ma(\ma^{\top}\ma)^{-1}\ma^{\top}\right]_{ii}=w_{i}^{1-\frac{2}{p}}\sigma(\ma)_{i}\,.\label{eq:lever_bound}
\end{align}
Since $\ma$ is non-degenerate, $\sigma(\ma)_{i}\in(0,1]$ for all
$i$. Therefore for any $j$ with $w_{j}<\sigma(\ma)_{j}^{p/2}$,
we have
\[
\frac{\partial f(w)}{\partial w_{i}}=1-\frac{[\sigma_{w}]_{j}}{w_{j}}\leq1-w_{j}^{-\frac{2}{p}}\sigma(\ma)_{j}<0
\]
and consequently, $\inf_{w_{i}>0}f(w)=\inf_{1>w_{i}\geq0}f(w)=\inf_{1>w_{i}>\sigma_{i}^{p/2}}f(w)$.

Similarly, if $p<2$, $w_{i}\in[0,1]$ for all $i\in[m]$, and $w_{\min}=\min_{i\in[m]}w_{i}$
then since $1-\frac{2}{p}<0$ we have $\mw^{1-\frac{2}{p}}\preceq w_{\min}^{1-\frac{2}{p}}\mi$.
Consequently, by analogous derivation to (\ref{eq:lever_bound}) we
have $[\sigma_{w}]_{i}\geq\left(w_{i}/w_{\min}\right)^{1-(2/p)}\sigma(\ma)_{i}$.
Consequently, if $j\in\argmin_{i\in[m]}w_{i}$ this implies that $[\sigma_{w}]_{j}\geq\sigma(\ma)_{j}$
and therefore if $w_{j}<\sigma(\ma)_{j}$ we have $\frac{\partial f(w)}{\partial w_{j}}<0$.
Therefore, if we let $\sigma_{\min}=\min_{i\in[m]}\sigma_{i}>0$ (since
$\ma$ is non-degenerate) we have $\inf_{w_{i}>0}f(w)=\inf_{1>w_{i}\geq\sigma(\ma)_{i}}f(w)$.

In either case, since $f$ is continuous the above reasoning argues
that $f$ achieves its minimum on the interior of the domain and therefore
we have that the minimizer of $w_{*}$ of $f(w)$ satisfies $\grad f(w_{*})=\vzero$,
i.e. $[w_{*}]_{i}=[\sigma_{w}]_{i}$ for all $i\in[n].$ This proves
that the minimizer $f(w)$ on $w\in\R_{>0}^{m}$ exists and are Lewis
weights. Further, Lemma~\ref{lem:lewis_potential_derivatives} shows
that
\[
\nabla^{2}f(w)\succeq\frac{2}{\max(p,2)}\cdot\mw^{-1}\mSigma_{w}\mw^{-1}\succ\mZero
\]
for all $w>0$ and therefore $f$ is strictly convex where $1\geq w_{i}\geq\min(\sigma_{i},\sigma_{i}^{p/2})$
for all $i$. Consequently, the minimizer of $f$ is unique and it
is the unique point satisfying $\grad f(w)=\vzero$ for $w\in\R_{>0}^{n}$.
Further, since $\sum_{i=1}^{m}[\sigma_{w}]_{i}{}_{i}=n$ by Lemma~\ref{lem:tool:projection_matrices}
we have $\sum_{i=1}^{m}w_{p}(\ma)_{i}=n$ and we have the desired
equivalence of the two given objective functions.
\end{proof}

\subsection{Stability of Lewis Weight Under Rescaling\label{subsec:lewis_weight_rescale}}

Here we study the sensitivity of Lewis weight under rescaling $\ma$.
In Lemma~\ref{lem:LS_weights_de} we compute the Jacobian of Lewis
weights with respect to rescaling and in Lemma~\ref{lem:LS_weights_stable}
we bound it.
\begin{lem}
\label{lem:LS_weights_de} For all non-degenerate $\ma\in\R^{m\times n}$,
$p>0$ with $p\neq2$, and $v\in\R_{.}^{m}$, let $w(v)\defeq\lpweight(\mv\ma)$
where $\mv=\mDiag(v)$. Then, for $\mLambda_{v}\defeq\mLapProj\left(\mw^{\frac{1}{2}-\frac{1}{p}}\mv\ma\right)$
and $\mw_{v}\defeq\mDiag(w(v))$ we have
\[
\mj_{w}(v)=2\mw_{v}\left(\mw_{v}-\left(1-\frac{2}{p}\right)\mLambda_{v}\right)^{-1}\mLambda_{v}\mv^{-1}\,.
\]
\end{lem}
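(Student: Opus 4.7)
The plan is to apply the implicit function theorem to the optimality condition characterizing $w(v)$. By Lemma~\ref{lem:unique_lewis} applied to the matrix $\mv\ma$, the vector $w(v)$ is the unique solution (in $w$, given $v$) of
\[
G(w,v) \defeq \nabla_w\bigl(\mathcal{V}_p^{\mv\ma}(w) + \mathbf{1}^\top w\bigr) = \mathbf{1} - \mw^{-1}\sigma\bigl(\mw^{\frac{1}{2}-\frac{1}{p}}\mv\ma\bigr) = 0,
\]
and Lemma~\ref{lem:lewis_potential_derivatives} guarantees $\partial_w G = \nabla^2_{ww}\mathcal{V}_p^{\mv\ma}(w) \succ 0$. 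Hence $v\mapsto w(v)$ is differentiable and $\mj_w(v) = -[\partial_w G]^{-1}\partial_v G$, with both factors evaluated at $w = w(v)$.

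First I compute $\partial_w G$ directly from Lemma~\ref{lem:lewis_potential_derivatives}, which gives
\[
\partial_w G = \mw^{-1}\bigl(\mSigma\bigl(\mw^{\frac{1}{2}-\frac{1}{p}}\mv\ma\bigr) - (1-\tfrac{2}{p})\mLambda_v\bigr)\mw^{-1}.
\]
At $w = w(v)$, the defining identity $\sigma(\mw_v^{\frac{1}{2}-\frac{1}{p}}\mv\ma) = w(v)$ collapses the leverage-score diagonal to $\mw_v$, so $\partial_w G = \mw_v^{-1}\bigl(\mw_v - (1-\tfrac{2}{p})\mLambda_v\bigr)\mw_v^{-1}$. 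For $\partial_v G$ I only need the partial derivative (with $w$ held fixed) of the leverage-score vector $\sigma\bigl(\mw^{\frac{1}{2}-\frac{1}{p}}\mv\ma\bigr)$ with respect to $v$. Writing $\mu = \mv\mw^{\frac{1}{2}-\frac{1}{p}}$ so that this vector equals $\diag(\mProj(\mu\ma))$, the dependence on $v$ enters only through $\mu^2 = \mv^2\mw^{1-\frac{2}{p}}$ and the two outer factors of $\mu$. Differentiating $[\mu\ma(\ma^\top\mu^2\ma)^{-1}\ma^\top\mu]_{ii}$ by the product rule, using $\partial_{v_k}\mu^2 = 2v_kw_k^{1-\frac{2}{p}}e_ke_k^\top$ and the symmetry of the projection matrix, yields
\[
\frac{\partial\sigma_i}{\partial v_k} = \frac{2}{v_k}\bigl(\delta_{ik}\sigma_i - \mProj(\mu\ma)_{ik}^2\bigr),
\]
which at the fixed point is exactly $\partial_v\sigma = 2\mLambda_v\mv^{-1}$ in matrix form. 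Therefore $\partial_v G = -2\mw_v^{-1}\mLambda_v\mv^{-1}$.

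Combining the two pieces,
\[
\mj_w(v) = -\bigl[\mw_v^{-1}(\mw_v - (1-\tfrac{2}{p})\mLambda_v)\mw_v^{-1}\bigr]^{-1}\bigl(-2\mw_v^{-1}\mLambda_v\mv^{-1}\bigr) = 2\mw_v\bigl(\mw_v - (1-\tfrac{2}{p})\mLambda_v\bigr)^{-1}\mLambda_v\mv^{-1},
\]
after the $\mw_v\cdot\mw_v^{-1}$ cancellation, which matches the claimed formula. The technically delicate step is the computation of $\partial_v\sigma$; it is structurally the same projection-matrix derivative used inside the proof of Lemma~\ref{lem:lewis_potential_derivatives}, so the difficulty is bookkeeping (tracking the factors of $v_k$, $w_k^{\frac{1}{2}-\frac{1}{p}}$, and the two outer copies of $\mu$) rather than anything conceptual. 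Once that calculation is carried out carefully, the remainder of the proof is pure algebra.
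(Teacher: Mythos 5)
Your proof is correct and follows the same strategy as the paper's: apply implicit differentiation to the first-order optimality condition $\nabla_w f(v,w(v))=0$ from Lemma~\ref{lem:unique_lewis}, compute $\nabla_{ww}^2 f$ from Lemma~\ref{lem:lewis_potential_derivatives} and $\nabla_{wv}^2 f$ from the projection-derivative formula, and invert. The only superficial difference is that you rederive the $\partial_v\sigma$ identity by hand rather than citing Lemma~\ref{lem:deriv:proj} directly, but the computation is the same.
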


\begin{proof}
Let $f(v,w)\defeq\frac{-1}{1-\frac{2}{p}}\log\det(\ma^{\top}\mv\mw^{1-\frac{2}{p}}\mv\ma)+\sum_{i=1}^{m}w_{i}$.
Lemma~\ref{lem:unique_lewis} shows that
\[
w(v)=\argmin_{w\in\R_{>0}^{m}}f(v,w)
\]
and that the optimal is in the interior. Hence, the optimality conditions
yield $\nabla_{w}f(v,w(v))=0$. Taking derivative with respect to
$v$ on both sides, we have that
\[
\nabla_{wv}^{2}f(v,w(v))+\nabla_{ww}^{2}f(v,w(v))\mj_{w}(v)=0.
\]
Therefore, we have that
\begin{equation}
\mj_{w}(v)=-(\nabla_{ww}^{2}f(v,w(v)))^{-1}\nabla_{wv}^{2}f(v,w(v)).\label{eq:lewis_change_1}
\end{equation}
Lemma~\ref{lem:lewis_potential_derivatives} and Lemma~\ref{lem:all_volumetric_derivs}
yield that 
\begin{align}
\nabla_{ww}^{2}f(v,w) & =\mw^{-1}\left(\mSigma_{w}-\left(1-\frac{2}{p}\right)\mLambda_{w}\right)\mw^{-1}.\label{eq:lewis_change_2}
\end{align}

For $\nabla_{wv}^{2}f(v,w(v))$, we note that $\nabla_{w}f(v,w)=-\mw^{-1}\sigma(\mw^{\frac{1}{2}-\frac{1}{p}}\mv\ma)$.
Taking derivative with respect to $v$ and using Lemma~\ref{lem:deriv:proj}
gives that
\begin{equation}
[\nabla_{wv}^{2}f(v,w)]_{ij}=-\frac{2}{w_{i}}\mLambda[\mw^{\frac{1}{2}-\frac{1}{p}}\mv\ma]_{ij}\cdot v_{j}^{-1}.\label{eq:lewis_change_3}
\end{equation}
Combining (\ref{eq:lewis_change_1}), (\ref{eq:lewis_change_2}),
and (\ref{eq:lewis_change_3}), we have
\begin{align*}
\mj_{w}(v) & =\mw_{v}\left(\mSigma_{w}-\left(1-\frac{2}{p}\right)\mLambda_{w}\right)^{-1}\mw_{v}\cdot2\mw_{v}^{-1}\mLambda_{v}\mv^{-1}\,.
\end{align*}
The result follows from $w(v)=\sigma(\mw^{\frac{1}{2}-\frac{1}{p}}\mv\ma)$
by Lemma~\ref{lem:unique_lewis}.
\end{proof}
\begin{lem}
\label{lem:LS_weights_stable} Under the setting of Lemma~\ref{lem:LS_weights_de}
for any $v\in\Rpm$ and $h\in\Rm$, we have that
\begin{align}
\norm{\mw_{v}^{-1}\mj_{w}(v)h}_{w(v)} & \leq p\cdot\normFull{\mv^{-1}h}_{w(v)},\label{eq:weight_change_w_norm}
\end{align}
and that 
\begin{equation}
\norm{[\mw_{v}^{-1}\mj_{w}(v)-p\mv^{-1}]h}_{\infty}\leq p\cdot\max\left\{ \frac{p}{2},1\right\} \cdot\norm{\mv^{-1}h}_{w(v)}\,.\label{eq:weight_change_infty_norm}
\end{equation}
\end{lem}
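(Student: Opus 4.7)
The plan is to reduce both inequalities to a spectral analysis of a single symmetric contraction. Starting from the formula in Lemma~\ref{lem:LS_weights_de}, set $\alpha \defeq 1 - 2/p$ and define $\mK \defeq \mw_v^{-1/2}\mLambda_v\mw_v^{-1/2}$. A direct conjugation by $\mw_v^{1/2}$ yields
\[
\mw_v^{-1}\mj_w(v) \;=\; \mw_v^{-1/2}\, f(\mK)\, \mw_v^{1/2}\mv^{-1},\qquad f(\lambda) \defeq \frac{2\lambda}{1-\alpha\lambda}.
\]
Crucially, the Lewis-weight fixed-point equation $w(v) = \sigma(\mw_v^{1/2-1/p}\mv\ma)$ gives $\mSigma(\mw_v^{1/2-1/p}\mv\ma) = \mw_v$, so $\mLambda_v = \mw_v - \mProj^{(2)}$ where $\mProj = \mProj(\mw_v^{1/2-1/p}\mv\ma)$. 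By Lemma~\ref{lem:tool:projection_matrices} one has $0 \preceq \mK \preceq \mI$, and the diagonal identity $\mI - \mK = \mw_v^{-1/2}\mProj^{(2)}\mw_v^{-1/2}$ satisfies $(\mI-\mK)_{ii} = \mProj_{ii}^2/w_i = w_i$, since $\mProj_{ii} = \sigma_i = w_i$. These two structural facts drive the rest of the argument.

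For (\ref{eq:weight_change_w_norm}), setting $u = \mw_v^{1/2}\mv^{-1}h$ gives $\norm{\mw_v^{-1}\mj_w(v)h}_{w(v)} = \norm{f(\mK)u}_2$ and $\norm{\mv^{-1}h}_{w(v)} = \norm{u}_2$. A one-line derivative check shows $f$ is increasing on $[0,1]$ with $f(1) = 2/(1-\alpha) = p$ (in both regimes $p\geq 2$ and $p<2$), so the spectral mapping theorem gives $\norm{f(\mK)}_2 \leq p$ and the claim follows.

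For (\ref{eq:weight_change_infty_norm}), the key algebraic identity is
\[
f(\lambda) - p \;=\; \frac{(2+p\alpha)\lambda - p}{1-\alpha\lambda} \;=\; \frac{p(\lambda-1)}{1-\alpha\lambda},
\]
which uses the cancellation $2 + p\alpha = p$. Thus $(f(\mK) - p\mI)^2 = p^2(\mI-\mK)(\mI-\alpha\mK)^{-2}(\mI-\mK)$. A short case split bounds the middle factor: for $p\geq 2$ one has $\alpha\in[0,1)$ and $(1-\alpha\lambda)^{-2}\leq(1-\alpha)^{-2}=p^2/4$ on $[0,1]$, while for $p<2$ one has $\alpha<0$ and $(1-\alpha\lambda)^{-2}\leq 1$, giving $(\mI-\alpha\mK)^{-2}\preceq\max(p/2,1)^2\mI$. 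Combined with $(\mI-\mK)^2\preceq\mI-\mK$ and the diagonal identity $(\mI-\mK)_{ii}=w_i$, this yields $\norm{(f(\mK)-p\mI)e_i}_2^2 \leq p^2\max(p/2,1)^2 w_i$. Writing the infinity norm as $\norm{\cdot}_\infty = \max_i \frac{1}{\sqrt{w_i}}\lvert e_i^\top(f(\mK)-p\mI)u\rvert$ and applying Cauchy--Schwarz on each coordinate converts this into (\ref{eq:weight_change_infty_norm}).

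The main obstacle is recognizing the cancellation $2+p\alpha=p$ that forces $f(\mK)-p\mI$ to factor through $\mI-\mK$, together with the Lewis-weight identity $(\mI-\mK)_{ii}=w_i$. Without both, an operator-norm bound would only produce a constant times $\norm{e_i}_2=1$ rather than the required $\sqrt{w_i}$ scaling, and the promised coordinate-wise improvement would be lost.
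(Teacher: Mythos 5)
Your proposal is correct, and it is essentially the paper's argument in lightly different packaging. The conjugation $\mw_v^{-1}\mj_w(v)=\mw_v^{-1/2}f(\mNormProjLap)\mw_v^{1/2}\mv^{-1}$ with $f(\lambda)=2\lambda/(1-\alpha\lambda)$, the spectral bound $\|f(\mNormProjLap)\|_2\le p$ for (\ref{eq:weight_change_w_norm}), and the cancellation $2+p\alpha=p$ yielding $f(\mNormProjLap)-p\mi=-p(\mi-\mNormProjLap)(\mi-\alpha\mNormProjLap)^{-1}$ are exactly the steps in the paper's proof; the paper's line $[2\overline{\mLambda}-p(\mi-(1-\tfrac{2}{p})\overline{\mLambda})]=-p\,\mw^{-1/2}\mProj^{(2)}\mw^{-1/2}$ is precisely your algebraic identity, unconjugated. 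The one place you diverge is in the final $\ell_\infty$ step: the paper invokes part (4) of Lemma~\ref{lem:tool:projection_matrices} (the bound $\|\mSigma^{-1}\mProj^{(2)}x\|_\infty\le\|x\|_\mSigma$), while you re-derive the same conclusion from scratch by combining the operator inequality $(\mi-\mNormProjLap)^2\preceq\mi-\mNormProjLap$, the diagonal identity $(\mi-\mNormProjLap)_{ii}=w_i$ (a consequence of the Lewis fixed point $\mProj_{ii}=w_i$), and per-coordinate Cauchy--Schwarz. This is in fact exactly how Lemma~\ref{lem:tool:projection_matrices}(4) is proved, so you have reconstructed that helper fact inline rather than citing it --- a self-contained but not substantively different route.
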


\begin{proof}
Fix an arbitrary $v\in\Rpm$ and $h\in\R^{m}$ and let $w\defeq w(v)$,
$\mSigma\defeq\mSigma(\mw^{\frac{1}{2}-\frac{1}{p}}\mv\ma)$, $\mLapProj\defeq\mLambda(\mw^{\frac{1}{2}-\frac{1}{p}}\mv\ma)$,
$\mNormProjLap\defeq\mNormProjLap(\mw^{\frac{1}{2}-\frac{1}{p}}\mv\ma)$,
and $\mProj^{(2)}=\mProj^{(2)}(\mw^{\frac{1}{2}-\frac{1}{p}}\mv\ma)$.
By Lemma~\ref{lem:LS_weights_de} and the fact that $w\defeq\lpweight(\mv\ma)=\sigma(\mw^{\frac{1}{2}-\frac{1}{p}}\mv\ma)\in\R_{>0}^{m}$,
we have that 
\begin{align}
\mj_{w}(v)h & =2\mSigma\left(\mSigma-\left(1-\frac{2}{p}\right)\mLambda\right)^{-1}\mLambda\mv^{-1}h=2\mSigma^{1/2}\overline{\mLambda}\left(\mi-\left(1-\frac{2}{p}\right)\overline{\mLambda}\right)^{-1}\mw^{1/2}\mv^{-1}h\label{eq:jacobian_expression}
\end{align}
Consequently (\ref{eq:weight_change_w_norm}) follows from
\[
\norm{\mw_{v}^{-1}\mj_{w}(v)h}_{w}=\norm{\mSigma^{-1/2}z}_{2}=2\normFull{\overline{\mLambda}\left(\mi-\left(1-\frac{2}{p}\right)\overline{\mLambda}\right)^{-1}\mw^{1/2}\mv^{-1}h}_{2}\leq p\norm{\mw^{1/2}\mv^{-1}h}_{2}
\]
where in the last step we used that $\overline{\mLambda}(\iMatrix-(1-\frac{2}{p})\overline{\mLambda})^{-1}$
is a symmetric matrix whose eigenvalues are of the form $\lambda/(1-(1-\frac{2}{p})\lambda)$
for each eigenvalue of $\lambda$ of $\overline{\mLambda}$ and since
$\mZero\preceq\bar{\mLambda}\preceq\mi$
\begin{equation}
\normFull{\overline{\mLambda}\left(\mi-\left(1-\frac{2}{p}\right)\overline{\mLambda}\right)^{-1}}_{2}\leq\max_{0\leq\lambda\leq1}\frac{\lambda}{1-(1-\frac{2}{p})\lambda}=\frac{p}{2}\,.\label{eq:norm_inv_easy_bound}
\end{equation}

Next, note that $\mLapProj=\mLever-\mProj^{(2)}$ and therefore $\mi-\mNormProjLap=\mw^{-1/2}\mProj^{(2)}\mw^{-1/2}$.
Combining this with (\ref{eq:jacobian_expression}) and that $\mi-\left(1-\frac{2}{p}\right)\overline{\mLambda}$
is invertible yields
\begin{align*}
[\mw^{-1}\mj_{w}(v)-p\mv^{-1}]h & =\mw^{-1/2}\left[2\overline{\mLambda}-p\left(\mi-\left(1-\frac{2}{p}\right)\overline{\mLambda}\right)\right]\left(\mi-\left(1-\frac{2}{p}\right)\overline{\mLambda}\right)^{-1}\mw^{1/2}\mv^{-1}h\\
 & =p\mw^{-1}\mProj^{(2)}\mw^{-1/2}\left(\mi-\left(1-\frac{2}{p}\right)\overline{\mLambda}\right)^{-1}\mw^{1/2}\mv^{-1}h\,.
\end{align*}
However, by Lemma~\ref{lem:tool:projection_matrices} we know that
$\norm{\mSigma^{-1}\mProj^{(2)}x}_{\infty}\leq\norm x_{\mSigma}=\norm{\mSigma^{1/2}x}_{2}$
for all $x$ and therefore (\ref{eq:weight_change_infty_norm}) follows
from
\begin{align*}
\norm{[\mw^{-1}\mj_{w}(v)-p\mv^{-1}]h}_{\infty} & \leq p\normFull{\left(\mi-\left(1-\frac{2}{p}\right)\overline{\mLambda}\right)^{-1}\mw^{1/2}\mv^{-1}h}_{2}\leq p\cdot\max\left\{ 1,\frac{p}{2}\right\} \cdot\norm{\mv^{-1}h}_{w}\,.
\end{align*}
To prove the inequality in the last step above, note that $(\iMatrix-(1-\frac{2}{p})\overline{\mLambda})^{-1}$
is a symmetric matrix whose eigenvalues are of the form $1/(1-(1-\frac{2}{p})\lambda)$
for each eigenvalue of $\lambda$ of $\overline{\mLambda}$ and since
$\mZero\preceq\bar{\mLambda}\preceq\mi$
\begin{equation}
\normFull{\left(\mi-\left(1-\frac{2}{p}\right)\overline{\mLambda}\right)^{-1}}_{2}\leq\max_{0\leq\lambda\leq1}\frac{1}{1-(1-\frac{2}{p})\lambda}=\max\left\{ 1,\frac{p}{2}\right\} \,.\label{eq:norm_inv_bound}
\end{equation}
\end{proof}

\subsection{Lewis Weight Rounding Properties\label{sec:lewis_rounding}}

Here we show that the Lewis weights for a matrix $\ma\in\R^{m\times n}$
provide ellipses that provably approximate the polytope $\Omega=\{x\in\R^{n}|\norm{\ma x}_{\infty}\leq1\}$.
This bound helps analyze both the self-concordance of the Lewis weight
barrier and the efficacy as Lewis weights for the weighted central
path. The main result of this section is the more general Lemma~\ref{lem:lewis_p_change}
which relates every $\ell_{p}$ Lewis weight to $\ell_{r}$ Lewis
weight for $r\geq p$. In particular, it bounds how well $\ell_{p}$
lewis weights $w$ satisfy the optimality conditions of being a $\ell_{r}$
Lewis weight, i.e. the size of $\sigma(\mw^{\frac{1}{2}-\frac{1}{r}}\ma)_{i}w_{i}^{-1}$.
\begin{lem}
\label{lem:lewis_p_change} For all non-degenerate $\ma\in\Rmn$ and
$w\defeq\lpweight(\ma)$ for $0<p<r$ we have
\[
\sigma(\mw^{\frac{1}{2}-\frac{1}{r}}\ma)_{i}w_{i}^{-1}\leq c_{p,r,m}\defeq\left(1+\alpha\right)^{\frac{1}{1+\alpha}}\left(\left(1+\frac{1}{\alpha}\right)m\right)^{\frac{\alpha}{1+\alpha}}\leq2m^{\frac{\alpha}{1+\alpha}}\text{ where }\alpha=\frac{2}{p}-\frac{2}{r}\,.
\]
\end{lem}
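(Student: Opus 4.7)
The plan is to translate the claim into a matrix inequality over vectors in isotropic position and then close it via Sherman--Morrison and H\"older. Set $\mm\defeq\ma^{\top}\mw^{1-2/p}\ma$ and $b_j\defeq\mm^{-1/2}\ma_j$; the Lewis weight identity $w_j=\sigma(\mw^{1/2-1/p}\ma)_j$ rearranges to $\|b_j\|^2=w_j^{2/p}$, so the normalized vectors $\hat b_j\defeq w_j^{-1/p}b_j$ are unit vectors with $\sum_j w_j\hat b_j\hat b_j^{\top}=\mi$ (the pullback of $\mm$ under $\mm^{-1/2}$); in particular $\sum_j w_j=n$ and $w_j\le 1$. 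Substituting $\ma_j=w_j^{1/p}\mm^{1/2}\hat b_j$ into $\sigma(\mw^{1/2-1/r}\ma)_i=w_i^{1-2/r}\ma_i^{\top}(\ma^{\top}\mw^{1-2/r}\ma)^{-1}\ma_i$, the target collapses to
\[
\sigma(\mw^{1/2-1/r}\ma)_i\, w_i^{-1}=w_i^{\alpha}\,\hat b_i^{\top}\mt^{-1}\hat b_i, \qquad \mt\defeq\sum_j w_j^{1+\alpha}\hat b_j\hat b_j^{\top}.
\]
It therefore suffices to bound $w_i^{\alpha}\hat b_i^{\top}\mt^{-1}\hat b_i$ using only the isotropic data $(\hat b_j,w_j)$.

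For this, I would isolate the $i$-th term by Sherman--Morrison: letting $\mt_{-i}\defeq\mt-w_i^{1+\alpha}\hat b_i\hat b_i^{\top}$ and $u\defeq\hat b_i^{\top}\mt_{-i}^{-1}\hat b_i$, the identity $\hat b_i^{\top}\mt^{-1}\hat b_i=u/(1+w_i^{1+\alpha}u)$ immediately gives the trivial bound $\sigma(\mw^{1/2-1/r}\ma)_i w_i^{-1}\le 1/w_i$. The nontrivial step is the upper bound on $u=\sup_x(\hat b_i^{\top}x)^2/(x^{\top}\mt_{-i}x)$: applying H\"older in the Jensen form
\[
\sum_{j\neq i}w_j^{1+\alpha}(\hat b_j^{\top}x)^2 \ \ge\ \Bigl(\sum_{j\neq i}w_j(\hat b_j^{\top}x)^2\Bigr)^{\!1+\alpha}\Bigl/\Bigl(\sum_{j\neq i}(\hat b_j^{\top}x)^2\Bigr)^{\!\alpha},
\]
together with the isotropy identity $\sum_j w_j(\hat b_j^{\top}x)^2=\|x\|^2$ and the trace estimate $\sum_j(\hat b_j^{\top}x)^2\le m\|x\|^2$ (since $\tr\sum_j\hat b_j\hat b_j^{\top}=m$), and then maximizing over $\|x\|=1$ in the variable $y=(\hat b_i^{\top}x)^2\in[0,1]$ (the maximum of $y/(1-w_iy)^{1+\alpha}$ is attained at $y=1$), yields $u\le m^{\alpha}/(1-w_i)^{1+\alpha}$.

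Plugging this back into Sherman--Morrison gives the closed-form bound $\sigma(\mw^{1/2-1/r}\ma)_i w_i^{-1}\le f(w_i)\defeq w_i^{\alpha}m^{\alpha}/[(1-w_i)^{1+\alpha}+w_i^{1+\alpha}m^{\alpha}]$, and the proof finishes by showing $\min\{f(w),1/w\}\le c_{p,r,m}$ for all $w\in(0,1)$. For this I would lower bound the denominator of $f$ by weighted AM--GM with weights $(1/(1+\alpha),\alpha/(1+\alpha))$, obtaining
\[
(1-w)^{1+\alpha}+w^{1+\alpha}m^{\alpha}\ge (1+\alpha)\,\alpha^{-\alpha/(1+\alpha)}(1-w)\,w^{\alpha}\,m^{\alpha^2/(1+\alpha)}
\]
and hence $f(w)\le\alpha^{\alpha/(1+\alpha)}m^{\alpha/(1+\alpha)}/[(1+\alpha)(1-w)]$; combining this with the trivial bound $1/w$ and evaluating at their crossing point $w_c=(1+\alpha)/[(1+\alpha)+\alpha^{\alpha/(1+\alpha)}m^{\alpha/(1+\alpha)}]$ yields the desired $c_{p,r,m}$, and the simplification $c_{p,r,m}\le 2m^{\alpha/(1+\alpha)}$ then follows from the elementary inequality $\alpha^{\alpha/(1+\alpha)}\le 1+\alpha$ (verified separately on $\alpha\le 1$ and $\alpha>1$).

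The main obstacle is the H\"older step for $u$: applied naively to the full sum $\sum_j$ (rather than $\sum_{j\neq i}$) it only produces $u\le m^{\alpha}\|x\|^2$, which combined with $(\hat b_i^{\top}x)^2\le\|x\|^2$ yields the much weaker bound $\sigma(\mw^{1/2-1/r}\ma)_i w_i^{-1}\le m^{\alpha}$. It is precisely the combination of Sherman--Morrison with the ``leave-one-out'' H\"older, together with the identity $\sum_{j\neq i}w_j(\hat b_j^{\top}x)^2=\|x\|^2-w_i(\hat b_i^{\top}x)^2$, that produces the crucial $(1-w_i)^{-(1+\alpha)}$ factor and ultimately drops the exponent from $\alpha$ down to $\alpha/(1+\alpha)$.
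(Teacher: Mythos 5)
Your approach is genuinely different from the paper's and the technical core is correct, but the final quantitative step does not actually produce the constant $c_{p,r,m}$ as you claim.

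The paper proves the claim via the global matrix comparison in Lemma~\ref{lem:AWAASigmaA}: split the rows by whether $w_i>\varepsilon/m$, use $\sum_{w_i\leq\varepsilon/m}w_i\leq\varepsilon$ to drop the small rows at cost $(1-\varepsilon)^{-1}$, compare weight exponents $w^{1-2/p}\leq(m/\varepsilon)^{\alpha}w^{1-2/r}$ on the large rows, then optimize $\varepsilon$; the resulting Loewner bound combined with the trivial $\leq w_i^{-1}$ closes via $\min\{ax^{\alpha},x^{-1}\}\leq a^{1/(1+\alpha)}$. Your route instead passes to the isotropic unit vectors $\hat b_j$, applies Sherman--Morrison to isolate the $i$-th term, and a leave-one-out H\"older bound to control $\mt_{-i}$. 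The change of variables, the Sherman--Morrison identity, the Jensen/power-mean inequality $\sum_{j\neq i}a_jw_j^{1+\alpha}\geq(\sum_{j\neq i}a_jw_j)^{1+\alpha}/(\sum_{j\neq i}a_j)^{\alpha}$, the isotropy and trace bounds, and the monotonicity of $y\mapsto y/(1-w_iy)^{1+\alpha}$ are all correct, and the resulting bound $f(w_i)=w_i^{\alpha}m^{\alpha}/[(1-w_i)^{1+\alpha}+w_i^{1+\alpha}m^{\alpha}]$ is a genuinely interesting alternative to the paper's $(1+\alpha)((1+1/\alpha)m)^{\alpha}w_i^{\alpha}$. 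Your weighted AM--GM and the evaluation at the crossing point are also computed correctly.

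The gap is in the last sentence: the crossing of $\beta/(1-w)$ with $1/w$ at $w_c=1/(1+\beta)$ gives the value $1+\beta=1+\alpha^{\alpha/(1+\alpha)}m^{\alpha/(1+\alpha)}/(1+\alpha)$, and this is \emph{not} $c_{p,r,m}=(1+\alpha)\alpha^{-\alpha/(1+\alpha)}m^{\alpha/(1+\alpha)}$. Indeed, as $\alpha\to0$ with $m$ fixed, $c_{p,r,m}\to1$ (which is tight, since $p=r$ gives ratio exactly $1$) while $1+\beta\to2$, so your bound is strictly worse there; you cannot match the $c_{p,r,m}$ stated in the lemma. What you do correctly establish is $1+\beta\leq2m^{\alpha/(1+\alpha)}$, via $\alpha^{\alpha/(1+\alpha)}\leq1+\alpha$ and $m^{\alpha/(1+\alpha)}\geq1$; that is the bound the paper actually uses downstream (e.g.\ in Lemma~\ref{lem:s_lipschitz} and Theorem~\ref{thm:max_flow:weight_properties}). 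So your argument is a valid, differently-flavored proof of the operationally useful inequality $\sigma(\mw^{1/2-1/r}\ma)_iw_i^{-1}\leq2m^{\alpha/(1+\alpha)}$, but to claim the intermediate $\leq c_{p,r,m}$ you would need to work with the sharper $f(w)$ directly (it is tangent to $1/w$ at $w=1$) rather than passing through the AM--GM relaxation, or switch to the paper's $\min\{a w^{\alpha},w^{-1}\}$ form.
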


Lemma~\ref{lem:lewis_p_change} shows that $\ell_{p}$ Lewis weights
for large $p$ yield an ellipse that well approximates $\Omega$ though
the following simple lemma.
\begin{lem}
\label{lem:lewis_rounding} For non-degenerate $\ma\in\Rmn$, $w\defeq\lpweight(\ma)$
for $p>0$, and $\mw\defeq\mDiag(w)$, define 
\[
E\defeq\{x\in\Rn:\ x^{\top}\ma^{\top}\mw\ma x\leq1\}\text{ and }K\defeq\{x\in\Rn:\norm{\ma x}_{\infty}\leq1\}\,.
\]
Then $R\defeq\max_{i\in[m]}\sigma(\mw^{\frac{1}{2}}\ma)_{i}w_{i}^{-1}$
is the smallest value such that $E\subseteq\sqrt{R}K.$ Further, $K$
is a $\sqrt{c_{p,\infty,m}n}$-rounding (See Lemma~\ref{lem:lewis_p_change})
as $K\subset\sqrt{n}E$. 
\end{lem}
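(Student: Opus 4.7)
The plan is to prove the two containments $E \subseteq \sqrt{R}\,K$ (with $\sqrt{R}$ optimal) and $K \subseteq \sqrt{n}\,E$ separately, and then combine them via the bound on $R$ from Lemma~\ref{lem:lewis_p_change} at $r=\infty$.

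For the first (and main) claim, the key observation is that $E \subseteq \sqrt{R}\,K$ is equivalent to $\max_{x \in E} \|\ma x\|_\infty^2 \leq R$. I would bound this by handling one coordinate $i$ at a time: for each $i$, compute
\[
\max_{x:\, x^\top \ma^\top \mw \ma x \leq 1} (e_i^\top \ma x)^2
\]
via the change of variables $y = \mw^{1/2}\ma x$, which turns the constraint into $\|y\|_2 \le 1$ restricted to $y$ in the image of $\mw^{1/2}\ma$, and turns the objective into $(e_i^\top \mw^{-1/2} y)^2$. The maximum is then $\| \mProj(\mw^{1/2}\ma) \mw^{-1/2} e_i\|_2^2$, which simplifies to $w_i^{-1}\sigma(\mw^{1/2}\ma)_i$ using the definition of leverage scores. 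Taking the maximum over $i$ gives exactly $R$, and the value is attained at some coordinate, so $\sqrt{R}$ is the smallest scaling for which $E \subseteq \sqrt{R}\,K$.

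For the second containment, I would use directly that Lewis weights satisfy $\|w\|_1 = n$ (which follows from Lemma~\ref{lem:unique_lewis}, since $w = \sigma(\mw^{1/2-1/p}\ma)$ and leverage scores of a non-degenerate matrix with $n$ columns sum to $n$). For any $x \in K$,
\[
x^\top \ma^\top \mw \ma x = \sum_{i \in [m]} w_i (a_i^\top x)^2 \leq \|\ma x\|_\infty^2 \sum_{i \in [m]} w_i \leq n,
\]
so $x \in \sqrt{n}\,E$.

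To conclude the rounding statement, rescale $E$ to $E' \defeq \frac{1}{\sqrt{R}} E$. By the first claim $E' \subseteq K$, and by the second claim $K \subseteq \sqrt{n}\,E = \sqrt{nR}\,E'$, so $E'$ is a $\sqrt{nR}$-rounding of $K$. Finally, applying Lemma~\ref{lem:lewis_p_change} with $r=\infty$ bounds $R \leq c_{p,\infty,m}$, yielding the $\sqrt{c_{p,\infty,m}\,n}$-rounding. No step is genuinely hard; the only subtle point is being careful that the supremum in Step~1 is over $y$ in $\mathrm{im}(\mw^{1/2}\ma)$ (so the projection onto this subspace appears) rather than over all of $\R^m$.
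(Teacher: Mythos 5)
Your proposal is correct and follows essentially the same approach as the paper: compute $\max_{x\in E}\|\ma x\|_\infty^2$ coordinate-wise (the paper does this via Cauchy--Schwarz on $e_i^\top\ma(\ma^\top\mw\ma)^{-1/2}(\ma^\top\mw\ma)^{1/2}x$, you via a change of variables $y=\mw^{1/2}\ma x$ and projection onto $\mathrm{im}(\mw^{1/2}\ma)$, but both compute the identical quantity $e_i^\top\ma(\ma^\top\mw\ma)^{-1}\ma^\top e_i$), then bound $x^\top\ma^\top\mw\ma x\le \sum_i w_i = n$ for $x\in K$. Your final paragraph explicitly assembling the $\sqrt{c_{p,\infty,m}n}$-rounding from the two containments and Lemma~\ref{lem:lewis_p_change} is slightly more spelled out than the paper, which leaves that step to the reader.
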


\begin{proof}
Note that $\sigma(\mw^{\frac{1}{2}}\ma)_{i}w_{i}^{-1}=\cordVec i^{\top}\ma(\ma^{\top}\mw\ma)^{-1}\ma^{\top}\cordVec i$
and 
\begin{align*}
\max_{x\in E}\norm{\ma x}_{\infty}^{2} & =\max_{i\in[m],x\in E}\left(\cordVec i^{\top}\ma(\ma^{\top}\mw\ma)^{-\frac{1}{2}}(\ma^{\top}\mw\ma)^{\frac{1}{2}}x\right)^{2}=\max_{i\in[m]}\cordVec i^{\top}\ma(\ma^{\top}\mw\ma)^{-1}\ma^{\top}\cordVec i=R.
\end{align*}
Consequently, $R$ is as desired. Further, for any $x\in K$ we have
$x^{\top}\ma^{\top}\mw\ma x\leq\sum_{i\in[m]}w_{i}\leq n$ and therefore
$K\subset\sqrt{n}E$.
\end{proof}
Note that in Lemma~\ref{lem:lewis_p_change} we have $\lim_{p\rightarrow\infty}c_{p,\infty,m}=1$.
This suggest that that as $p\rightarrow\infty$ the ellipse $E$ behaves
like a John ellipse for $\Omega$. Indeed, in Appendix~\ref{sec:pinf}
we show that $E$ converges to the John ellipse of $\Omega$.

To prove Lemma~\ref{lem:lewis_p_change} we first provide the following
helper lemma analyzing the effect of changing the power to which we
might raise $\mw$ in $\ma^{\top}\mw\ma$.
\begin{lem}
\label{lem:AWAASigmaA} Let $\ma\in\Rmn$, $p>0$ and $w=\lpweight(\ma)$.
Then, for any $r\geq p$, we have that
\begin{align*}
\ma^{\top}\mw^{1-\frac{2}{r}}\ma\preceq\ma^{\top}\mw^{1-\frac{2}{p}}\ma & \preceq(1+\alpha)\left(\left(1+\frac{1}{\alpha}\right)m\right)^{\alpha}\ma^{\top}\mw^{1-\frac{2}{r}}\ma\text{ where }\alpha=\frac{2}{p}-\frac{2}{r}\,.
\end{align*}
\end{lem}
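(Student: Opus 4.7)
The plan is to reduce both inequalities to pointwise facts about $w = w_p(\ma)$, in particular the identity $w_i = \sigma(\mw^{\frac{1}{2}-\frac{1}{p}}\ma)_i$ supplied by Definition~\ref{def:lewis_weight}, and then to carry out a threshold-splitting argument on $\mm \defeq \ma^{\top}\mw^{1-\frac{2}{p}}\ma$ and $\mn \defeq \ma^{\top}\mw^{1-\frac{2}{r}}\ma$.

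First I would handle the easy inequality $\mn \preceq \mm$. Writing $\mw^{1-\frac{2}{p}} = \mw^{-\alpha}\mw^{1-\frac{2}{r}}$ where $\alpha = \frac{2}{p}-\frac{2}{r} \geq 0$, it suffices to show $w_i \leq 1$ for all $i$. But by Definition~\ref{def:lewis_weight}, $w_i$ equals a leverage score of the matrix $\mw^{\frac{1}{2}-\frac{1}{p}}\ma$, and every leverage score is in $[0,1]$ (Lemma~\ref{lem:tool:projection_matrices}). Hence $w_i^{-\alpha}\geq 1$ and $\mm = \ma^{\top}\mw^{-\alpha}\mw^{1-\frac{2}{r}}\ma \succeq \mn$.

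For the harder direction, the key pointwise estimate I would establish is
\[
a_i a_i^{\top} \preceq w_i^{2/p}\, \mm \qquad \text{for each } i \in [m],
\]
which follows from Cauchy--Schwarz: for any $x$, $(x^{\top}a_i)^2 \leq (x^{\top}\mm x)(a_i^{\top}\mm^{-1}a_i)$, and the Lewis weight equation $w_i = \sigma(\mw^{\frac{1}{2}-\frac{1}{p}}\ma)_i = w_i^{1-\frac{2}{p}}\cdot a_i^{\top}\mm^{-1}a_i$ gives $a_i^{\top}\mm^{-1}a_i = w_i^{2/p}$. Multiplying by $w_i^{1-\frac{2}{p}}$ yields the clean bound
\[
w_i^{1-\frac{2}{p}} a_i a_i^{\top} \preceq w_i\, \mm.
\]

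Next I would split $\mm = \sum_i w_i^{1-\frac{2}{p}} a_i a_i^{\top}$ at a threshold $\tau \in (0, 1/m)$. On indices with $w_i \geq \tau$, write $w_i^{1-\frac{2}{p}} = w_i^{-\alpha} w_i^{1-\frac{2}{r}} \leq \tau^{-\alpha} w_i^{1-\frac{2}{r}}$, contributing at most $\tau^{-\alpha}\mn$. On indices with $w_i < \tau$, use the pointwise estimate above to get contribution at most $\sum_{i : w_i < \tau} w_i \mm \leq m\tau\, \mm$. Combining gives $\mm \preceq \tau^{-\alpha}\mn + m\tau\, \mm$, hence $\mm \preceq \frac{\tau^{-\alpha}}{1-m\tau}\mn$. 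Finally, substituting $t = m\tau$ and optimizing $t^{\alpha}(1-t)$ by calculus gives the minimizer $t = \frac{\alpha}{1+\alpha}$, and plugging in yields exactly the constant $(1+\alpha)\bigl((1+\tfrac{1}{\alpha})m\bigr)^{\alpha}$ appearing in the statement.

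The main obstacle is identifying the right pointwise dominance $a_i a_i^{\top} \preceq w_i^{2/p}\mm$; once this is in hand, the rest of the proof is a clean thresholding argument followed by a one-variable optimization. Everything else is a standard bookkeeping step using only the Lewis weight identity and the fact that leverage scores lie in $[0,1]$.
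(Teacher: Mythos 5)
Your proof is correct and follows essentially the same route as the paper's. The paper bounds the small-weight contribution by computing $\tr\bigl[(\ma^{\top}\mw^{1-\frac{2}{p}}\ma)^{-1}\ma^{\top}\mw^{1-\frac{2}{p}}\mi_{w\le\tau}\ma\bigr]=\sum_{w_i\le\tau}w_i$ and invoking the trace-to-Loewner implication, while you derive the equivalent per-row bound $w_i^{1-\frac{2}{p}}a_i a_i^{\top}\preceq w_i\,\mm$ via Cauchy--Schwarz and sum it; these two steps are the same fact stated at different granularities, and the threshold split at $\tau$, the bound $\mm\preceq\tau^{-\alpha}\mn+m\tau\,\mm$, and the optimization $t=\alpha/(1+\alpha)$ match the paper exactly.
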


\begin{proof}
Since $r\geq p$ and $w_{i}\in(0,1]$ for all $i\in[m]$ we have that
$w_{i}^{1-\frac{2}{r}}\leq w_{i}^{1-\frac{2}{p}}$ for all $i\in[m]$
and therefore $\ma^{\top}\mw^{1-\frac{2}{r}}\ma\preceq\ma^{\top}\mw^{1-\frac{2}{p}}\ma$. 

To prove the other direction, let $\epsilon>0$ be arbitrary and let
$\mi_{w>\frac{\epsilon}{m}}\in\R^{m\times m}$ be the diagonal matrix
where $[\mi_{\epsilon}]_{ii}=1$ if $w_{i}>\frac{\epsilon}{m}$ and
$[\mi_{\epsilon}]_{ii}=0$ otherwise and let $\mi_{w\leq\frac{\epsilon}{m}}=\mi-\mi_{w>\frac{\epsilon}{m}}$.
Note that 
\[
\tr\left[(\ma^{\top}\mw^{1-\frac{2}{p}}\ma)^{-1}\ma^{\top}\mw^{1-\frac{2}{p}}\mi_{w\leq\frac{\varepsilon}{m}}\ma\right]=\sum_{i\in[m]:\ w_{i}\leq\frac{\varepsilon}{m}}w_{i}\leq m\cdot\frac{\varepsilon}{m}=\varepsilon
\]
where we used that $w=\sigma(\mw^{\frac{1}{2}-\frac{1}{p}}\ma)$.
Therefore, $\ma^{\top}\mw^{1-\frac{2}{p}}\mi_{w\leq\frac{\varepsilon}{m}}\ma\preceq\varepsilon\cdot\ma^{\top}\mw^{1-\frac{2}{p}}\ma$
and hence
\begin{equation}
\ma^{\top}\mw^{1-\frac{2}{p}}\ma\preceq\frac{1}{1-\varepsilon}\ma^{\top}\mw^{1-\frac{2}{p}}\mi_{w>\frac{\varepsilon}{m}}\ma.\label{eq:AWA_and_ASigmaA1}
\end{equation}
Now, we note that 
\begin{equation}
\ma^{\top}\mw^{1-\frac{2}{p}}\mi_{w>\frac{\varepsilon}{m}}\ma\prec\left(\frac{m}{\varepsilon}\right)^{\frac{2}{p}-\frac{2}{r}}\ma^{\top}\mw^{1-\frac{2}{r}}\mi_{w>\frac{\varepsilon}{m}}\ma.\label{eq:AWA_and_ASigmaA2}
\end{equation}
Combining (\ref{eq:AWA_and_ASigmaA1}) and (\ref{eq:AWA_and_ASigmaA2}),
recalling $\alpha=\frac{2}{p}-\frac{2}{r}$ and choosing the minimizing
$\varepsilon=\frac{\alpha}{1+\alpha}$ yields
\begin{align*}
\ma^{\top}\mw^{1-\frac{2}{p}}\ma & \preceq\frac{1}{1-\varepsilon}\left(\frac{m}{\varepsilon}\right)^{\alpha}\ma^{\top}\mw^{1-\frac{2}{r}}\ma=\frac{(1+\alpha)^{1+\alpha}}{\alpha^{\alpha}}m^{\alpha}\ma^{\top}\mw^{1-\frac{2}{r}}\ma\,.
\end{align*}
\end{proof}
Using Lemma~\ref{lem:AWAASigmaA} we prove Lemma~\ref{lem:lewis_p_change},
the main result of this section.
\begin{proof}[Proof of Lemma~\ref{lem:lewis_p_change}]
 Note that 
\begin{equation}
\sigma(\mw^{1-\frac{2}{r}}\ma)_{i}w_{i}^{-1}=\cordVec i^{\top}\ma(\ma^{\top}\mw^{1-\frac{2}{r}}\ma)^{-1}\ma^{\top}\cordVec iw_{i}^{-\frac{2}{r}}\,.\label{eq:round_bound:0}
\end{equation}
Applying Lemma~\ref{lem:AWAASigmaA} yields that 
\[
\ma^{\top}\mw^{1-\frac{2}{r}}\ma\succeq\frac{1}{(1+\alpha)((1+\frac{1}{\alpha})m)^{\alpha}}\ma^{\top}\mw^{1-\frac{2}{p}}\ma\text{ where }\alpha=\frac{2}{p}-\frac{2}{r}\,.
\]
Consequently, for all $i\in[m]$ it follows that
\begin{align}
\cordVec i^{\top}\ma(\ma^{\top}\mw^{1-\frac{2}{r}}\ma)^{-1}\ma^{\top}\cordVec i & w_{i}^{-\frac{2}{r}}\leq\left(1+\alpha\right)\left(\left(1+\alpha\right)m\right)^{\alpha}\cordVec i^{\top}\ma(\ma^{\top}\mw^{1-\frac{2}{p}}\ma)^{-1}\ma^{\top}\cordVec iw_{i}^{-\frac{2}{r}}\,.\label{eq:round_bound:1}
\end{align}
Further, since $w=\sigma(\mw^{\frac{1}{2}-\frac{1}{p}}\ma)$ we have
that 
\begin{equation}
\cordVec i^{\top}\ma(\ma^{\top}\mw^{1-\frac{2}{p}}\ma)^{-1}\ma^{\top}\cordVec iw_{i}^{-\frac{2}{r}}=w_{i}^{\frac{2}{p}-1}\sigma_{i}(\mw^{\frac{1}{2}-\frac{1}{p}}\ma)w_{i}^{-\frac{2}{r}}=w_{i}^{\alpha}\label{eq:round_bound:2}
\end{equation}
Additionally, since $w_{i}^{1-\frac{2}{r}}\ma^{\top}\cordVec i\cordVec i^{\top}\ma\preceq\ma^{\top}\mw^{1-\frac{2}{r}}\ma$
we have $w_{i}^{1-\frac{2}{r}}\cordVec i^{\top}\ma(\ma^{\top}\mw^{1-\frac{2}{r}}\ma)^{-1}\ma^{\top}\cordVec i\leq1$
and 
\begin{equation}
\cordVec i^{\top}\ma(\ma^{\top}\mw^{1-\frac{2}{r}}\ma)^{-1}\ma^{\top}\cordVec iw_{i}^{-\frac{2}{r}}\leq w_{i}^{-1}\,.\label{eq:round_bound:3}
\end{equation}
Combining (\ref{eq:round_bound:0}), (\ref{eq:round_bound:1}), (\ref{eq:round_bound:2}),
and (\ref{eq:round_bound:3}) yields
\begin{align*}
\sigma(\mw^{1-\frac{2}{r}}\ma)_{i}w_{i}^{-1} & \leq\min\left\{ \left(1+\alpha\right)\left(\left(1+\frac{1}{\alpha}\right)m\right)^{\alpha}w_{i}^{\alpha},w_{i}^{-1}\right\} \leq\left(1+\alpha\right)^{\frac{1}{1+\alpha}}\left(\left(1+\frac{1}{\alpha}\right)m\right)^{\frac{\alpha}{1+\alpha}}\,.
\end{align*}
where we used that $\min\{ax^{b},x^{c}\}\leq a^{\frac{-c}{b-c}}$
for $a\geq1$, $b\geq c$, and $x\in[0,1]$. The final inequality
follows from the fact that if we let $f(\alpha)\defeq(1+\alpha)^{\frac{1}{1+\alpha}}(1+\frac{1}{\alpha})^{\frac{\alpha}{1+\alpha}}$
then $f(\alpha)\leq2$ for all $\alpha\geq0$ as the concavity of
log shows that
\[
\log f(\alpha)=\frac{\log(1+\alpha)}{1+\alpha}+\frac{\alpha\cdot\log(1+(1/\alpha))}{1+\alpha}\leq\log\left(\frac{(1+\alpha)}{1+\alpha}+\frac{\alpha\cdot1+(1/\alpha)}{1+\alpha}\right)=\log2\,.
\]
\end{proof}

\subsection{Weight Function\label{sec:weight-function}}

Here, we show that regularized Lewis weights for suitable $p$ and
small enough regularization are a valid weight function (Definition~\ref{def:gen:weight_function}).
The main result of this section is the following theorem which bounds
the weight function parameters. This theorem follows almost immediately
from the calculations in Section~\ref{sec:lewis_weights}. To obtain
our fastest algorithms for linear programming, we choose $p=1-1/\log(4m)$
and $c_{0}=n/(2m)$.
\begin{thm}
\label{thm:max_flow:weight_properties} For any $p\in(0,1)$ and any
$c_{0}\geq0$, the weight function $\vg:\dInterior\rightarrow\Rpm$
defined for all $\vx\in\Rpm$ as
\begin{equation}
\vg(\vx)\defeq\lpweight(\ma_{x})+c_{0}\quad\text{where}\quad\ma_{x}\defeq(\mPhi''(\vx))^{-1/2}\ma.\label{eq:sec:weights:weight_function}
\end{equation}
is a weight function in the context of Definition~\ref{def:gen:weight_function}
and satisfies $c_{1}(g)\leq n+c_{0}m$, $c_{s}(\fvWeight)\leq2m^{1-p}$,
and $c_{k}(\fvWeight)\leq\frac{2}{1-p}$. Further, for $p=1-\frac{1}{\log(4m)}$
and $c_{0}=\frac{n}{2m}$, we have $c_{1}(g)\leq\frac{3}{2}n$, $c_{s}(\fvWeight)\leq4$,
and $c_{k}(\fvWeight)\leq2\log(4m)$.
\end{thm}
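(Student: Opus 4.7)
My plan is to verify each of the three conditions of Definition~\ref{def:gen:weight_function} for $\vg(\vx) = \lpweight(\ma_x) + c_0$ in turn, and then substitute $p = 1 - 1/\log(4m)$ and $c_0 = n/(2m)$ at the end. The size bound $c_1 \leq n + c_0 m$ is immediate from Lemma~\ref{lem:unique_lewis}, which shows $\sum_{i\in[m]}\lpweight(\ma_x)_i = n$, so $\|\vg(\vx)\|_1 = n + c_0 m$ exactly.

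For the sensitivity I would first rewrite
\[
\cordVec{i}^{\top}\mg^{-1}\ma_x(\ma_x^{\top}\mg^{-1}\ma_x)^{-1}\ma_x^{\top}\mg^{-1}\cordVec{i} = g_i^{-1}\sigma(\mg^{-1/2}\ma_x)_i,
\]
and then combine two upper bounds on this quantity. The trivial projection estimate $\sigma(\mg^{-1/2}\ma_x)_i \leq 1$ yields $1/g_i$. The other uses the Lewis weight identity $a_i^{\top}(\ma_x^{\top}\mw^{1-2/p}\ma_x)^{-1}a_i = w_i^{2/p}$ (which follows from $w = \sigma(\mw^{1/2 - 1/p}\ma_x)$), Lemma~\ref{lem:AWAASigmaA} applied with $r = 1$ to swap between powers $\mw^{1-2/p}$ and $\mw^{-1}$, and the clipped lower bound $\mg^{-1} \succeq \tfrac{1}{2}\max(\mw, c_0\mi)^{-1}$ that comes from $g_i = w_i + c_0 \leq 2\max(w_i, c_0)$. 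Taking the minimum of these two bounds---exactly the self-optimizing trick used in the proof of Lemma~\ref{lem:lewis_p_change}---eliminates the dependence on $w_i$ and produces $c_s \leq 2m^{1-p}$.

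For the consistency bound, introduce $v_i(x_i) \defeq \phi_i''(x_i)^{-1/2}$ so that $\ma_x = \mv\ma$ with $\mv = \mDiag(v)$ and $(\mPhi''(\vx))^{-1/2} = \mv$. The chain rule gives $\mj_g(x) = \mj_w(v)\cdot\mj_v(x)$, where $\mj_v(x)$ is diagonal and self-concordance (namely $|\phi_i'''| \leq 2(\phi_i'')^{3/2}$) forces $|(\mj_v(x))_{ii}| \leq 1$. Hence for any $h \in \Rm$ the vector $u \defeq \mj_v(x)\mv h$ satisfies $\mv^{-1}u = \mj_v(x) h$, so $\|\mv^{-1}u\|_\infty \leq \|h\|_\infty$ and $\|\mv^{-1}u\|_w \leq \|h\|_w$. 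I would then invoke both halves of Lemma~\ref{lem:LS_weights_stable}: the $w$-norm contraction $\|\mw_v^{-1}\mj_w(v)u\|_w \leq p\|\mv^{-1}u\|_w$, and the $\ell_\infty$ perturbation $\|(\mw_v^{-1}\mj_w(v) - p\mv^{-1})u\|_\infty \leq p\max(p/2,1)\|\mv^{-1}u\|_w$. Since $\mg \succeq \mw$, the diagonal $\mg^{-1}\mw_v$ has entries in $(0,1]$, so these two bounds transfer directly with $\|\cdot\|_g$ replacing $\|\cdot\|_w$ on the output. Assembling in the mixed norm produces an overall contraction factor at most $p\bigl(1 + O(1/\cnorm)\bigr)$, and choosing $c_k = 2/(1-p)$ makes $1 - 1/c_k = (1+p)/2$ exceed this factor once $\cnorm = 24\sqrt{c_s}c_k$ is large enough, which is automatic from the sensitivity bound already established.

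I expect the consistency estimate to be the main obstacle, since it requires simultaneously tracking the $\|\cdot\|_\infty$ and $\|\cdot\|_w$ pieces of the Jacobian through the mixed norm, combining both halves of Lemma~\ref{lem:LS_weights_stable}, all while preserving the linear $(1-p)$ slack that is essential for Newton convergence. The specific numerical values then follow by direct substitution: $c_1 = n + (n/(2m))\cdot m = 3n/2$; $c_s \leq 2m^{1/\log(4m)}$, which simplifies to a $O(1)$ constant (with the additional cushion from $g_i \geq n/(2m)$ bringing the constant below $4$); and $c_k = 2/(1-p) = 2\log(4m)$.
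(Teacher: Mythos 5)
Your plan matches the paper's own proof step for step: the size bound is the same one-liner from $\sum_i\lpweight(\ma_x)_i=n$; the sensitivity bound is obtained exactly as the paper does, by rewriting the diagonal entry of the "inverse leverage'' matrix as $g_i^{-1}\sigma(\mg^{-1/2}\ma_x)_i$ and invoking Lemma~\ref{lem:lewis_p_change} with $r=1$ (whose proof is precisely the min-trick you describe); and the consistency bound is the same computation the paper gives, factoring $\mj_g=\mj_w\cdot\mj_v$, using self-concordance to bound $|\mj_v|\leq1$, applying both halves of Lemma~\ref{lem:LS_weights_stable}, and assembling in the mixed norm via $\cnorm=24\sqrt{c_s}c_k$. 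Your explicit factoring $\mg^{-1}\mj_w=(\mg^{-1}\mw_v)(\mw_v^{-1}\mj_w)$ with $\mg^{-1}\mw_v\preceq\mi$ is a cleaner way to see the norm transfer from $\|\cdot\|_w$ to $\|\cdot\|_g$, but it is the same underlying observation.

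One place where both your proposal and the paper's proof are slightly imprecise is the sensitivity bound when $c_0>0$: Lemma~\ref{lem:lewis_p_change} is stated and proved for $w=\lpweight(\ma_x)$ specifically, and the paper applies it as though $g=w+c_0$ were itself a Lewis weight. Your plan to patch this via the clip $\mg^{-1}\succeq\frac12\max(\mw,c_0\mi)^{-1}$ does not feed directly into Lemma~\ref{lem:AWAASigmaA}, since that lemma compares $\ma_x^\top\mw^{1-2/r}\ma_x$ to $\ma_x^\top\mw^{1-2/p}\ma_x$ only for the Lewis weight $\mw$, not for $\max(\mw,c_0\mi)$. So the $c_0>0$ case needs a small additional step that neither you nor the paper spells out; as written, both arguments establish the bound cleanly only for $c_0=0$, with $c_0>0$ handled heuristically via the observation that the regularizer only adds a floor to $g$. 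This does not change the approach, only the precision of the constants.
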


\begin{proof}
To bound the size, $c_{1}(g)$, recall that $\lpweight(\ma_{x})=\sigma(\mw^{\frac{1}{2}-\frac{1}{p}}\ma_{x})$
and therefore Lemma~\ref{lem:tool:projection_matrices} implies $\sum_{i\in[m]}\lpweight(\ma_{x})_{i}=n$.
To bound the sensitivity, $c_{s}(\fvWeight)$, note that Lemma~\ref{lem:lewis_p_change}
and that $p\leq1$ yield
\[
\cordVec i^{\top}\mg(x)^{-1}\ma_{x}\left(\ma_{x}^{\top}\mg(x)^{-1}\ma_{x}\right)^{-1}\ma_{x}^{\top}\mg(x)^{-1}\cordVec i=g(x)_{i}^{-1}\sigma(\mg(x)^{\frac{1}{2}-\frac{1}{1}}\ma_{x})_{i}\leq2m^{\frac{\alpha}{1+\alpha}}
\]
where $\alpha=\frac{2}{p}-\frac{2}{1}=\frac{2}{p}(1-p)$. As $\frac{\alpha}{1+\alpha}=\frac{2-2p}{2+p}\leq1-p$,
the bound on $c_{s}(\fvWeight)$ follows. 

To bound the consistency, $c_{k}(\fvWeight)$, note that for arbitrary
$h\in\R^{m}$ and $w(v)=\lpweight(\mv\ma)$ we have
\begin{equation}
\mg(x)^{-1}\mj_{g}(x)\left(\mPhi''(x)\right)^{-\frac{1}{2}}h=\mg(x)^{-1}\mj_{w}((\mPhi''(x))^{-\frac{1}{2}})z\label{eq:weight_func_1}
\end{equation}
where $z=-\frac{1}{2}(\mPhi''(x))^{-2}\mPhi'''(x)h$. By Lemma \ref{lem:LS_weights_stable},
we have that
\begin{equation}
\normFull{\mg(x)^{-1}\mj_{w}((\mPhi''(x))^{-\frac{1}{2}})z}_{g(x)}\leq p\normFull{(\mPhi''(x))^{\frac{1}{2}}z}_{g(x)}\label{eq:weight_func_2}
\end{equation}
and
\begin{equation}
\normFull{\mg(x)^{-1}\mj_{w}((\mPhi''(x))^{-\frac{1}{2}})z}_{\infty}\leq p\normFull{(\mPhi''(x))^{\frac{1}{2}}z}_{\infty}+p\normFull{(\mPhi''(x))^{\frac{1}{2}}z}_{g(x)}.\label{eq:weight_func_3}
\end{equation}
Combining (\ref{eq:weight_func_1}), (\ref{eq:weight_func_2}), (\ref{eq:weight_func_3})
and using the definition $\normFull{\cdot}_{\vg(x)+\infty}\defeq\|\cdot\|_{\infty}+\cnorm\|\cdot\|_{g(x)}$
yields
\[
\normFull{\mg(x)^{-1}\mj_{g}(x)\left(\mPhi''(x)\right)^{-\frac{1}{2}}h}_{\vg(x)+\infty}\leq p\normFull{(\mPhi''(x))^{\frac{1}{2}}z}_{\infty}+p(1+\cnorm)\cdot\normFull{(\mPhi''(x))^{\frac{1}{2}}z}_{g(x)}.
\]
Note that $\left|\Phi''(x)^{\frac{1}{2}}z\right|_{i}=\frac{1}{2}\left|\Phi''(x)^{-\frac{3}{2}}\Phi'''(x)h\right|_{i}\leq\left|h_{i}\right|$
by the self-concordance of $\Phi$. Therefore, 
\[
\normFull{\mg(x)^{-1}\mj_{g}(x)\left(\mPhi''(x)\right)^{-\frac{1}{2}}h}_{\vg(x)+\infty}\leq p\normFull h_{\infty}+p(1+\cnorm)\cdot\normFull h_{g(x)}\leq p\left(1+\frac{1}{\cnorm}\right)\normFull h_{g(x)+\infty}.
\]
Recalling that $\cnorm=24\sqrt{c_{s}(g)}c_{k}(g)$ and using $c_{s}(g)\geq1$,
the bound of $c_{k}(g)=\frac{2}{1-p}$ follows from
\[
p\left(1+\frac{1}{\cnorm}\right)\leq p+\frac{1}{24c_{k}(g)}=1-\frac{2}{c_{k}(g)}+\frac{1}{24c_{k}(g)}\leq1-\frac{1}{c_{k}(g)}\,.
\]
\end{proof}

\section{A Nearly Linear Self-concordant Lewis Weight Barrier\label{sec:self-concordance}}

In this section, we construct an $\otilde(n)$-self-concordant barrier
for the set $\interior\defeq\{x\in\R^{n}\,|\,\ma x>b\}$ for non-degenerate
$\ma\in\R^{m\times n}$ and vector $b\in\R^{m}$ using $\ell_{q}$
Lewis weights.\footnote{We use $q$ throughout rather than $p$ as in Section~\ref{sec:lewis_weights}
to clearly distinguish between the different (but closely related)
functions considered in each section.} Interestingly, the central path for this barrier is the points $x^{(t)}\in\R^{n}$,
$\lambda^{(t)}\in\R_{>0}^{m}$, and $s^{(t)}\in\R_{>0}^{m}$ for $t>0$
satisfying 
\begin{align*}
\lambda_{i}^{(t)}\cdot s_{i}^{(t)} & =t\cdot w_{q}(\ma_{x^{(t)}})_{i}\text{ for all }i\in[m]\\
\ma^{\top}\lambda^{(t)} & =c,\\
\ma x^{(t)}+s^{(t)} & =b.
\end{align*}
where throughout this section we let $\ma_{x}\defeq\ms_{x}^{-1}\ma$
and $\ms_{x}=\mDiag(\ma x-b)$. For all $x\in\interior$ and $w\in\R_{>0}^{m}$
we let 
\[
f(x,w)\defeq\ln\det\left(\ma_{x}^{\top}\mw^{1-\frac{2}{q}}\ma_{x}\right)-\left(1-\frac{2}{q}\right)\sum_{i=1}^{m}w_{i}
\]
and define the barrier as 
\begin{equation}
\psi(x)\defeq\begin{cases}
\max_{w\in\R^{m}:w\geq0}\frac{1}{2}f(x,w) & \text{ if }q\geq2\\
\min_{w\in\R^{m}:w\geq0}\frac{1}{2}f(x,w) & \text{ if }q\leq2
\end{cases}\,.\label{eq:LS_barrier}
\end{equation}
Note with respect to $w$ the function $f(x,w)$ is just a scaling
of the function we used for defining Lewis weights. Here we need these
two cases to maintain that $f$ is a convex function in $x.$

Note that when $q=2$ the function $f(x,w)$ does not depend on $w$
and therefore $\psi$ is well defined. Further in this case $\psi$
is exactly the volumetric barrier function, i.e. $f(x,w)=\frac{1}{2}\ln\det(\ma_{x}^{\top}\ma_{x})$.
Further, note that as $q\rightarrow0$, Lemma~\ref{lem:l0} shows
that $w_{q}(\ma)_{i}=1$ (as long as $\ma$ is in general position).
and in this case $\psi$ is the log barrier function.

We call this the \emph{Lewis weight barrier function} as by Lemma~\ref{lem:unique_lewis}
we can equivalently write 
\[
\psi(x)=\ln\det\left(\ma_{x}^{\top}\mw_{x}^{1-\frac{2}{q}}\ma_{x}\right)\text{ where }\mw_{x}=\mDiag(w_{q}(\ma_{x}))\,.
\]
The main result of this section is the following theorem which shows
that the Lewis weight barrier is a self-concordant barrier. In particular
this theorem shows that for $q=$ $\Theta(\log(m))$ the Lewis-weight
barrier is a $O(n\log^{5}m)$-self concordant. Further, when $q=2$
this theorem recovers the fact that the volumetric barrier function
is a $O(\sqrt{m}n)$-self concordant \cite{nesterov1989self,anstreicher96}.
\begin{thm}
\label{thm:LS_barrier_sc} Let $\interior=\{x\,:\,\ma x>b\}$ denote
the interior of non-empty polytope for non-degenerate $\ma$. For
any $q>0$, $\psi:\interior\rightarrow\R$ defined in (\ref{eq:LS_barrier})
is a barrier function such that for all $x\in\interior$ and $h\in\R^{n}$,
we have
\begin{enumerate}
\item $\nabla\psi(x)^{\top}\nabla^{2}\psi(x)^{-1}\nabla\psi(x)\leq n$,
\item $D^{3}\psi[h,h,h]\leq2v_{q}\norm h_{\nabla^{2}\psi(x)}^{3/2}\text{ for }v_{q}=(q+2)^{3/2}m^{\frac{1}{q+2}}+4\max\{q,2\}^{2.5}$
\end{enumerate}
Consequently, $v_{q}^{2}\psi$ is a $nv_{q}^{2}$-self-concordant
barrier function for $\interior$.
\end{thm}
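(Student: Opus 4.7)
The starting point is the envelope theorem. Since $w_x\defeq w_q(\ma_x)$ is the interior optimum of $f(x,\cdot)$ on $\R_{>0}^m$ (Lemma~\ref{lem:unique_lewis}), we have $\nabla_w f(x,w_x)=0$, so
\[
\nabla\psi(x)=\tfrac{1}{2}\nabla_x f(x,w_x).
\]
A direct $\log\det$ computation along the lines of those in Section~\ref{sec:lewis_weights} gives $\nabla_x\log\det(\ma_x^{\top}\mw^{1-2/q}\ma_x)=-2\ma_x^{\top}\sigma(\mw^{1/2-1/q}\ma_x)$ (writing $\ma_x=\ms_x^{-1}\ma$, this comes from $D\ms_x^{-1}[h]=-\ms_x^{-1}\mDiag(\ma h)\ms_x^{-1}$ and cycling the trace). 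Evaluated at $w=w_x$, the Lewis-weight identity $w_x=\sigma(\mw_x^{1/2-1/q}\ma_x)$ collapses this to $\nabla\psi(x)=-\ma_x^{\top}w_x$. Because $\|w_x\|_1=n$ (Lemma~\ref{lem:unique_lewis}), Cauchy--Schwarz immediately yields
\[
(h^{\top}\nabla\psi(x))^{2}=\Bigl(\sum_{i}w_{x,i}\,(\ma_x h)_i\Bigr)^{2}\leq n\cdot h^{\top}\ma_x^{\top}\mw_x\ma_x h.
\]
Thus claim~(1) reduces to the Hessian lower bound $\nabla^2\psi(x)\succeq \ma_x^{\top}\mw_x\ma_x$.

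To establish that bound I would apply the envelope-theorem formula once more. Differentiating $\nabla_w f(x,w_x)=0$ in $x$ gives $\mj_{w_x}(x)=-(\nabla_{ww}^2 f)^{-1}\nabla_{wx}^2 f$, hence
\[
\nabla^{2}\psi(x)=\tfrac{1}{2}\bigl[\nabla_{xx}^{2}f(x,w_x)-\nabla_{xw}^{2}f(x,w_x)(\nabla_{ww}^{2}f(x,w_x))^{-1}\nabla_{wx}^{2}f(x,w_x)\bigr].
\]
By Lemma~\ref{lem:lewis_potential_derivatives}, $\nabla_{ww}^{2}f$ is positive semidefinite when $q\leq 2$ and negative semidefinite when $q\geq 2$; the choice of $\max$ vs.\ $\min$ in the definition of $\psi$ is precisely what makes the second summand non-negative in either case, so $\nabla^{2}\psi(x)\succeq \tfrac{1}{2}\nabla_{xx}^{2}f(x,w_x)$. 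A direct computation of $\nabla_{xx}^{2}f$ (writing the second derivative of $\log\det(\ma^{\top}\ms_x^{-1}\mv\ms_x^{-1}\ma)$ in terms of the projection $\mP=\mP(\mw^{1/2-1/q}\ma_x)$ and its Hadamard square $\mP^{(2)}$, using the leverage-score derivative formula Lemma~\ref{lem:deriv:proj}) produces an expression dominated below by $2\ma_x^{\top}\mw_x\ma_x$, giving the claim. Part~(1) then follows.

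For the third-order bound I would differentiate $\nabla^{2}\psi(x)$ once more and split $D^{3}\psi[h,h,h]$ into two groups of terms: those that arise from differentiating $\ma_x$ and the scores $\sigma(\mw^{1/2-1/q}\ma_x)$ with $w$ frozen, and those that arise from differentiating $w_x$ via $\mj_{w_x}$. The first group is essentially the third derivative of a weighted volumetric barrier and can be bounded by the standard ``mixed-norm'' strategy: control of $\|\ms_x^{-1}\ma h\|_{\infty}$ against $\|\ms_x^{-1}\ma h\|_{\mw_x}$, where the two are related via Lemma~\ref{lem:lewis_p_change} applied with exponents $q$ and $r=q+2$. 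That application is precisely what produces the factor $(q+2)^{3/2}m^{1/(q+2)}$, via $c_{q,q+2,m}^{1/2}\leq \sqrt{2}m^{1/(q+2)}$ and a $(q+2)^{3/2}$ loss accumulated across three derivative factors. The second group of terms is controlled using the explicit Jacobian formula for $w_x$ (Lemma~\ref{lem:LS_weights_de}) together with the operator bounds of Lemma~\ref{lem:LS_weights_stable}; the inverse of $\mi-(1-2/q)\bar{\mLambda}$ appearing there contributes the $\max\{q,2\}^{5/2}$ term. Combining both groups against the lower bound $\nabla^{2}\psi\succeq \ma_x^{\top}\mw_x\ma_x$ gives claim~(2).

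The main obstacle is the third-derivative calculation: unlike the volumetric case, $w_x$ depends on $x$, so $D^{3}\psi$ picks up terms like $\mj_{w_x}\cdot D^{2}(\nabla_{xx}^{2}f)$ and iterated derivatives of projection matrices, and the bookkeeping of keeping every term dominated by $\ma_x^{\top}\mw_x\ma_x$ in the correct homogeneous degree is delicate. The key trick enabling it is the Lewis-weight rounding Lemma~\ref{lem:lewis_p_change}, which lets one pay only an $m^{1/(q+2)}$ multiplicative factor when converting an $\ell_\infty$-type bound on $\ma_x h$ into an $\ell_2(\mw_x)$-type bound; choosing $q=\Theta(\log m)$ then turns this factor into a constant and yields $v_q^2=O(\log^5 m)$, so that $v_q^2\psi$ is an $O(n\log^5 m)$-self-concordant barrier.
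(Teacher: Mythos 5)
Your overall plan — envelope theorem for the gradient, Hessian lower bound $\nabla^2\psi\succeq\ma_x^\top\mw_x\ma_x$ for part (1), and a third-derivative bound built from the stability estimates of Lemmas~\ref{lem:LS_weights_de}, \ref{lem:LS_weights_stable}, and \ref{lem:lewis_p_change} for part (2) — tracks the paper's proof. But your justification for the Hessian lower bound has a sign error that makes it fail for $q<2$.

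You write that after the implicit-function computation
\[
\nabla^2\psi(x)=\tfrac12\bigl[\nabla_{xx}^2 f-\nabla_{xw}^2 f(\nabla_{ww}^2 f)^{-1}\nabla_{wx}^2 f\bigr],
\]
the $\max$/$\min$ choice in the definition of $\psi$ makes the Schur-complement correction non-negative ``in either case,'' hence $\nabla^2\psi\succeq\tfrac12\nabla_{xx}^2 f$. That is only true for $q\geq 2$: there $\nabla_{ww}^2 f\preceq 0$, so $-\nabla_{xw}^2 f(\nabla_{ww}^2 f)^{-1}\nabla_{wx}^2 f\succeq 0$. For $q<2$ the $\min$ is taken because $\nabla_{ww}^2 f\succeq 0$, which flips the sign: the correction term is then $\preceq 0$, giving $\nabla^2\psi\preceq\tfrac12\nabla_{xx}^2 f$ — the opposite of what you need. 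Since the theorem is stated for all $q>0$, you cannot simply drop the correction. The paper's Lemma~\ref{lem:p_derivatives} avoids this by substituting the explicit blocks from Lemma~\ref{lem:all_volumetric_derivs} and algebraically combining both summands into the single formula $\nabla^2\psi(x)=\ma_x^\top\mSigma_x^{1/2}(\mi+\mn_x)\mSigma_x^{1/2}\ma_x$ with $\mn_x=2\mNormProjLap_x(\mi-(1-\tfrac2q)\mNormProjLap_x)^{-1}$; the lower bound then follows because $\mn_x$ has eigenvalues $2\lambda/(1-(1-\tfrac2q)\lambda)\geq 0$ for $\lambda\in[0,1]$, which is a sign that requires no case split on $q$. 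You need this closed form (or an equivalent combination of the two terms) — the crude domination argument does not give the result for $q<2$.

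Two smaller points. First, the factor $m^{1/(q+2)}$ in $v_q$ comes from Lemma~\ref{lem:lewis_p_change} applied with $p=q$ and $r=\infty$ (so $\alpha=2/q$ and $\alpha/(1+\alpha)=2/(q+2)$), not $r=q+2$ as you state; the latter yields a different exponent. Second, the paper's organization of part (2) differs slightly from yours: rather than splitting $D^3\psi$ into ``$w$-frozen'' and ``$w$-varying'' contributions, it differentiates the already-combined Hessian formula directly, bounding the motion of $s_t$, $w_t$, $\mNormProjLap_t$, and $\mn_t$ in turn (Lemmas~\ref{lem:s_lipschitz}--\ref{lem:change_of_n}). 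This is cleaner because it never needs to separately track the implicit $x$-dependence of $w_x$; it is already absorbed into the $\mi+\mn_x$ factor.
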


In the remainder of this section we prove Theorem~\ref{thm:LS_barrier_sc}.
Leveraging the analysis of Section~\ref{sec:lewis_weights}, this
is a straightforward but tedious calculus exercise. We split the proof
into parts. In Section~\ref{subsec:Basic-properties}, we compute
the gradient of the Hessian of $f$ and $\psi$ and prove the first
item in Theorem~\ref{thm:LS_barrier_sc} (Lemma~\ref{lem:force}).
In Section~\ref{subsec:w_Sigma_Px_stable}, we then prove the second
item of Theorem~\ref{thm:LS_barrier_sc} (Lemma~\ref{lem:stable_hess})
by bounding the stability of each component of the Hessian.

\subsection{Notation and Basic Properties of Lewis Weight Barrier \label{subsec:Basic-properties}}

For brevity, throughout the remainder of this section, we let $w_{x}=\argmax_{w\in\R_{\geq0}^{m}}\frac{1}{2}f(x,w)$
when $q\geq2$, $w_{x}=\argmin_{w\in\R_{\geq0}^{m}}\frac{1}{2}f(x,w)$
when $q\leq2$, and $w_{x}=\sigma(\ma_{x})$ when $q=2$. We will
show in Lemma~\ref{lem:p_derivatives} that $w_{x}=\lqweight(\ma_{x})$
for all $q$. Further, for all $x\in\interior$ we let
\[
\mProj_{x}\defeq\mProj(\mw_{x}^{\frac{1}{2}-\frac{1}{q}}\ma_{x})\text{, }\mProj_{x,w}^{(2)}\defeq\mProj^{(2)}(\mw_{x}^{\frac{1}{2}-\frac{1}{q}}\ma_{x})\text{, }\mNormProjLap_{x}\defeq\mNormProjLap(\mw_{x}^{\frac{1}{2}-\frac{1}{q}}\ma_{x})\text{, and }\sigma_{x}\defeq\sigma(\mw_{x}^{\frac{1}{2}-\frac{1}{q}}\ma_{x}).
\]

Leveraging this notation we compute and bound the gradient and Hessian
of $\psi$.
\begin{lem}
\label{lem:p_derivatives} For all $x\in\interior$, $\sigma_{x}=w_{x}=\lqweight(\ma_{x})$
and for $\mn_{x}\defeq2\mNormProjLap_{x}(\mi-(1-\frac{2}{q})\mNormProjLap_{x})^{-1}$
we have
\begin{equation}
\grad\psi(x)=-\ma_{x}^{\top}\sigma_{x}\text{ and }\hess\psi(x)=\ma_{x}^{\top}\mSigma_{x}^{1/2}(\mi+\mn_{x})\mSigma_{x}^{1/2}\ma_{x}\label{eq:sec_ent_barrier_hess_p}
\end{equation}
Further, $\mn_{x}$ is a symmetric matrix with $\mZero\preceq\mn_{x}\preceq q\mi$
and therefore
\begin{equation}
\ma_{x}^{\top}\mSigma_{x}\ma_{x}\preceq\nabla^{2}\psi(x)\preceq(1+q)\ma_{x}^{\top}\mSigma_{x}\ma_{x}\,.\label{eq:psi_2sigma}
\end{equation}
\end{lem}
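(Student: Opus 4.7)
The plan is to verify the four claims in sequence, leveraging the convex formulation of Lewis weights (Lemma~\ref{lem:unique_lewis}) and the Lewis weight Jacobian formula (Lemma~\ref{lem:LS_weights_de}).

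First, I would identify $w_x$ with $\sigma_x = \lqweight(\ma_x)$. A direct computation of $\nabla_w f$ using the derivative $\partial_{w_j}\mw^{1-2/q} = (1-\tfrac{2}{q})w_j^{-2/q}\cordVec j\cordVec j^\top$ shows that the first-order condition $\nabla_w f(x,w_x)=0$ reads $\sigma(\mw_x^{1/2-1/q}\ma_x) = w_x$, which is exactly the defining relation for $\ell_q$ Lewis weights, so $w_x = \lqweight(\ma_x) = \sigma_x$. Uniqueness and interiority of the optimizer follow from Lemma~\ref{lem:unique_lewis} after noting that $\tfrac{1}{2}f(x,w) = -\tfrac{1}{2}(1-\tfrac{2}{q})[\volPot_q^{\ma_x}(w)+\|w\|_1]$, so the $\max$ (for $q\geq 2$) and $\min$ (for $q\leq 2$) cases both reduce to minimizing $\volPot_q^{\ma_x}(w)+\|w\|_1$. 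The case $q=2$ is immediate since $f(x,w)$ does not depend on $w$ and $\lqweight[2](\ma_x) = \sigma(\ma_x)$.

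Second, for the gradient, since $w_x$ is an interior optimizer the envelope theorem yields $\nabla\psi(x) = \tfrac{1}{2}\nabla_x f(x,w_x)$. Differentiating $\ma_x = \ms_x^{-1}\ma$ gives $D\ma_x[h] = -\mDiag(\ma_x h)\,\ma_x$, and Jacobi's formula then gives
\[
\nabla_x \ln\det\bigl(\ma_x^\top \mw^{1-2/q}\ma_x\bigr) \;=\; -2\,\ma_x^\top\sigma\bigl(\mw^{1/2-1/q}\ma_x\bigr),
\]
so evaluating at $w=w_x$ and using $\sigma_x = w_x$ gives $\nabla\psi(x) = -\ma_x^\top \sigma_x$.

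Third, for the Hessian I would differentiate $\nabla\psi(x) = -\ma_x^\top w_x$ in an arbitrary direction $h$. Using $D(\ma_x^\top)[h] = -\ma_x^\top \mDiag(\ma_x h)$ and the identity $\mDiag(\ma_x h)\,w_x = \mw_x\,\ma_x h$, this gives
\[
\nabla^2\psi(x)\,h \;=\; \ma_x^\top \mw_x \ma_x\,h \;-\; \ma_x^\top\,Dw_x[h].
\]
To evaluate $Dw_x[h]$, observe that along the ray $x_0+th$ one has $\ma_x = \mv_t^{-1}\ma_{x_0}$ with $\mv_t = \mi + t\,\mDiag(\ma_{x_0}h)$, so $w_x = \lqweight(\mv_t^{-1}\ma_{x_0})$. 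Applying Lemma~\ref{lem:LS_weights_de} at $v=\onesVec$ in direction $\partial_t v_t^{-1}\big|_{t=0} = -\ma_{x_0}h$ yields
\[
Dw_x[h]\big|_{x=x_0} \;=\; -2\,\mw_{x_0}\bigl(\mw_{x_0} - (1-\tfrac{2}{q})\mLambda_{x_0}\bigr)^{-1}\mLambda_{x_0}\,\ma_{x_0} h.
\]
Substituting back, using $\mw_x = \mSigma_x$ to rewrite $\mw_x^{1/2}(\mw_x-(1-\tfrac{2}{q})\mLambda_x)^{-1}\mw_x^{1/2} = (\mi-(1-\tfrac{2}{q})\mNormProjLap_x)^{-1}$, and factoring $\mSigma_x^{1/2}$ out of both sides gives the stated formula with $\mn_x = 2\mNormProjLap_x(\mi-(1-\tfrac{2}{q})\mNormProjLap_x)^{-1}$. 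Symmetry of $\mn_x$ follows because $\mNormProjLap_x$ and $(\mi-(1-\tfrac{2}{q})\mNormProjLap_x)^{-1}$ are both symmetric functions of $\mNormProjLap_x$ and hence commute.

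Finally, to bound $\mn_x$, note that $\mZero \preceq \mNormProjLap_x \preceq \mi$ by Lemma~\ref{lem:tool:projection_matrices}, so the eigenvalues $\lambda \in [0,1]$ of $\mNormProjLap_x$ map under the scalar function $\lambda \mapsto 2\lambda/(1-(1-\tfrac{2}{q})\lambda)$ monotonically from $0$ to $q$; a quick case analysis on the sign of $1-\tfrac{2}{q}$ confirms the denominator stays positive on $[0,1]$ and the derivative $2/(1-(1-\tfrac{2}{q})\lambda)^2$ is positive. Hence $\mZero \preceq \mn_x \preceq q\mi$, and substituting $y = \mSigma_x^{1/2}\ma_x h$ into $h^\top \nabla^2\psi(x) h = y^\top(\mi+\mn_x)y$ immediately yields the sandwich (\ref{eq:psi_2sigma}).

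The hardest part is the Hessian calculation: $\sigma_x$ depends on $x$ both explicitly through $\ma_x$ and implicitly through the optimizer $w_x$, so a naive chain-plus-envelope expansion produces three entangled Hessian blocks. The key simplification is to recognize that the $x$-variation of $\ma_x$ is exactly a diagonal row rescaling $\mv_t^{-1}\ma_{x_0}$, which lets Lemma~\ref{lem:LS_weights_de} supply $Dw_x[h]$ as a single clean term and causes the expression to collapse into the compact form $\mSigma_x^{1/2}(\mi+\mn_x)\mSigma_x^{1/2}$.
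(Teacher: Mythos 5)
Your proof is correct, and your treatment of the Hessian takes a genuinely different route from the paper's. The paper computes all the second-order partial derivatives of $f$ and applies the Schur-complement formula $2\nabla^{2}\psi=\nabla_{xx}^{2}f-\nabla_{xw}^{2}f(\nabla_{ww}^{2}f)^{-1}\nabla_{wx}^{2}f$, producing a three-term sum which it then collapses via the scalar identity $\lambda+c_{q}\lambda^{2}(1-c_{q}\lambda)^{-1}=\lambda(1-c_{q}\lambda)^{-1}$. You instead start from the already-simplified gradient $\nabla\psi(x)=-\ma_{x}^{\top}w_{x}$, differentiate the product, and reuse the ready-made Jacobian of Lemma~\ref{lem:LS_weights_de} (by observing that the ray $x_{0}+th$ induces exactly the diagonal rescaling $\ma_{x_{0}+th}=\mv_{t}^{-1}\ma_{x_{0}}$ that that lemma is stated for). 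The underlying implicit-function-theorem content is the same, since Lemma~\ref{lem:LS_weights_de} itself is proved by differentiating the optimality condition; what your route buys is modularity and a shorter final manipulation, since your two-term expression $\ma_{x}^{\top}\mSigma_{x}\ma_{x}+2\ma_{x}^{\top}\mSigma_{x}(\mSigma_{x}-c_{q}\mLambda_{x})^{-1}\mLambda_{x}\ma_{x}$ lands directly on $\ma_{x}^{\top}\mSigma_{x}^{1/2}(\mi+\mn_{x})\mSigma_{x}^{1/2}\ma_{x}$ without the extra algebraic identity. Your derivation of the identification $w_{x}=\lqweight(\ma_{x})$ is also slightly cleaner than the paper's: relating $f$ to $-(1-\tfrac{2}{q})[\volPot_{q}^{\ma_{x}}+\|w\|_{1}]$ lets you cite Lemma~\ref{lem:unique_lewis} directly, rather than separately checking concavity/convexity in each regime. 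One small omission: Lemma~\ref{lem:LS_weights_de} is stated for $p\neq 2$, and the paper separately verifies the Hessian formula at $q=2$ by noting $f$ is then independent of $w$. You handle $q=2$ when identifying $w_{x}=\sigma(\ma_{x})$ but not when invoking the Jacobian; you should observe that at $q=2$ the Jacobian of $w_{x}=\sigma(\ma_{x})$ is supplied instead by Lemma~\ref{lem:deriv:proj} and that the formula of Lemma~\ref{lem:LS_weights_de} specializes to it continuously, so the final formula remains valid.
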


\begin{proof}
By Lemma~\ref{lem:all_volumetric_derivs}, deferred to the appendix,
recalling that $c_{q}\defeq1-\frac{2}{q}$ we have
\begin{align*}
\nabla_{x}f(x,w) & =-2\ma_{x}^{\top}\sigma_{x,w}, & \nabla_{w}f(x,w) & =c_{q}\mw^{-1}\sigma_{x,w}-c_{q},\\
\nabla_{xx}^{2}f(x,w) & =\ma_{x}^{\top}(2\mSigma_{x,w}+4\mLambda_{x,w})\ma_{x}, & \nabla_{ww}^{2}f(x,w) & =-c_{q}\mw^{-1}(\mSigma_{x,w}-c_{q}\mLambda_{x,w})\mw^{-1},\text{ and}\\
\nabla_{xw}^{2}f(x,w) & =-2c_{q}\ma_{x}^{\top}\mLambda_{x,w}\mw^{-1}\,.
\end{align*}
where $\mProj_{x,w}\defeq\mProj(\mw^{\frac{1}{2}-\frac{1}{q}}\ma_{x})$,
$\mLambda_{x,w}\defeq\mLambda(\mw^{\frac{1}{2}-\frac{1}{q}}\ma_{x})$,
and $\sigma_{x,w}\defeq\sigma(\mw^{\frac{1}{2}-\frac{1}{q}}\ma_{x})$.

Consequently, $f(x,w)$ is concave in $w$ when $q>2$, convex in
$w$ when $q<2$ and each case the optimizer is in the interior of
the set $\{w_{i}\geq0\}$ by Lemma~\ref{lem:unique_lewis}. Further,
whenever $q\neq2$ the optimality conditions imply that $\nabla_{w}f(x,w_{x})=0$
and considering the $q=2$ case directly we see that in all cases
$\sigma_{x}=w_{x}=\lqweight(\ma_{x})$. 

For $q\neq2$ taking the derivative of $\nabla_{w}f(x,w_{x})=0$ with
respect to $x$ yields that for $w(x)\defeq w_{x}$, 
\[
\nabla_{wx}^{2}f(x,w_{x})+\nabla_{ww}^{2}f(x,w_{x})\mj_{w}(x)=0.
\]
Since $\nabla_{ww}^{2}f(x,w_{x})$ is invertible, we have that in
this case
\[
\mj_{w}(x)=-(\nabla_{ww}^{2}f(x,w_{x})){}^{-1}\nabla_{wx}^{2}f(x,w_{x})\,.
\]
Using that $\psi(x)=\frac{1}{2}f(x,w_{x})$ and taking the derivative
of $x$ on both sides, we have 
\begin{equation}
2\nabla\psi(x)=\nabla_{x}f(x,w_{x})+\mj_{w}(x)^{\top}\grad_{w}f(x,w_{x})=\nabla_{x}f(x,w_{x})=-2\ma_{x}^{\top}\sigma_{x}\label{eq:self_concord_grad}
\end{equation}
where we used that $\grad_{w}f(x,w_{x})=0$ by optimality. Next, taking
the derivative again yields that 
\begin{eqnarray*}
2\nabla^{2}\psi(x) & = & \nabla_{xx}^{2}f(x,w_{x})+\nabla_{xw}^{2}f(x,w_{x})\mj_{w}(x)\\
 & = & \nabla_{xx}^{2}f(x,w_{x})-\nabla_{xw}^{2}f(x,w_{x})\left(\nabla_{ww}^{2}f(x,w_{x})\right)^{-1}\left(\nabla_{wx}^{2}f(x,w_{x})\right)
\end{eqnarray*}
Substituting in the computed values for $\nabla_{xx}^{2}f(x,w_{x})$,
$\nabla_{xw}^{2}f(x,w_{x})$, $\nabla_{wx}^{2}f(x,w_{x})$ and using
that $\mSigma_{x}=\mw_{x}$ then yields that
\begin{equation}
\hess\psi(x)=\ma_{x}^{\top}\left(\mSigma_{x}+2\mLambda_{x}\right)\ma_{x}+2c_{q}\ma_{x}^{\top}\mLambda_{x}\left(\mSigma_{x}-c_{q}\mLambda_{x}\right)^{-1}\mLambda_{x}\ma_{x}\label{eq:self_concord_hess}
\end{equation}
Further, since when $q=2$ we have $\psi(x)=\frac{1}{2}f(x,w)$ for
any $w\in\R_{>0}^{m}$ and $c_{q}=0$ we see that (\ref{eq:self_concord_grad})
and (\ref{eq:self_concord_hess}) are correct for all $q>0$. Rearranging,
scaling, and leveraging that $\mSigma_{x}$ is PD (i.e. all leverage
scores are positive) yields 
\[
\hess\psi(x)=\ma_{x}^{\top}\mSigma_{x}^{1/2}\left(\mi+2\mNormProjLap_{x}+2c_{q}\mNormProjLap_{x}\left(\mi-c_{q}\mNormProjLap_{x}\right)^{-1}\mNormProjLap_{x}\right)\mSigma_{x}^{1/2}\ma_{x}\,.
\]
Now, note that $\mZero\preceq\mNormProjLap\preceq\mi$ and $c_{q}\in(-\infty,1)$
for $q\in(0,\infty)$ and therefore no eigenvalue of $\mNormProjLap$
has value $1/c_{q}$. Since, $x+c_{q}x^{2}(1-c_{q}x)^{-1}=x(1-c_{q}x)^{-1}$
for $x\neq1/c_{q}$ and $\mNormProjLap$ and $\mi$ trivially commute
we have that $\hess\psi(x)=\ma_{x}^{\top}\mSigma_{x}^{1/2}(\mi+\mn_{x})\mSigma_{x}^{1/2}\ma_{x}$
as desired. Further, this implies that $\mn_{x}$ is symmetric with
all eigenvalues in the range $[0,q]$ (see e.g. (\ref{eq:norm_inv_easy_bound})),
proving (\ref{eq:psi_2sigma}).
\end{proof}
Using Lemma~\ref{lem:p_derivatives} we can immediately bound $\nabla\psi(x)^{\top}\nabla^{2}\psi(x)^{-1}\nabla\psi(x)$.
\begin{lem}
\label{lem:force} For all $x\in\interior$, we have $\nabla\psi(x)^{\top}\nabla^{2}\psi(x)^{-1}\nabla\psi(x)\leq n$.
\end{lem}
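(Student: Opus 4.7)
The plan is to use the two facts already extracted in Lemma~\ref{lem:p_derivatives}, namely the explicit formula $\nabla\psi(x)=-\ma_{x}^{\top}\sigma_{x}$ and the lower bound $\nabla^{2}\psi(x)\succeq\ma_{x}^{\top}\mSigma_{x}\ma_{x}$ from \eqref{eq:psi_2sigma}, and combine them with the defining property of Lewis weights that $\sigma_{x}=w_{x}=\lqweight(\ma_{x})$ satisfies $\sum_{i\in[m]}[\sigma_{x}]_{i}=n$.

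First I would use the Hessian lower bound, which inverts to $\nabla^{2}\psi(x)^{-1}\preceq(\ma_{x}^{\top}\mSigma_{x}\ma_{x})^{-1}$, to replace the Hessian in the target quadratic form. Substituting $\nabla\psi(x)=-\ma_{x}^{\top}\sigma_{x}$ yields
\[
\nabla\psi(x)^{\top}\nabla^{2}\psi(x)^{-1}\nabla\psi(x)\leq\sigma_{x}^{\top}\ma_{x}(\ma_{x}^{\top}\mSigma_{x}\ma_{x})^{-1}\ma_{x}^{\top}\sigma_{x}.
\]

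Second, the key step is to recognize the projection structure on the right-hand side. Writing $\sigma_{x}=\mSigma_{x}^{1/2}(\mSigma_{x}^{-1/2}\sigma_{x})$ and using that $\mSigma_{x}^{1/2}\ma_{x}(\ma_{x}^{\top}\mSigma_{x}\ma_{x})^{-1}\ma_{x}^{\top}\mSigma_{x}^{1/2}$ is the orthogonal projection onto the image of $\mSigma_{x}^{1/2}\ma_{x}$ and hence $\preceq\mi$, the expression is bounded by $\sigma_{x}^{\top}\mSigma_{x}^{-1}\sigma_{x}$. Since $\mSigma_{x}=\mDiag(\sigma_{x})$, this last quantity equals $\sum_{i\in[m]}[\sigma_{x}]_{i}$.

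Finally, I invoke Lemma~\ref{lem:unique_lewis} (or equivalently the fact that leverage scores always sum to the rank, together with $\sigma_{x}=\sigma(\mw_{x}^{\frac{1}{2}-\frac{1}{q}}\ma_{x})$ from Lemma~\ref{lem:p_derivatives}) to conclude $\sum_{i\in[m]}[\sigma_{x}]_{i}=n$, completing the bound. There is no serious obstacle here: everything is a direct consequence of the Hessian formula, the Lewis weight optimality, and a one-line projection inequality, so the proof is essentially three lines of algebra.
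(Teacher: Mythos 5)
Your proof is correct and is essentially identical to the paper's: both use $\nabla\psi(x)=-\ma_{x}^{\top}\sigma_{x}$ and $\nabla^{2}\psi(x)\succeq\ma_{x}^{\top}\mSigma_{x}\ma_{x}$ from Lemma~\ref{lem:p_derivatives}, factor $\sigma_x$ through $\mSigma_x^{1/2}$ to expose the orthogonal projection $\mSigma_{x}^{1/2}\ma_{x}(\ma_{x}^{\top}\mSigma_{x}\ma_{x})^{-1}\ma_{x}^{\top}\mSigma_{x}^{1/2}\preceq\mi$, and conclude with $\sum_{i}[\sigma_x]_i=n$. The only cosmetic difference is that the paper writes $\sigma_x=\mSigma_x\onesVec$ and you write $\sigma_x=\mSigma_x^{1/2}(\mSigma_x^{-1/2}\sigma_x)$, but since $\mSigma_x^{-1/2}\sigma_x=\mSigma_x^{1/2}\onesVec$ these are the same vector and the same computation.
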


\begin{proof}
Since $\nabla\psi(x)=-\ma_{x}^{\top}\sigma_{x}$ and $\nabla^{2}\psi(x)\succeq\ma_{x}^{\top}\mSigma_{x}\ma_{x}$
by Lemma~\ref{lem:p_derivatives} we have. 
\[
\nabla\psi(x)^{\top}\nabla^{2}\psi(x)^{-1}\nabla\psi(x)\leq\sigma_{x}^{\top}\ma_{x}\left(\ma_{x}^{\top}\mSigma_{x}\ma_{x}\right)^{-1}\ma_{x}^{\top}\sigma_{x}=1^{\top}\mSigma_{x}^{1/2}\mProj\mSigma_{x}^{1/2}1
\]
where $\mProj\defeq\mSigma_{x}^{1/2}\ma_{x}(\ma_{x}^{\top}\mSigma_{x}\ma_{x})^{-1}\ma_{x}^{\top}\mSigma_{x}^{1/2}.$
Since $\mProj$ is a projection matrix, $\mProj\preceq\mi$ (Lemma~\ref{lem:tool:projection_matrices})
and 
\[
\nabla\psi(x)^{\top}\nabla^{2}\psi(x)^{-1}\nabla\psi(x)\leq1^{\top}\mSigma_{x}1=\sum_{i\in[m]}[\sigma_{x}]_{i}=n\,.
\]
\end{proof}

\subsection{Stability of Lewis Weight Barrier \label{subsec:w_Sigma_Px_stable}}

Here we bound the directional derivatives of the barrier and show
that they are not too large. Lemma~\ref{lem:stable_hess} proved
in this section, combined with Lemma~\ref{lem:force} of the previous
section immediately prove Theorem~\ref{thm:LS_barrier_sc}, bounding
the self-concordance of $\psi$.

Throughout this section, to simplify the notation, we fix an arbitrary
point $x\in\interior$ and a direction $h\in\R^{n}$ and define $x_{t}\defeq x+th$,
$s_{t}=\ma x_{t}-b$, and $\ma_{t}=\ma_{x_{t}}$ and further define
$w_{t}$, $\mw_{t}$, $\mSigma_{t}$, $\mProj_{t}^{(2)}$, $\mLambda_{t}$,
$\mNormProjLap_{t}$,, and $\mn_{t}$ (Lemma~\ref{lem:p_derivatives})
analogously.

First, we bound the derivatives of the slacks and weights in the following
Lemma~\ref{lem:s_lipschitz} and \ref{lem:w_lipschitz}.
\begin{lem}
\label{lem:s_lipschitz} For all $x\in\interior$ and $h\in\Rn$ we
have
\[
\normFull{\ms_{t}^{-1}\frac{d}{dt}s_{t}}_{\mw_{t}}\leq\norm h_{\ma_{t}^{\top}\mw_{t}\ma_{t}}\text{ and }\normFull{\ms_{t}^{-1}\frac{d}{dt}s_{t}}_{\infty}\leq\sqrt{2}m^{\frac{1}{q+2}}\norm h_{\ma_{t}^{\top}\mw_{t}\ma_{t}}\,.
\]
\end{lem}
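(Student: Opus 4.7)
The plan is to reduce both bounds to direct manipulations of $\ma_t h$, since $\frac{d}{dt} s_t = \ma h$ and hence $\ms_t^{-1}\frac{d}{dt} s_t = \ms_t^{-1}\ma h = \ma_t h$. With this identification the first inequality is essentially by definition: $\|\ma_t h\|_{\mw_t}^2 = h^\top \ma_t^\top \mw_t \ma_t h = \|h\|_{\ma_t^\top \mw_t \ma_t}^2$, so the first bound holds (as equality).

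For the $\ell_\infty$ bound I would use a standard Cauchy--Schwarz argument in the inner product induced by $\ma_t^\top \mw_t \ma_t$. Writing $e_i^\top \ma_t h = \bigl(e_i^\top \ma_t (\ma_t^\top \mw_t \ma_t)^{-1/2}\bigr)\bigl((\ma_t^\top \mw_t \ma_t)^{1/2} h\bigr)$ and applying Cauchy--Schwarz gives
\[
|e_i^\top \ma_t h|^2 \;\leq\; \bigl[\ma_t (\ma_t^\top \mw_t \ma_t)^{-1} \ma_t^\top\bigr]_{ii} \cdot \|h\|_{\ma_t^\top \mw_t \ma_t}^2.
\]
So it suffices to show $\bigl[\ma_t (\ma_t^\top \mw_t \ma_t)^{-1} \ma_t^\top\bigr]_{ii} \leq 2 m^{2/(q+2)}$ for every $i\in[m]$.

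The key observation is that this diagonal equals $\sigma(\mw_t^{1/2}\ma_t)_i / [w_t]_i$. Indeed, letting $\tilde\ma \defeq \mw_t^{1/2}\ma_t$ we have $\ma_t^\top \mw_t \ma_t = \tilde\ma^\top\tilde\ma$, so
\[
e_i^\top \ma_t (\ma_t^\top \mw_t \ma_t)^{-1} \ma_t^\top e_i = [w_t]_i^{-1} \cdot e_i^\top \tilde\ma (\tilde\ma^\top\tilde\ma)^{-1}\tilde\ma^\top e_i = [w_t]_i^{-1}\,\sigma(\mw_t^{1/2}\ma_t)_i.
\]
Since $w_t = \lqweight(\ma_t)$ by Lemma~\ref{lem:p_derivatives}, Lemma~\ref{lem:lewis_p_change} applied with $p=q$ and $r=\infty$ (so $\alpha = 2/q$ and $\alpha/(1+\alpha) = 2/(q+2)$) yields exactly $\sigma(\mw_t^{1/2}\ma_t)_i / [w_t]_i \leq 2 m^{2/(q+2)}$. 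Substituting back into the Cauchy--Schwarz bound and taking square roots gives $\|\ma_t h\|_\infty \leq \sqrt{2}\, m^{1/(q+2)}\, \|h\|_{\ma_t^\top \mw_t \ma_t}$, which is the desired second inequality.

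The only non-routine step is the rounding bound $\sigma(\mw_t^{1/2}\ma_t)_i \leq 2 m^{2/(q+2)}\,[w_t]_i$, but this is precisely the content of Lemma~\ref{lem:lewis_p_change} in the limit $r\to\infty$ (equivalently, the $K\subseteq\sqrt{c_{q,\infty,m}\cdot n}\,E$ rounding of Lemma~\ref{lem:lewis_rounding} applied pointwise). Everything else is bookkeeping.
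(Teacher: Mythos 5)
Your proof is correct and follows essentially the same approach as the paper: identify $\ms_t^{-1}\frac{d}{dt}s_t = \ma_t h$ so the first bound is an equality, use Cauchy--Schwarz in the $\ma_t^\top\mw_t\ma_t$ inner product for the $\ell_\infty$ bound, and invoke Lemma~\ref{lem:lewis_p_change} (with $p=q$, $r\to\infty$, $\alpha=2/q$) to bound $\sigma(\mw_t^{1/2}\ma_t)_i/[w_t]_i \leq 2m^{2/(q+2)}$.
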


\begin{proof}
Since $\ms_{t}^{-1}\frac{d}{dt}s_{t}=\ma_{t}h$ we have $\norm{\ms_{t}^{-1}\frac{d}{dt}s_{t}}_{\mw_{t}}=\norm h_{\ma_{t}^{\top}\mw_{t}\ma_{t}}$.
For the second inequality note that by Cauchy Schwarz,
\begin{align*}
\normFull{\ms_{t}^{-1}\frac{d}{dt}s_{t}}_{\infty} & =\norm{\ma_{x}h}_{\infty}=\max_{i\in[m]}\left|\left\langle \cordVec i,\ma_{x}h\right\rangle \right|=\max_{i\in[m]}\left|\left\langle (\ma_{x}^{\top}\mw_{x}\ma_{x})^{-1/2}\ma_{x}^{\top}\cordVec i,(\ma_{x}^{\top}\mw_{x}\ma_{x})^{1/2}h\right\rangle \right|\\
 & \leq\sqrt{\max_{i\in[m]}\left[\ma_{x}(\ma_{x}^{\top}\mw_{x}\ma_{x})^{-1}\ma_{x}^{\top}\right]_{ii}}\norm h_{\ma_{x}^{\top}\mw_{x}\ma_{x}}\,.
\end{align*}
The result follows that Lemma~\ref{lem:lewis_p_change} shows 
\begin{align*}
\max_{i\in[m]}\left[\ma_{x}(\ma_{x}^{\top}\mw_{x}\ma_{x})^{-1}\ma_{x}^{\top}\right]_{ii} & =\max_{i\in[m]}\sigma(\mw_{x}^{\frac{1}{2}}\ma_{x})_{i}[w_{x}]_{i}^{-1}\leq2m^{\frac{2}{q+2}}.
\end{align*}
\end{proof}
\begin{lem}
\label{lem:w_lipschitz} For all $x\in\interior$ and $h\in\Rn$ we
have
\[
\normFull{\mw_{t}^{-1}\frac{d}{dt}w_{t}}_{\mw_{t}}\leq q\norm h_{\ma_{t}^{\top}\mw_{t}\ma_{t}}\text{ and }\normFull{\mw_{t}^{-1}\frac{d}{dt}w_{t}}_{\infty}\leq q\left(\sqrt{2}\cdot m^{\frac{1}{q+2}}+\max\left\{ \frac{q}{2},1\right\} \right)\norm h_{\ma_{t}^{\top}\mw_{t}\ma_{t}}.
\]
\end{lem}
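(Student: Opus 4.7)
The plan is to reduce this directly to Lemma~\ref{lem:LS_weights_stable} applied to the rescaled matrix, composed with the bounds on the slack derivatives from Lemma~\ref{lem:s_lipschitz}. Write $v_t \defeq 1/s_t$ (coordinatewise) and $\mv_t \defeq \mDiag(v_t)$, so that $\ma_t = \ms_t^{-1}\ma = \mv_t \ma$ and hence $w_t = \lqweight(\mv_t \ma)$. In the notation of Section~\ref{subsec:lewis_weight_rescale} this is exactly $w(v_t)$, so by the chain rule $\tfrac{d}{dt} w_t = \mj_w(v_t)\tfrac{d v_t}{dt}$.

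Next I would compute the relevant reparameterization factor. Since $v_t = 1/s_t$, we have $\tfrac{dv_t}{dt} = -\mv_t^2 \tfrac{ds_t}{dt}$, and therefore
\[
\mv_t^{-1}\tfrac{dv_t}{dt} = -\ms_t^{-1}\tfrac{ds_t}{dt}.
\]
Consequently, Lemma~\ref{lem:s_lipschitz} gives us the two bounds we need on the input to $\mj_w(v_t)$, namely $\|\mv_t^{-1}\tfrac{dv_t}{dt}\|_{w_t} \leq \|h\|_{\ma_t^{\top}\mw_t\ma_t}$ and $\|\mv_t^{-1}\tfrac{dv_t}{dt}\|_{\infty}\leq \sqrt{2}\,m^{1/(q+2)}\|h\|_{\ma_t^{\top}\mw_t\ma_t}$.

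For the first inequality of the lemma, I apply the $\|\cdot\|_{w(v)}$ bound of Lemma~\ref{lem:LS_weights_stable} (with $p$ there playing the role of our $q$) directly:
\[
\norm{\mw_t^{-1}\tfrac{d}{dt}w_t}_{\mw_t} = \norm{\mw_t^{-1}\mj_w(v_t)\tfrac{dv_t}{dt}}_{w_t} \leq q\norm{\mv_t^{-1}\tfrac{dv_t}{dt}}_{w_t} \leq q\norm{h}_{\ma_t^{\top}\mw_t\ma_t}.
\]
For the $\ell_\infty$ bound, I split via the triangle inequality using the correction $q\mv_t^{-1}$ from Lemma~\ref{lem:LS_weights_stable}:
\[
\norm{\mw_t^{-1}\mj_w(v_t)\tfrac{dv_t}{dt}}_{\infty} \leq \norm{[\mw_t^{-1}\mj_w(v_t)-q\mv_t^{-1}]\tfrac{dv_t}{dt}}_{\infty} + q\norm{\mv_t^{-1}\tfrac{dv_t}{dt}}_{\infty}.
\]
The first summand is bounded by $q\max\{q/2,1\}\cdot\|\mv_t^{-1}\tfrac{dv_t}{dt}\|_{w_t} \leq q\max\{q/2,1\}\,\|h\|_{\ma_t^{\top}\mw_t\ma_t}$ by \eqref{eq:weight_change_infty_norm}, while the second is bounded by $q\sqrt{2}\,m^{1/(q+2)}\|h\|_{\ma_t^{\top}\mw_t\ma_t}$ using the $\ell_\infty$ bound on $\ms_t^{-1}\tfrac{ds_t}{dt}$ from Lemma~\ref{lem:s_lipschitz}. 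Summing gives the claimed constant $q(\sqrt{2}\,m^{1/(q+2)}+\max\{q/2,1\})$.

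There is essentially no obstacle here beyond bookkeeping: the whole content is the substitution $v_t=1/s_t$ so that the general Jacobian bound of Lemma~\ref{lem:LS_weights_stable} can be invoked, and then quoting the already-established slack derivative bounds. The only minor care needed is tracking that the weighted norm $\|\cdot\|_{w_t}$ matches on both sides of the Jacobian bound (which it does, since $\|h\|_{\ma_t^{\top}\mw_t\ma_t} = \|\ma_t h\|_{w_t} = \|\ms_t^{-1}\tfrac{ds_t}{dt}\|_{w_t}$), and that the $q$-dependence in Lemma~\ref{lem:LS_weights_stable} (stated there as $p$) gives the correct prefactor.
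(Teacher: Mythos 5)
Your proof is correct and takes essentially the same route as the paper: both proceed by writing $w_t = w_q(\mv_t\ma)$ for $v_t = 1/s_t$, applying the Jacobian bounds of Lemma~\ref{lem:LS_weights_stable} via the chain rule, and then plugging in the slack-derivative bounds of Lemma~\ref{lem:s_lipschitz}. The only difference is presentational: you make the substitution $v_t = 1/s_t$ and the factor $\mv_t^{-1}\tfrac{dv_t}{dt} = -\ms_t^{-1}\tfrac{ds_t}{dt}$ explicit, whereas the paper writes the same quantity as $(\ms_t^{-1})^{-1}\ms_t^{-2}\tfrac{d}{dt}s_t$ inline.
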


\begin{proof}
Since that $w_{t}=w_{q}(\ms_{t}^{-1}\ma)$, chain rule, Lemma~\ref{lem:LS_weights_stable},
and Lemma~\ref{lem:s_lipschitz} shows that the function $p(x)\defeq w_{q}(\ms_{x}^{-1}\ma)$
satisfies 
\[
\norm{\mj_{p}(x_{t})h}_{\mw_{t}^{-1}}\leq q\normFull{(\ms_{t}^{-1})^{-1}\ms_{t}^{-2}\frac{d}{dt}s_{t}}_{\mw_{t}}=q\normFull{\ms_{t}^{-1}\frac{d}{dt}s_{t}}_{\mw_{t}}=q\norm h_{\ma_{t}^{\top}\mw_{t}\ma_{t}}.
\]
The same tools also show that 
\begin{align*}
\norm{\mw_{t}^{-1}\mj_{p}(x_{t})h}_{\infty} & \leq q\normFull{(\ms_{t}^{-1})^{-1}\ms_{t}^{-2}\frac{d}{dt}s_{t}}_{\infty}+q\cdot\max\left\{ \frac{q}{2},1\right\} \cdot\normFull{(\ms_{t}^{-1})^{-1}\ms_{t}^{-2}\frac{d}{dt}s_{t}}_{\mw_{t}}\\
 & \leq q\normFull{\ms_{t}^{-1}\frac{d}{dt}s_{t}}_{\infty}+q\cdot\max\left\{ \frac{q}{2},1\right\} \cdot\normFull{\ms_{t}^{-1}\frac{d}{dt}s_{t}}_{\mw_{t}}\\
 & \leq q\sqrt{2}\cdot m^{\frac{1}{q+2}}\cdot\norm h_{\ma_{t}^{\top}\mw_{t}\ma_{t}}+q\cdot\max\left\{ \frac{q}{2},1\right\} \cdot\norm h_{\ma_{t}^{\top}\mw_{t}\ma_{t}}.
\end{align*}
Since $\frac{d}{dt}w_{t}=\mj_{p}(x_{t})h$ the result follows.
\end{proof}
Now, recall that by Lemma~\ref{lem:p_derivatives} we have $\hess\psi(x)=\ma_{t}^{\top}\mSigma_{t}^{1/2}(\mi+\mn_{t})\mSigma_{t}^{1/2}\ma_{x}$
where $\mn_{t}\defeq2\mNormProjLap_{t}(\mi-c_{q}\mNormProjLap_{t})^{-1}$
and $c_{q}=1-(2/q)$. Since we have already bounded the stability
of $\ma_{t}$ and $\mSigma_{t}$ all that remains is to bound the
stability of $\mNormProjLap_{t}$ and leverage this to bound the stability
$\mn_{t}$ and $\hess\psi(x_{t})$.

To simplify these calculation for all $t>0$ and $\alpha\in\R$ we
define $z_{t,\alpha}\in\R^{n}$ be defined for all $i\in[n]$ by $[z_{t,\alpha}]_{i}=\frac{d}{dt}\ln\left([w_{t}]_{i}^{\alpha}/[s_{t}]_{i}\right)$
and $\mz_{t,\alpha}\defeq\mDiag(z_{t,\alpha})$. We will repeatedly
use the fact
\begin{equation}
\frac{d}{dt}\mw_{t}^{\alpha}\ms_{t}^{-1}=\mw_{t}^{\alpha}\ms_{t}^{-1}\frac{d}{dt}\ln(\mw_{t}^{\alpha}\ms_{t}^{-1})=\mw_{t}^{\alpha}\ms_{t}^{-1}\mz_{t,\alpha}.\label{eq:Z_fact}
\end{equation}
In the following lemma we bound $z_{t,\alpha}$ and use this to simplify
these derivative bounds. 
\begin{lem}
\label{lem:z} For all $x\in\interior$ and $h\in\Rn$ we have and
$z_{t,\alpha}\in\R^{n}$ defined for all $i\in[n]$ by $[z_{t,\alpha}]_{i}=\frac{d}{dt}\ln\left([w_{t}]_{i}^{\alpha}/[s_{t}]_{i}\right)$
we have that 
\[
\norm{z_{t}}_{\mSigma}\leq\left(|\alpha|q+1\right)\norm h_{\hess\psi(x_{t})}\text{ and }\norm{z_{t}}_{\infty}\leq\left((|\alpha|q+1)\sqrt{2}m^{\frac{1}{q+2}}+q|\alpha|\max\left\{ \frac{q}{2},1\right\} \right)\norm h_{\nabla^{2}\psi(x_{t})}
\]
\end{lem}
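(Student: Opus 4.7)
The plan is to prove this by straightforward triangle inequality applied to the explicit expression for $z_{t,\alpha}$, using the derivative bounds already established in Lemmas~\ref{lem:s_lipschitz} and \ref{lem:w_lipschitz}. First I would expand the logarithm: since $[z_{t,\alpha}]_i = \alpha\frac{d}{dt}\ln[w_t]_i - \frac{d}{dt}\ln[s_t]_i$, we have $z_{t,\alpha} = \alpha\mw_t^{-1}\frac{d}{dt}w_t - \ms_t^{-1}\frac{d}{dt}s_t$, so in any norm $\|z_{t,\alpha}\| \leq |\alpha|\|\mw_t^{-1}\frac{d}{dt}w_t\| + \|\ms_t^{-1}\frac{d}{dt}s_t\|$.

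Next, recall from Lemma~\ref{lem:p_derivatives} that $w_x = \sigma_x$, so $\mSigma_t = \mw_t$ and the norm $\|\cdot\|_{\mSigma}$ in the statement is exactly $\|\cdot\|_{\mw_t}$. Applying the $\mw_t$-norm bounds from Lemmas~\ref{lem:s_lipschitz} and \ref{lem:w_lipschitz} gives
\[
\|z_{t,\alpha}\|_{\mw_t} \leq |\alpha|\cdot q\|h\|_{\ma_t^\top\mw_t\ma_t} + \|h\|_{\ma_t^\top\mw_t\ma_t} = (|\alpha|q+1)\|h\|_{\ma_t^\top\mw_t\ma_t}.
\]
Then the lower bound $\ma_t^\top\mSigma_t\ma_t \preceq \hess\psi(x_t)$ from (\ref{eq:psi_2sigma}) converts this to the claimed bound in $\|\cdot\|_{\hess\psi(x_t)}$. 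For the $\ell_\infty$ bound, the same decomposition combined with the $\ell_\infty$ bounds from Lemmas~\ref{lem:s_lipschitz} and \ref{lem:w_lipschitz} yields
\[
\|z_{t,\alpha}\|_\infty \leq \bigl(|\alpha|q(\sqrt{2}m^{\frac{1}{q+2}}+\max\{q/2,1\}) + \sqrt{2}m^{\frac{1}{q+2}}\bigr)\|h\|_{\ma_t^\top\mw_t\ma_t},
\]
which after regrouping matches the stated constant $(|\alpha|q+1)\sqrt{2}m^{1/(q+2)} + q|\alpha|\max\{q/2,1\}$, and finally (\ref{eq:psi_2sigma}) again upgrades to $\|h\|_{\hess\psi(x_t)}$.

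There is no real obstacle here; the lemma is a purely mechanical consequence of having bounded both $\ms_t^{-1}\frac{d}{dt}s_t$ and $\mw_t^{-1}\frac{d}{dt}w_t$ in both the $\mw_t$ and $\ell_\infty$ norms in the two preceding lemmas, together with the identification $\mSigma_t=\mw_t$ and the Hessian lower bound from Lemma~\ref{lem:p_derivatives}. The only thing to check carefully is the arithmetic of combining the two pieces so the constants match the statement verbatim, but this is entirely routine.
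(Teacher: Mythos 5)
Your proposal is correct and matches the paper's proof exactly: the paper likewise writes $[z_{t,\alpha}]_i = \alpha\,\frac{d}{dt}[w_t]_i/[w_t]_i - \frac{d}{dt}[s_t]_i/[s_t]_i$ and then cites the triangle inequality, Lemma~\ref{lem:s_lipschitz}, Lemma~\ref{lem:w_lipschitz}, and $\ma_t^\top\mw_t\ma_t \preceq \hess\psi(x_t)$. You have simply spelled out the arithmetic that the paper leaves implicit.
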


\begin{proof}
Note that $[z_{t,\alpha}]_{i}=\alpha\cdot(\frac{d}{dt}[w_{t}]_{i}/[w_{t}]_{i})-(\frac{d}{dt}[s_{t}]_{i}/[s_{t}]_{i})$
and consequently the result therefore follows from triangle inequality,
Lemma~\ref{lem:s_lipschitz}, Lemma~\ref{lem:w_lipschitz}, and
$\ma_{t}^{\top}\mw_{t}\ma_{t}\preceq\hess\psi(x_{t})$.
\end{proof}
Using this we can bound the stability of $\mNormProjLap_{t}$
\begin{lem}
\label{lem:change_of_nor_proj_lap} For all $x\in\interior$ and $h\in\Rn$
we have
\[
\normFull{\frac{d}{dt}\mNormProjLap_{t}}_{2}\leq\max\{3q,16\}\norm h_{\hess\psi(x_{t})}\,.
\]
\end{lem}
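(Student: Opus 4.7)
The plan is to apply the product rule to the formula
\[
\mNormProjLap_t \;=\; \mi - \mw_t^{-1/2}\mProj_t^{(2)}\mw_t^{-1/2},
\]
which holds because $\mSigma_t = \mw_t$ by Lemma~\ref{lem:p_derivatives}. Writing $\mr \defeq \mw_t^{-1}(d\mw_t/dt)$ for a diagonal matrix, the product rule splits $-d\mNormProjLap_t/dt$ into two outer terms of the form $-\tfrac{1}{2}\mr(\mi-\mNormProjLap_t)$ and its transpose, plus a middle term $\mw_t^{-1/2}(d\mProj_t^{(2)}/dt)\mw_t^{-1/2}$. Because $\mi - \mNormProjLap_t \preceq \mi$, the outer terms have spectral norm at most $\|r\|_\infty$, which Lemma~\ref{lem:w_lipschitz} controls via $\ma_t^\top\mw_t\ma_t \preceq \hess\psi(x_t)$ (Lemma~\ref{lem:p_derivatives}).

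Next I would differentiate $\mProj_t$ to treat the middle term. Since $\mb_t \defeq \mw_t^{1/2-1/q}\ms_t^{-1}\ma$, equation~\eqref{eq:Z_fact} gives $d\mb_t/dt = \mz\mb_t$ with $\mz \defeq \mz_{t,1/2-1/q}$, and the standard projection-derivative calculation yields
\[
\frac{d\mProj_t}{dt} \;=\; \mz\mProj_t + \mProj_t\mz - 2\mProj_t\mz\mProj_t.
\]
Since $\mProj_t^{(2)} = \mProj_t\circ\mProj_t$ and the Hadamard product is commutative, $d\mProj_t^{(2)}/dt = 2\mProj_t\circ(d\mProj_t/dt)$. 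Using the elementary identities $\mProj_t\circ(\mz\mProj_t) = \mz\mProj_t^{(2)}$ and $\mProj_t\circ(\mProj_t\mz) = \mProj_t^{(2)}\mz$, this reduces to
\[
\frac{d\mProj_t^{(2)}}{dt} \;=\; 2\mz\mProj_t^{(2)} + 2\mProj_t^{(2)}\mz - 4\mProj_t\circ(\mProj_t\mz\mProj_t).
\]

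I would then bound the spectral norm of each piece after conjugation by $\mw_t^{-1/2}$. The first two pieces contribute at most $2\|z\|_\infty$ each, since $\mz$ commutes with $\mw_t^{-1/2}$ and $\mw_t^{-1/2}\mProj_t^{(2)}\mw_t^{-1/2}\preceq\mi$. The Hadamard-product piece is the crux: I would apply the commutation identity
\[
\mw_t^{-1/2}(\mProj_t\circ\mq)\mw_t^{-1/2} \;=\; (\mw_t^{-1/2}\mProj_t\mw_t^{-1/2})\circ\mq
\]
and then Schur's Hadamard-product inequality. The matrix $\mw_t^{-1/2}\mProj_t\mw_t^{-1/2}$ is PSD (being the conjugate of a projection) and by the self-consistency $\sigma_t = w_t$ its diagonal entries equal $[\sigma_t]_i/[w_t]_i = 1$, so its Hadamard product with any matrix $\mq$ has spectral norm at most $\|\mq\|_2$. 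Taking $\mq = \mProj_t\mz\mProj_t$ bounds the Hadamard-product piece by $\|\mz\|_2 = \|z\|_\infty$.

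Finally, collecting all contributions and invoking Lemmas~\ref{lem:w_lipschitz} and~\ref{lem:z} with $\alpha = \tfrac{1}{2} - \tfrac{1}{q}$ (so $|\alpha|q = |q/2-1|$) bounds everything in terms of $\|h\|_{\hess\psi(x_t)}$; careful bookkeeping of the constants then yields the announced bound $\max\{3q,16\}\|h\|_{\hess\psi(x_t)}$. The main technical obstacle is the Hadamard-product term $\mProj_t\circ(\mProj_t\mz\mProj_t)$: without both the conjugation identity and the unit-diagonal observation, one cannot get a bound independent of $\min_i [w_t]_i$. It is precisely the Lewis-weight self-consistency $\sigma_t = w_t$ that makes this cancellation work.
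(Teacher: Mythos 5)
Your algebra is correct up to the point where you collect the bounds: the decomposition of $\frac{d}{dt}\mNormProjLap_t$, the identity $d\mProj_t/dt = \mz\mProj_t + \mProj_t\mz - 2\mProj_t\mz\mProj_t$, and the conjugation-plus-Schur argument showing that the Hadamard piece is controlled by $\|\mProj_t\mz\mProj_t\|_2\leq\|z\|_\infty$ are all valid, and the observation that $\mw_t^{-1/2}\mProj_t\mw_t^{-1/2}$ has unit diagonal thanks to $\sigma_t=w_t$ is a nice touch. The gap is in the final step: you end up with a bound of the form $c_1\|r\|_\infty + c_2\|z\|_\infty$, and then invoke Lemmas~\ref{lem:w_lipschitz} and~\ref{lem:z} to translate to $\|h\|_{\hess\psi(x_t)}$. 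But the $\ell_\infty$ estimates in those lemmas carry a factor of $\sqrt{2}\,m^{1/(q+2)}$ (look at the second inequality of each), so your chain produces a bound that grows with $m$ for fixed $q$. The claimed bound $\max\{3q,16\}$ is $m$-independent, so no amount of careful bookkeeping of constants can make your estimates close.

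The paper avoids this by never passing through $\|z\|_\infty$. Instead it bounds the quadratic form $y^\top[\frac{d}{dt}\mq_t^{(2)}]y$ directly via parts (7) and (8) of Lemma~\ref{lem:tool:projection_matrices}, which control $|y^\top\mz\mProj^{(2)}y|$ and $|y^\top(\mProj\circ\mProj\mz\mProj)y|$ by $\|y\|_\mSigma^2\,\|z\|_\mSigma$. After the change of variable $y\mapsto\mSigma_t^{-1/2}y$ this gives an operator-norm bound in terms of $\|z_{t,\alpha}\|_{\mSigma_t}$, and the $\mSigma$-norm estimate in Lemma~\ref{lem:z} is $m$-free. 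Your Schur argument gives $\|\mProj_t\mz\mProj_t\|_2\leq\|z\|_\infty$, which is genuinely a different (and for this purpose weaker) inequality than part (8). You would need to replace both the Schur step and the outer-term $\|r\|_\infty$ bound with the corresponding $\mSigma$-weighted quadratic-form estimates. Even after that repair, your choice to conjugate by $\mw_t^{-1/2}$ (splitting off $\mr=\mw_t^{-1}\dot{\mw}_t$) rather than the paper's $\mw_t^{-1/4}$ (which folds the weight into the projection and yields $z_{t,1/4-1/q}$) produces a slightly larger constant ($\max\{5q,16\}$ rather than $\max\{3q,16\}$), so you would still fall short of the stated inequality without also adopting the paper's choice of conjugation.
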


\begin{proof}
Let $\mq_{t}\defeq\mw_{t}^{-1/4}\mProj_{t}\mw_{t}^{-1/4}$. Since
$\mw_{t}=\mSigma_{t}$ this implies that 
\[
\mNormProjLap_{t}=\mi-\mSigma_{t}^{-1/2}\mProj_{t}^{(2)}\mSigma_{t}^{-1/2}=\mi-\mq_{t}^{(2)}\,.
\]
Now, let, $z_{t,\alpha}$ be defined as in Lemma~\ref{lem:z} and
let $\mz_{t,\alpha}\defeq\mDiag(z_{t,\alpha})$. Since 
\[
\frac{d}{dt}\left(\ma_{t}^{\top}\mw_{t}^{1-\frac{2}{q}}\ma_{t}\right)^{-1}=-\left(\ma_{t}^{\top}\mw_{t}^{1-\frac{2}{q}}\ma_{t}\right)^{-1}\frac{d}{dt}\left[\ma_{t}^{\top}\mw_{t}^{1-\frac{2}{q}}\ma_{t}\right]\left(\ma_{t}^{\top}\mw_{t}^{1-\frac{2}{q}}\ma_{t}\right)^{-1}
\]
and 
\[
\frac{d}{dt}\ms_{t}^{-2}\mw_{t}^{1-\frac{2}{q}}=\frac{d}{dt}\left[\left(\ms_{t}^{-1}\mw_{t}^{\frac{1}{2}-\frac{1}{q}}\right)^{2}\right]=2\ms_{t}^{-2}\mw_{t}^{1-\frac{2}{q}}\mz_{t,\frac{1}{2}-\frac{1}{q}}
\]
(using (\ref{eq:Z_fact})), we have that 
\begin{align*}
\frac{d}{dt}\left[\mq_{t}\right]_{ij} & =\frac{d}{dt}\cordVec i^{\top}\mw_{t}^{\frac{1}{4}-\frac{1}{q}}\ma_{t}\left(\ma_{t}^{\top}\mw_{t}^{1-\frac{2}{q}}\ma_{t}\right)^{-1}\ma_{t}^{\top}\mw_{t}^{\frac{1}{4}-\frac{1}{q}}\cordVec j\\
 & =[z_{t,\frac{1}{4}-\frac{1}{q}}]_{i}\left[\mq_{t}\right]_{ij}+\left[\mq_{t}\right]_{ij}[z_{t,\frac{1}{4}-\frac{1}{q}}]_{j}-2\left[\mq_{t}\mw_{t}^{1/4}\mz_{t,\frac{1}{2}-\frac{1}{q}}\mw_{t}^{1/4}\mq_{t}\right]_{ij}\,.
\end{align*}
Consequently, 
\[
\frac{d}{dt}\mq_{t}=\mz_{t,\frac{1}{4}-\frac{1}{q}}\mq_{t}+\mq_{t}\mz_{t,\frac{1}{4}-\frac{1}{q}}-2\mq_{t}\mw_{t}^{1/4}\mz_{t,\frac{1}{2}-\frac{1}{q}}\mw_{t}^{1/4}\mq_{t}
\]
and by chain rule we have that 
\begin{align*}
\frac{d}{dt}\mq_{t}^{(2)} & =2\mq_{t}\circ\left[\frac{d}{dt}\mq_{t}\right]=2\mz_{t,\frac{1}{4}-\frac{1}{q}}\mq_{t}^{(2)}+2\mq_{t}^{(2)}\mz_{t,\frac{1}{4}-\frac{1}{q}}-4\left[\mq_{t}\circ\mq_{t}\mw_{t}^{1/4}\mz_{t,\frac{1}{2}-\frac{1}{q}}\mw_{t}^{1/4}\mq_{t}\right]
\end{align*}
Therefore, for all $y\in\R^{n}$ we have
\[
y^{\top}\left[\frac{d}{dt}\mq_{t}^{(2)}\right]y=4y^{\top}\mSigma_{t}^{-1/2}\mz_{t,\frac{1}{4}-\frac{1}{q}}\mProj_{t}^{(2)}\mSigma_{t}^{-1/2}y-4y^{\top}\left[\mProj_{t}\circ\mSigma_{t}^{-1/2}\mProj_{t}\mz_{t,\frac{1}{2}-\frac{1}{q}}\mProj_{t}\mSigma_{t}^{-1/2}\right]y
\]
Applying Lemma~\ref{lem:tool:projection_matrices} yields
\[
\left|y^{\top}\left[\frac{d}{dt}\mq_{t}^{(2)}\right]y\right|\leq4\norm{\mSigma_{t}^{-1/2}y}_{\mSigma_{t}}^{2}\norm{z_{t,\frac{1}{4}-\frac{1}{q}}}_{\mSigma_{t}}+4\norm{\mSigma_{t}^{-1/2}y}_{\mSigma_{t}}^{2}\norm{z_{t,\frac{1}{2}-\frac{1}{q}}}_{\mSigma_{t}}\,.
\]
Since, $\norm{\mSigma_{t}^{-1/2}y}_{\mSigma_{t}}=\norm y_{2}$ applying
Lemma~\ref{lem:z} then yields that 
\[
\normFull{\frac{d}{dt}\mq_{t}^{(2)}}_{2}\leq4\left[\norm{z_{t,\frac{1}{4}-\frac{1}{q}}}_{\mSigma_{t}}+\norm{z_{t,\frac{1}{2}-\frac{1}{q}}}_{\mSigma_{t}}\right]\leq4\left[\left(\left|\frac{1}{4}-\frac{1}{q}\right|q+1\right)+\left(\left|\frac{1}{2}-\frac{1}{q}\right|q+1\right)\right]\norm h_{\hess\psi(x_{t})}\,.
\]
Since $\left|q-4\right|+\left|2q-4\right|+8\leq\max\{3q,16\}$ and
$\frac{d}{dt}\mq_{t}^{(2)}=\frac{d}{dt}\mNormProjLap_{t}$ the result
follows. 
\end{proof}
Using this we can now bound the stability of $\mn_{t}$.
\begin{lem}
\label{lem:change_of_n} For all $x\in\interior$ and $h\in\Rn$ we
have
\[
\normFull{\left(\mi+\mn_{t}\right)^{-\frac{1}{2}}\frac{d}{dt}\mn_{t}\left(\mi+\mn_{t}\right)^{-\frac{1}{2}}}_{2}\leq2\max\left\{ 1,\frac{q}{2}\right\} \normFull{\frac{d}{dt}\mNormProjLap_{t}}_{2}\leq4\max\{q,2\}^{2}\norm h_{\hess\psi(x_{t})}\,.
\]
\end{lem}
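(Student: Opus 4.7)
The plan is to exploit that $\mn_{t}=2\mNormProjLap_{t}(\mi-c_{q}\mNormProjLap_{t})^{-1}$ is a scalar function of the single symmetric matrix $\mNormProjLap_{t}$, so that most factors commute and only $\frac{d}{dt}\mNormProjLap_{t}$ carries any ``bad'' spectral behavior. Write $\mm\defeq\mNormProjLap_{t}$ and $\mc\defeq(\mi-c_{q}\mm)^{-1}$, with $c_{q}=1-\tfrac{2}{q}$, so that $\mn_{t}=2\mm\mc=2\mc\mm$. Because $\mc$ is a power series in $\mm$, the matrices $\mm$, $\mc$, and $(\mi+\mn_{t})^{-1/2}$ pairwise commute; this is the structural fact doing all the work.

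Next I would compute $\frac{d}{dt}\mn_{t}$. Differentiating $\mc(\mi-c_{q}\mm)=\mi$ yields $\frac{d}{dt}\mc=c_{q}\mc\bigl(\frac{d}{dt}\mm\bigr)\mc$, and hence
\[
\frac{d}{dt}\mn_{t}=2\bigl(\tfrac{d}{dt}\mm\bigr)\mc+2c_{q}\mm\mc\bigl(\tfrac{d}{dt}\mm\bigr)\mc=2(\mi+c_{q}\mm\mc)\bigl(\tfrac{d}{dt}\mm\bigr)\mc=2\mc\bigl(\tfrac{d}{dt}\mm\bigr)\mc,
\]
where the key identity $\mi+c_{q}\mm\mc=\mc$ comes straight from $\mc(\mi-c_{q}\mm)=\mi$. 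Sandwiching by $(\mi+\mn_{t})^{-1/2}$ and moving the commuting factor $\mc$ to the outside gives
\[
(\mi+\mn_{t})^{-1/2}\tfrac{d}{dt}\mn_{t}(\mi+\mn_{t})^{-1/2}=2\mb\bigl(\tfrac{d}{dt}\mm\bigr)\mb,\qquad \mb\defeq\mc(\mi+\mn_{t})^{-1/2}.
\]
Since $\mb$ is symmetric (product of commuting symmetric matrices), the spectral norm of the left-hand side is at most $2\|\mb\|_{2}^{2}\,\|\tfrac{d}{dt}\mm\|_{2}$, reducing the lemma to bounding $\|\mb\|_{2}^{2}$ by $\max\{1,q/2\}$.

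For that bound, observe $\mi+\mn_{t}=(\mi+(2-c_{q})\mm)(\mi-c_{q}\mm)^{-1}$, so on the eigenbasis of $\mm$ (whose eigenvalues lie in $[0,1]$ by Lemma~\ref{lem:tool:projection_matrices}, since $\mNormProjLap_{t}$ is a normalized Laplacian) the matrix $\mb^{2}=\mc^{2}(\mi+\mn_{t})^{-1}$ has eigenvalues $\frac{1}{(1-c_{q}\lambda)(1+(2-c_{q})\lambda)}$. Case $q\le 2$: $c_{q}\le 0$, so both factors are $\ge 1$ and the ratio is $\le 1$. Case $q>2$: the denominator $g(\lambda)$ is a downward-opening quadratic (leading coefficient $-c_{q}(2-c_{q})<0$), so $\min_{\lambda\in[0,1]}g(\lambda)=\min\{g(0),g(1)\}=\min\{1,4(q+1)/q^{2}\}$, giving $\|\mb\|_{2}^{2}\le\max\{1,q^{2}/(4(q+1))\}\le\max\{1,q/2\}$. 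This yields the first inequality of the lemma.

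The second inequality is immediate: apply Lemma~\ref{lem:change_of_nor_proj_lap} to bound $\|\tfrac{d}{dt}\mNormProjLap_{t}\|_{2}$ by $\max\{3q,16\}\,\|h\|_{\hess\psi(x_{t})}$ and absorb $2\max\{1,q/2\}\cdot\max\{3q,16\}$ into the claimed $4\max\{q,2\}^{2}$. The only step with any real content is the algebraic simplification $\frac{d}{dt}\mn_{t}=2\mc\bigl(\frac{d}{dt}\mm\bigr)\mc$ via the identity $\mi+c_{q}\mm\mc=\mc$; everything after that is scalar eigenvalue analysis, so this is where I would spend the care.
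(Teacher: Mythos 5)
Your proof is correct and takes essentially the same approach as the paper. The key algebraic simplification $\frac{d}{dt}\mn_{t}=2(\mi-c_{q}\mNormProjLap_{t})^{-1}\bigl(\frac{d}{dt}\mNormProjLap_{t}\bigr)(\mi-c_{q}\mNormProjLap_{t})^{-1}$ is obtained the same way (your identity $\mi+c_{q}\mm\mc=\mc$ is exactly the paper's $\mi+c_{q}\mNormProjLap_{t}(\mi-c_{q}\mNormProjLap_{t})^{-1}=(\mi-c_{q}\mNormProjLap_{t})^{-1}$), and the bound on the outer factors is the same in substance: the paper splits $\mb=\mc^{1/2}\cdot\mc^{1/2}(\mi+\mn_{t})^{-1/2}$ and shows $\norm{(\mi+\mn_{t})^{-1/2}\mc^{1/2}}_{2}\leq1$ from $\mi+\mn_{t}\succeq\mc$ together with $\norm{\mc}_{2}\leq\max\{1,q/2\}$, while you compute the eigenvalues of $\mb^{2}=(\mi-c_{q}\mm)^{-1}(\mi+(2-c_{q})\mm)^{-1}$ directly and minimize the denominator over $\lambda\in[0,1]$. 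Both yield $\norm{\mb}_{2}^{2}\leq\max\{1,q/2\}$ and hence the same first inequality. One remark: the closing step — claiming $2\max\{1,q/2\}\cdot\max\{3q,16\}\leq4\max\{q,2\}^{2}$ — does not literally hold for small $q$ (e.g.\ it reads $32\leq16$ at $q=1$), but this is inherited from the paper's own lemma statement, which has the same slack; it is harmless in the regime $q=\Theta(\log m)$ where the lemma is actually applied.
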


\begin{proof}
Recall that $\mn_{t}\defeq2\mNormProjLap_{t}(\mi-c_{q}\mNormProjLap_{t})^{-1}$
for $c_{q}=1-\frac{2}{q}$ . Direct calculation yields
\begin{align*}
\frac{d}{dt}\mn_{t} & =2\left[\frac{d}{dt}\mNormProjLap_{t}\right]\left(\mi-c_{q}\mNormProjLap_{t}\right)^{-1}-2\mNormProjLap_{t}\left(\mi-c_{q}\mNormProjLap_{t}\right)^{-1}\left[(-\alpha_{q})\frac{d}{dt}\mNormProjLap_{t}\right]\left(\mi-c_{q}\mNormProjLap_{t}\right)^{-1}\\
 & =2\left[\mi+c_{q}\mNormProjLap_{t}\left(\mi-c_{q}\mNormProjLap_{t}\right)^{-1}\right]\cdot\left[\frac{d}{dt}\mNormProjLap_{t}\right]\cdot\left[\mi-c_{q}\mNormProjLap_{t}\right]^{-1}\\
 & =2\cdot\left[\mi-c_{q}\mNormProjLap_{t}\right]^{-1}\cdot\left[\frac{d}{dt}\mNormProjLap_{t}\right]\cdot\left[\mi-c_{q}\mNormProjLap_{t}\right]^{-1}\,.
\end{align*}
Now, since $c_{q}\in(-\infty,1)$ and $\mZero\preceq\mNormProjLap\preceq\mi$
we have 
\[
\mi+\mn_{t}=(\mi-c_{q}\mNormProjLap_{t})^{-1}(\mi+(2-c_{q})\mNormProjLap)\succeq(\mi-c_{q}\mNormProjLap_{t})^{-1}
\]
and therefore $\norm{(\mi+\mn_{t})^{-\frac{1}{2}}(\mi-c_{q}\mNormProjLap_{t})^{-\frac{1}{2}}}_{2}\leq1$.
Further as $\norm{(\mi-c_{q}\mNormProjLap_{t})^{-1}}_{2}\leq\max\{1,\frac{q}{2}\}$
(see, e.g. (\ref{eq:norm_inv_bound})), the result follows from Lemma~\ref{lem:change_of_nor_proj_lap}.
\end{proof}
Now we can combine everything to prove the desired result
\begin{lem}
\label{lem:stable_hess}For all $x\in\interior$ and $h,y\in\Rn$
we have
\[
\left|y^{\top}\left[\frac{d}{dt}\nabla^{2}\psi(x_{t})\right]y\right|\preceq\left((q+2)^{3/2}\sqrt{2}m^{\frac{1}{q+2}}+6\max\{q,2\}^{2.5}\right)\norm h_{\nabla^{2}\psi(x_{t})}\left[y^{\top}\nabla^{2}\psi(x_{t})y\right].
\]
\end{lem}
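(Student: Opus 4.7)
The strategy is to expand $\nabla^{2}\psi(x_{t})$ into a form where every factor has already been bounded by the lemmas leading up to this one, differentiate by the product rule, and apply those bounds term by term.

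First I would rewrite the Hessian formula from Lemma~\ref{lem:p_derivatives} as $\nabla^{2}\psi(x_{t}) = \mb_{t}^{\top}(\mi+\mn_{t})\mb_{t}$ where $\mb_{t} \defeq \mSigma_{t}^{1/2}\ma_{t} = \mw_{t}^{1/2}\ms_{t}^{-1}\ma$, using $\mSigma_{t} = \mw_{t}$. The key computational observation is that, via the identity \eqref{eq:Z_fact} applied with $\alpha = 1/2$, we have $\frac{d}{dt}(\mw_{t}^{1/2}\ms_{t}^{-1}) = \mw_{t}^{1/2}\ms_{t}^{-1}\mz_{t,1/2}$; since the two diagonal matrices commute, this gives the clean formula $\frac{d}{dt}\mb_{t} = \mz_{t,1/2}\mb_{t}$.

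Next, I would apply the product rule and set $u \defeq \mb_{t}y$ to obtain
\[
y^{\top}\!\left[\tfrac{d}{dt}\nabla^{2}\psi(x_{t})\right]\!y \;=\; u^{\top}\!\bigl[\mz_{t,1/2}(\mi+\mn_{t}) + (\mi+\mn_{t})\mz_{t,1/2}\bigr]u \;+\; u^{\top}\!\bigl[\tfrac{d}{dt}\mn_{t}\bigr]u,
\]
where the first bracketed sum simplifies to $2u^{\top}\mz_{t,1/2}(\mi+\mn_{t})u$ because the two scalar summands are transposes of one another. Note also the identity $u^{\top}(\mi+\mn_{t})u = y^{\top}\nabla^{2}\psi(x_{t})y$, which will convert both bounds into the desired form.

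For the first term I would apply Cauchy--Schwarz in the $(\mi+\mn_{t})$-inner product: $|2u^{\top}\mz_{t,1/2}(\mi+\mn_{t})u| \le 2\norm{\mz_{t,1/2}u}_{\mi+\mn_{t}}\norm{u}_{\mi+\mn_{t}} \le 2\norm{z_{t,1/2}}_{\infty}\cdot u^{\top}(\mi+\mn_{t})u$, and then invoke the $\ell_{\infty}$ bound from Lemma~\ref{lem:z} (at $\alpha = 1/2$) which gives $\norm{z_{t,1/2}}_{\infty} \le ((q/2+1)\sqrt{2}\,m^{1/(q+2)} + (q/2)\max\{q/2,1\})\norm{h}_{\nabla^{2}\psi(x_{t})}$. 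For the second term I would apply Lemma~\ref{lem:change_of_n}: writing $u^{\top}[\tfrac{d}{dt}\mn_{t}]u = v^{\top}[(\mi+\mn_{t})^{-1/2}\tfrac{d}{dt}\mn_{t}(\mi+\mn_{t})^{-1/2}]v$ with $v = (\mi+\mn_{t})^{1/2}u$ yields $|u^{\top}[\tfrac{d}{dt}\mn_{t}]u| \le 4\max\{q,2\}^{2}\norm{h}_{\nabla^{2}\psi(x_{t})}\cdot u^{\top}(\mi+\mn_{t})u$.

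Summing the two bounds produces a coefficient of the form $(2(q/2+1))\sqrt{2}\,m^{1/(q+2)} + O(\max\{q,2\}^{2})$ times $\norm{h}_{\nabla^{2}\psi(x_{t})}[y^{\top}\nabla^{2}\psi(x_{t})y]$. The remaining step is purely arithmetic: verify that $2(q/2+1) = q+2 \le (q+2)^{3/2}$ (true since $\sqrt{q+2}\ge 1$), and that the combined $q$-dependent terms $q\max\{q/2,1\} + 4\max\{q,2\}^{2}$ are dominated by $6\max\{q,2\}^{2.5}$ (check the two cases $q\le 2$ and $q\ge 2$ separately). I do not anticipate a genuine obstacle; the only mild care needed is in symmetrizing the cross term and in keeping the $\norm{h}_{\nabla^{2}\psi(x_{t})}$ norm consistent across the two bounds, since both Lemma~\ref{lem:z} and Lemma~\ref{lem:change_of_n} are already phrased with respect to that norm.
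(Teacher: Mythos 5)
Your decomposition (product rule, symmetrize the cross term, bound the $\mz_{t,1/2}$ contribution via Lemma~\ref{lem:z} and the $\frac{d}{dt}\mn_{t}$ contribution via Lemma~\ref{lem:change_of_n}) mirrors the paper's. However, there is a genuine gap in your bound on the cross term, specifically in the second inequality
\[
\norm{\mz_{t,1/2}u}_{\mi+\mn_{t}}\leq\norm{z_{t,1/2}}_{\infty}\norm{u}_{\mi+\mn_{t}}.
\]
This amounts to asserting $\mz\mm\mz\preceq\norm{z}_{\infty}^{2}\mm$ for the diagonal matrix $\mz=\mz_{t,1/2}$ and the positive definite matrix $\mm=\mi+\mn_{t}$, which is false for general $\mm$ even when it is symmetric PSD: take $\mm=\begin{pmatrix}1&-1\\-1&1\end{pmatrix}$ and $\mz=\mDiag(1,0)$, so $\norm{z}_{\infty}=1$ but $\mz\mm\mz=\begin{pmatrix}1&0\\0&0\end{pmatrix}$ is not $\preceq\mm$. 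Nothing in the structure of $\mn_{t}$ rescues this; the inequality you want would require $\mz$ to commute with $\mi+\mn_{t}$, which it does not.

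The paper avoids this by not using the weighted inner product at all: it bounds the cross term by $\norm{u}_{2}\cdot\norm{\mz_{t,1/2}}_{2}\cdot\norm{(\mi+\mn_{t})u}_{2}$ (plain $\ell_{2}$ Cauchy--Schwarz and submultiplicativity), and then uses $\norm{(\mi+\mn_{t})u}_{2}\leq\sqrt{\norm{\mi+\mn_{t}}_{2}}\cdot\norm{u}_{\mi+\mn_{t}}$ together with $\norm{u}_{2}=\norm{y}_{\ma_{t}^{\top}\mw_{t}\ma_{t}}\leq\norm{y}_{\hess\psi(x_{t})}$. This introduces an extra factor of $\sqrt{\norm{\mi+\mn_{t}}_{2}}\leq\sqrt{1+q}$, and it is precisely this factor that turns the coefficient $q+2$ you obtain into $\sqrt{1+q}\,(q+2)\leq(q+2)^{3/2}$. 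The fact that your arithmetic check had slack, i.e.\ ``$q+2\leq(q+2)^{3/2}$,'' was a hint that a $\Theta(\sqrt{q})$ factor had been lost: the stated constant $(q+2)^{3/2}$ really is needed for the argument to go through, and the $q+2$ you derived is stronger than what the technique delivers. To repair the proof, replace the weighted Cauchy--Schwarz step by the $\ell_{2}$ bound above and carry the $\sqrt{1+q}$ factor through; the rest of your argument then matches the paper.
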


\begin{proof}
Since $\hess\psi(x)=\ma_{t}^{\top}\mSigma_{t}^{1/2}(\mi+\mn_{t})\mSigma_{t}^{1/2}\ma_{x}$
by Lemma~\ref{lem:p_derivatives} we have
\[
y^{\top}\left[\frac{d}{dt}\nabla^{2}\psi(x_{t})\right]y=2y^{\top}\ma_{t}^{\top}\mw_{t}^{1/2}\mz_{t,1/2}\left[\mi+\mn_{t}\right]\mw_{t}^{1/2}\ma_{t}y+y^{\top}\ma_{t}^{\top}\mw_{t}^{1/2}\left[\frac{d}{dt}\mn_{t}\right]\mw_{t}^{1/2}\ma_{t}y\,.
\]
Cauchy Schwarz, and the facts that $\mi+\mn_{t}$ is PSD and $\hess\psi(x)=\ma_{t}^{\top}\mSigma_{t}^{1/2}(\mi+\mn_{t})\mSigma_{t}^{1/2}\ma_{x}$
yield
\begin{align*}
\left|y^{\top}\ma_{t}^{\top}\mw_{t}^{1/2}\mz_{t,1/2}\left[\mi+\mn_{t}\right]\mw_{t}^{1/2}\ma_{t}y\right| & \leq\norm{\mw_{t}^{1/2}\ma_{t}y}_{2}\norm{\mz_{t,1/2}\left[\mi+\mn_{t}\right]\mw_{t}^{1/2}\ma_{t}y}_{2}\\
 & \leq\norm{\mz_{t,1/2}}_{2}\sqrt{\norm{\mi+\mn_{t}}_{2}}\norm y_{\hess\psi(x_{t})}^{2}\,.
\end{align*}
and
\[
\left|y^{\top}\ma_{t}^{\top}\mw_{t}^{1/2}\left[\frac{d}{dt}\mn_{t}\right]\mw_{t}^{1/2}\ma_{t}y\right|\leq\norm y_{\hess\psi(x_{t})}^{2}\normFull{\left(\mi+\mn_{t}\right)^{-\frac{1}{2}}\frac{d}{dt}\mn_{t}\left(\mi+\mn_{t}\right)^{-\frac{1}{2}}}_{2}\,.
\]
Now, by Lemma~\ref{lem:z} 
\[
\norm{\mz_{t,1/2}}_{2}=\norm{z_{t,1/2}}_{\infty}\leq\left(\left(\frac{q}{2}+1\right)\sqrt{2}m^{\frac{1}{q+2}}+\frac{q}{2}\max\left\{ \frac{q}{2},1\right\} \right)\norm h_{\nabla^{2}\psi(x_{t})}\,.
\]
Further, since $\norm{\mi+\mn_{t}}_{2}\leq1+q$ combining and applying
Lemma~\ref{lem:change_of_n} yields that $\left|y^{\top}\left[\frac{d}{dt}\nabla^{2}\psi(x_{t})\right]y\right|$
is bounded by
\[
\left(2\sqrt{1+q}\left(\left(\frac{q}{2}+1\right)\sqrt{2}\cdot m^{\frac{1}{q+2}}+\frac{q}{2}\max\left\{ \frac{q}{2},1\right\} \right)+4\max\{q,2\}^{2}\right)\norm h_{\nabla^{2}\psi(x_{t})}\left[y^{\top}\nabla^{2}\psi(x_{t})y\right]
\]
and the result follows by basic calculations.
\end{proof}

\section{Efficient Algorithms}

\label{sec:master_thm}\label{sec:algorithms}

In this section we show how to leverage the results of the previous
sections to obtain efficient algorithms and derive the main results
of this paper. In Section~\ref{sec:lp_alg} we prove Theorem~\ref{thm:LPSolve_detailed}
and Theorem~\ref{thm:LPSolve_detailed_dual}, our main results on
linear programming, in Section~\ref{sec:app:mincost} we prove Theorem~\ref{thm:maxflow},
our main result on minimum cost maximum flow, and in Section~\ref{sec:barrier_algorithm}
we prove Theorem~\ref{thm:lewisbarriercomp} and a more general Theorem~\ref{thm:lewisbarriercomputefull},
our main results on a polynomial time computable nearly-universal
self-concordant barrier. The algorithms in this section make critical
use of algorithms for computing Lewis weights provided an analyzed
in Appendix~\ref{sec:weights_full:computing} and are stated as needed.

\subsection{Linear Programming Algorithm}

\label{sec:lp_alg}

Here we show how to combine the results of the preceding sections
to obtain our efficient linear programming algorithm and prove Theorem~\ref{thm:LPSolve_detailed}
and Theorem~\ref{thm:LPSolve_detailed_dual}. Our algorithm uses
the following result regarding approximately computing Lewis weights
proved in Appendix~\ref{sec:weights_full:computing}.

\begin{restatable}[Approximate Weight Computation]{thm}{lewisweightapproxfull}\label{thm:lewisweightexactapproxfull}
Let $\ma\in\R^{m\times n}$ be non-degenerate and let $\mathcal{T}_{w}$
and $\mathcal{T}_{d}$ denote the work and depth needed to compute
$(\ma^{\top}\md\ma)^{-1}z$ for arbitrary positive diagonal matrix
$\md$ and vector $z$. For all $\epsilon\in(0,1)$, $p\in(0,4)$,
$w^{(0)}\in\dWeights$ with $\normInf{w_{(0)}^{-1}(\lpweight(\ma)-w^{(0)})}\leq2^{-20}p^{2}(4-p)$,
the algorithm\textbf{ $\code{computeApxWeight}(\vx,w^{(0)},\varepsilon)$}
can be implemented to return $w$ that with high probability in $n$
$\norm{\lpweight(\ma)^{-1}(\lpweight(\ma)-w)}_{\infty}\leq\epsilon$
in $O(p^{-1}(4-p)^{-2}\epsilon^{-2}\log^{2}(n/(p\epsilon))$ steps
each of which can be implemented in $O(\nnz(\ma)+\mathcal{T}_{w})$
work and $O(\mathcal{T}_{d})$ depth. 

Without $w^{(0)}$ the algorithm $\code{computeInitialWeight}(\ma,p,\epsilon)$
(Algorithm~\ref{alg:initialweight}) can be implemented to have the
same guarantee with $O(\sqrt{n}(4-p)^{-3}p^{-3})\log\frac{m}{n}\log^{2}(n/(p\epsilon))$
steps of the same cost.

\end{restatable}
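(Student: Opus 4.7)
The plan is to establish both parts by leveraging the fixed-point characterization of Lewis weights together with fast approximate leverage score computation. By Definition~\ref{def:lewis_weight}, the $\ell_p$ Lewis weights are the unique fixed point of the map $T(w) \defeq \sigma(\mw^{1/2-1/p}\ma)$. Following a Cohen--Peng-style analysis, I would show that in the multiplicative (coordinate-wise log) norm this map is a contraction: for any $w$ with $\|\log(w/\lpweight(\ma))\|_\infty \leq r$ and $r$ sufficiently small relative to $p$ and $4-p$, one has $\|\log(T(w)/\lpweight(\ma))\|_\infty \leq c_{p} \cdot r$ with $c_p \leq |1-2/p| + O(r)$, and a quantitatively refined estimate using Lemma~\ref{lem:LS_weights_stable} (applied to rescalings of $\mw^{1/2-1/p}\ma$) yields an effective contraction factor $1 - \Omega(\min(p,(4-p)^2))$ in the regime of interest. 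This gives the inner iteration count $O(p^{-1}(4-p)^{-2}\log(1/\epsilon))$ appearing in the statement.

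For an efficient implementation of one iteration I would replace exact leverage scores by approximations obtained through Johnson--Lindenstrauss sketching: draw $k = O(\epsilon^{-2}\log n)$ random $\pm 1$ (or Gaussian) vectors $g_1,\ldots,g_k$ and estimate $\sigma(\mvar{B})_i \approx \frac{1}{k}\sum_j \bigl(\mvar{B}(\mvar{B}^{\top}\mvar{B})^{-1}\mvar{B}^{\top}g_j\bigr)_i^{2}$ for $\mvar{B} = \mw^{1/2-1/p}\ma$, so that each outer step reduces to $\otilde(\epsilon^{-2})$ linear system solves in $\ma^{\top}\md\ma$, costing $O(\nnz(\ma)+\mathcal{T}_w)$ work and $O(\mathcal{T}_d)$ depth per sketch column. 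Treating the sketching noise as a bounded multiplicative perturbation, I would propagate it through the contraction analysis to show that, with high probability, the noisy iteration still contracts geometrically; combining the $\epsilon^{-2}$ sketch dimension with the outer iteration count gives the total step bound $O(p^{-1}(4-p)^{-2}\epsilon^{-2}\log^{2}(n/(p\epsilon)))$.

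For the cold-start case I would construct $w^{(0)}$ by a homotopy in the parameter $p$. Starting at $p_0$ small enough that the uniform vector $(n/m)\vones$ already meets the warmth hypothesis of the first part (using the bounds of Section~\ref{sec:lewis_weights}, in particular Lemma~\ref{lem:lewis_p_change} and Appendix~\ref{sec:extreme_lewis_weights} for the $p\to 0$ limit), I would increase $p$ along a schedule $p_0 < p_1 < \cdots < p_K = p_{\rm target}$, invoking the first part with the previously computed weights as warm start at each stage. Using Lemma~\ref{lem:LS_weights_stable} and Lemma~\ref{lem:AWAASigmaA} to bound how $\lpweight(\ma)$ shifts as $p$ changes, one checks that a schedule with $\Delta p = \Theta(\min(p,4-p)^{O(1)}/\sqrt{n})$ preserves the warmth invariant across every increment, producing the $\sqrt{n}$ factor; the $\log(m/n)$ factor arises from the initial discrepancy between the uniform distribution and $\ell_{p_0}$ Lewis weights, and the $(4-p)^{-3}p^{-3}$ refinement per stage absorbs the degradation of the inner contraction near the endpoints.

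The hard part will be the quantitative control of both the contraction factor and the noise propagation as $p \to 4$ (where the operator $(\mi - (1-2/p)\mNormProjLap)^{-1}$ governing the Jacobian of $T$ becomes ill-conditioned by Lemma~\ref{lem:LS_weights_stable}) and as $p \to 0$ (where the uniform initialization becomes a worse approximation). Tuning the sketch dimension and the homotopy step-size so that the random error, the homotopy discretization error, and the contraction-basin radius all remain mutually compatible at every stage is the delicate step; the remainder reduces to combining the determinantal and projection inequalities of Section~\ref{sec:lewis_weights} with a standard Johnson--Lindenstrauss/Chernoff concentration bound and union-bounding over iterations.
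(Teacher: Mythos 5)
Your proposal departs from the paper's route in two ways, and both departures open real gaps.

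First, the inner iteration. You propose iterating the map $T(w)=\sigma(\mw^{1/2-1/p}\ma)$ and claim a local contraction factor $c_p\leq|1-2/p|+O(r)$. Linearizing $T$ at the fixed point $w=\lpweight(\ma)$ (using Lemma~\ref{lem:deriv:proj} and the fixed-point identity $\mw=\mSigma(\mw^{1/2-1/p}\ma)$) gives a Jacobian whose operator norm in the natural $\|\cdot\|_{\mw^{-1}}$ norm equals $|1-2/p|\cdot\|\mNormProjLap\|_2\leq|1-2/p|$, and this bound is tight up to the spectrum of $\mNormProjLap$. Thus for every $p<1$ the factor exceeds $1$ and $T$ is simply not a local contraction, so the ``quantitatively refined estimate'' you invoke cannot salvage the argument --- there is nothing to refine. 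The Cohen--Peng fixed point you gesture at is the different map $w_i\mapsto\left(a_i^\top(\ma^\top\mw^{1-2/p}\ma)^{-1}a_i\right)^{p/2}$; the outer $p/2$ exponent is essential, changing the linearized factor from $|1-2/p|$ to $|1-p/2|$, which is what delivers contraction on all of $p\in(0,4)$. The paper avoids this delicacy entirely by running projected gradient descent on the convex potential $\volPot_p^{\ma}(w)+\onesVec^\top w$ of Lemma~\ref{lem:unique_lewis}, with linear convergence furnished by Theorem~\ref{thm:constrained_minimization} together with the Hessian sandwich of Lemma~\ref{lem:Hessian_of_weight}; this is what produces the $O\big((p/2+2/p)\log(\cdot)\big)$ iteration count (Lemma~\ref{lem:weight_iterative}, Theorem~\ref{thm:weights_full:approximate_weight}) uniformly over $p$. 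If you want a fixed-point route you must use the Cohen--Peng exponentiated map, and even then you need a separate argument for how JL noise in the $\tau$-estimate interacts with the $p/2$ power.

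Second, the cold start. You propose a homotopy beginning at $p_0\to 0$ with the uniform vector $(n/m)\vones$. But Lemma~\ref{lem:l0} establishes $\lim_{p\to 0^+}\lpweight(\ma)=\tfrac{n}{m}\onesVec$ only when $\ma$ is in \emph{general position} (every $n$ rows linearly independent); the theorem you are proving assumes only non-degeneracy, and the limit genuinely fails without the stronger hypothesis (e.g.\ a row orthogonal to the rest forces its Lewis weight to $1$ for all $p$). So uniform initialization does not place you inside the contraction basin in general. The paper's Algorithm~\ref{alg:initialweight} instead anchors the homotopy at $p=2$, where $\lpweight(\ma)=\sigma(\ma)$ is a closed-form leverage score requiring no iteration, and then marches $p$ toward the target in steps whose safe size is controlled by Lemma~\ref{lem:LS_weights_change_bound}; the $\sqrt{n}\log(m/n)$ factor comes from $\|\log\lpweight\|_{\mw_p}\leq\sqrt{n}\log(me/n)$ in that lemma, not from an initial uniform-to-$\lpweight$ gap. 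Your JL-sketching step for approximate leverage scores and your general homotopy intuition are aligned with the paper (Lemma~\ref{thm:weights_full:leverage-score_sampling} and the discussion after Lemma~\ref{lem:LS_weights_change_bound}), but both the contraction mechanism and the homotopy endpoint need to be replaced for the argument to go through.
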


Leveraging this result in Algorithm~\ref{alg:pathFollow} we give
the procedure, $\code{pathFollowing}$, for approximately following
the weighted central path induced by regularized Lewis weights and
in Theorem~\ref{thm:LPSolve} we analyze it. Interestingly, the regularization
(i.e. choosing $c_{0}>0$ in Section~\ref{sec:weight-function})
is not needed for this path following procedure to work. Instead,
it is used to reason about the conditioning of the systems encountered
by this method and for leveraging the procedure to efficiently solve
linear programs.

\begin{algorithm2e}[H]\label{alg:pathFollow}

\caption{$\ensuremath{(\vx^{\mathrm{(final)}},w^{\mathrm{(final)}})=\code{pathFollowing}(\vx,\vWeight,t_{\mathrm{start}},t_{\mathrm{end}},\epsilon)}$}

\SetAlgoLined

$t=t_{\mathrm{start}},K=\frac{1}{16c_{k}},\alpha=\frac{R}{1600\sqrt{n}\log^{2}m}$
where $R$ is defined in Theorem~\ref{thm:smoothing:centering_inexact_weight}.

\Repeat{$t=t_{\mathrm{end}}$}{

$(\next{\vx},\next{\vWeight})=\code{centeringInexact}(\vx,\vWeight,K)$
where $\code{computeApxWeight}$ to approximate $\vg(\vx)$ (defined
in Section~\ref{sec:weight-function} Theorem~\ref{thm:max_flow:weight_properties}
for $p=1-\frac{1}{\log4m}$ and $c_{0}=\frac{2n}{m}$).

$t\leftarrow\mathtt{median}((1-\alpha)\cdot t,t_{\mathrm{end}},(1+\alpha)\cdot t)$.

$\vx\leftarrow\next{\vx}$, $\vWeight\leftarrow\next{\vWeight}$.

}

\For{$i=1,\cdots,4c_{k}\log(\frac{1}{\epsilon})$ }{

$(\vx,\vWeight)=\code{centeringInexact}(\vx,\vWeight,K)$ where $\code{computeApxWeight}$
is used to approximate $\vg(\vx)$.

}

\textbf{Output:} $(\vx,\vWeight)$.

\end{algorithm2e}
\begin{thm}
\label{thm:LPSolve} Define $\mu$ as defined in Theorem \ref{thm:smoothing:centering_inexact_weight}and
suppose that
\[
\delta_{t_{\text{start}}}(\vx,\vWeight)\leq\frac{1}{2^{16}\log^{3}m}\enspace\text{ and }\enspace\Phi_{\mu}(\log g(x)-\log w)\leq36c_{1}c_{s}c_{k}m.
\]
where. If $\ensuremath{(\vx^{\mathrm{(final)}},\next{\vWeight})=\code{pathFollowing}(\vx,\vWeight,t_{\mathrm{start}},t_{\mathrm{end}})}$,
then with high probability in $n$,
\[
\delta_{t_{\mathrm{end}}}(\vx^{\mathrm{(final)}},w^{\mathrm{(final)}})\leq\epsilon\enspace\text{ and }\enspace\Phi_{\mu}(\log g(\vx^{\mathrm{(final)}})-\log w^{\mathrm{(final)}})\leq36c_{1}c_{s}c_{k}m.
\]
Further, $\code{pathFollowing}(\vx,\vWeight,t_{\mathrm{start}},t_{\mathrm{end}})$
can be implemented
\[
O\left(\sqrt{n}\log^{13}m\cdot\kappa\cdot\mathcal{T}_{w}\right)\text{ work and }O\left(\sqrt{n}\log^{13}m\cdot\kappa\cdot\mathcal{T}_{w}\right)\text{ depth}
\]
where $\kappa=\left|\log\frac{t_{\mathrm{end}}}{t_{\mathrm{start}}}\right|+\log\frac{1}{\epsilon}$
and $\mathcal{T}_{w}$ and $\mathcal{T}_{d}$ are the work and depth
needed to compute $(\ma^{\top}\md\ma)^{-1}\vq$ for input positive
diagonal matrix $\md$ and vector $\vq$. Furthermore, with high probability
in $n$ during the whole algorithm, we have $\delta_{t}(\vx,\vWeight)\leq R\defeq\frac{1}{768c_{k}^{2}\log(36c_{1}c_{s}c_{k}m)}$
and $\Phi_{\mu}(\log g(x)-\log w)\leq36c_{1}c_{s}c_{k}m$.
\end{thm}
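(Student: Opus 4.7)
The plan is to argue by induction on the iterations of the \code{Repeat} loop that two invariants are preserved throughout: (a) $\delta_t(\vx,\vWeight)\le R$ and (b) $\Phi_\mu(\log g(\vx)-\log \vWeight)\le 36c_1c_sc_km$. The base case follows from the hypotheses, since for our parameter choice Theorem~\ref{thm:max_flow:weight_properties} at $p=1-1/\log(4m)$, $c_0=2n/m$ gives $c_1\le \tfrac32 n$, $c_s\le 4$, $c_k\le 2\log(4m)$, so $\cnorm = 24\sqrt{c_s}c_k = O(\log m)$ and the initial centrality bound $\tfrac{1}{2^{16}\log^3 m}\le R$ is easily verified from the definition of $R$ in Theorem~\ref{thm:smoothing:centering_inexact_weight}.

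For the inductive step I would chain two effects per pass of the loop. First, Theorem~\ref{thm:smoothing:centering_inexact_weight}, invoked with the approximate Lewis weights returned by $\code{computeApxWeight}$ from Theorem~\ref{thm:lewisweightexactapproxfull} at target $\ell_\infty$ accuracy $R$, guarantees that centering contracts centrality by $(1-1/(4c_k))$ and preserves the $\Phi_\mu$ invariant; a union bound over the $\tilde O(\sqrt n)$ iterations converts its high-probability guarantee into one for the whole run. Second, Lemma~\ref{lem:gen:t_step} gives that the $t$-update multiplies centrality by at most $(1+\alpha)$ plus an additive $\alpha(1+\cnorm\sqrt{\|\vWeight\|_1}) = O(\alpha\sqrt n \log m)$, using $\|\vWeight\|_1\le 2c_1=O(n)$. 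The choice $\alpha = R/(1600\sqrt n\log^2 m)$ makes this additive term at most $R/(8c_k)$, so combining the two effects yields
\[
\delta_{t^{\text{new}}}(\vx',\vWeight') \;\le\; (1+\alpha)\left(1-\tfrac{1}{4c_k}\right)R + \tfrac{R}{8c_k}\;\le\; R,
\]
which closes the induction. Invariant (b) is unaffected by the $t$-step since $g$ and $w$ are unchanged.

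For the complexity bound, since each pass multiplies $t$ by a factor in $[1-\alpha,1+\alpha]$ (clipped when near $t_{\mathrm{end}}$), the \code{Repeat} loop terminates in $O(\alpha^{-1}|\log(t_{\mathrm{end}}/t_{\mathrm{start}})|) = \widetilde O(\sqrt n\,|\log(t_{\mathrm{end}}/t_{\mathrm{start}})|)$ iterations, absorbing $R^{-1}=O(c_k^2\log(c_1c_sc_km))$ into polylog factors. The final \code{for} loop performs $O(c_k\log(1/\epsilon))$ pure centering steps, each contracting by $(1-1/(4c_k))$, so $\delta_{t_{\mathrm{end}}}$ falls from at most $R$ to at most $\epsilon$. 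Each iteration's dominant cost is one call to $\code{computeApxWeight}$ at accuracy $R$, which by Theorem~\ref{thm:lewisweightexactapproxfull} with $p=1-1/\log(4m)$ requires $\polylog(m)$ inner steps of $O(\nnz(\ma)+\mathcal{T}_w)$ work and $O(\mathcal{T}_d)$ depth, together with the projection onto $(1+\epsilon)U$ which is handled within the same cost in Section~\ref{sec:app:Project_ball_box}. Multiplying yields the stated $O(\sqrt n\log^{13} m\cdot \kappa\cdot \mathcal{T}_w)$ work and $O(\sqrt n\log^{13} m\cdot \kappa\cdot \mathcal{T}_d)$ depth bounds, with the exponent $13$ absorbing all polylog contributions. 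The main obstacle is the careful interplay between the chasing-game analysis underlying Theorem~\ref{thm:smoothing:centering_inexact_weight} and the $t$-step: one must ensure the movement of $\log g(\vx)$ caused by the Newton step plus the noise from approximating $g$ fits within the symmetric convex set $U$ of radius $\le R$ required by that theorem, which reduces to certifying the numerical relationship between $R$, $\alpha$, $c_k$, and the approximation accuracy supplied by $\code{computeApxWeight}$.
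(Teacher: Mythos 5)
Your proposal follows essentially the same route as the paper: induction on the two invariants, Theorem~\ref{thm:smoothing:centering_inexact_weight} for the centering contraction and $\Phi_\mu$ preservation, Lemma~\ref{lem:gen:t_step} for the $t$-update, the parameter instantiation from Theorem~\ref{thm:max_flow:weight_properties}, and the same accounting of iteration counts and per-iteration cost. Your inductive display is actually slightly cleaner than the paper's (you bound directly against $R$ rather than against $\delta_t(x,w)$, which could be small).

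The one step you flag as ``the main obstacle'' but do not carry out is the precondition check for $\code{computeApxWeight}$: the paper establishes (its equation (6.1)) that $\|\log g(x^{\mathrm{new}})-\log w\|_\infty\le 2R\le \tfrac{1}{80(p/2+2/p)}$, by combining the $\Phi_\mu$ invariant (which gives $\|\log g(x)-\log w\|_\infty\le R$) with the bound \eqref{eq:log_g_change} on the drift of $\log g$ under a Newton step. This is what licenses invoking Theorem~\ref{thm:lewisweightexactapproxfull} from a warm start $w$. Your description of the obstacle slightly misattributes the work: fitting the movement of $\log g$ into the $U$-ball is already handled internally by Theorem~\ref{thm:smoothing:centering_inexact_weight}; what actually needs certifying is that $w$ remains multiplicatively close enough to the new target $g(x^{\mathrm{new}})$ to seed $\code{computeApxWeight}$. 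This is a small but real gap: without it, the ``approximate $g$ to $\ell_\infty$ accuracy $R$'' step is not justified. Once that triangle-inequality bound is inserted, your argument matches the paper.
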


\begin{proof}
We first show that $\code{pathFollowing}$ maintains the invariant
that $\delta_{t}(\vx,\vWeight)\leq R$ and $\Phi_{\mu}(\log g(x)-\log w)\leq36c_{1}c_{s}c_{k}m$
in each iteration where
\[
R\defeq\frac{K}{48c_{k}\log(36c_{1}c_{s}c_{k}m)}=\frac{1}{768c_{k}^{2}\log(36c_{1}c_{s}c_{k}m)}=\frac{1}{3072\log^{2}m\log(288nm\log m)}\geq\frac{1}{2^{16}\log^{3}m}.
\]
Note that this holds for the input $(x,w)$ by assumption, so suppose
that this holds at the start of one of the loops. By the definition
of $\Phi_{\mu}$, $\mu$ and $K$, we have $\norm{\log\vg(x)-\log\vWeight}_{\infty}\leq R$
and by \eqref{eq:log_g_change}, we have $\norm{\log\vg(x)-\log g(\next x)}_{\infty}\leq R$.
Therefore, we have
\begin{equation}
\norm{\log\vg(\next x)-\log\vWeight}_{\infty}\leq2R\leq\frac{1}{80(\frac{p}{2}+\frac{2}{p})}\label{eq:gw_bound}
\end{equation}
where we used the formula of $R$ and $p=\frac{1}{\log(4m)}$ at the
end. Thus, the weight $w$ satisfies the conditions for Theorem~\ref{thm:lewisweightexactapproxfull}
and the algorithm $\code{centeringInexact}$ can use the function
$\code{computeApxWeight}$ to find the approximation of $\vg(\next{\vx})$.
Consequently, by Lemma~\ref{thm:smoothing:centering_inexact_weight}
with high probability in $n$
\[
\delta_{t}(\next{\vx},\next{\vWeight})\leq\left(1-\frac{1}{4c_{k}}\right)\delta_{t}(x,w)\enspace\text{ and }\enspace\Phi_{\mu}(\log g(\next{\vx})-\log\next{\vWeight})\leq36c_{1}c_{s}c_{k}m\,.
\]
Using Lemma~\ref{lem:gen:t_step}, \eqref{eq:gw_bound} and Theorem
\ref{thm:max_flow:weight_properties},we have
\begin{align*}
\delta_{\next t}(\next{\vx},\next{\vWeight}) & \leq(1+\alpha)\left(1-\frac{1}{4c_{k}}\right)\delta_{t}(x,w)+\alpha\left(1+\cnorm\sqrt{\norm{\vWeight}_{1}}\right)\\
 & \leq\delta_{t}(x,w)-\frac{\delta_{t}(x,w)}{8c_{k}}+100\alpha\sqrt{n}\log m\leq\delta_{t}(x,w).
\end{align*}
Hence, we proved that the invariant. Note that in the second loop,
$t$ does not change and therefore $\delta_{t}(x,w)$ decreases by
$(1-\frac{1}{4c_{k}})$ in each step with high probability in $n$
yielding that $\delta_{t_{\mathrm{end}}}(\vx^{\mathrm{(final)}},w^{\mathrm{(final)}})\leq\epsilon$
as desired.\footnote{Note that the with high probability claim of this theorem requires
that $|\log(t_{\mathrm{end}}/t_{\mathrm{start}})|+|\log(1/\epsilon)|=O(\poly(n))$.
However, if that is not the case, then every $O(\poly(n))$ steps
of the loops we can afford to exactly check the invariants and compute
the weights by Theorem~\ref{thm:lewisexactfull} which we introduce
later and repeat the steps if the invariants do not hold. This increases
the expected running time only by multiplicative constants and we
can run the algorithm $O(\log(n))$ times in parallel to ensure that
one of them outputs a point with the correct invariants without taking
more than twice the desired runtime with high probability.}

To bound the runtime, note that $R=\Omega(\log^{-3}m)$ and hence
$\alpha=\Omega(n^{-1/2}\log^{-5}m)$. Therefore, the total number
of step is $O(\sqrt{n}\log^{5}m\cdot[\left|\log(t_{\mathrm{end}}/t_{\mathrm{start}})\right|+\log\left(1/\epsilon\right)])$.
Finally, each step involves computing projection to the mixed ball
and computing Lewis weights. Theorem~\ref{thm:project_mixed_norm}
shows that the projection can be formed in $O(m\log m)$ time and
$O(\log m)$ depth. Theorem~\ref{thm:zerogame} shows that we need
to compute Lewis weight with $1\pm R=1\pm\Theta(\log^{-3}m)$ multiplicative
approximation. Theorem \ref{thm:lewisweightexactapproxfull} shows
that we can compute the Lewis weights using $O((1/R^{2})\log^{2}(m/R))=O(\log^{8}m)$
linear systems solves of the desired form.
\end{proof}
To leverage this result we first provide Lemma~\ref{lem:weighted_path:duality_gap},
which bounds how large a $t$ is needed guarantee an approximately
optimal solution. Further, in Lemma~\ref{lem: distance gurantee},
we show how much the approximate centrality hurts our guarantee. Using
these lemmas and the previous section, we conclude by describing our
linear programming algorithm, $\code{LPSolve}$, and prove Theorem~\ref{thm:LPSolve_detailed}
and Theorem~\ref{thm:LPSolve_detailed_dual}.
\begin{lem}
[{\cite[Theorem 4.2.7]{Nesterov2003}}]\label{lem:weighted_path:duality_gap}
Let $x^{*}\in\Rm$ denote an optimal solution to \eqref{eq:intro:two-sided}
and $\vx_{t}=\arg\min f_{t}\left(\vx,\vWeight\right)$ for some $t>0$
and $\vWeight\in\dWeight$. Then the following holds
\[
\vc^{\top}\vx_{t}(\vWeight)-\vc^{\top}\vx^{*}\leq\frac{\norm{\vWeight}_{1}}{t}.
\]
\end{lem}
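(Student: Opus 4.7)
The plan is to deduce the bound from the first-order optimality condition for the constrained minimizer $x_t$ together with the defining gradient inequality for $1$-self-concordant barriers (Lemma~\ref{lem:gen:phi_properties_force}).

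First I would write down the optimality conditions. Since $x_t$ minimizes the smooth convex function $f_t(\cdot,\vWeight)$ subject to the affine constraint $\ma^{\top}\vx=\vb$, there exists a Lagrange multiplier $\veta\in\Rn$ with
\[
\grad_x f_t(x_t,\vWeight)\;=\;t\cdot\vc+\vWeight\vphi'(x_t)\;=\;\ma\veta.
\]
Taking inner products with $x_t-x^*$ and using $\ma^{\top}x_t=\ma^{\top}x^*=\vb$ to kill the $\veta$ term gives
\[
t\,\vc^{\top}(x_t-x^*)\;=\;-\sum_{i\in[m]}w_i\,\phi_i'([x_t]_i)\bigl([x_t]_i-[x^*]_i\bigr)\;=\;\sum_{i\in[m]}w_i\,\phi_i'([x_t]_i)\bigl([x^*]_i-[x_t]_i\bigr).
\]

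Next I would apply Lemma~\ref{lem:gen:phi_properties_force}, which asserts that for each $1$-self-concordant barrier $\phi_i$ and any pair of points $x,y\in\dom(\phi_i)$, one has $\phi_i'(x)(y-x)\le 1$. Since both $[x_t]_i$ and $[x^*]_i$ lie in $\dom(\phi_i)=(l_i,u_i)$ (as $x_t\in\interior$ and $x^*$ is feasible, so $[x^*]_i\in[l_i,u_i]$; the boundary case is handled by a standard limiting argument, taking $x^*$ to be the limit of interior feasible points or by noting the inequality extends by continuity of $\phi_i'([x_t]_i)\cdot(\,\cdot\,-[x_t]_i)$ in its second argument), the lemma yields $\phi_i'([x_t]_i)([x^*]_i-[x_t]_i)\le 1$ for each $i$. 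Using $w_i>0$, we obtain
\[
t\,\vc^{\top}(x_t-x^*)\;\le\;\sum_{i\in[m]}w_i\;=\;\norm{\vWeight}_1,
\]
and dividing by $t$ gives the claim.

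The only minor subtlety is the boundary issue for $x^*$, since Lemma~\ref{lem:gen:phi_properties_force} is stated for $y\in\dom(\phi_i)$. The cleanest fix is to approximate $x^*$ by interior points $x^*_\epsilon\defeq(1-\epsilon)x^*+\epsilon x_t\in\interior$, apply the bound to each $x^*_\epsilon$, and pass to the limit $\epsilon\to 0^+$ using continuity of $\vc^{\top}x$. Beyond this, the proof is a direct chain of standard manipulations and there is no real obstacle.
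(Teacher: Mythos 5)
Your proof is correct and follows essentially the same route as the paper's: both use the first-order optimality condition (that $\grad_x f_t(x_t,w)$ lies in the image of $\ma$, which you phrase via a Lagrange multiplier), take the inner product against $x_t - x^*$ to annihilate the constraint term, and then apply Lemma~\ref{lem:gen:phi_properties_force} coordinatewise. The one place you go beyond the paper is the boundary issue: the paper applies the lemma to $x^*$ without comment, whereas you correctly note that $x^*$ may lie on the boundary of $\dom(\phi_i)$ and supply the limiting argument via $x^*_\epsilon = (1-\epsilon)x^* + \epsilon x_t$ — a small but genuine gap that your version closes.
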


\begin{proof}
By the optimality conditions of \eqref{eq:intro:two-sided} we know
that $\grad_{x}f_{t}(\vx_{t}(\vWeight))=t\cdot\vc+\vWeight\phi'(\vx_{t}(\vWeight))$
is orthogonal to the kernel of $\ma^{\top}$. Furthermore since $\vx_{t}(\vWeight)-\vx^{*}\in\ker(\ma^{\top})$
we have
\[
\left(t\cdot\vc+\vWeight\phi'(\vx_{t}(\vWeight))\right)^{\top}(\vx_{t}(\vWeight)-\vx^{*})=0.
\]
Using that $\phi_{i}'(x_{t}(\vWeight)_{i})\cdot(x_{i}^{*}-x_{t}(\vWeight)_{i})\leq1$
by Lemma~\ref{lem:gen:phi_properties_force} then yields
\begin{align*}
\vc^{\top}(\vx_{t}(\vWeight)-\vx^{*}) & =\frac{1}{t}\sum_{i\in[m]}w_{i}\cdot\phi_{i}'(x_{t}(\vWeight)_{i})\cdot(x_{i}^{*}-x_{t}(\vWeight)_{i})\leq\frac{\norm{\vWeight}_{1}}{t}.
\end{align*}
\end{proof}
\begin{lem}
\label{lem: distance gurantee} For $x$ such that $\delta_{t}(x,\vg(x))\leq\frac{1}{2^{16}\log^{3}m}$
and $\vx_{t}\defeq\arg\min f_{t}\left(\vx,\vWeight\right)$ we have
\[
\normFull{\sqrt{\phi''(\vx_{t})}\left(x-\vx_{t}\right)}_{\infty}\leq8\delta_{t}(x,\vg(x)).
\]
\end{lem}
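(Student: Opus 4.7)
The plan is to use the standard argument that small centrality implies closeness to the minimizer via quadratic convergence of Newton's method with a fixed weight. Let $\delta \defeq \delta_t(x,\vg(x))$ and set $w \defeq \vg(x)$ so that $\vx_t$ is the minimizer of $f_t(\cdot, w)$ over $\{\ma^\top y = b\}$ and therefore $\delta_t(\vx_t, w) = 0$. I would construct the sequence $\vx_0 \defeq \vx$ and $\vx_{k+1} \defeq \vx_k + \vh_t(\vx_k, w)$ (with the fixed weight $w$, not with $\vg(\vx_k)$) and prove $\vx_k \to \vx_t$ by bounding the total displacement in $\sqrt{\phi''(\vx_t)}$ $\ell_\infty$-norm.

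First I would establish by induction the coupled statements (i) $\delta_k \defeq \delta_t(\vx_k, w) \le \tfrac14 (4\delta)^{2^k}$ and (ii) $\tfrac45 \vg(\vx_k) \le w \le \tfrac54 \vg(\vx_k)$. Statement (i) is exactly the quadratic convergence provided by Lemma~\ref{lem:gen:x_progress}, whose hypothesis is (ii). To get (ii), I would apply Lemma~\ref{lem:gen:w_step} to the cumulative displacement $\vx_k - \vx_0$: the bound $\mixedNorm{\sqrt{\phi''(\vx_0)}(\vx_{j+1}-\vx_j)}{w} \le c_\gamma \delta_j$ (Lemma~\ref{lem:max_flow:projection_lemma} together with $c_\gamma \le 2$ from Lemma~\ref{lem:c_gamma}) telescopes to $\mixedNorm{\sqrt{\phi''(\vx_0)}(\vx_k - \vx_0)}{w} \le 2 \sum_j \delta_j \le 4\delta$, which by the hypothesis $\delta \le 1/(2^{16}\log^3 m)$ is well below the $1/(10)$ threshold of Lemma~\ref{lem:gen:w_step} and hence the $1/(16 c_k)$ threshold that keeps $\vg(\vx_k)$ multiplicatively close to $\vg(\vx_0) = w$.

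Second, I would bound the total displacement in the $\infty$-component of the mixed norm. From $\mixedNorm{\sqrt{\phi''(\vx_k)}\vh_t(\vx_k,w)}{w} \le c_\gamma \delta_k \le 2\delta_k$ I extract $\normInf{\sqrt{\phi''(\vx_k)}(\vx_{k+1}-\vx_k)} \le 2\delta_k$. Lemma~\ref{lem:gen:phi_properties_sim} applied along the iteration (using that the cumulative infinity displacement is at most $4\delta \ll 1$) shows $\sqrt{\phi''(\vx_k)}$ and $\sqrt{\phi''(\vx_t)}$ agree up to a factor $1+O(\delta)$. Summing the super-geometric tail $\sum_{k\ge 0} \delta_k \le \delta(1 + 4\delta + \cdots) \le 2\delta$ and taking $k\to\infty$ yields $\normInf{\sqrt{\phi''(\vx_t)}(\vx-\vx_t)} \le (1+O(\delta)) \cdot 2 \cdot 2\delta \le 8\delta$.

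The main obstacle is the circular dependency: Lemma~\ref{lem:gen:x_progress} at step $k$ requires $w$ to be multiplicatively close to $\vg(\vx_k)$, but certifying this requires the step-size control that itself relies on the quadratic convergence. The hypothesis $\delta \le 2^{-16}\log^{-3} m$ is chosen precisely so that the cumulative displacement stays inside the $1/(16 c_k)$ stability radius of Lemma~\ref{lem:gen:w_step} (recalling $c_k = O(\log m)$ from Theorem~\ref{thm:max_flow:weight_properties}), breaking the circularity. The rest is a routine accounting of constants to land exactly at the stated factor of $8$.
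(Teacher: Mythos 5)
Your proof is correct, and it takes a genuinely different route from the paper's. The paper runs $\code{centeringInexact}$ with exact weights starting from $(\vx^{(1)},\vWeight^{(1)}) = (x,\vg(x))$, updating \emph{both} $x$ and $w$ at every step; Theorem~\ref{thm:smoothing:centering_inexact_weight} then only gives \emph{linear} convergence $\delta_{k+1} \le (1-\tfrac{1}{4c_k})\delta_k$, so $\sum_k \delta_k = 4c_k\delta = \Theta(\delta\log m)$. You instead freeze $w=\vg(x)$ and iterate pure Newton steps on $x$ via Lemma~\ref{lem:gen:x_progress}, obtaining \emph{quadratic} convergence $\delta_{k+1}\le 4\delta_k^2$ and hence $\sum_k\delta_k\le 2\delta$; freezing is safe because, as you correctly identify, the cumulative displacement stays well inside the $\tfrac{1}{16c_k}$ stability radius that keeps $\vg(\vx_k)$ within the required $[\tfrac45,\tfrac54]$-band around $w$ (Lemma~\ref{lem:gen:w_step}), and the hypothesis $\delta\le 2^{-16}\log^{-3}m$ is exactly what closes the induction. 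Notably your quadratic route lands the constant $8$ cleanly, whereas the paper's final line asserts $4\sum_i\delta_t(\vx^{(i)},\vWeight^{(i)})\le 8\delta$, which does not follow from the linear rate $(1-\tfrac{1}{4c_k})$ (it would require $\sum_i\delta_i\le 2\delta$, impossible once $c_k>1$) — the paper's stated constant appears to silently absorb a $\Theta(\log m)$ factor that your argument genuinely avoids, so your route is arguably the tighter one. Two small bookkeeping caveats: Lemma~\ref{lem:gen:w_step} is stated for a line segment, so you should apply it segment-by-segment along the Newton path and telescope rather than invoking it once on the chord $\vx_k-\vx_0$; and the per-step bound from~\eqref{eq:centrality_equivalence} is naturally expressed in the $\sqrt{\phi''(\vx_j)}$-norm at the current iterate, not at $\vx_0$, so the $(1+O(\delta))$ conversion factor needs to be carried explicitly. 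Both are routine and do not affect the conclusion.
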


\begin{proof}
We prove this statement via our centering algorithm. We use Theorem~\ref{thm:smoothing:centering_inexact_weight}
with exact weight computation and start with $\vx^{(1)}=x$ and $\vWeight^{(1)}=\vg(\vx^{(1)})$.
In each iteration, $\delta_{t}$ is decreased by a factor of $1-(4c_{k})^{-1}$.
\eqref{eq:centrality_equivalence} shows that 
\begin{equation}
\norm{\sqrt{\phi''(\vx^{(k)})}\left(\vx^{(k+1)}-\vx^{(k)}\right)}_{\infty}\leq2\delta_{t}(\vx^{(k)},\vWeight^{(k)})\label{eq:distance_gurantee}
\end{equation}
where we used $c_{\gamma}\leq2$ (Lemma \ref{lem:c_gamma}). The Lemma
\ref{lem:gen:phi_properties_sim} shows that
\[
\normFull{\log\left(\vphi''(\vx^{(k)})\right)-\log\left(\vphi''(\vx^{(k+1)})\right)}_{\infty}\leq\left(1-4\delta_{t}(\vx^{(k)},\vWeight^{(k)})\right)^{-1}\leq e^{8\delta_{t}(\vx^{(k)},\vWeight^{(k)})}.
\]
Therefore, for any $k$, we have
\[
\normFull{\log\left(\vphi''(\vx^{(k)})\right)-\log\left(\vphi''(x_{t})\right)}_{\infty}\leq e^{8\sum_{i=1}^{k}\delta_{t}(\vx^{(i)},\vWeight^{(i)})}\leq e^{32c_{k}\delta_{t}(\vx^{(1)},\vg(\vx^{(1)}))}\leq2
\]
where we used that $\delta_{t}$ is decreased by a factor of $(1-\frac{1}{4c_{k}})$,
$c_{k}\leq2\log m$ and that $\delta_{t}\leq\frac{1}{2^{16}\log^{3}m}$.
Using this on \eqref{eq:distance_gurantee}, we have
\[
\normFull{\sqrt{\phi''(\vx_{t})}\left(\vx^{(1)}-x_{t}\right)}_{\infty}\leq4\sum_{i=1}^{k}\delta_{t}(\vx^{(i)},\vWeight^{(i)})\leq8\delta_{t}(\vx^{(1)},w^{(1)}).
\]
\end{proof}
\begin{algorithm2e}[H]

\caption{$\ensuremath{\vx^{\mathrm{(final)}}=\code{LPSolve}(\vx_{0},\epsilon)}$}

\label{alg:lp_solve}

\SetAlgoLined

\textbf{Input:} an initial point $\vx_{0}$ such that $\ma^{\top}\vx_{0}=b$.

$\vWeight=\code{computeInitialWeight}(\vx_{0},\frac{1}{2^{16}\log^{3}m})+\frac{n}{2m},$
$d=-\vWeight_{i}\phi_{i}'(\vx_{0})$.

$t_{1}=(2^{27}m^{3/2}U^{2}\log^{4}m)^{-1}$, $t_{2}=\frac{2m}{\epsilon}$,
$\epsilon_{1}=\frac{1}{2^{18}\log^{3}m}$, $\epsilon_{2}=\frac{\epsilon}{8U^{2}}$.

$(\next{\vx},\next{\vWeight})=\ensuremath{\code{pathFollowing}(\vx_{0},\vWeight,1,t_{1},\epsilon_{1})}$
with cost vector $\vd$.

$(\vx^{\mathrm{(final)}},\vWeight^{\mathrm{(final)}})=\ensuremath{\code{pathFollowing}(\next{\vx},\next{\vWeight},t_{1},t_{2},\epsilon_{2})}$
with cost vector $\vc$.

\textbf{Output:} $\vx^{\mathrm{(final)}}$.

\end{algorithm2e}
\begin{proof}[Proof of Theorem~\ref{thm:LPSolve_detailed}]
By Theorem \ref{thm:lewisweightexactapproxfull}, we know $\code{computeInitialWeight}$
gives an weight 
\[
\normInf{\mg(\vx)^{-1}(\vg(\vx_{0})-\vWeight)}\leq\frac{1}{2^{16}\log^{3}m}\leq R.
\]
By the definition of $R$, we have that $\Phi_{\mu}(\log g(\vx_{0})-\log w)\leq36c_{1}c_{s}c_{k}m$
and that $\vx_{0}$ is the minimum of
\[
\min_{x}\vd^{\top}\vx-\sum_{i}\vWeight_{i}\phi_{i}(\vx)\text{ given }\ma^{\top}\vx=\vb.
\]
Therefore, $(\vx,\vWeight)$ satisfies the assumption of theorem \ref{thm:LPSolve}
because $\delta_{t}=0$ and $\Phi_{\mu}$ is small enough. Hence,
we have
\[
\delta_{t_{1}}^{d}(\next{\vx},\next{\vWeight})\leq\frac{1}{2^{18}\log^{3}m}\enspace\text{ and }\enspace\Phi_{\mu}(\log g(\next{\vx})-\next{\vWeight})\leq36c_{1}c_{s}c_{k}m
\]
where we used the superscript $d$ to indicate $\delta$ is defined
using the cost vector $d$. Using this notation and \eqref{eq:centrality_equivalence},
we have
\begin{align}
 & \delta_{t_{1}}^{\vc}(\next{\vx},\next{\vWeight})\leq\mixedNormFull{\mProj_{\next{\vx},\next w}\left(\frac{t_{1}c+\next{\vWeight}\phi'(\next{\vx})}{\next w\sqrt{\phi''(\next{\vx})}}\right)}w\nonumber \\
\leq & \mixedNormFull{\mProj_{\next{\vx},\next w}\left(\frac{t_{1}d+\next{\vWeight}\phi'(\next{\vx})}{\next w\sqrt{\phi''(\next{\vx})}}\right)}w+t_{1}\mixedNormFull{\mProj_{\next{\vx},\next w}\left(\frac{c-d}{\next w\sqrt{\phi''(\next{\vx})}}\right)}w\nonumber \\
\leq & 2\cdot\delta_{t_{1}}^{d}(\next{\vx},\next{\vWeight})+100\sqrt{n}\log m\cdot t_{1}\cdot\normFull{\mProj_{\next{\vx},\next w}\left(\frac{c-d}{\next w\sqrt{\phi''(\next{\vx})}}\right)}_{\infty}\label{eq:delta_c}
\end{align}
where we used Lemma~\ref{lem:c_gamma} at the end. 

Next, we note that for any $x$ and $w$, let $\mq=\mWeight^{-1}\ma_{x}\left(\ma_{x}^{\top}\mWeight^{-1}\ma_{x}\right)^{-1}\ma_{x}^{\top}\mw^{-1}$,
then \eqref{eq:def_Pxw} and sensitivity $c_{s}\leq4$ shows that
\[
\|\mProj_{x,w}\mw^{-1}\|_{\infty\rightarrow\infty}\leq\left(\min_{i\in[m]}w_{i}\right)^{-1}+\|\mq\|_{\infty\rightarrow\infty}\leq2m+m\max_{i\in[m]}\mq_{ii}\leq6m.
\]
Substituting into \eqref{eq:delta_c} yields
\[
\delta_{t_{1}}^{\vc}(\next{\vx},\next{\vWeight})\leq\frac{1}{2^{17}\log^{3}m}+600m^{3/2}\log m\cdot t_{1}\cdot\normFull{\frac{c-d}{\sqrt{\phi''(\next{\vx})}}}_{\infty}\leq\frac{1}{2^{16}\log^{3}m}
\]
where we used that $\|c-d\|_{\infty}\leq\|c\|_{\infty}+\|\phi'(\vx_{0})\|_{\infty}\leq2U$
(Lemma~\ref{lem:gen:phi_properties_force}), $\min_{\vy}\sqrt{\phi''(\vy)}\geq\frac{1}{U}$
(Lemma \ref{lem:gen:phi_properties_sim}) and that we have chosen
$t_{1}$ small enough.

Hence, $(\next{\vx},\next{\vWeight})$ satisfy the assumption of Theorem~\ref{thm:LPSolve}
for the original cost function $c$. Now, we only need to bound how
large $t_{2}$ should be and how small $\epsilon_{2}$ should be in
order to get $\vx$ such that $\vc^{\top}\vx\leq\text{OPT}+\epsilon$.
By Lemma~\ref{lem:weighted_path:duality_gap} and $\norm{\vWeight^{\text{(final)}}}_{1}\leq2m$,
we have
\[
\vc^{\top}\vx_{t_{2}}\leq\text{OPT}+\frac{2m}{t_{2}}.
\]
Also, Lemma \ref{lem: distance gurantee} shows that we have
\[
\normFull{\sqrt{\phi''(\vx_{t_{2}})}\left(\vx^{\text{(final)}}-\vx_{t_{2}}\right)}_{\infty}\leq8\epsilon_{2}.
\]
Using $\min_{\vy}\sqrt{\phi''(\vy)}\geq\frac{1}{U}$, we have $\normFull{\vx^{\text{(final)}}-\vx_{t_{2}}}_{\infty}\leq8\epsilon_{2}U$
and hence our choice of $t_{2}$ and $\epsilon_{2}$ yields
\[
\vc^{\top}\vx^{\text{(final)}}\leq\text{OPT}+\frac{2m}{t_{2}}+8\epsilon_{2}U^{2}\leq\text{OPT}+\epsilon.
\]
\end{proof}
\begin{thm}
\label{thm:LPSolve_detailed_dual} Let $x_{0}\in\Omega\defeq\{\ma^{\top}x=b,x\geq0\}$
for non-degenerate $\ma\in\R^{m\times n}$. There is an algorithm
that finds $y\in\R^{n}$ with $\ma y\leq c$ and $b^{\top}y\geq\max_{\ma y\leq c}b^{\top}y-\epsilon$
with constant probability in 
\[
O\left(\sqrt{n}\log^{13}m\cdot\log(\frac{mU}{\epsilon})\cdot\mathcal{T}_{w}\right)\text{ work and }O\left(\sqrt{n}\log^{13}m\cdot\log(\frac{mU}{\epsilon})\cdot\mathcal{T}_{d}\right)\text{ depth}
\]
where $U\defeq\max\{\text{diam}(\Omega),\norm{\vc}_{\infty},\norm{1/x_{0}}_{\infty}\}$,
$\text{diam}(\Omega)$ is the diameter of $\Omega$, and $\mathcal{T}_{w}$
and $\mathcal{T}_{d}$ is the work and depth needed to compute $(\ma^{\top}\md\ma)^{-1}\vq$
for input positive diagonal matrix $\md$ and vector $\vq$.
\end{thm}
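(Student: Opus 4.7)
The plan is to leverage strong LP duality: $\max_{\ma y \le c} b^\top y = \min_{\ma^\top x = b,\, x \ge 0} c^\top x$, so I will solve the primal via Theorem~\ref{thm:LPSolve_detailed} and read off a dual certificate from the final near-central iterate. Every primal-feasible $x$ satisfies $\|x\|_\infty \le \text{diam}(\Omega) \le U$, so I may append artificial caps $x_i \le D \defeq 2U$ for all $i \in [m]$ without changing feasibility or optimum. This puts the primal in the form~\eqref{eq:intro:two-sided} with $l = 0$ and $u_i = D$, and the parameters $\|1/(x_0 - l)\|_\infty$, $\|1/(u - x_0)\|_\infty$, $\|u - l\|_\infty$, $\|c\|_\infty$ are all $O(U)$. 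Thus Theorem~\ref{thm:LPSolve_detailed} applies from $x_0$ (which is already strictly interior, since $\|1/x_0\|_\infty \le U < \infty$ and $\|x_0\|_\infty \le U < D$) at the claimed work and depth. I run it with target $t_2 = \Theta(m/\epsilon)$ and terminal centrality $\delta_{t_2}(x, w) \le \epsilon_2$ for a small absolute constant $\epsilon_2$; the analysis in Theorem~\ref{thm:LPSolve_detailed} already maintains $\delta_t \le R = O(1)$ throughout.

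By the definition of $\delta_{t_2}(x, w) \le \epsilon_2$, there exists $\eta \in \R^n$ with $|[t_2 c + w \phi'(x) - \ma \eta]_i| \le \epsilon_2 \, w_i \sqrt{\phi_i''(x_i)}$ for every $i \in [m]$. Set $y \defeq \eta / t_2$. For the two-sided log barrier $\phi_i(x_i) = -\log x_i - \log(D - x_i)$, we have $-\phi_i'(x_i) = 1/x_i - 1/(D - x_i)$ and $\sqrt{\phi_i''(x_i)} = \sqrt{1/x_i^2 + 1/(D - x_i)^2}$. Since $x_i \le \text{diam}(\Omega) \le D/2$, these yield $-\phi_i'(x_i) \ge 1/(2x_i)$ and $\sqrt{\phi_i''(x_i)} \le \sqrt{2}/x_i$. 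Consequently,
\[
[c - \ma y]_i \;=\; \frac{-w_i \phi_i'(x_i) + O(\epsilon_2 \, w_i \sqrt{\phi_i''(x_i)})}{t_2} \;\ge\; \frac{w_i}{t_2 \, x_i}\Bigl(\tfrac{1}{2} - O(\epsilon_2)\Bigr) \;>\; 0
\]
for a small enough constant $\epsilon_2$, so $y$ is dual-feasible.

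For the dual objective, $\ma^\top x = b$ gives $b^\top y = x^\top \ma y$, and the same bounds yield
\[
c^\top x - b^\top y \;=\; \sum_i x_i [c - \ma y]_i \;\le\; \frac{1}{t_2}\sum_i \bigl(x_i(-w_i \phi_i'(x_i)) + \epsilon_2 \, x_i w_i \sqrt{\phi_i''(x_i)}\bigr) \;\le\; \frac{(1 + \sqrt{2}\epsilon_2)\|w\|_1}{t_2},
\]
using $x_i(-\phi_i'(x_i)) \le 1$ and $x_i \sqrt{\phi_i''(x_i)} \le \sqrt{2}$. Since $\|w\|_1 \le 2m$ (as in the proof of Theorem~\ref{thm:LPSolve_detailed}), and $c^\top x \ge \text{OPT}$ for every primal-feasible $x$, choosing $t_2 = \Theta(m/\epsilon)$ gives $b^\top y \ge \text{OPT} - \epsilon = \max_{\ma y \le c} b^\top y - \epsilon$. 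The main obstacle will be the wrong-signed term $+1/(D - x_i)$ that the artificial cap $D$ injects into $\phi_i'$, which at first glance threatens dual feasibility; the choice $D = 2U \ge 2\,\text{diam}(\Omega)$ keeps this term bounded by $1/x_i$ uniformly, so the dual slack is only halved rather than eliminated and there is ample room to absorb the $\epsilon_2$-level optimality error pointwise in $i$. A secondary concern, handled by $\|1/x_0\|_\infty \le U$, is that $x_0$ might lie on the boundary of the un-capped feasible region; this bound rules it out and so no initial perturbation is needed.
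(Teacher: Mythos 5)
Your overall shape of argument---run the path-following to a centered point at large $t$, take $y=\eta/t$ from the centrality certificate, and read off approximate dual feasibility and a small duality gap---matches the paper's proof. But you take a detour that is both unnecessary and, as written, unsound.

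The paper applies the framework \emph{directly} to $\min_{\ma^\top x=b,\,x\ge 0}c^\top x$ with the one-sided log barrier $\phi_i(x_i)=-\log x_i$ (Case (1) of Section~\ref{sec:path:prelim}, with $l_i=0$ and $u_i=+\infty$). There $-\phi_i'(x_i)=1/x_i>0$ exactly, so the centrality bound $\|\mw^{-1}\mx(tc-w/x-\ma\eta)\|_\infty\le 1$ immediately yields $(\ma y)_i\le c_i$ and $(\ma y)_i\ge c_i-2w_i/(tx_i)$, hence $b^\top y=x^\top\ma y\ge c^\top x-2\|w\|_1/t$, and one is done with $\delta_t\le 1/2$---no small $\epsilon_2$ needed, no wrong-signed term to fight. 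Your appended cap $x_i\le D$ is what creates the $+1/(D-x_i)$ term you then labor to control, and the framework already supports the one-sided domain, so the cap buys you nothing.

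More importantly, your cap makes the argument depend on the false claim ``every primal-feasible $x$ satisfies $\|x\|_\infty\le\operatorname{diam}(\Omega)$.'' Diameter bounds $\|x-x'\|$ for $x,x'\in\Omega$, not $\|x\|$; e.g.\ $\Omega=\{x\in\R^3_{\ge 0}:x_1=K,\ x_2+x_3=1\}$ (with an obviously non-degenerate $\ma$) has $\operatorname{diam}(\Omega)=\sqrt{2}$ while every feasible $x$ has $\|x\|_\infty=K$, and $K$ can be taken arbitrarily large with $\|1/x_0\|_\infty$ and $\|c\|_\infty$ held fixed. So your $D=2U$ can be badly too small, the final iterate need not satisfy $x_i\le D/2$, and dual feasibility can fail. (As a secondary arithmetic point, even granting $x_i\le D/2$ you only get $-\phi_i'(x_i)=1/x_i-1/(D-x_i)\ge 0$, not $\ge 1/(2x_i)$; for the latter you need $x_i\le D/3$.) Dropping the cap and using $-\log x_i$, as the paper does, removes both problems at once.
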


\begin{proof}
Use Algorithm $\code{LPSolve}$ to solve the linear program $\min_{\ma^{\top}x=b,x\geq0}c^{\top}x$.
Following the proof of Theorem~\ref{thm:LPSolve_detailed} and using
$\phi_{i}(x_{i})=-\log x_{i}$ for all $i$, we can find $x$ and
$w$ such that $\delta_{t}(x,w)\leq\frac{1}{2}$ with $t=(\frac{n}{\epsilon})^{O(1)}$
in the time same work and depth as Theorem~\ref{thm:LPSolve_detailed}.
Further, \eqref{eq:centrality_equivalence} shows that $\eta=\left(\ma_{x}^{\top}\mWeight^{-1}\ma_{x}\right)^{-1}\ma_{x}^{\top}\frac{\grad_{x}f_{t}(\vx,\vWeight)}{\vWeight\sqrt{\vphi''(\vx)}}$
satisfies
\[
\normFull{\frac{\grad_{x}f_{t}(\vx,\vWeight)-\ma\veta}{\vWeight\sqrt{\vphi''(\vx)}}}_{\infty}\leq1.
\]
We will prove that $y=\frac{\eta}{t}$ has the desired properties.
Since $\phi''_{i}(x)=x_{i}^{-2}$, we have that
\[
\normFull{\mw^{-1}\mx\left(tc-\frac{w}{x}-\ma\veta\right)}_{\infty}\leq1.
\]
In particular, we have $(\ma y)_{i}\leq c_{i}-\frac{w_{i}}{tx_{i}}+\frac{w_{i}}{tx_{i}}\leq c_{i}$
for all $i\in[n]$. Similarly, we have that $(\ma y)_{i}\geq c_{i}-\frac{w_{i}}{tx_{i}}-\frac{w_{i}}{tx_{i}}$.
Hence, we have
\[
b^{\top}y=x^{\top}\ma y\geq c^{\top}x-\frac{2}{t}\sum_{i\in[m]}w_{i}\geq c^{\top}x-\frac{3n}{t}
\]
and picking $t=\frac{3n}{\epsilon}$ gives the result.
\end{proof}

\subsection{Minimum Cost Maximum Flow}

\label{sec:app:mincost}

Here we show how to use the interior point method of the previous
Section~\ref{sec:lp_alg} to solve the maximum flow problem and the
minimum cost flow problem and thereby prove Theorem~\ref{thm:maxflow}.
Formally, the maximum flow and minimum cost flow problems \cite{daitch2008faster}
is as follows. Let $G=(V,E)$ be a connected directed graph where
each edge $e\in E$ has capacity $c_{e}>0$. We call $x\in\R^{E}$
a \emph{$s$-$t$ flow} for $s,t\in V$ if $x_{e}\in[0,c_{e}]$ for
all $e$ in $E$ and for each vertex $v\notin\{s,t\}$ the amount
of flow entering $v$, i.e. $\sum_{e=(a,v)\in E}f_{e}$ equals the
amount of flow leaving $v$, i.e. $\sum_{e=(v,b)\in E}f_{e}$. The
value of $s$-$t$ flow is the amount of flow leaving $s$ (or equivalently,
entering $t$). The \emph{maximum flow problem} is to compute a $s$-$t$
flow of maximum value. In the \emph{minimum cost maximum flow problem}
there are costs $q_{e}\in\R$ on each edge $e\in E$ and the goal
is to compute a maximum $s$-$t$ flow of minimum cost, $\sum_{e\in E}q_{e}f_{e}=q^{\top}f$. 

Since the minimum cost flow problem includes the maximum flow problem,
we focus on this general formulation. The problem can be written as
the following linear program
\[
\min_{\vzero\leq\vx\leq\vc}\vq^{\top}\vx\text{ such that }\ma\vx=Fe_{t}
\]
where $F$ is the maximum flow value, $e_{t}\in\R^{|V\setminus\{s\}|}$
is an indicator vector of size $|V|-1$ that is non-zero at vertices
$t$ and $\ma$ is a $\left|V\backslash\{s\}\right|\times\left|E\right|$
matrix such that for each edge $e$, we have $\ma_{e_{\text{head}},e}=1$
and $\ma_{e_{\text{tail}},e}=-1$. In order words, the constraint
$\ma x=Fe_{t}$ requires the flow to satisfies the flow conversation
at all vertices except $s$ and $t$ and requires it flows $F$ unit
of flow into $t$ (and therefore $F$ out of $s$). We assume $c_{e}$
and $q_{e}$ are integer and $M$ be the maximum absolute value of
$c_{e}$ and $q_{e}$.

Note that $\rank\left(\ma\right)=|V|-1$ because the graph is connected
and hence our algorithm takes only $\tilde{O}(\sqrt{|V|}\log(U/\epsilon))$
iterations to compute an $\epsilon$-approximate solution to this
linear program. However, to solve minimum cost maximum flow with this
we need to bound $U/\epsilon$, compute $F$, and turn the approximate
solution into an exact minimum cost maximum flow. While there are
many ways to deal with this issue we consider a different linear program
formulation below related to \cite{daitch2008faster}.
\begin{lem}
\label{lem:flowreduction} Given a directed graph $G=(V,E)$ with
integral costs $q\in\Z^{E}$ and capacities $c\in\Z_{\geq0}^{E}$
with $\|q\|_{\infty}\leq M$ and $\|c\|_{\infty}\leq M$ in linear
time we can find a new integral cost vector $\tilde{q}\in\Z^{E}$
with $\|\tilde{q}\|_{\infty}\leq\tilde{M}\defeq8|E|^{2}M^{3}$ such
that the following modified linear program 
\begin{align*}
\min\qquad & \tilde{q}^{\top}\vx+\lambda(\onesVec^{\top}\vy+\onesVec^{\top}\vz)-2n\tilde{M}F\\
\text{ subject to}\qquad & \ma\vx+\vy-\vz=Fe_{t}\\
 & 0\leq x_{i}\leq c_{i},\\
 & 0\leq y_{i}\leq4|V|M,\\
 & 0\leq z_{i}\leq4|V|M,\\
 & 0\leq F\leq2|V|M
\end{align*}
with $\lambda=440|E|^{4}\tilde{M}^{2}M^{3}$ satisfies the following
conditions with constant probability:

\begin{enumerate}
\item $F=|V|M$, $\vx=\frac{c}{2}$, $\vy=2|V|M\onesVec-(\ma\frac{c}{2})^{-}+Fe_{t}$,
$z=2|V|M\onesVec+(\ma\frac{c}{2})^{+}$ is an interior point of the
linear program.
\item Given any feasible $(\vx,\vy,\vz)$ with cost value within $\frac{1}{12M}$
of the optimum. Then, one can find an exact minimum cost maximum $s$-$t$
flow for graph $G$ with costs $q$ and capacities $c$ in $O(|E|)$
work and $O(1)$ depth.
\item The linear system of the linear program can be solve in nearly linear
time, i.e. for any positive diagonal matrix $\ms$ and vector $b$,
it takes 
\[
O\left(|E|\log^{4}|V|\log(|V|/\eta)\right)\text{ work and }O\left(\log^{6}|V|\log(|V|/\eta)\right)\text{ depth}
\]
to find $x$ such that
\begin{equation}
\norm{x-\mvar L^{-1}b}_{\mvar L}\leq\eta\norm x_{\mvar L}\label{eq:max_flow_acc}
\end{equation}
where $\mvar L=[\ \ma\ |\ \mi\ |\ -\mi\ |\ -e_{t}\ ]\ms[\ \ma\ |\ \mi\ |\ -\mi\ |\ -e_{t}\ ]^{\top}.$
\end{enumerate}
\end{lem}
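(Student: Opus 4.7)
The plan is to address the three parts separately, since they rely on essentially independent techniques.

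For Part 1, I would do a direct algebraic verification: setting $b \defeq Fe_{t}-\ma(c/2)$, the constraint $\ma x+y-z=Fe_{t}$ at $x=c/2$ reduces to $y-z=b$, which holds for the stated $y,z$ once one reconciles the $b^{+},b^{-}$ bookkeeping. Strict feasibility then follows from $\|\ma(c/2)\|_{\infty}\le 2M$ and $F=|V|M$: the baseline shift $2|V|M$ strictly dominates the perturbation $b^{\pm}$, so $y,z\in(0,4|V|M)$ coordinatewise, while $x=c/2\in(0,c)$ and $F\in(0,2|V|M)$ are immediate.

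For Part 3, expanding $B\defeq[\;\ma\mid\mi\mid-\mi\mid-e_{t}\;]$ against $\ms$, whose blocks I write as $\ms_{x},\ms_{y},\ms_{z},s_{F}$, gives
\[
\mvar L \;=\; \ma\ms_{x}\ma^{\top}+\ms_{y}+\ms_{z}+s_{F}\,e_{t}e_{t}^{\top}.
\]
Here $\ma\ms_{x}\ma^{\top}$ is a weighted Laplacian of $G$ with the source vertex deleted, $\ms_{y}+\ms_{z}$ is a strictly positive diagonal, and $s_{F}e_{t}e_{t}^{\top}$ is a rank-one non-negative diagonal correction. Consequently $\mvar L$ is symmetric with non-positive off-diagonals and strict row-sum dominance at row $t$, hence SDD, so a nearly linear time SDD solver (e.g.~\cite{cohen2014solving,KyngLPSS16,peng2013efficient}) produces the vector $x$ satisfying \eqref{eq:max_flow_acc} in the advertised work and depth.

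Part 2 is the main substance and would follow a penalty-plus-perturbation strategy in the spirit of Daitch--Spielman~\cite{daitch2008faster}. The penalty $\lambda$ is chosen so large (polynomial in $|E|,M,\tilde M$) that any feasible $(x,y,z,F)$ with objective within $\tfrac{1}{12M}$ of the optimum must have $\|y\|_{1}+\|z\|_{1}\ll 1$; since $\ma x-Fe_{t}=z-y$ and the original conservation right-hand side is integral, rounding the flow $x$ to the nearest integer on each edge restores exact conservation. The coefficient $-2n\tilde M F$ dominates $\tilde q^{\top}x$ because $\|\tilde q\|_{\infty}\le\tilde M$ and $\|x\|_{1}\le|E|M$, so the outer minimization first drives $F$ to the true integer max-flow value. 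Finally $\tilde q$ is built by scaling $q$ by a large integer and adding i.i.d.\ uniformly random integer perturbations of magnitude $\Theta(|E|^{2}M^{3})$; a standard isolation-lemma argument guarantees that with constant probability the integral minimum-cost max-flow under $\tilde q$ is unique and separated from every other integral max-flow by at least $1$ in $\tilde q$-cost, while this perturbation scale is small enough that the $\tilde q$-optimum is also an original-$q$-optimum. Thus any feasible solution within $\tfrac{1}{12M}<1$ of the $\tilde q$-optimum must, after the rounding step above, coincide with the unique exact minimum-cost maximum flow, recoverable in $O(|E|)$ work and $O(1)$ depth.

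The main obstacle is the scale calibration in Part 2: $\lambda$ must be large enough to squeeze $\|y\|_{1}+\|z\|_{1}$ below one at every $\tfrac{1}{12M}$-approximate optimum, the random perturbation must simultaneously be large enough to create a constant-probability isolation gap and small enough to preserve optimality under the original $q$, and the recovery tolerance $\tfrac{1}{12M}$ must sit below this isolation gap. Parts 1 and 3 are bookkeeping and an off-the-shelf SDD solver invocation, respectively.
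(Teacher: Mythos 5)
Parts 1 and 3 match the paper's proof closely: Part 1 is the same direct feasibility check (the paper just notes $|V|M\leq y_{i},z_{i}\leq3|V|M$), and Part 3 is the same observation that $\mvar L$ is symmetric diagonally dominant followed by an invocation of an SDD solver.

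For Part 2 the high-level architecture is right (choose $\lambda$ so that any near-optimum has tiny excess demand; use a Daitch--Spielman-style random perturbation to isolate a unique integral minimum-cost maximum flow; finish by rounding), but there is a genuine gap in your recovery step. You claim that because $\|y\|_{1}+\|z\|_{1}\ll1$, ``rounding the flow $x$ to the nearest integer on each edge restores exact conservation.'' That is not true: coordinatewise rounding of a nearly-conserving flow can introduce conservation violations of size $\Theta(|E|)$ at a vertex, since each incident edge may shift by up to $1/2$. The paper avoids this by doing extra combinatorial work before rounding: it scales the solution by $1-\epsilon$, explicitly routes the residual excess demand $y-z$ back to the source along a spanning tree (paying at most $|V|\tilde M$ per unit, which is covered by the drop in the $\lambda(\onesVec^{\top}y+\onesVec^{\top}z)$ term), and then pads the result to a fractional maximum flow $\tilde{x}$ by pushing additional $s$--$t$ flow. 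Only at that point does one have an exactly feasible fractional max flow, which, being a convex combination of integral max flows, must place all but $O(1/M)$ weight on the unique minimum-$\tilde q$-cost one; thus $\tilde x$ is within $1/6$ of that integral flow on every coordinate, and now rounding is justified. Your sketch omits the routing-and-padding step, and without it the $1/6$ coordinatewise proximity on which rounding rests cannot be established. The scale calibration you identify as the main obstacle is indeed what the paper tracks, but the missing idea is the explicit repair of feasibility prior to rounding, not merely the choice of constants.
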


\begin{proof}
By Lemma~\cite[Lemma 3.13]{daitch2008faster}, if we add the cost
of every edge by a number uniformly at random from $\{\frac{1}{4|E|^{2}M^{2}},\frac{2}{4|E|^{2}M^{2}},\cdots,\frac{2|E|M}{4|E|^{2}M^{2}}\}$.
Then with probability at least $1/2$, the new problem has an unique
solution and this solution is a solution for the original problem.
Applying this reduction and scaling the problem back to integral,
we obtain the new cost vector such that the solution is unique. 

For 1) Note that $|V|M\leq y_{i}\leq3|V|M$ and $|V|M\leq z_{i}\leq3|V|M$.
So, $(x,y,z)$ is an interior point.

For 2) Let $\mathtt{Val}=\tilde{q}^{\top}x-2|V|\tilde{M}F+\lambda\left(\onesVec^{\top}\vy+\onesVec^{\top}\vz\right)$
be the objective value of $(x,y,z)$. Let $\mathtt{Opt}$ be the objective
value given by the minimum cost maximum flow. 

First, we prove the the total excess demand $\onesVec^{\top}\vy+\onesVec^{\top}\vz$
is small. Since $0\leq F\leq2|V|M$ and $|q^{\top}x|\leq|E|\tilde{M}M$,
we have that
\begin{equation}
|\tilde{q}^{\top}x-2|V|\tilde{M}F|\leq5|E|^{2}\tilde{M}M\label{eq:maxflow_val_bound}
\end{equation}
for both the algorithm and for the optimum flow. By assumption, we
know that $\mathtt{Val}\leq\mathtt{Opt}+\frac{1}{12M}$, we have that
\[
\lambda(\onesVec^{\top}\vy+\onesVec^{\top}\vz)\leq11|E|^{2}\tilde{M}M.
\]
Using $\lambda=440|E|^{4}\tilde{M}^{2}M^{3}$, we have that the total
excess demand $\onesVec^{\top}\vy+\onesVec^{\top}\vz\leq\epsilon\defeq\frac{1}{40|E|^{2}\tilde{M}M^{2}}$.

To route back the excess demand, we first scale the vector $x,y,z$
and $F$ by a $1-\epsilon$ factor. Then, we create a spanning tree
at $s$. At every vertex $v$, we route the excess demand from $v$
back to the source $s$ in the tree. To route one unit of excess demand
at $v$, we pay at most $\sum_{i\in P_{v}}|\tilde{q}_{i}|$ where
$P_{v}$ is the path from $s$ to $v$ on the tree. Since $\sum_{i\in P_{v}}|\tilde{q}_{i}|\leq|V|\tilde{M}$,
the cost we pay for routing is at most the potential decrease in the
term $\lambda\left(\onesVec^{\top}\vy+\onesVec^{\top}\vz\right)$.
So the objective value of this new flow is at most 
\[
(1-\epsilon)\mathtt{Val}\leq\mathtt{Val}+5\epsilon|E|^{2}\tilde{M}M\leq\mathtt{Val}+\frac{1}{12M}\leq\mathtt{Opt}+\frac{1}{6M}
\]
where we used $\mathtt{Val}\leq5|E|^{2}\tilde{M}M$ due to \eqref{eq:maxflow_val_bound}.
Since we scale the vectors by $1-\epsilon$ factor, the flow is feasible. 

Due to the routing above, we can assume the flow $x$ has no excess
demand with $\mathtt{Val}\leq\mathtt{Opt}+\frac{1}{6M}$. However,
the flow $x$ may not be the maximum flow. Imagine now, we send the
extra flow from $s$ to $t$ to make $x$ maximum. For every unit
we send, we decrease the objective by at least $|V|\tilde{M}$ due
to the term $-2|V|\tilde{M}F$ and the fact that the cost of that
unit of flow is at most $|V|\tilde{M}$. Since $\mathtt{Val}\leq\mathtt{Opt}+\frac{1}{6M}$,
we can send at most $\frac{1}{6M\cdot|V|\tilde{M}}$ amount of extra
flow. We call this new $x$ as $\tilde{x}$ and we let $\mathtt{\widetilde{Val}}$
be the objective value for this $\tilde{x}$. Note that the procedure
above only decrease the objective value. So, we have again $\mathtt{\widetilde{Val}}\leq\mathtt{Opt}+\frac{1}{6M}$.
Finally, we note that $\tilde{x}$ is a weighted combination of maximum
flow from $s$ to $t$. Since the minimum cost solution is unique
and the cost are integral, the combined weight contributed by non-minimum-cost
flow is at most $\frac{1}{6M}$. Since the flow is bounded by $M$,
we know $\tilde{x}$ is at most $\frac{1}{6}$ far from the minimum
cost solution for all edges. 

For the total runtime, note that both the step $\tilde{x}$ and the
step of routing excess demand cannot be omitted because it does not
change the flow for every edge by more than $1/6$. So, we can simply
round every number to nearest integer, which takes linear work and
constant depth.

For Part 3, $\mvar L$ is symmetric diagonally dominant. The result
follows from \cite[Theorem 9.2]{lee2015sparsified}.
\end{proof}
Using the reduction mentioned above, one can obtain the promised minimum
cost flow algorithm. (Further, using techniques from \cite{daitch2008faster}
this can be generalized to solving lossy flow problems.)

\maxflow*
\begin{proof}
Using the reduction (Lemma~\ref{lem:flowreduction}) and Theorem~\ref{thm:LPSolve_detailed},
we get an algorithm of minimum cost flow by solving
\[
O\left(\sqrt{|V|}\log^{13}|E|\cdot\log(|V|M)\cdot\mathcal{T}_{w}\right)\text{ work and }O\left(\sqrt{|V|}\log^{13}|E|\cdot\log(|V|M)\cdot\mathcal{T}_{d}\right)\text{ depth}
\]
where $\mathcal{T}_{w}$ and $\mathcal{T}_{d}$ are the work and the
depth of solving linear systems. It is known that for interior point
methods, we only need to solve linear system with accuracy $\eta=\frac{1}{m^{O(1)}}$
($\eta$ defined in \eqref{eq:max_flow_acc}) because each step of
interior point method only need to decrease the centrality $\delta_{t}$
by a constant factor. Hence, Lemma \ref{lem:flowreduction} shows
that each linear system takes
\[
\mathcal{T}_{w}=O\left(|E|\log^{4}|V|\log(|V|)\right)\text{ work and }\mathcal{T}_{d}=O\left(\log^{6}|V|\log(|V|)\right)\text{ depth}.
\]
Hence, we have the result.
\end{proof}

\subsection{Computable Nearly Universal Barrier}

\label{sec:barrier_algorithm}

Here we show how to combine the results of the preceding sections
to obtain our main results on a polynomial time computable nearly-universal
self-concordant barrier. We first provide and proof Theorem~\ref{thm:lewisbarriercomp},
a generalization of Theorem~\ref{thm:lewisbarriercomp}, and then
show Theorem~\ref{thm:lewisbarriercomp} as a special case. The results
of this section use following result regarding computing Lewis weights
proved in Appendix~\ref{sec:weights_full:computing}.

\begin{restatable}[Exact Weight Computation]{thm}{lewisweightexact}\label{thm:lewisexactfull}
Let $\ma\in\R^{m\times n}$ be non-degenerate matrix and let $\epsilon\in(0,1)$
and $p\in(0,\infty)$. For all $w^{(0)}\in\dWeights$ with $\normInf{w_{(0)}^{-1}(\lpweight(\ma)-w^{(0)})}\leq\frac{p}{20(p+2)}$,
the algorithm\textbf{ $\code{computeExactWeight}(\ma,p,w^{(0)},\epsilon)$}
(Algorithm~\ref{alg:exact_weight}) can be implemented to return
$w$ such that $\norm{\lpweight(\ma)^{-1}(\lpweight(\ma)-w)}_{\infty}\leq\epsilon$
in $O(mn^{\omega-1}(p+p^{-1})\log(n(1+\frac{1}{p})\epsilon^{-1}))$
work and $O((p+p^{-1})\log(m)\log(n(1+\frac{1}{p})\epsilon^{-1}))$
depth.

Without $w^{(0)}$, the algorithm $\code{computeInitialWeight}(\ma,p,\epsilon)$
(Algorithm~\ref{alg:initialweight}) can be implemented to achieve
the same guarantee with $O(mn^{\omega-(1/2)}(p+p^{-1})^{2}\log(\frac{m}{n})\log(n\epsilon^{-1}(p+p^{-1})))$
work and $O((p+p^{-1})^{2}\log(\frac{m}{n})\log(m)\log(n\epsilon^{-1}(p+p^{-1})))$
depth.

\end{restatable}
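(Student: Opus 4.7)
My plan is to use the convex formulation of Section~\ref{sec:lewis_weights}: by Lemma~\ref{lem:unique_lewis}, for $p \neq 2$ the Lewis weights $\lpweight(\ma)$ are the unique minimizer of $\mathcal{V}_p^{\ma}(w) + \sum_i w_i$ and the unique positive fixed point of $w = \sigma(\mw^{1/2-1/p}\ma)$. I would analyze an iterative scheme that contracts the multiplicative error at rate $1 - \Theta(1/(p+p^{-1}))$ per step, handling the warm-start case (first claim) directly and the cold-start case (second claim) via a homotopy in $p$.

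\textbf{Local convergence (first claim).} I would use a damped multiplicative fixed-point map $w^{(k+1)}_i = [w^{(k)}_i]^{1-\alpha}\,[\sigma(\mw_k^{1/2-1/p}\ma)_i]^{\alpha}$ with $\mw_k = \mDiag(w^{(k)})$ and $\alpha = \Theta(1/(p+p^{-1}))$; equivalently, a scaled Newton step on $\mathcal{V}_p^{\ma}(w) + \sum_i w_i$. Lemma~\ref{lem:LS_weights_stable} bounds the Jacobian of $v \mapsto \lpweight(\mv\ma)$, giving prefactors $p$ in the $w$-norm and $p\max(p/2,1)$ in the $\infty$-norm. Combined with the PSD sandwich on $\nabla^2 \mathcal{V}_p^{\ma}$ in Lemma~\ref{lem:lewis_potential_derivatives}, these estimates yield that whenever the iterate lies within the hypothesized radius $p/(20(p+2))$ of $\lpweight(\ma)$, it stays inside that radius and satisfies
\[
\|\log(w^{(k+1)}/\lpweight(\ma))\|_\infty \leq \left(1 - \frac{c}{p+p^{-1}}\right) \|\log(w^{(k)}/\lpweight(\ma))\|_\infty
\]
for an absolute constant $c$. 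Thus $O((p+p^{-1})\log(n(1+1/p)/\epsilon))$ iterations suffice. Each iteration reduces to one leverage-score computation on $\mw_k^{1/2-1/p}\ma$, i.e.\ one solve of $\ma^\top \md \ma$ followed by multiplication against $\ma$; via parallel Cholesky factorization this is $O(mn^{\omega-1})$ work and $O(\log m)$ depth.

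\textbf{Initialization by homotopy (second claim) and main obstacle.} For the cold start I would run a homotopy in $p$: start at $p_0 = 2$, where $w_2(\ma) = \sigma(\ma)$ is obtained by a single leverage-score computation, and walk $p_k$ geometrically toward the target, rerunning the local iteration after each small change to recenter. The admissible step size $\eta$ in $\log p$ is governed by how much $\lpweight(\ma)$ can move with $p$; differentiating the fixed-point equation in $p$ and invoking the normalized-Laplacian machinery as in the proof of Lemma~\ref{lem:LS_weights_stable} yields a bound on $d\lpweight/dp$. Using $\sum_i \lpweight(\ma)_i = n$ to average over rows, rather than a pointwise worst case, permits step size $\eta = \Theta(1/(\sqrt{n}(p+p^{-1})))$, which over the total $O(\log(m/n))$ change in $\log p$ yields $O(\sqrt{n}(p+p^{-1})\log(m/n))$ homotopy steps, each followed by $O((p+p^{-1})\log(n(p+p^{-1})\epsilon^{-1}))$ local iterations, giving the claimed complexity. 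The main obstacle is precisely this $\sqrt{n}$ improvement: a naive pointwise differentiation of the fixed-point equation forces $\eta = \Theta(1/(p+p^{-1}))$ and hence an $n$-factor rather than $\sqrt{n}$-factor step count; obtaining $\sqrt{n}$ requires a volumetric-style averaging argument that exploits $\tr(\mSigma_w) = n$ to control the typical rather than maximal sensitivity of $\lpweight$ to changes in $p$.
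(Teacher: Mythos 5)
Your homotopy argument for the second claim matches the paper's (Lemma~\ref{lem:LS_weights_change_bound} and Algorithm~\ref{alg:initialweight}), including the crucial observation that $\sum_i \lpweight(\ma)_i=n$ lets you bound $\norm{\log w}_{\mw}\leq\sqrt{n}\log(me/n)$ and hence take homotopy steps of size $\Theta(1/(\sqrt{n}(p+p^{-1})))$ rather than $\Theta(1/(n(p+p^{-1})))$; you correctly identify this as the crux of the $\sqrt{n}$ factor. (One small omission: the paper compares $\lqweight$ to the rescaled guess $\lpweight^{q/p}$, not to $\lpweight$ itself, and the algorithm re-initializes each stage with $w^{\next{p}/p}$; without that rescaling the derivative bound does not close.)

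The genuine gap is in your first claim. You assert that the damped multiplicative fixed-point iteration contracts $\normInf{\log(w^{(k)}/\lpweight(\ma))}$ at rate $1-\Theta(1/(p+p^{-1}))$ for \emph{all} $p\in(0,\infty)$, citing the Jacobian bounds of Lemma~\ref{lem:LS_weights_stable}. This does not go through. Linearizing the map $w\mapsto\sigma(\mw^{\frac12-\frac1p}\ma)$ at the fixed point in log-coordinates gives the matrix $(1-\frac{2}{p})\mSigma^{-1}\mLambda$, and the damped step has linearization $(1-\alpha)\mi+\alpha(1-\frac{2}{p})\mSigma^{-1}\mLambda$. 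Since $\mSigma^{-1}\mLambda=\mi-\mSigma^{-1}\mProj^{(2)}$ has $\ell_\infty\to\ell_\infty$ operator norm as large as $2$, the one-step $\ell_\infty$ bound you need is $(1-\alpha)+2\alpha|1-\frac{2}{p}|<1$, which requires $p\in(4/3,4)$; for $p\geq 4$ (and the corresponding small-$p$ regime of the additive variant) the entrywise error can grow in a single step, and the $\infty$-norm prefactor $p\max\{p/2,1\}$ from Lemma~\ref{lem:LS_weights_stable} is exactly the symptom of this. This is why the paper's $\ell_\infty$-style analysis ($\code{computeApxWeight}$, Theorem~\ref{thm:weights_full:approximate_weight}) is explicitly restricted to $p\in(0,4)$. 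For the exact theorem, which must cover all $p>0$, the paper instead runs projected gradient descent (Theorem~\ref{thm:constrained_minimization}) measured in the weighted $\ell_2$ norm $\norm{\cdot}_{\mw_{(0)}^{-1}}$, where Lemma~\ref{lem:Hessian_of_weight} sandwiches the Hessian between $\min\{\frac12,\frac1p\}\mw^{-1}$ and $\max\{2,\frac4p\}\mw^{-1}$ for every $p$ (the off-diagonal coupling $\mLambda$ is controlled in the quadratic-form sense $\mZero\preceq\mLambda\preceq\mSigma$, which has no $\ell_\infty$ analogue), and then converts the resulting $\ell_2$ guarantee back into a multiplicative $\ell_\infty$ guarantee by one extra leverage-score evaluation (Lemma~\ref{lem:fix_small_weight}), paying a $\sqrt{n}$ factor that shows up as the $n$ inside the logarithm of the stated iteration count. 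To repair your proof you would either need to switch to this $\ell_2$ potential (and add the box projection onto $\normInf{\mw_{(0)}^{-1}(w-w^{(0)})}\leq r$ that keeps the iterates in the well-conditioned region), or supply a genuinely new argument for $\ell_\infty$ contraction when $p\geq 4$.
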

\begin{thm}
\label{thm:lewisbarriercomputefull} Let $\interior=\{x\,:\,\ma x>b\}$
denote the interior of non-empty polytope for non-degenerate $\ma\in\R^{m\times n}$.
There is an $O(n\log^{5}m)$-self concordant barrier $\psi$ defined
using $\ell_{q}$ Lewis weight with $q=\Theta(\log m)$ (See~\eqref{eq:LS_barrier})
satisfying
\[
\ma_{x}^{\top}\mw_{x}\ma_{x}^{\top}\preceq\nabla^{2}\psi(x)\preceq(q+1)\ma_{x}^{\top}\mw_{x}\ma_{x}^{\top}
\]
where $\ma_{x}=\mDiag(\ma x-b)$ and $w_{x}$ is the $\ell_{q}$ Lewis
weight of the matrix $\ma_{x}$. Furthermore, we can compute or update
the $w_{x}$, $\nabla\psi(x)$ and $\nabla^{2}\psi(x)$ as follows:
\begin{itemize}
\item Initial Weight: For any $x\in\R^{n}$, we can compute a vector $\widetilde{w}_{x}$
such that $(1-\epsilon)w_{x}\leq\widetilde{w}_{x}\leq(1+\epsilon)w_{x}$
in $O(mn^{\omega-(1/2)}\cdot\log^{3}m\cdot\log(m/\epsilon))$-work
and $O(\sqrt{n}\cdot\log^{4}m\cdot\log(m/\epsilon))$-depth.
\item Update Weight and Compute Gradient/Hessian: Given a vector $\widetilde{w}_{x}$
such that $\widetilde{w}_{x}=(1\pm\frac{1}{100})w_{x}$, for any $y$
with $\|x-y\|_{\ma_{x}^{\top}\mw_{x}\ma_{x}^{\top}}\leq\frac{c}{\log^{2}m}$
with some small constant $c>0$, we can compute $\widetilde{w}_{y}$,
$v$ and $\mh$ such that $\widetilde{w}_{y}=(1\pm\epsilon)w_{y},$
\[
\|v-\nabla\psi(x)\|_{\nabla^{2}\psi(x)^{-1}}\leq\epsilon\text{ and }(1-\epsilon)\nabla^{2}\psi(x)\preceq\mh\preceq(1+\epsilon)\nabla^{2}\psi(x)
\]
in $O(mn^{\omega-1}\cdot\log m\cdot\log(m/\epsilon))$-work and $O(\log^{2}m\cdot\log(m/\epsilon))$-depth.
\end{itemize}
\end{thm}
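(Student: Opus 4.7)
The plan is to establish Theorem~\ref{thm:lewisbarriercomputefull} in three stages—constructing the barrier, bootstrapping the Lewis weights from scratch, and then maintaining them (together with gradient and Hessian) along a short trajectory. Throughout, set $q=c\log m$ for a suitable constant $c$ so that $m^{1/(q+2)}=O(1)$ and $\max\{q,2\}^{5/2}=O(\log^{5/2}m)$; define $\psi$ by \eqref{eq:LS_barrier}. Self-concordance is then immediate from Theorem~\ref{thm:LS_barrier_sc}: with the above $q$, one gets $v_q^2=O(\log^5 m)$, so $v_q^2\psi$ is $O(n\log^5 m)$-self-concordant. The sandwich $\ma_x^\top\mw_x\ma_x\preceq\nabla^2\psi(x)\preceq(q+1)\ma_x^\top\mw_x\ma_x$ is exactly \eqref{eq:psi_2sigma} of Lemma~\ref{lem:p_derivatives}, using that $\mSigma_x=\mw_x$ (Lemma~\ref{lem:p_derivatives}).

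For the \emph{initial weight} step I would simply apply $\code{computeInitialWeight}(\ma_x,q,\epsilon)$ from Theorem~\ref{thm:lewisexactfull} with $p=q=\Theta(\log m)$. Substituting $p+p^{-1}=O(\log m)$ into the stated guarantee yields the claimed $O(mn^{\omega-1/2}\log^3 m\cdot\log(m/\epsilon))$ work and polylogarithmic depth (which is in particular at most $O(\sqrt{n}\log^4 m\log(m/\epsilon))$). No new analysis is needed beyond invoking that theorem.

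For the \emph{warm-started update}, I first need to show that $w_x$ does not move too much as $x$ varies slightly, so that the given $\widetilde w_x$ is a valid warm start for $y$. Parametrize $x_t=x+t(y-x)$ and apply Lemma~\ref{lem:w_lipschitz} pointwise along $t\in[0,1]$: with $q=\Theta(\log m)$ the Lipschitz constant of $\log w$ in the local norm $\|\cdot\|_{\ma_t^\top\mw_t\ma_t}$ is $q(\sqrt{2}m^{1/(q+2)}+\max\{q/2,1\})=O(\log^2 m)$. Integrating over $t$ and combining with the fact that the Hessian is within a constant factor of $\ma_x^\top\mw_x\ma_x$ (by the sandwich above and Lemma~\ref{lem:gen:phi_properties_sim}-style continuity), the hypothesis $\|x-y\|_{\ma_x^\top\mw_x\ma_x}\le c/\log^2 m$ gives $\|\log w_x-\log w_y\|_\infty\le O(c)$, hence $\widetilde w_x$ satisfies the warm-start hypothesis $\|w_y^{-1}(w_y-\widetilde w_x)\|_\infty\le q/(20(q+2))$ of $\code{computeExactWeight}$. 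Applying that routine (Theorem~\ref{thm:lewisexactfull}) with $p=q$ produces $\widetilde w_y$ with the claimed $O(mn^{\omega-1}\log m\log(m/\epsilon))$ work and $O(\log^2 m\log(m/\epsilon))$ depth.

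For gradient and Hessian I use the closed forms from Lemma~\ref{lem:p_derivatives}. The gradient approximation is easy: set $v=-\ma_y^\top\widetilde w_y$ and, using $\nabla^2\psi(y)\succeq\ma_y^\top\mw_y\ma_y$ together with $\ma_y(\ma_y^\top\mw_y\ma_y)^{-1}\ma_y^\top\preceq\mw_y^{-1}$ (a projection bound via Lemma~\ref{lem:tool:projection_matrices}), one gets
\[
\|v-\nabla\psi(y)\|_{\nabla^2\psi(y)^{-1}}^{\,2}\le\sum_{i}\frac{(\widetilde w_{y,i}-w_{y,i})^2}{w_{y,i}}\le\epsilon'^{\,2}\sum_i w_{y,i}=\epsilon'^{\,2}n,
\]
so requesting $\widetilde w_y$ to multiplicative accuracy $\epsilon/\sqrt n$ suffices (affecting only the logarithmic factors). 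The Hessian is the main obstacle: from Lemma~\ref{lem:p_derivatives}, $\nabla^2\psi(y)=\ma_y^\top\mSigma_y^{1/2}(\mi+\mn_y)\mSigma_y^{1/2}\ma_y$ with $\mn_y=2\overline{\mLambda}_y(\mi-c_q\overline{\mLambda}_y)^{-1}$, and the inner $(\mi-c_q\overline{\mLambda}_y)^{-1}$ lives on an $m$-dimensional space. I would handle this by forming the $n\times n$ matrix $\mvar V_y=(\ma_y^\top\mw_y^{1-2/q}\ma_y)^{-1}$ in $O(n^\omega)$ work, then representing $\mProj_y=\mw_y^{1/2-1/q}\ma_y\mvar V_y\ma_y^\top\mw_y^{1/2-1/q}$ implicitly and using the identity $\mSigma_y-c_q\mLambda_y=(2/q)\mSigma_y+c_q\mProj^{(2)}_y$ to solve $(\mSigma_y-c_q\mLambda_y)^{-1}\mLambda_y\ma_y$ column-by-column via preconditioned iterative methods (the operator is a perturbation of a diagonal, with condition number $O(q)$, so $O(\log(m/\epsilon))$ iterations suffice per column). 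Each column multiplication against $\mProj^{(2)}_y$ can be carried out in $O(mn^{\omega-2})$ work using $\mvar V_y$, so assembling all $n$ columns fits in the claimed $O(mn^{\omega-1}\log m\log(m/\epsilon))$ work and $O(\log^2 m\log(m/\epsilon))$ depth. Feeding the resulting approximate $\widetilde{\mn}_y$ into the Hessian formula gives $\mh$ with the required multiplicative spectral accuracy, since $\mi+\mn_y\succeq\mi$ makes the relative perturbation in $\mn_y$ transfer directly to $\nabla^2\psi$. The harder accounting is verifying that the $O(\log^2 m)$ slack in the closeness of $y$ to $x$ is just enough to both keep $\widetilde w_x$ a valid warm start and keep all derivative quantities perturbatively controlled; this is where the bookkeeping between Lemma~\ref{lem:w_lipschitz}, Lemma~\ref{lem:change_of_nor_proj_lap}, and Lemma~\ref{lem:change_of_n} must be done carefully.
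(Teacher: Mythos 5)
Your proof mirrors the paper's on every substantive step: same barrier (\ref{eq:LS_barrier}) with $q=\Theta(\log m)$, same invocation of Theorem~\ref{thm:LS_barrier_sc} for self-concordance and of Lemma~\ref{lem:p_derivatives} for the sandwich $\ma_x^\top\mw_x\ma_x\preceq\nabla^2\psi\preceq(q+1)\ma_x^\top\mw_x\ma_x$, same reduction of the initial weight to $\code{computeInitialWeight}$ in Theorem~\ref{thm:lewisexactfull}, and the same warm-start argument combining Lemma~\ref{lem:w_lipschitz} with $\code{computeExactWeight}$. Your explicit gradient bound
\[
\|v-\nabla\psi(y)\|_{\nabla^2\psi(y)^{-1}}^2 \le \sum_i \frac{(\widetilde w_{y,i}-w_{y,i})^2}{w_{y,i}} \le \epsilon'^2 n
\]
is correct and is the calculation the paper compresses into a citation of Lemma~\ref{lem:force}; both give the same $\epsilon/\sqrt n$ precision requirement on the weights, which only adds a $\log n$ inside the $\log(m/\epsilon)$ factor. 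One minor point: the depth for the initial-weight step is not ``polylogarithmic'' — $\code{computeInitialWeight}$ runs a homotopy with $\Theta(\sqrt n\,(p+p^{-1})\log(m/n))$ sequential calls, which is where the $\sqrt n$ in the stated depth comes from; you noted the $\sqrt n$ figure anyway, so this is only an imprecision.

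The one place your proposal genuinely diverges is the Hessian assembly. The paper says only that perturbing $\sigma_x$ multiplicatively by $\epsilon'$ perturbs $\nabla^2\psi$ by $\epsilon'\log^{O(1)}m$ (by the calculations of Lemma~\ref{lem:change_of_nor_proj_lap} and Lemma~\ref{lem:change_of_n}), and then asserts the Hessian ``can be computed in the same work and depth'' without a construction. You propose a concrete construction — CG on $(\mSigma_y - c_q\mLambda_y)^{-1}$ with $\mProj^{(2)}_y$ as the nontrivial part of the operator — but your per-column matvec cost of $O(mn^{\omega-2})$ is not substantiated. Computing $\mProj^{(2)}_y v$ via $\mathrm{diag}\bigl(\mProj_y\mDiag(v)\mProj_y\bigr)$ with the precomputed $\mvar V_y$ costs $O(mn^{\omega-1})$ (one product of an $n\times m$ and an $m\times n$ matrix dominates), so applying it to $n$ columns over $O(\sqrt{q}\log(m/\epsilon))$ CG iterations gives $O(mn^{\omega}\cdot\mathrm{polylog})$, a factor of $n$ above the claimed bound; also the $O(q)$ condition number yields $O(\sqrt q\log)$ CG iterations, not $O(\log)$, though $q=\Theta(\log m)$ makes that harmless. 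I do not see that the paper's proof supplies a construction that hits $O(mn^{\omega-1}\log m\log(m/\epsilon))$ either — the key term $\ma_y^\top\mProj^{(2)}_y\ma_y$ seems to be an $O(mn^{\omega})$-work tensor contraction however you slice it — so this is a hand-wave in the original text and not something you should be expected to repair blind. But if you want to use the iterative scheme you sketched, you should either justify the $O(mn^{\omega-2})$ figure or accept the $O(mn^{\omega})$ bound.
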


\begin{proof}
Theorem \ref{thm:LS_barrier_sc} with $q=\log m$ shows that there
is such a barrier function $\psi$ that is $O(n\log^{5}m)$. Lemma~~\ref{lem:p_derivatives}
shows that
\[
\grad\psi(x)=-\ma_{x}^{\top}\sigma_{x}\text{ and }\hess\psi(x)=\ma_{x}^{\top}\mSigma_{x}^{1/2}(\mi+\mn_{x})\mSigma_{x}^{1/2}\ma_{x}.
\]
Note that $\sigma_{x}=\lqweight(\ma_{x})$ and we can compute $\tilde{\sigma}$
such that $\tilde{\sigma}\in(1\pm\frac{\epsilon}{\sqrt{n}})\sigma_{x}$
in $O\left(mn^{\omega-\frac{1}{2}}\cdot\log^{3}m\cdot\log\frac{m}{\epsilon}\right)$
work and $O\left(n^{1/2}\cdot\log^{4}m\cdot\log\frac{m}{\epsilon}\right)$
depth using Theorem~\ref{thm:lewisexactfull}.

For the update version, we let $w_{x}$ and $w_{y}$ be the Lewis
weight corresponding to $x$ and $y$. Picking $c$ to be small enough
constant, Lemma \ref{lem:w_lipschitz} shows that $\|w_{y}^{-1}(w_{x}-w_{y})\|_{\infty}\leq O(c)\leq\frac{q}{20(q+2)}$.
Hence, Theorem \ref{thm:lewisexactfull} shows that we can compute
$w_{x}$ with $O(mn^{\omega-1}\cdot\log m\cdot\log(m/\epsilon))$-work
and $O(\log^{3}m\cdot\log(m/\epsilon))$-depth in this case.

For the gradient, with the approximate Lewis weight, Lemma \ref{lem:force}
shows that
\[
\|\nabla\psi(x)+\ma_{x}^{\top}\tilde{\sigma}\|_{\nabla^{2}\psi(x)^{-1}}\leq\frac{\epsilon}{\sqrt{n}}\|\nabla\psi(x)\|_{\nabla^{2}\psi(x)^{-1}}\leq\epsilon.
\]

For the Hessian, we recall that $\mn_{x}=2\mNormProjLap_{x}(\mi-(1-\frac{2}{q})\mNormProjLap_{x})^{-1}$
and $\mNormProjLap_{x}\defeq\mNormProjLap(\mSigma_{x}^{\frac{1}{2}-\frac{1}{q}}\ma_{x})$.
Following calculations in Lemma~\ref{lem:change_of_n} and Lemma~\ref{lem:stable_hess},
one can check that replacing $\sigma_{x}$ by $(1\pm\epsilon)\sigma_{x}$
in the formula of $\hess\psi(x)$ (via $\mn_{x}$ and $\mNormProjLap_{x}$)
only changes the matrix $\hess\psi(x)$ multiplicatively by $\pm\epsilon\log^{O(1)}m$.
Hence, we can compute it again in the same work and depth.
\end{proof}
Leveraging this this theorem we prove, Theorem~\ref{thm:lewisbarriercomp}.
\begin{proof}[Proof of Theorem~\ref{thm:lewisbarriercomp}]
 This theorem is a specialization of Theorem~\ref{thm:lewisbarriercomputefull}.
\end{proof}

\section{Acknowledgments}

We thank Yan Kit Chi, Michael B. Cohen, Jonathan A. Kelner, Aleksander
M\k{a}dry, Richard Peng, and Nisheeth Vishnoi for helpful conversations.
This work was partially supported by NSF awards CCF-0843915 and CCF-1111109,
NSF Graduate Research Fellowship (grant no. 1122374), Hong Kong RGC
grant 2150701, CCF-1749609, CCF-1740551, DMS-1839116, CCF-1844855,
and a Microsoft Research Faculty Fellowship. Part of this work was
done while both authors were visiting the Simons Institute for the
Theory of Computing, UC Berkeley.

\bibliographystyle{plain}
\bibliography{main}

\begin{thebibliography}{10}

\bibitem{abernethy2016faster}
Jacob Abernethy and Elad Hazan.
\newblock Faster convex optimization: Simulated annealing with an efficient
  universal barrier.
\newblock In {\em International Conference on Machine Learning}, pages
  2520--2528, 2016.

\bibitem{AdilKPS19}
Deeksha Adil, Rasmus Kyng, Richard Peng, and Sushant Sachdeva.
\newblock Iterative refinement for $\ell_p$-norm regression.
\newblock In {\em Proceedings of the Thirtieth Annual {ACM-SIAM} Symposium on
  Discrete Algorithms, {SODA} 2019, San Diego, California, USA, January 6-9,
  2019}, pages 1405--1424, 2019.

\bibitem{anstreicher96}
Kurt~M. Anstreicher.
\newblock Volumetric path following algorithms for linear programming.
\newblock {\em Math. Program.}, 76:245--263, 1996.

\bibitem{bourgain1989approximation}
Jean Bourgain, Joram Lindenstrauss, and V~Milman.
\newblock Approximation of zonoids by zonotopes.
\newblock {\em Acta mathematica}, 162(1):73--141, 1989.

\bibitem{BubeckE15}
S{\'{e}}bastien Bubeck and Ronen Eldan.
\newblock The entropic barrier: a simple and optimal universal self-concordant
  barrier.
\newblock In {\em Proceedings of The 28th Conference on Learning Theory, {COLT}
  2015, Paris, France, July 3-6, 2015}, page 279, 2015.

\bibitem{cohen2014solving}
Michael~B Cohen, Rasmus Kyng, Gary~L Miller, Jakub~W Pachocki, Richard Peng,
  Anup~B Rao, and Shen~Chen Xu.
\newblock Solving sdd linear systems in nearly m log 1/2 n time.
\newblock In {\em Proceedings of the forty-sixth annual ACM symposium on Theory
  of computing}, pages 343--352. ACM, 2014.

\bibitem{cohen2015uniform}
Michael~B Cohen, Yin~Tat Lee, Cameron Musco, Christopher Musco, Richard Peng,
  and Aaron Sidford.
\newblock Uniform sampling for matrix approximation.
\newblock In {\em Proceedings of the 2015 Conference on Innovations in
  Theoretical Computer Science}, pages 181--190. ACM, 2015.

\bibitem{CohenLS18}
Michael~B. Cohen, Yin~Tat Lee, and Zhao Song.
\newblock Solving linear programs in the current matrix multiplication time.
\newblock {\em CoRR}, abs/1810.07896, 2018.

\bibitem{CohenP15}
Michael~B. Cohen and Richard Peng.
\newblock $\ell_p$ row sampling by lewis weights.
\newblock In {\em Proceedings of the Forty-Seventh Annual {ACM} on Symposium on
  Theory of Computing, {STOC} 2015, Portland, OR, USA, June 14-17, 2015}, pages
  183--192, 2015.

\bibitem{cole1988parallel}
Richard Cole.
\newblock Parallel merge sort.
\newblock {\em SIAM Journal on Computing}, 17(4):770--785, 1988.

\bibitem{daitch2008faster}
Samuel~I Daitch and Daniel~A Spielman.
\newblock Faster approximate lossy generalized flow via interior point
  algorithms.
\newblock In {\em Proceedings of the 40th annual ACM symposium on Theory of
  computing}, pages 451--460. ACM, 2008.

\bibitem{dantzig1951maximization}
George~B Dantzig.
\newblock Maximization of a linear function of variables subject to linear
  inequalities.
\newblock {\em New York}, 1951.

\bibitem{deza2006central}
Antoine Deza, Eissa Nematollahi, Reza Peyghami, and Tam{\'a}s Terlaky.
\newblock The central path visits all the vertices of the klee--minty cube.
\newblock {\em Optimisation Methods and Software}, 21(5):851--865, 2006.

\bibitem{deza2008good}
Antoine Deza, Eissa Nematollahi, and Tam{\'a}s Terlaky.
\newblock How good are interior point methods? klee--minty cubes tighten
  iteration-complexity bounds.
\newblock {\em Mathematical Programming}, 113(1):1--14, 2008.

\bibitem{drineas2012fast}
Petros Drineas, Malik Magdon-Ismail, Michael~W Mahoney, and David~P Woodruff.
\newblock Fast approximation of matrix coherence and statistical leverage.
\newblock {\em Journal of Machine Learning Research}, 13(Dec):3475--3506, 2012.

\bibitem{even1975network}
Shimon Even and R~Endre Tarjan.
\newblock Network flow and testing graph connectivity.
\newblock {\em SIAM journal on computing}, 4(4):507--518, 1975.

\bibitem{freund_weighted}
RobertM. Freund.
\newblock Projective transformations for interior-point algorithms, and a
  superlinearly convergent algorithm for the w-center problem.
\newblock {\em Mathematical Programming}, 58(1-3):385--414, 1993.

\bibitem{GoldbergRao}
Andrew~V. Goldberg and Satish Rao.
\newblock Beyond the flow decomposition barrier.
\newblock {\em J. ACM}, 45(5):783--797, 1998.

\bibitem{gonzaga1992path}
Clovis~C Gonzaga.
\newblock Path-following methods for linear programming.
\newblock {\em SIAM review}, 34(2):167--224, 1992.

\bibitem{john1948extremum}
Fritz John.
\newblock Extremum problems with inequalities as subsidiary conditions, studies
  and essays presented to r. courant on his 60th birthday, january 8, 1948,
  187--204.
\newblock 1948.

\bibitem{karmarkar1984new}
Narendra Karmarkar.
\newblock A new polynomial-time algorithm for linear programming.
\newblock In {\em Proceedings of the sixteenth annual ACM symposium on Theory
  of computing}, pages 302--311. ACM, 1984.

\bibitem{k1973}
Alexander~V Karzanov.
\newblock On finding a maximum flow in a network with special structure and
  some applications.
\newblock {\em Matematicheskie Voprosy Upravleniya Proizvodstvom}, 5:81--94,
  1973.

\bibitem{Kelner2013}
Jonathan~A. Kelner, Lorenzo Orecchia, Aaron Sidford, and Zeyuan~Allen Zhu.
\newblock {A Simple, Combinatorial Algorithm for Solving SDD Systems in
  Nearly-Linear Time}.
\newblock January 2013.

\bibitem{khachiyan1980polynomial}
Leonid~G Khachiyan.
\newblock Polynomial algorithms in linear programming.
\newblock {\em USSR Computational Mathematics and Mathematical Physics},
  20(1):53--72, 1980.

\bibitem{khachiyan1996rounding}
Leonid~G Khachiyan.
\newblock Rounding of polytopes in the real number model of computation.
\newblock {\em Mathematics of Operations Research}, 21(2):307--320, 1996.

\bibitem{KoutisMP10}
Ioannis Koutis, Gary~L. Miller, and Richard Peng.
\newblock Approaching optimality for solving {SDD} systems.
\newblock In {\em Proceedings of the 51st Annual Symposium on Foundations of
  Computer Science}, 2010.

\bibitem{KMP11}
Ioannis Koutis, Gary~L. Miller, and Richard Peng.
\newblock A nearly-m log n time solver for sdd linear systems.
\newblock In {\em Foundations of Computer Science (FOCS), 2011 IEEE 52nd Annual
  Symposium on}, pages 590 --598, oct. 2011.

\bibitem{KyngLPSS16}
Rasmus Kyng, Yin~Tat Lee, Richard Peng, Sushant Sachdeva, and Daniel~A.
  Spielman.
\newblock Sparsified cholesky and multigrid solvers for connection laplacians.
\newblock In {\em Proceedings of the 48th Annual {ACM} {SIGACT} Symposium on
  Theory of Computing, {STOC} 2016, Cambridge, MA, USA, June 18-21, 2016},
  pages 842--850, 2016.

\bibitem{kyng2016approximate}
Rasmus Kyng and Sushant Sachdeva.
\newblock Approximate gaussian elimination for laplacians-fast, sparse, and
  simple.
\newblock In {\em 2016 IEEE 57th Annual Symposium on Foundations of Computer
  Science (FOCS)}, pages 573--582. IEEE, 2016.

\bibitem{lee2015sparsified}
Yin~Tat Lee, Richard Peng, and Daniel~A Spielman.
\newblock Sparsified cholesky solvers for sdd linear systems.
\newblock {\em arXiv preprint arXiv:1506.08204}, 2015.

\bibitem{lee2013ACDM}
Yin~Tat Lee and Aaron Sidford.
\newblock Efficient accelerated coordinate descent methods and faster
  algorithms for solving linear systems.
\newblock In {\em The 54th Annual Symposium on Foundations of Computer Science
  (FOCS)}, 2013.

\bibitem{lsInteriorPoint}
Yin~Tat Lee and Aaron Sidford.
\newblock Path finding i: Solving linear programs with $\backslash$\~{} o
  (sqrt(rank)) linear system solves.
\newblock {\em arXiv preprint arXiv:1312.6677}, 2013.

\bibitem{lsMaxflow}
Yin~Tat Lee and Aaron Sidford.
\newblock Path finding ii: An$\backslash$\~{} o (m sqrt (n)) algorithm for the
  minimum cost flow problem.
\newblock {\em arXiv preprint arXiv:1312.6713}, 2013.

\bibitem{leeS14}
Yin~Tat Lee and Aaron Sidford.
\newblock Path-finding methods for linear programming : Solving linear programs
  in \~o(sqrt(rank)) iterations and faster algorithms for maximum flow.
\newblock In {\em 55th Annual {IEEE} Symposium on Foundations of Computer
  Science, {FOCS} 2014, 18-21 October, 2014, Philadelphia, PA, {USA}}, pages
  424--433, 2014.

\bibitem{LeeS15}
Yin~Tat Lee and Aaron Sidford.
\newblock Efficient inverse maintenance and faster algorithms for linear
  programming.
\newblock In {\em {IEEE} 56th Annual Symposium on Foundations of Computer
  Science, {FOCS} 2015, Berkeley, CA, USA, 17-20 October, 2015}, pages
  230--249, 2015.

\bibitem{lee2018universal}
Yin~Tat Lee and Man-Chung Yue.
\newblock Universal barrier is $ n $-self-concordant.
\newblock {\em arXiv preprint arXiv:1809.03011}, 2018.

\bibitem{Lewis1978}
D.~Lewis.
\newblock Finite dimensional subspaces of $l_\{p\}$.
\newblock {\em Studia Mathematica}, 63(2):207--212, 1978.

\bibitem{li2012iterative}
Mu~Li, Gary~L Miller, and Richard Peng.
\newblock Iterative row sampling.
\newblock 2012.

\bibitem{madryFlow}
Aleksander Madry.
\newblock Navigating central path with electrical flows: from flows to
  matchings, and back.
\newblock In {\em Proceedings of the 54th Annual Symposium on Foundations of
  Computer Science}, 2013.

\bibitem{mahoney11survey}
Michael~W. Mahoney.
\newblock Randomized algorithms for matrices and data.
\newblock {\em Foundations and Trends in Machine Learning}, 3(2):123--224,
  2011.

\bibitem{megiddo_weighted}
Nimrod Megiddo.
\newblock Pathways to the optimal set in linear programming.
\newblock In Nimrod Megiddo, editor, {\em Progress in Mathematical
  Programming}, pages 131--158. Springer New York, 1989.

\bibitem{mut2013tight}
Murat Mut and Tam{\'a}s Terlaky.
\newblock A tight iteration-complexity upper bound for the mty
  predictor-corrector algorithm via redundant klee-minty cubes.
\newblock 2013.

\bibitem{nelson2012osnap}
Jelani Nelson and Huy~L Nguy{\^e}n.
\newblock Osnap: Faster numerical linear algebra algorithms via sparser
  subspace embeddings.
\newblock {\em arXiv preprint arXiv:1211.1002}, 2012.

\bibitem{nematollahi2008redundant}
Eissa Nematollahi and Tam{\'a}s Terlaky.
\newblock A redundant klee--minty construction with all the redundant
  constraints touching the feasible region.
\newblock {\em Operations Research Letters}, 36(4):414--418, 2008.

\bibitem{nematollahi2008simpler}
Eissa Nematollahi and Tam{\'a}s Terlaky.
\newblock A simpler and tighter redundant klee--minty construction.
\newblock {\em Optimization Letters}, 2(3):403--414, 2008.

\bibitem{Nesterov2003}
Yu~Nesterov.
\newblock {\em {Introductory Lectures on Convex Optimization: A Basic Course}},
  volume~I.
\newblock 2003.

\bibitem{nesterov1989self}
Yu~Nesterov and Arkadi Nemirovskiy.
\newblock {\em Self-concordant functions and polynomial-time methods in convex
  programming}.
\newblock USSR Academy of Sciences, Central Economic \& Mathematic Institute,
  1989.

\bibitem{nesterov1997self}
Yu~E Nesterov and Michael~J Todd.
\newblock Self-scaled barriers and interior-point methods for convex
  programming.
\newblock {\em Mathematics of Operations research}, 22(1):1--42, 1997.

\bibitem{Nesterov1994}
Yurii Nesterov and Arkadii~Semenovich Nemirovskii.
\newblock {\em Interior-point polynomial algorithms in convex programming},
  volume~13.
\newblock Society for Industrial and Applied Mathematics, 1994.

\bibitem{peng2013efficient}
Richard Peng and Daniel~A Spielman.
\newblock An efficient parallel solver for sdd linear systems.
\newblock {\em arXiv preprint arXiv:1311.3286}, 2013.

\bibitem{renegar1988polynomial}
James Renegar.
\newblock A polynomial-time algorithm, based on newton's method, for linear
  programming.
\newblock {\em Mathematical Programming}, 40(1-3):59--93, 1988.

\bibitem{schrijver2003combinatorial}
Alexander Schrijver.
\newblock {\em Combinatorial optimization: polyhedra and efficiency},
  volume~24.
\newblock Springer, 2003.

\bibitem{spielmanS08sparsRes}
Daniel~A Spielman and Nikhil Srivastava.
\newblock Graph sparsification by effective resistances.
\newblock {\em SIAM Journal on Computing}, 40(6):1913--1926, 2011.

\bibitem{spielman2004nearly}
Daniel~A Spielman and Shang-Hua Teng.
\newblock Nearly-linear time algorithms for graph partitioning, graph
  sparsification, and solving linear systems.
\newblock In {\em Proceedings of the thirty-sixth annual ACM symposium on
  Theory of computing}, pages 81--90. ACM, 2004.

\bibitem{todd1994scaling}
Michael~J Todd.
\newblock Scaling, shifting and weighting in interior-point methods.
\newblock {\em Computational Optimization and Applications}, 3(4):305--315,
  1994.

\bibitem{vaidya89convexSet}
Pravin~M. Vaidya.
\newblock A new algorithm for minimizing convex functions over convex sets
  (extended abstract).
\newblock In {\em FOCS}, pages 338--343, 1989.

\bibitem{vaidya1989speeding}
Pravin~M Vaidya.
\newblock Speeding-up linear programming using fast matrix multiplication.
\newblock In {\em Foundations of Computer Science, 1989., 30th Annual Symposium
  on}, pages 332--337. IEEE, 1989.

\bibitem{vaidya1987speeding}
Pravin~M Vaidya.
\newblock An algorithm for linear programming which requires o (((m+ n) n 2+(m+
  n) 1.5 n) l) arithmetic operations.
\newblock {\em Mathematical Programming}, 47(1-3):175--201, 1990.

\bibitem{vaidya90parallel}
Pravin~M. Vaidya.
\newblock Reducing the parallel complexity of certain linear programming
  problems (extended abstract).
\newblock In {\em FOCS}, pages 583--589, 1990.

\bibitem{vaidya1996new}
Pravin~M Vaidya.
\newblock A new algorithm for minimizing convex functions over convex sets.
\newblock {\em Mathematical Programming}, 73(3):291--341, 1996.

\bibitem{vaidya1993technique}
Pravin~M Vaidya and David~S Atkinson.
\newblock A technique for bounding the number of iterations in path following
  algorithms.
\newblock {\em Complexity in Numerical Optimization}, pages 462--489, 1993.

\bibitem{vavasis1996primal}
Stephen~A Vavasis and Yinyu Ye.
\newblock A primal-dual interior point method whose running time depends only
  on the constraint matrix.
\newblock {\em Mathematical Programming}, 74(1):79--120, 1996.

\bibitem{vempala2010recent}
Santosh~S Vempala.
\newblock Recent progress and open problems in algorithmic convex geometry.
\newblock In {\em LIPIcs-Leibniz International Proceedings in Informatics},
  volume~8. Schloss Dagstuhl-Leibniz-Zentrum fuer Informatik, 2010.

\bibitem{williams2012matrixmult}
Virginia~Vassilevska Williams.
\newblock Multiplying matrices faster than coppersmith-winograd.
\newblock In {\em Proceedings of the forty-fourth annual ACM symposium on
  Theory of computing}, pages 887--898. ACM, 2012.

\bibitem{woodruff2014sketching}
David~P Woodruff et~al.
\newblock Sketching as a tool for numerical linear algebra.
\newblock {\em Foundations and Trends{\textregistered} in Theoretical Computer
  Science}, 10(1--2):1--157, 2014.

\bibitem{ye2011interior}
Yinyu Ye.
\newblock {\em Interior point algorithms: theory and analysis}, volume~44.
\newblock John Wiley \& Sons, 2011.

\end{thebibliography}

\appendix

\section{Projection Matrices, Leverages Scores, and $\mathrm{\log\det}$\label{sec:app:lemmas}}

In this section, we prove various properties of projection matrices,
leverage scores, and the logarithm of the determinant that we use
throughout the paper. 

First we provide the following theorem which gives various properties
of projection matrices and leverage scores.
\begin{lem}[Projection Matrices]
\label{lem:tool:projection_matrices} Let $\mProj\in\Rmm$ be an
arbitrary orthogonal projection matrix and let $\mSigma=\mDiag(\mProj)$.
For all $i,j\in[m]$, $\vx,y\in\Rm$ , and $\mx=\mDiag(x)$ we have
\[
\begin{array}{lcl}
(1)\,\,\mSigma_{ii}=\sum_{j\in[m]}\mProj_{ij}^{(2)} & \enspace & (5)\,\,\norm{\mSigma^{-1}\mProj^{(2)}x}_{\infty}\leq\norm x_{\infty}\\
(2)\,\,\mZero\preceq\mProj^{(2)}\preceq\mSigma\preceq\iMatrix,(\text{in particular,}0\leq\mSigma_{ii}\leq1) & \enspace & (6)\,\,\sum_{i\in[m]}\mSigma_{ii}=\rank(\mProj)\\
(3)\,\,\mProj_{ij}^{(2)}\leq\mSigma_{ii}\mSigma_{jj} & \enspace & (7)\,\,\left|y^{\top}\mx\mProj^{(2)}y\right|\leq\norm y_{\mSigma}^{2}\cdot\norm x_{\mSigma}\\
(4)\,\,\norm{\mSigma^{-1}\mProj^{(2)}x}_{\infty}\leq\norm{\vx}_{\mSigma} & \enspace & (8)\,\,\left|y^{\top}\left(\mProj\circ\mProj\mx\mProj\right)y\right|\leq\norm y_{\mSigma}^{2}\cdot\norm x_{\mSigma}\,.
\end{array}
\]
\end{lem}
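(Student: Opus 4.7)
} The plan is to deduce all eight items from three ingredients: (a) the defining identities $\mProj = \mProj^{\top} = \mProj^{2}$, (b) the operator inequality $\mZero \preceq \mProj \preceq \mi$ that holds for every orthogonal projection, and (c) Cauchy--Schwarz applied to either the entries of $\mProj$ or the sum defining the quadratic forms in question. None of the steps is deep; the only subtlety will be in items (7) and (8), where I will need to split factors in Cauchy--Schwarz carefully and then invoke the already-established pointwise bound (3).

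I will handle the structural items first. For (1), the identity $\mProj = \mProj^{2}$ gives $\mSigma_{ii} = \mProj_{ii} = \sum_{j}\mProj_{ij}\mProj_{ji} = \sum_{j}\mProj_{ij}^{2}$. For (6), the same identity yields $\tr(\mProj) = \rank(\mProj)$ and the diagonal of $\mProj$ is $\mSigma$. For (2), $\mZero \preceq \mProj \preceq \mi$ is standard, and $\mProj^{(2)} \preceq \mSigma$ follows from the Schur product theorem: since $\mProj$ is PSD and $\mi - \mProj$ is PSD, $\mProj \circ (\mi - \mProj) = \mProj \circ \mi - \mProj^{(2)} = \mSigma - \mProj^{(2)}$ is PSD. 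For (3), Cauchy--Schwarz in the inner product induced by the PSD matrix $\mProj$ gives $\mProj_{ij}^{2} = (e_{i}^{\top}\mProj e_{j})^{2} \leq (e_{i}^{\top}\mProj e_{i})(e_{j}^{\top}\mProj e_{j}) = \mSigma_{ii}\mSigma_{jj}$.

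Items (4) and (5) are row-wise calculations. For (5), $[\mSigma^{-1}\mProj^{(2)}x]_{i} = \mSigma_{ii}^{-1}\sum_{j}\mProj_{ij}^{2}x_{j}$ is a convex combination of the $x_{j}$ by (1), so bounded in absolute value by $\|x\|_{\infty}$. For (4), apply Cauchy--Schwarz to the same expression as $\sum_{j}\mProj_{ij}^{2}x_{j} = \sum_{j}\mProj_{ij}\cdot(\mProj_{ij}x_{j})$, then use $\sum_{j}\mProj_{ij}^{2} = \mSigma_{ii}$ and bound $\sum_{j}\mProj_{ij}^{2}x_{j}^{2} \leq \mSigma_{ii}\sum_{j}\mSigma_{jj}x_{j}^{2} = \mSigma_{ii}\|x\|_{\mSigma}^{2}$ via (3).

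The main obstacles are (7) and (8). For (7), write the form as $\sum_{i,j}y_{i}x_{i}y_{j}\mProj_{ij}^{2} = \sum_{i,j}(y_{i}\mProj_{ij})(y_{j}x_{i}\mProj_{ij})$, apply Cauchy--Schwarz over $(i,j)$, use $\sum_{j}\mProj_{ij}^{2} = \mSigma_{ii}$ on the first factor, and on the second factor bound $\sum_{j}y_{j}^{2}\mProj_{ij}^{2} \leq \mSigma_{ii}\|y\|_{\mSigma}^{2}$ via (3); the claim then follows after consolidating the two $\mSigma_{ii}$ factors against $\sum_{i}x_{i}^{2}\mSigma_{ii} = \|x\|_{\mSigma}^{2}$. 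For (8), expand $(\mProj\mx\mProj)_{ij} = \sum_{k}\mProj_{ik}x_{k}\mProj_{jk}$ and regroup as $y^{\top}(\mProj \circ \mProj\mx\mProj)y = \sum_{k}x_{k}\,(y\circ p_{k})^{\top}\mProj(y\circ p_{k})$, where $p_{k}$ denotes the $k$-th column of $\mProj$. The key observation is that $\mProj \preceq \mi$ bounds the inner quadratic form by $\|y\circ p_{k}\|_{2}^{2} = \sum_{i}y_{i}^{2}\mProj_{ik}^{2}$, reducing the problem to bounding $\sum_{i}y_{i}^{2}\sum_{k}\mProj_{ik}^{2}|x_{k}|$. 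A final Cauchy--Schwarz in $k$ combined with (3) then gives the desired $\|y\|_{\mSigma}^{2}\|x\|_{\mSigma}$. The trickiest part is spotting that $(y\circ p_{k})^{\top}\mProj(y\circ p_{k})$ has a clean bound via $\mProj \preceq \mi$; once this is noticed, everything else is routine.
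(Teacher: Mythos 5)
Your plan is correct throughout, and while most items follow the same lines as the paper, you take a genuinely different route in two places. For (2), you get $\mSigma - \mProj^{(2)} \succeq \mZero$ by applying the Schur product theorem to $\mProj \circ (\mi - \mProj)$, whereas the paper observes that $\mSigma - \mProj^{(2)}$ is diagonally dominant by (1); your version is cleaner and makes the two PSD factors explicit, at the cost of invoking the Schur product theorem twice rather than once. For (3), you invoke Cauchy--Schwarz for the PSD bilinear form $u^{\top}\mProj v$ directly, while the paper expands $\mProj_{ij} = \sum_{k}\mProj_{ik}\mProj_{kj}$ and applies Cauchy--Schwarz to the sum; these are the same inequality in two guises. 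The most substantive divergence is (8): you regroup the quadratic form as $\sum_{k}x_{k}\,(y\circ p_{k})^{\top}\mProj(y\circ p_{k})$ and exploit $\mZero\preceq\mProj\preceq\mi$ to squeeze the inner form between $0$ and $\|y\circ p_{k}\|_{2}^{2}$, after which a single Cauchy--Schwarz plus (3) finishes; the paper instead applies Cauchy--Schwarz over $(i,j)$ first, then a second Cauchy--Schwarz inside, and two more applications of (2). Your decomposition is shorter, avoids the iterated Cauchy--Schwarz, and makes the role of the operator bound $\mProj\preceq\mi$ more transparent — it is a nicer proof of this item.
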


\begin{proof}
To prove (1), we simply note that by definition of a projection matrix
$\mProj=\mProj\mProj$ and therefore
\[
\mSigma_{ii}=\mProj_{ii}=\cordVec i^{\top}\mProj\cordVec i=\cordVec i^{\top}\mProj\mProj\cordVec i=\sum_{j\in[m]}\mProj_{ij}^{2}=\sum_{j\in[m]}\mProj_{ij}^{(2)}.
\]

To prove (2), we observe that since $\mProj$ is a projection matrix,
all its eigenvalues are either 0 or 1. Therefore, $\mSigma\preceq\iMatrix$
and by (1) $\mSigma-\mProj^{(2)}$ is diagonally dominant. Consequently,
$\mSigma-\mProj^{(2)}\succeq0$. Rearranging terms and using the well
known fact that the Shur product of two positive semi-definite matrices
is positive semi-definite yields (2).

To prove (3), we use $\mProj=\mProj\mProj$, Cauchy-Schwarz, and (1)
to derive
\[
\mProj_{ij}=\sum_{k\in[m]}\mProj_{ik}\mProj_{kj}\leq\sqrt{\left(\sum_{k\in[m]}\mProj_{ik}^{2}\right)\left(\sum_{k\in[m]}\mProj_{kj}^{2}\right)}=\sqrt{\mSigma_{ii}\mSigma_{jj}}\enspace.
\]
Squaring then yields (3).

To prove (4), we note that by the definition of $\mProj^{(2)}$ and
Cauchy-Schwarz, we have
\begin{equation}
\left|\cordVec i^{\top}\mProj^{(2)}\vx\right|=\left|\sum_{j\in[m]}\mProj_{ij}^{(2)}x_{j}\right|\leq\sqrt{\left(\sum_{j\in[m]}\mSigma_{jj}x_{j}^{2}\right)\cdot\sum_{j\in[m]}\frac{\mProj_{ij}^{(4)}}{\mSigma_{jj}}}.\label{eq:lem:proj1}
\end{equation}
Now, by (1) and (3), we know that
\begin{equation}
\sum_{j\in[m]}\frac{\mProj_{ij}^{4}}{\mSigma_{jj}}\leq\sum_{j\in[m]}\frac{\mProj_{ij}^{2}\mSigma_{ii}\mSigma_{jj}}{\mSigma_{jj}}=\mSigma_{ii}\sum_{j\in[m]}\mProj_{ij}^{2}=\mSigma_{ii}^{2}.\label{eq:lem:proj2}
\end{equation}
Since $\norm{\vx}_{\mSigma}\defeq\sqrt{\sum_{j\in[m]}\mSigma_{jj}\vx_{j}^{2}}$,
combining (\ref{eq:lem:proj1}) and (\ref{eq:lem:proj2}) yields $\left|\cordVec i^{\top}\mProj^{(2)}\vx\right|\leq\mSigma_{ii}\norm{\vx}_{\mSigma}$
as desired.

To prove (5), we note that
\[
\left|\cordVec i^{\top}\mProj^{(2)}\vx\right|=\left|\sum_{j\in[m]}\mProj_{ij}^{(2)}x_{j}\right|\leq\sum_{j\in[m]}\mProj_{ij}^{(2)}\left|x_{j}\right|=\mSigma_{ii}\norm x_{\infty}
\]

To prove (6), we note that all the eigenvalues of $\mProj$ are either
0 or 1 and $\sum_{i\in[m]}\Sigma_{ii}=\tr(\mProj)$.

To prove (7), we apply (4) and Cauchy Schwarz to show
\[
\left|y^{\top}\mx\mProj^{(2)}y\right|=\left|\sum_{i\in[m]}x_{i}\cdot y_{i}\indicVec i^{\top}\mProj^{(2)}y\right|\leq\sum_{i\in[m]}|x_{i}|\cdot|y_{i}|\cdot\mSigma_{ii}\cdot\norm y_{\mSigma}\leq\norm x_{\mSigma}\norm y_{\mSigma}\norm y_{\mSigma}\,.
\]

To prove (8), we note that by Cauchy Schwarz
\begin{align*}
\left|y^{\top}\left(\mProj\circ\mProj\mx\mProj\right)y\right| & =\left|\sum_{i,j\in[m]}y_{i}y_{j}\mProj_{ij}\left(\sum_{k\in[m]}\mProj_{ik}\mProj_{jk}x_{k}\right)\right|\\
 & \leq\sqrt{\left(\sum_{i,j\in[m]}|y_{i}|\cdot|y_{j}|\cdot\mProj_{ij}^{2}\right)\cdot\left(\sum_{i,j\in[m]}|y_{i}|\cdot|y_{j}|\cdot\left(\sum_{k\in[m]}\mProj_{ik}\mProj_{jk}x_{k}\right)^{2}\right)}\,.
\end{align*}
Letting $|x|$ and $|y|$ be the vectors whose entries are the absolute
values of the entries of $x$ and $y$ we respectively, see that by
(2) we have 
\[
\sum_{i,j\in[m]}|y_{i}|\cdot|y_{j}|\cdot\mProj_{ij}^{2}=\norm{|y|}_{\mProj^{(2)}}^{2}\leq\norm{|y|}_{\mSigma}^{2}=\norm y_{\mSigma}^{2}
\]
and 
\[
\sum_{i,j\in[m]}|y_{i}|\cdot|y_{j}|\cdot\left(\sum_{k\in[m]}\mProj_{ik}\mProj_{jk}x_{k}\right)^{2}=\sum_{i,j\in[m]}\left(\sum_{k\in[m]}\left[\mProj_{ik}\sqrt{|y_{i}||x_{k}|}\right]\left[\mProj_{jk}\sqrt{|y_{j}||x_{k}|}\right]\right)^{2}\,.
\]
Applying Cauchy Schwarz twice then yields that
\begin{align*}
\sum_{i,j\in[m]}|y_{i}|\cdot|y_{j}|\cdot\left(\sum_{k\in[m]}\mProj_{ik}\mProj_{jk}x_{k}\right)^{2} & \leq\left(\sum_{i,k\in[m]}|y_{i}|\mProj_{ik}^{2}|x_{k}|\right)^{2}=\left(|y|^{\top}\mProj^{(2)}|x|\right)^{2}\\
 & \leq\norm{|y|}_{\mProj^{(2)}}^{2}\norm{|x|}_{\mProj^{(2)}}^{2}\leq\norm{|y|}_{\mSigma}^{2}\norm{|x|}_{\mSigma}^{2}=\norm y_{\mSigma}^{2}\norm x_{\mSigma}^{2}\,.
\end{align*}
Combining these inequalities than yields the desired bound on $\left|y^{\top}\left(\mProj\circ\mProj\mx\mProj\right)y\right|$.
\end{proof}
Next, we derive various matrix calculus formulas relating the projection
matrix with the log determinant. We start by computing the derivative
of the \emph{volumetric barrier function}, $f(w)\defeq\log\det(\ma^{\top}\mw\ma)$.
\begin{lem}[Derivative of Volumetric Barrier]
\label{lem:deriv:log_det} For full rank matrix $\ma\in\Rnm$ let
$f:\Rpm\rightarrow\R$ be given by $f(w)\defeq\log\det(\ma^{\top}\mw\ma)$.
For any $w\in\Rpm$, we have $\nabla f(w)=\mw^{-1}\sigma(\mw^{\frac{1}{2}}\ma)$.
\end{lem}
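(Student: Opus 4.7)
The plan is to apply the standard matrix calculus identity for the derivative of $\log\det$ and then recognize the resulting expression as a rescaled leverage score. Concretely, writing $M(w) \defeq \ma^\top \mw \ma$, the chain rule together with the well-known identity $\partial_{w_i} \log\det M(w) = \tr\bigl(M(w)^{-1} \partial_{w_i} M(w)\bigr)$ reduces the problem to a computation at each coordinate $i \in [m]$.

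I would then compute $\partial_{w_i} M(w) = \ma^\top \mathbf{E}_{ii} \ma = a_i a_i^\top$, where $a_i \in \R^n$ is the $i$-th row of $\ma$ viewed as a column vector and $\mathbf{E}_{ii}$ is the coordinate matrix. Substituting and using the cyclic property of trace, this gives
\[
\partial_{w_i} f(w) = \tr\bigl( M(w)^{-1} a_i a_i^\top \bigr) = a_i^\top (\ma^\top \mw \ma)^{-1} a_i.
\]

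Finally, I would recognize that by the definition $\sigma(\mb) \defeq \diag\bigl(\mb (\mb^\top \mb)^{-1} \mb^\top\bigr)$ applied to $\mb = \mw^{1/2} \ma$ (whose $i$-th row is $w_i^{1/2} a_i^\top$), the $i$-th leverage score equals
\[
\sigma(\mw^{1/2} \ma)_i = (w_i^{1/2} a_i)^\top (\ma^\top \mw \ma)^{-1} (w_i^{1/2} a_i) = w_i \cdot a_i^\top (\ma^\top \mw \ma)^{-1} a_i.
\]
Dividing by $w_i$ matches the coordinate-wise formula for $\partial_{w_i} f$ derived above, proving $\nabla f(w) = \mw^{-1} \sigma(\mw^{1/2} \ma)$ componentwise.

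There is no real obstacle here: the only mild subtlety is correctly tracking the factor of $w_i^{1/2}$ on each side of the leverage score expression, which is purely bookkeeping. The argument relies only on the well-known $\log\det$ derivative identity and the definition of leverage scores, both of which are standard and do not require any of the more technical machinery developed earlier in the paper.
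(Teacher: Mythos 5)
Your proposal is correct and follows essentially the same route as the paper: apply the standard $\partial_{w_i}\log\det M = \tr(M^{-1}\partial_{w_i}M)$ identity, observe $\partial_{w_i}(\ma^\top\mw\ma) = a_i a_i^\top$, and identify the resulting quadratic form with $w_i^{-1}\sigma(\mw^{1/2}\ma)_i$. The only difference is that you spell out the final identification more explicitly; no gap.
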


\begin{proof}
Using the derivative of $\log\det$, we have that for all $i\in[m]$
\[
\frac{\partial f(w)}{\partial w_{i}}=\tr\left[\left(\ma^{\top}\mw\ma\right)^{-1}\frac{\partial}{\partial w_{i}}\left(\ma^{\top}\mw\ma\right)\right]=\tr\left[\left(\ma^{\top}\mw\ma\right)^{-1}\ma^{\top}\cordVec i\cordVec i^{\top}\ma\right]=w_{i}^{-1}\sigma\left(\mw^{\frac{1}{2}}\ma\right)_{i}\,.
\]
\end{proof}
Next we bound the rate of change of entries of the projection matrix.
\begin{lem}[Derivative of Projection Matrix]
\label{lem:deriv:proj} Given full rank $\ma\in\Rnm$ and $w\in\R_{>0}^{m}$
we have
\[
D_{w}\mProj(\mw\ma)[h]=\mDelta\mProj(\mw\ma)+\mProj(\mw\ma)\mDelta-2\mProj(\mw\ma)\mDelta\mProj(\mw\ma)
\]
where $\mw=\mDiag(w)$ and $\mDelta=\mDiag(h/w)$. In particular,
we have that 
\[
D_{w}\sigma(\mw\ma)[h]=2\mLambda(\mw\ma)\mw^{-1}h.
\]
\end{lem}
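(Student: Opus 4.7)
The plan is a direct matrix-calculus computation. Set $B \defeq \mw\ma$ and $\mH \defeq \mDiag(h)$, so that the directional derivative of $B$ with respect to $w$ in direction $h$ is $DB[h] = \mH\ma$. Since $\mProj(B) = B(B^\top B)^{-1}B^\top$, the product rule combined with the identity $D[(B^\top B)^{-1}] = -(B^\top B)^{-1}(DB^\top\, B + B^\top DB)(B^\top B)^{-1}$ gives
\[
D_w\mProj(B)[h] = DB\,(B^\top B)^{-1}B^\top + B(B^\top B)^{-1}DB^\top - B(B^\top B)^{-1}\bigl(DB^\top B + B^\top DB\bigr)(B^\top B)^{-1}B^\top.
\]

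Next I would simplify each term using the crucial observation that $\mProj(B)\mw^{-1} = \mw\ma(\ma^\top\mw^2\ma)^{-1}\ma^\top$ and its transpose. Substituting $DB = \mH\ma$, the first term becomes $\mH\ma(\ma^\top\mw^2\ma)^{-1}\ma^\top\mw = \mH\mw^{-1}\mProj(B) = \mDelta\mProj(B)$, since $\mDelta = \mw^{-1}\mH$. The second term is the transpose and equals $\mProj(B)\mDelta$. For the last term, since $\mw$ and $\mH$ are diagonal and hence commute, $DB^\top B = B^\top DB = \ma^\top\mw\mH\ma$, so the bracketed quantity contributes a factor of $2$. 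Factoring and using $\mProj(B)\mw^{-1}\cdot\mw\mH\cdot\mw^{-1}\mProj(B) = \mProj(B)\mDelta\mProj(B)$ yields the claimed formula.

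For the second identity, apply $\diag(\cdot)$ to the matrix formula just established, using that $\sigma(\mw\ma) = \diag(\mProj(\mw\ma))$. Because $\mDelta$ is diagonal, $\diag(\mDelta\mProj(B))_i = \mDelta_{ii}\mProj(B)_{ii} = (h_i/w_i)\sigma(\mw\ma)_i$, so the first two terms contribute $2\mSigma(\mw\ma)\mw^{-1}h$. The third term contributes $\diag(\mProj(B)\mDelta\mProj(B))_i = \sum_j \mProj(B)_{ij}^2\,(h_j/w_j) = [\mProj^{(2)}(\mw\ma)\mw^{-1}h]_i$. Combining these via the definition $\mLambda(\mw\ma) = \mSigma(\mw\ma) - \mProj^{(2)}(\mw\ma)$ yields $D_w\sigma(\mw\ma)[h] = 2\mLambda(\mw\ma)\mw^{-1}h$.

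There is no real obstacle here; the only thing requiring a bit of care is the bookkeeping with $\mw$, $\mH$, and $\mDelta$ (in particular the commutativity of diagonals) to ensure the factors land in the correct order, and the identification of $\diag(\mProj\mDelta\mProj)$ with $\mProj^{(2)}\mw^{-1}h$. Once these are handled, both claims follow immediately from the standard derivative formulas for the inverse and for $\log\det$-type projection objects used earlier in the appendix.
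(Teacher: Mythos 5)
Your proof is correct and follows essentially the same route as the paper: differentiate $\mProj(\mw\ma)=\mw\ma(\ma^{\top}\mw^{2}\ma)^{-1}\ma^{\top}\mw$ via the product rule and the derivative of a matrix inverse, recognize each resulting block as $\mDelta\mProj$, $\mProj\mDelta$, or $\mProj\mDelta\mProj$, and then read off the diagonal to get the leverage-score formula through $\mLambda = \mSigma - \mProj^{(2)}$. The bookkeeping with $\mw$, $\mH$, and $\mDelta$ is handled correctly, so nothing is missing.
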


\begin{proof}
Note that
\[
\mProj(\mw\ma)=\mw\ma(\ma^{\top}\mw^{2}\ma)^{-1}\ma^{\top}\mw.
\]
Using the derivative of matrix inverse, we have that
\begin{align*}
D_{w}\mProj(\mw\ma)[h]= & \mh\ma(\ma^{\top}\mw^{2}\ma)^{-1}\ma^{\top}\mw+\mw\ma(\ma^{\top}\mw\ma)^{-1}\ma^{\top}\mh\\
 & -2\mw\ma(\ma^{\top}\mw^{2}\ma)^{-1}\ma^{\top}\mh\mw\ma(\ma^{\top}\mw^{2}\ma)^{-1}\ma^{\top}\mw\\
= & \mDelta\mProj+\mProj\mDelta-2\mProj\mDelta\mProj\,.
\end{align*}
Consequently, 
\begin{align*}
D_{w}\sigma(\mw\ma)_{i}[h] & =[D_{w}\mProj(\mw\ma)[h]]_{ii}=2\mDelta_{ii}\mProj_{ii}-2(\mProj\mDelta\mProj)_{ii}\\
 & =2\frac{\sigma_{i}h_{i}}{w_{i}}-2\sum_{j\in[m]}\mProj_{ij}^{2}\mDelta_{jj}=2\left[\left(\mSigma-\mProj^{(2)}\right)(h/w)\right]_{i}=2(\mLambda(h/w))_{i}\,.
\end{align*}
\end{proof}
In the following lemma we provide a general formula regarding the
derivative of a function that appears throughout the paper.
\begin{lem}[Potential Function Derivative]
\label{lem:all_volumetric_derivs} For non-degenerate $\ma\in\R^{m\times n}$
and $q>0$ with $q\neq2$ let $p(x,w)\defeq\ln\det(\ma_{x}^{\top}\mw^{1-\frac{2}{q}}\ma_{x})$
for all $x\in\R^{n}$ with $\ma x>b$ and all $w\in\R_{>0}^{m}$ where
$\ma_{x}\defeq\ms_{x}^{-1}\ma$, $\ms_{x}=\mDiag(\ma x-b)$, and $w\in\mDiag(w)$.
Then, the following hold
\begin{align*}
\nabla_{x}p(x,w) & =-2\ma_{x}^{\top}\sigma_{x,w}, & \nabla_{w}p(x,w) & =c_{q}\mw^{-1}\sigma_{x,w},\\
\nabla_{xx}^{2}p(x,w) & =\ma_{x}^{\top}(2\mSigma_{x,w}+4\mLambda_{x,w})\ma_{x}, & \nabla_{ww}^{2}p(x,w) & =-c_{q}\mw^{-1}(\mSigma_{x,w}-c_{q}\mLambda_{x,w})\mw^{-1},\text{ and}\\
\nabla_{xw}^{2}p(x,w) & =-2c_{q}\ma_{x}^{\top}\mLambda_{x,w}\mw^{-1}
\end{align*}
where, $c_{q}\defeq1-\frac{2}{q}$, $\sigma_{x,w}\defeq\sigma(\mw^{\frac{1}{2}-\frac{1}{q}}\ma_{x})$,
$\mSigma_{x,w}\defeq\mSigma(\mw^{\frac{1}{2}-\frac{1}{q}}\ma_{x})$,
and $\mLambda_{x,w}\defeq\mLambda(\mw^{\frac{1}{2}-\frac{1}{q}}\ma_{x})$.
\end{lem}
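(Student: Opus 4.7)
The plan is to reduce everything to the two building blocks we already have: Lemma~\ref{lem:deriv:log_det} (first derivative of $\log\det(\ma^{\top}\mw\ma)$ in the diagonal weight) and Lemma~\ref{lem:deriv:proj} (derivative of the projection matrix and of leverage scores in the diagonal weight). Since $\ms_{x}$, $\mw$ and all factors are diagonal and hence commute, we can rewrite
\[
p(x,w)=\ln\det\!\left(\ma^{\top}\mv\ma\right)\quad\text{with}\quad v_{i}\defeq w_{i}^{1-\tfrac{2}{q}}\,s_{x,i}^{-2},
\]
and observe the key identity $\mv^{1/2}\ma=\mw^{(1/2)-(1/q)}\ma_{x}$, so that $\sigma(\mv^{1/2}\ma)=\sigma_{x,w}$ and similarly for $\mSigma_{x,w},\mLambda_{x,w}$. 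The partial derivatives of $v$ are immediate: $\partial v_{i}/\partial w_{j}=c_{q}\,v_{i}/w_{i}\cdot\delta_{ij}$ and $\partial v_{i}/\partial x_{k}=-2v_{i}[\ma_{x}]_{ik}$.

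First I would treat the $w$-derivatives. By Lemma~\ref{lem:deriv:log_det}, $\partial p/\partial v_{i}=v_{i}^{-1}\sigma_{x,w,i}$, and the chain rule immediately gives $\nabla_{w}p=c_{q}\mw^{-1}\sigma_{x,w}$. For the second derivative in $w$, differentiating again and using Lemma~\ref{lem:deriv:proj} after another chain-rule step through $\mu\defeq\mw^{(1/2)-(1/q)}$ (so that $\sigma_{x,w}=\sigma(\mu\ma_{x})$ and $d\mu_{i}/dw_{i}=(c_{q}/2)\mu_{i}/w_{i}$) yields $\partial\sigma_{x,w,i}/\partial w_{j}=c_{q}(\mLambda_{x,w})_{ij}/w_{j}$, which then gives the claimed $\nabla_{ww}^{2}p=-c_{q}\mw^{-1}(\mSigma_{x,w}-c_{q}\mLambda_{x,w})\mw^{-1}$.

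For the $x$-derivatives I would work directly with $\mb\defeq\mw^{(1/2)-(1/q)}\ma_{x}$ so that $p(x,w)=\ln\det(\mb^{\top}\mb)$. Using $[\ma_{x}]_{ij}=\ma_{ij}/s_{x,i}$ one gets the crucial identity $D_{x}\mb[h]=-\mh\mb$ with $\mh\defeq\mDiag(\ma_{x}h)$, because the diagonal factors $\mw^{(1/2)-(1/q)}$ and $\mDiag(\ma_{x}h)$ commute. Then
\[
D_{x}p[h]=\tr\!\bigl[(\mb^{\top}\mb)^{-1}D_{x}(\mb^{\top}\mb)[h]\bigr]=-2\tr[\mProj(\mb)\mh]=-2\sigma_{x,w}^{\top}(\ma_{x}h),
\]
giving $\nabla_{x}p=-2\ma_{x}^{\top}\sigma_{x,w}$. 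For $\nabla_{xx}^{2}p$, I would differentiate $D_{x}p[h]=-2\tr[\mProj\mh]$ once more in $x$. The projection-matrix derivative formula derived in Lemma~\ref{lem:deriv:proj} generalizes (by the same computation) to an arbitrary perturbation $d\mb=-\mh\mb$ and yields $D_{x}\mProj[h]=-\mh\mProj-\mProj\mh+2\mProj\mh\mProj$, while $D_{x}\mh[h]=-\mh^{2}$ since $D_{x}(\ma_{x}h)[h]=-(\ma_{x}h)^{(2)}$. Plugging in and simplifying with $y\defeq\ma_{x}h$ gives
\[
D_{x}^{2}p[h,h]=2y^{\top}\mSigma_{x,w}y+4y^{\top}\bigl(\mSigma_{x,w}-\mProj_{x,w}^{(2)}\bigr)y=y^{\top}(2\mSigma_{x,w}+4\mLambda_{x,w})y,
\]
which is the required formula. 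Finally the mixed derivative $\nabla_{xw}^{2}p$ falls out by differentiating $\nabla_{x}p=-2\ma_{x}^{\top}\sigma_{x,w}$ in $w$ and invoking the identity $\partial\sigma_{x,w,i}/\partial w_{j}=c_{q}(\mLambda_{x,w})_{ij}/w_{j}$ already derived.

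This plan reduces everything to mechanical chain rule; the main obstacle to keep in mind is purely bookkeeping, namely making sure the reduction to $\mb=\mw^{(1/2)-(1/q)}\ma_{x}$ is used consistently so that $\sigma_{x,w},\mSigma_{x,w},\mLambda_{x,w}$ are all identified with projection quantities of the same matrix, and that when applying Lemma~\ref{lem:deriv:proj} one correctly handles the square-root chain rule (for the $w$-side) and the non-scalar perturbation $d\mb=-\mh\mb$ (for the $x$-side). Once those identifications are made, all four formulas drop out in a few lines apiece.
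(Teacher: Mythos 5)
Your proposal is correct and follows essentially the same route as the paper: both proofs reduce $p(x,w)$ to $\ln\det(\ma^\top\mv\ma)$ with $v_i=w_i^{1-2/q}s_{x,i}^{-2}$ and then derive all five formulas from Lemma~\ref{lem:deriv:log_det} and Lemma~\ref{lem:deriv:proj} via the chain rule. The only cosmetic difference is that for the $x$-second-derivative you differentiate the projection matrix directly under the matrix perturbation $D_x\mb[h]=-\mh\mb$ (which is a valid restatement of Lemma~\ref{lem:deriv:proj} since $d\mb=\mDelta\mb$ is the only structure that lemma uses), whereas the paper applies the scalar chain rule through the slack variable $s$ twice; both routes give the same computation.
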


\begin{proof}
To simplify the calculations, throughout this proof we overload notation
and let $p(s,w)\defeq\ln\det(\ma^{\top}\ms^{-1}\mw^{1-\frac{2}{q}}\ms^{-1}\ma)$
where $\ms\defeq\mDiag(s)$ and let $s_{x}\defeq\ma x-b$. Since $\ms_{x}=\mDiag(s_{x})$
and $s_{x}$ is a linear transformation of $x$ this implies that
the derivatives with respect to $x$ to follow immediately from the
derivatives with respect to $s$.

For $\nabla_{x}p(x,w)$, Lemma~\ref{lem:deriv:log_det} shows that
\begin{equation}
\frac{\partial}{\partial s_{i}}p(s_{x},w)=s_{i}^{2}w_{i}^{-1+\frac{2}{q}}[\sigma_{x,w}]_{i}\cdot(-2s_{i}^{-3}w_{i}^{1-\frac{2}{q}})=-2\frac{[\sigma_{x,w}]_{i}}{s_{i}}\,.\label{eq:partials_1}
\end{equation}
Therefore $\nabla_{x}p(x,w)=-2\ma_{x}^{\top}\sigma_{x,w}$ by chain
rule.

For $\nabla_{w}p(x,w)$, Lemma~\ref{lem:deriv:log_det} and chain
rule shows that
\begin{equation}
\frac{\partial}{\partial w_{i}}p(x,w)=s_{i}^{2}w_{i}^{-1+\frac{2}{q}}[\sigma_{x,w}]_{i}\cdot\left(s_{i}^{-2}\left(1-\frac{2}{q}\right)w_{i}^{-\frac{2}{q}}\right)=c_{q}\frac{[\sigma_{x,w}]_{i}}{w_{i}}\,.\label{eq:partials_2}
\end{equation}
Therefore $\nabla_{w}p(x,w)=c_{q}\mw^{-1}\sigma_{x,w}$.

For $\nabla_{xx}p(x,w)$, the formula for $\frac{\partial}{\partial s_{i}}p(s_{x},w)$
given by (\ref{eq:partials_1}) and Lemma~\ref{lem:deriv:proj} yields
\begin{align*}
\frac{\partial^{2}}{\partial s_{i}\partial s_{j}}p(s_{x},w) & =2\frac{[\sigma_{x,w}]_{i}}{s_{i}^{2}}\indicVec{i=j}-4\frac{[\mLambda_{x,w}]_{ij}}{s_{i}}\cdot s_{j}w_{j}^{-\frac{1}{2}+\frac{1}{q}}\cdot(-s_{j}^{-2}w_{j}^{\frac{1}{2}-\frac{1}{q}})\\
 & =2\frac{[\sigma_{x,w}]_{i}}{s_{i}^{2}}\indicVec{i=j}+4\frac{[\mLambda_{x,w}]_{ij}}{s_{i}s_{j}}\,.
\end{align*}
Therefore, $\nabla_{xx}^{2}p(x,w)=\ma_{x}^{\top}(2\mSigma_{x,w}+4\mLambda_{x,w})\ma_{x}$.

For $\nabla_{ww}p(x,w)$, the formula for $\frac{\partial}{\partial w_{i}}p(x,w)$
given by (\ref{eq:partials_2}) and Lemma~\ref{lem:deriv:proj} yields
\begin{align*}
\frac{\partial^{2}}{\partial w_{i}\partial w_{j}}p(x,w) & =-c_{q}\frac{[\sigma_{x,w}]_{i}}{w_{i}^{2}}\indicVec{i=j}+2c_{q}\frac{[\mLambda_{x,w}]_{ij}}{w_{i}}\cdot s_{j}w_{j}^{-\frac{1}{2}+\frac{1}{q}}\cdot\left(\frac{1}{2}-\frac{1}{p}\right)(s_{j}^{-1}w_{j}^{-\frac{1}{2}-\frac{1}{q}})\\
 & =-c_{q}\frac{[\sigma_{x,w}]_{i}}{w_{i}^{2}}\indicVec{i=j}+c_{q}^{2}\frac{[\mLambda_{x,w}]_{ij}}{w_{i}w_{j}}\,.
\end{align*}
Therefore $\nabla_{ww}^{2}p(x,w)=-c_{q}\mw^{-1}(\mSigma_{x,w}-c_{q}\mLambda_{x,w})\mw^{-1}$.

For $\nabla_{xw}p(x,w)$, the formula for $\frac{\partial}{\partial s_{i}}p(s_{x},w)$
given by (\ref{eq:partials_1}) and Lemma~\ref{lem:deriv:proj} yield
\[
\frac{\partial^{2}}{\partial s_{i}\partial w_{j}}p(s_{x},w)=-4\frac{[\mLambda_{x,w}]_{ij}}{s_{i}}\cdot s_{j}w_{j}^{-\frac{1}{2}+\frac{1}{q}}\cdot\left(\frac{1}{2}-\frac{1}{p}\right)(s_{j}^{-1}w_{j}^{-\frac{1}{2}-\frac{1}{q}})=-2c_{q}\frac{[\mLambda_{x,w}]_{ij}}{s_{i}w_{j}}\,.
\]
Therefore, $-2c_{q}\ma_{x}^{\top}\mLambda_{x,w}\mw^{-1}$ by chain
rule. 
\end{proof}
\begin{lem}
\label{lem:newton_step}For any vector $v$, any positive vector $w$
and matrix $\ma$, we have that
\[
\argmin_{\ma^{\top}x=0}v^{\top}x+\frac{1}{2}\|x\|_{w}^{2}=x_{*}\defeq-\mw^{-1}v+\mw^{-1}\ma(\ma^{\top}\mw^{-1}\ma)^{-1}\ma^{\top}\mw^{-1}v.
\]
\end{lem}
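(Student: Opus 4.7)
The plan is to solve this equality-constrained quadratic program by the standard method of Lagrange multipliers. The objective $v^\top x + \tfrac{1}{2}\|x\|_w^2$ is strictly convex because $w \in \R_{>0}^m$ (so the Hessian $\mw$ is positive definite), and the constraint set $\{x : \ma^\top x = 0\}$ is nonempty and affine; hence a unique global minimizer exists and is characterized by the KKT stationarity conditions.

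First I would form the Lagrangian $L(x,\lambda) = v^\top x + \tfrac{1}{2} x^\top \mw x + \lambda^\top \ma^\top x$ with $\lambda \in \R^n$, and compute $\nabla_x L(x,\lambda) = v + \mw x + \ma \lambda$. Setting this to zero gives $x = -\mw^{-1}(v + \ma \lambda)$. Plugging this expression into the primal feasibility condition $\ma^\top x = 0$ yields
\[
\ma^\top \mw^{-1} \ma \lambda = -\ma^\top \mw^{-1} v,
\]
and since $\ma^\top \mw^{-1} \ma$ is invertible (using that $\ma$ is non-degenerate, i.e.\ has full column rank, and $\mw^{-1}$ is positive definite), we can solve $\lambda = -(\ma^\top \mw^{-1} \ma)^{-1} \ma^\top \mw^{-1} v$. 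Substituting this $\lambda$ back into $x = -\mw^{-1}(v + \ma \lambda)$ gives exactly the claimed formula $x_* = -\mw^{-1} v + \mw^{-1} \ma (\ma^\top \mw^{-1}\ma)^{-1} \ma^\top \mw^{-1} v$.

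There is essentially no obstacle: the only thing to verify cleanly is that $x_*$ is indeed feasible, which follows by direct computation $\ma^\top x_* = -\ma^\top \mw^{-1} v + \ma^\top \mw^{-1} \ma (\ma^\top \mw^{-1} \ma)^{-1} \ma^\top \mw^{-1} v = 0$, and that $x_*$ is the minimizer, which follows from strict convexity of the objective together with the KKT conditions being sufficient for convex equality-constrained problems. Since the lemma is only invoked in the derivation \eqref{eq:newton_step} to identify the projected Newton step, this short KKT derivation is all that is required; no further subtlety (such as conditioning of $\mw$ or choice of norm) needs to be addressed.
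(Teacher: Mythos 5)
Your proof is correct and takes essentially the same approach as the paper: both invoke the first-order optimality conditions for the strictly convex, equality-constrained quadratic program (the paper verifies directly that $v+\mw x_*\in\im(\ma)$ and $\ma^\top x_*=0$, while you derive $x_*$ by solving for the Lagrange multiplier, which is the same KKT argument run forward rather than as a verification).
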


\begin{proof}
Let $f(x)\defeq v^{\top}x+\frac{1}{2}\norm x_{w}^{2}$. Note that
$\grad f(x)=v+\mw x$ and consequently, $x\in\ker(\ma^{\top})$ is
optimal if and only if $v+\mw x\perp\ker(\ma^{\top}x)$, i.e. $v+\mw x\in\im(\ma)$,
and $\ma^{\top}x=0$. Since $\ma^{\top}x_{*}=0$ and $w+\mw x=\ma(\ma^{\top}\mw^{-1}\ma)^{-1}\ma^{\top}\mw^{-1}v\in\im(\ma^{\top})$
the result follows.
\end{proof}

\section{Lewis Weight Computation\label{sec:weights_full:computing}}

Here, we describe how to efficiently compute approximations to Lewis
weights and ultimately prove Theorem~\ref{thm:lewisweightexactapproxfull}
and Theorem~\ref{thm:lewisexactfull} (the Lewis weight computation
results claimed and used in Section~\ref{sec:algorithms}). We achieve
our results by a combination of a number of technical tools, including
projected gradient descent (for computing Lewis weights exactly in
Section~\ref{sec:lewis_exact} given a good initial weight), the
Johnson-Lindenstrauss lemma (for computing Lewis weights approximately
in Section~\ref{sec:approx_weight} given a good initial weight),
and \emph{homotopy methods} (for computing initial weights and completing
the proofs of the main theorems in Section~\ref{sec:lewis_initial}). 

Throughout the remainder of this section we let $\ma\in\R^{m\times n}$
denote an arbitrary non-degenerate matrix and $p\in(0,\infty)$ with
$p\neq2$. Further we let $\lpweight\defeq\lpweight(\ma)$ and $\mw_{p}\defeq\mDiag(\lpweight)$.

\subsection{Exact Computation}

\label{sec:lewis_exact}

Since Lewis weight can be found by the minimizer of a convex optimization
problem (Lemma \ref{lem:unique_lewis}), we can use the gradient descent
method directly to minimize $\mathcal{V}_{p}^{\ma}(w)$. Indeed, in
this section we show how applying the gradient descent method in a
carefully scaled space allows us to compute the weight to good accuracy
in $\otilde(\mathrm{poly}(p))$ iterations. This results makes two
assumptions to compute the weight: (1) we compute the gradient of
$\mathcal{V}_{p}^{\ma}(w)$ exactly and (2) we are given a weight
that is not too far from the true weight. In the remaining subsection
we show how to address these issues. 

First we state the following theorem regarding gradient descent method
we use in our analysis. This theorem shows that if we take repeated
projected gradient steps then we can achieve linear convergence up
to bounds on how much the Hessian of the function changes over the
domain of interest.
\begin{thm}[Simple Constrained Minimization for Twice Differentiable Function]
\label{thm:constrained_minimization} Let $\mh$ be a positive definite
matrix and $Q\subseteq\Rm$ be a convex set. Let $f:Q\rightarrow\R$
be a twice differentiable function. Suppose that there are constants
$0\leq\mu\leq L$ such that for all $\vx\in Q$ we have $\mu\cdot\mh\preceq\nabla^{2}f(\vx)\preceq L\cdot\mh$.
For any $\vx^{(0)}\in Q$ and any $k\geq0$ if we apply the update
rule
\[
\vx^{(k+1)}=\argmin_{\vx\in Q}f(x^{(k)})+\nabla f(\vx^{(k)})^{\top}(\vx-\vx^{(k)})+\frac{L}{2}\norm{\vx-\vx^{(k)}}_{\mh}^{2}
\]
then it follows that 
\[
\norm{\vx^{(k)}-\vx^{*}}_{\mh}^{2}\leq\left(1-\frac{\mu}{L}\right)^{k}\norm{\vx^{(0)}-\vx^{*}}_{\mh}^{2}.
\]
\end{thm}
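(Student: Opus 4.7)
The plan is to prove this by the standard analysis of projected gradient descent for smooth and strongly convex functions, observing that the Hessian bound $\mu \mh \preceq \nabla^2 f(x) \preceq L \mh$ simultaneously gives quadratic upper and lower bounds in the $\mh$-norm. Concretely, from Taylor's theorem together with these Hessian bounds I will obtain, for every $x, y \in Q$,
\begin{equation*}
f(x) + \langle \nabla f(x), y - x\rangle + \tfrac{\mu}{2}\|y-x\|_{\mh}^{2} \;\leq\; f(y) \;\leq\; f(x) + \langle \nabla f(x), y-x\rangle + \tfrac{L}{2}\|y-x\|_{\mh}^{2}.
\end{equation*}

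Next I will exploit the specific structure of the update. Let $\Phi_k(x) \defeq f(x^{(k)}) + \langle \nabla f(x^{(k)}), x - x^{(k)}\rangle + \tfrac{L}{2}\|x - x^{(k)}\|_{\mh}^{2}$, so that $x^{(k+1)} = \argmin_{x\in Q}\Phi_k(x)$. Because $\Phi_k$ is $L$-strongly convex with respect to $\|\cdot\|_{\mh}$ and $Q$ is convex, the first-order optimality condition for this constrained minimization yields the ``three-point'' inequality
\begin{equation*}
\Phi_k(x) \;\geq\; \Phi_k(x^{(k+1)}) + \tfrac{L}{2}\|x - x^{(k+1)}\|_{\mh}^{2} \qquad \text{for all } x \in Q.
\end{equation*}
Applying this with $x = x^{*}$ and invoking the smoothness upper bound at $x^{(k+1)}$ to get $\Phi_k(x^{(k+1)}) \geq f(x^{(k+1)}) \geq f(x^{*})$, together with strong convexity to bound $\Phi_k(x^{*}) \leq f(x^{*}) + \tfrac{L-\mu}{2}\|x^{*} - x^{(k)}\|_{\mh}^{2}$, will let me telescope in a single step to
\begin{equation*}
\tfrac{L}{2}\|x^{*} - x^{(k+1)}\|_{\mh}^{2} \;\leq\; \tfrac{L - \mu}{2}\|x^{*} - x^{(k)}\|_{\mh}^{2},
\end{equation*}
and induction on $k$ then gives the claimed $(1-\mu/L)^k$ contraction.

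There is essentially no hard step here; this is a textbook projected-gradient argument and I do not expect any obstacle beyond being careful with the $\mh$-weighted inner product throughout. The only point that requires mild care is the optimality inequality for $x^{(k+1)}$: because the objective $\Phi_k$ is $L$-strongly convex in $\|\cdot\|_{\mh}$ and minimized over the convex set $Q$, the ``extra'' quadratic term with constant $L$ (not $L-\mu$) on the right-hand side is what makes the algebra collapse neatly. I will verify this inequality directly from expanding $\Phi_k(x) - \Phi_k(x^{(k+1)})$ as a quadratic in $x - x^{(k+1)}$ and using that the linear part is nonnegative on $Q$ by the KKT condition at the constrained minimizer.
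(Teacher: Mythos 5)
Your proof is correct. The paper states Theorem~\ref{thm:constrained_minimization} without proof, treating it as the standard convergence result for projected gradient descent under relative smoothness and strong convexity in the $\mh$-metric; your argument (the quadratic upper/lower bounds from the Hessian sandwich, the three-point inequality at the constrained minimizer of $\Phi_k$, and the one-step contraction $\tfrac{L}{2}\|x^*-x^{(k+1)}\|_{\mh}^2 \leq \tfrac{L-\mu}{2}\|x^*-x^{(k)}\|_{\mh}^2$) is exactly the textbook derivation the paper implicitly relies on.
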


To apply Theorem~\ref{thm:constrained_minimization} to compute Lewis
weight, we first recall from Lemma \ref{lem:unique_lewis} that the
Lewis weight $w_{p}(\ma)$ is the unique minimizer of the convex problem,
$\min_{w_{i}\geq0}\mathcal{V}_{p}^{\ma}(w)+\sum_{i\in[m]}w_{i}$.
Therefore, to apply this result we first need to show that there is
a region around the optimal point $\lpweight$ such that the Hessian
of $\mathcal{V}_{p}^{\ma}(w)$ does not change too much. 
\begin{lem}[Hessian Approximation]
\label{lem:Hessian_of_weight} If $w\in\R_{\geq0}^{m}$ satisfies
$\normInf{\mw^{-1}(\lpweight-w)}\leq\frac{p}{8(p+2)}$ for the matrix
$\mw\defeq\mDiag(w)$ then
\[
\min\left\{ \frac{1}{2},\frac{1}{p}\right\} \mw^{-1}\preceq\nabla^{2}\mathcal{V}_{p}^{\ma}(w)\preceq\max\left\{ 2,\frac{4}{p}\right\} \mw^{-1}.
\]
\end{lem}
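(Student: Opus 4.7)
The plan is to combine the closed form for $\nabla^2 \mathcal{V}_p^{\ma}(w)$ from Lemma~\ref{lem:lewis_potential_derivatives} with a careful perturbation argument showing that when $w$ is close to $\lpweight$ multiplicatively, the leverage-score diagonal $\mSigma_w = \mSigma(\mw^{1/2-1/p}\ma)$ is within a constant factor of $\mw$ entrywise.

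First, I would invoke the exact formula from Lemma~\ref{lem:lewis_potential_derivatives},
\[
\nabla^2 \mathcal{V}_p^{\ma}(w) = \mw^{-1}\bigl(\mSigma_w - (1-\tfrac{2}{p})\mLambda_w\bigr)\mw^{-1},
\]
and use $\mZero \preceq \mLambda_w \preceq \mSigma_w$ (Lemma~\ref{lem:tool:projection_matrices}) together with the sign cases $1-\tfrac{2}{p} \geq 0$ vs.\ $1-\tfrac{2}{p}<0$ to sandwich
\[
\min\!\bigl\{1,\tfrac{2}{p}\bigr\}\,\mw^{-1}\mSigma_w \mw^{-1} \;\preceq\; \nabla^2 \mathcal{V}_p^{\ma}(w) \;\preceq\; \max\!\bigl\{1,\tfrac{2}{p}\bigr\}\,\mw^{-1}\mSigma_w \mw^{-1}.
\]
It then suffices to prove $\tfrac12 \mw \preceq \mSigma_w \preceq 2\mw$ under the hypothesis on $w$, since multiplying by the matrices above gives $\mw^{-1}/2 \preceq \mw^{-1}\mSigma_w\mw^{-1}\preceq 2\mw^{-1}$, and combining produces exactly $\min\{1/2,1/p\}\mw^{-1}$ and $\max\{2,4/p\}\mw^{-1}$ as required.

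For the entrywise bound on $\mSigma_w$, write $u_i \defeq \lpweight_i/w_i$, so the hypothesis says $|u_i-1|\leq \epsilon \defeq p/(8(p+2))$, and set $\beta \defeq 1 - 2/p$. Because $\mw^\beta = \mU^\beta \mw_p^\beta$ and $u_i^\beta$ lies between $L\defeq \min_j u_j^\beta$ and $U\defeq \max_j u_j^\beta$, we get the PSD sandwich $L\cdot \ma^\top \mw_p^\beta \ma \preceq \ma^\top \mw^\beta \ma \preceq U \cdot \ma^\top \mw_p^\beta \ma$; inverting and reading off the $i$-th diagonal of $\ma(\cdot)\ma^\top$ yields
\[
U^{-1}\lpweight_i^{2/p} \;\leq\; [\ma(\ma^\top\mw^\beta\ma)^{-1}\ma^\top]_{ii} \;\leq\; L^{-1}\lpweight_i^{2/p},
\]
where I used the defining identity $\sigma(\mw_p^{1/2-1/p}\ma) = \lpweight$. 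Multiplying by $w_i^\beta = u_i^\beta \lpweight_i^\beta$ and dividing by $w_i$ then gives
\[
\sigma_w(i)/w_i \;\in\; \bigl[U^{-1}u_i^{-2/p},\; L^{-1}u_i^{-2/p}\bigr].
\]

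Finally I would verify both sides lie in $[1/2,2]$ by splitting on $p\geq 2$ and $p<2$. For $p\geq 2$ one has $\beta\in[0,1)$, so $L=(1-\epsilon)^\beta,\ U=(1+\epsilon)^\beta$, and the ratio collapses to $(1\pm\epsilon)^{-1}\in[8/9,8/7]$ using $\epsilon\leq 1/8$. For $p<2$, $\beta<0$ so $L=(1+\epsilon)^\beta$ and $U=(1-\epsilon)^\beta$, and expanding in logarithms bounds the upper/lower endpoints by $e^{\pm\epsilon(6/p\pm 2)}$; the critical check is that $\epsilon\cdot(6/p)=6/(8(p+2))\leq 3/8$, which is precisely why the denominator $8(p+2)$ appears in the definition of $\epsilon$. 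The main obstacle I foresee is keeping the $p<2$ case (where $2/p$ blows up) under control; the calibration $\epsilon = p/(8(p+2))$ is chosen exactly so that $\epsilon/p$ stays bounded, which makes $e^{\pm\epsilon\cdot\Theta(1/p)}$ a constant factor rather than diverging. Once $\tfrac12 \mw \preceq \mSigma_w \preceq 2\mw$ is established, chaining the inequalities completes the proof.
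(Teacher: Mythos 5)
Your proposal follows essentially the same route as the paper: invoke the closed form from Lemma~\ref{lem:lewis_potential_derivatives}, reduce via $\mZero\preceq\mLambda_w\preceq\mSigma_w$ to the diagonal sandwich $\tfrac12\mw\preceq\mSigma_w\preceq2\mw$, and establish that sandwich by comparing $\ma^{\top}\mw^{\beta}\ma$ to $\ma^{\top}\mw_p^{\beta}\ma$ using the fixed-point identity $\sigma(\mw_p^{1/2-1/p}\ma)=\lpweight$. One small notational slip: you define $u_i\defeq\lpweight_i/w_i$ but then write $\mw^{\beta}=\mU^{\beta}\mw_p^{\beta}$ and $w_i^{\beta}=u_i^{\beta}\lpweight_i^{\beta}$, which correspond to $u_i=w_i/\lpweight_i$; the exponents on $\mU$ should be negated (this does not affect the validity of the argument since $|u_i-1|\leq\epsilon$ and $|u_i^{-1}-1|\leq\epsilon/(1-\epsilon)$ are interchangeable at the precision needed).
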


\begin{proof}
Using that $\normInf{\mw^{-1}(\lpweight-w)}\leq\frac{p}{8(p+2)}$
and letting $\mv=\mDiag(\lpweight)$, we have
\[
\mSigma_{w}=\mSigma\left(\mw^{\frac{1}{2}-\frac{1}{p}}\ma\right)\preceq\frac{(1+\frac{p}{8(p+2)})^{\left|1-\frac{2}{p}\right|}}{(1-\frac{p}{8(p+2)})^{\left|1-\frac{2}{p}\right|}}\mSigma\left(\mv^{\frac{1}{2}-\frac{1}{p}}\ma\right)\preceq\frac{3}{2}\mv\preceq2\mw
\]
and
\[
\mSigma_{w}=\mSigma\left(\mw^{\frac{1}{2}-\frac{1}{p}}\ma\right)\succeq\frac{(1-\frac{p}{8(p+2)})^{\left|1-\frac{2}{p}\right|}}{(1+\frac{p}{8(p+2)})^{\left|1-\frac{2}{p}\right|}}\mSigma\left(\mv^{\frac{1}{2}-\frac{1}{p}}\ma\right)\succeq\frac{3}{4}\mv\succeq\frac{1}{2}\mw.
\]
The result therefore follows immediately from Lemma~\ref{lem:lewis_potential_derivatives}.
\end{proof}
Combining Theorem \ref{thm:constrained_minimization} and Lemma \ref{lem:Hessian_of_weight},
we get the following algorithm to compute the weight function using
the exact computation of the gradient of $\mathcal{V}_{p}^{\ma}$. 
\begin{lem}
\label{lem:weight_iterative} Let $w^{(0)}\in\dWeight$ such that
$\normInf{\mw_{(0)}^{-1}(\lpweight-w^{(0)})}\leq r$ where $r\defeq\frac{p}{20(p+2)}$.
For all $k\geq0$ let $L\defeq\max\{4,\frac{8}{p}\}$ and
\begin{equation}
w^{(k+1)}\defeq\code{median}\left(\left(1-r\right)w^{(0)},w^{(k)}-\frac{1}{L}\left(w^{(0)}-\frac{w^{(0)}}{w^{(k)}}\sigma\left(\mw_{(k)}^{\frac{1}{2}-\frac{1}{p}}\ma\right)\right),\left(1+r\right)w^{(0)}\right)\label{eq:explicit_formula_w}
\end{equation}
where $[\code{median}\left(\vx,\vy,\vz\right)]_{i}$ is the median
of $x_{i}$, $y_{i}$ and $z_{i}$ for all $i\in[m]$. For all $k$,
we have
\[
\norm{w^{(k)}-\lpweight}_{\mlpweight^{-1}}^{2}\leq4n\cdot\left(1-\frac{1}{16(\frac{p}{2}+\frac{2}{p})}\right)^{k}\norm{\mw_{(0)}^{-1}(\lpweight-w^{(0)})}_{\infty}^{2}\,.
\]
\end{lem}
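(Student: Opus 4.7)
The plan is to recognize the update rule \eqref{eq:explicit_formula_w} as a preconditioned projected gradient descent iteration on the function $f(w) \defeq \mathcal{V}_p^{\ma}(w) + \sum_{i\in[m]} w_i$, whose unique minimizer on $\R_{>0}^m$ is $\lpweight$ by Lemma~\ref{lem:unique_lewis}. Using Lemma~\ref{lem:lewis_potential_derivatives} we have $\nabla f(w) = \mathbf{1} - \mw^{-1}\sigma(\mw^{\frac{1}{2}-\frac{1}{p}}\ma)$, so $\mw_{(0)}\nabla f(w^{(k)}) = w^{(0)} - (w^{(0)}/w^{(k)})\sigma(\mw_{(k)}^{\frac{1}{2}-\frac{1}{p}}\ma)$; thus the unclipped update is $w^{(k)} - \tfrac{1}{L}\mw_{(0)}\nabla f(w^{(k)})$. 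Taking $\mh \defeq \mw_{(0)}^{-1}$ in Theorem~\ref{thm:constrained_minimization}, and $Q \defeq \{w : (1-r)w^{(0)} \leq w \leq (1+r)w^{(0)}\}$, the KKT conditions for the per-step quadratic minimization give precisely a coordinate-wise projection of the gradient step onto $Q$; since $\mh$ is diagonal, this projection is exactly the $\mathtt{median}$ truncation in \eqref{eq:explicit_formula_w}.

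Next I would verify the hypotheses of Theorem~\ref{thm:constrained_minimization}. First, $\lpweight \in Q$ since $\normInf{\mw_{(0)}^{-1}(\lpweight - w^{(0)})}\leq r$ by assumption, so $\lpweight = \argmin_{w \in Q} f(w)$. Second, for any $w \in Q$ we have $(1-r)w^{(0)} \leq w \leq (1+r)w^{(0)}$ and similarly for $\lpweight$, which yields $\normInf{\mw^{-1}(\lpweight - w)} \leq 2r/(1-r) \leq p/(8(p+2))$ by the choice $r = p/(20(p+2))$. Hence Lemma~\ref{lem:Hessian_of_weight} applies to give $\min\{1/2,1/p\}\mw^{-1} \preceq \nabla^2 f(w) = \nabla^2 \mathcal{V}_p^{\ma}(w) \preceq \max\{2,4/p\}\mw^{-1}$ throughout $Q$, and the multiplicative equivalence of $\mw^{-1}$ and $\mw_{(0)}^{-1}$ on $Q$ converts this into $\mu\mh \preceq \nabla^2 f(w) \preceq L\mh$ with $\mu = (1-r)\min\{1/2, 1/p\}$ and $L = (1+r)\max\{2,4/p\} \leq \max\{4,8/p\}$, matching the stated definition of $L$. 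A short calculation splitting into the cases $p \leq 2$ and $p \geq 2$ shows $\mu/L \geq 1/(16(p/2 + 2/p))$.

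Applying Theorem~\ref{thm:constrained_minimization} then gives the geometric contraction $\norm{w^{(k)} - \lpweight}_{\mh}^2 \leq (1 - \mu/L)^k \norm{w^{(0)} - \lpweight}_{\mh}^2$ in the $\mh = \mw_{(0)}^{-1}$ norm. To pass to the claimed $\mlpweight^{-1}$ norm, I would use that on $Q$ we have $\mw_{(0)}^{-1} \preceq (1+r)(1-r)^{-1}\mlpweight^{-1}$ and conversely $\mlpweight^{-1} \preceq (1+r)(1-r)^{-1}\mw_{(0)}^{-1}$, costing only a harmless constant since $r \leq 1/20$. For the initial error, bound
\[
\norm{w^{(0)} - \lpweight}_{\mh}^2 = \sum_{i\in[m]} \frac{(w^{(0)}_i - \lpweight_i)^2}{w^{(0)}_i} \leq \normInf{\mw_{(0)}^{-1}(\lpweight - w^{(0)})}^2 \sum_{i\in[m]} w^{(0)}_i \leq (1+r) n \normInf{\mw_{(0)}^{-1}(\lpweight - w^{(0)})}^2,
\]
using $\sum_i \lpweight_i = n$ (Lemma~\ref{lem:tool:projection_matrices} applied as in Lemma~\ref{lem:unique_lewis}) and $w^{(0)} \leq (1+r)\lpweight$. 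Combining everything gives the stated factor $4n$ with ample slack.

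The main obstacle is bookkeeping rather than insight: matching the explicit median formula to an abstract projected gradient step requires checking that the $\mh$-weighted projection onto a box coincides with the unweighted coordinate-wise median (which it does precisely because $\mh$ is diagonal), and then carefully tracking the multiplicative distortions $(1 \pm r)$ between $\mw$, $\mw_{(0)}$, and $\mlpweight$ so that the Hessian bounds of Lemma~\ref{lem:Hessian_of_weight} translate cleanly into strong convexity and smoothness in the fixed $\mh$-norm with the claimed ratio $\mu/L \geq 1/(16(p/2+2/p))$. No further structural lemma is needed beyond what is already proved in Sections~\ref{sec:lewis_weights} and Appendix~\ref{sec:app:lemmas}.
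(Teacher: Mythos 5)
Your proposal is correct and follows essentially the same route as the paper's: identify the median step as a preconditioned projected gradient iteration with $\mh = \mw_{(0)}^{-1}$ on the convex program from Lemma~\ref{lem:unique_lewis}, invoke Lemma~\ref{lem:Hessian_of_weight} for the Hessian bounds over the box $Q$, and apply Theorem~\ref{thm:constrained_minimization}. One small bookkeeping slip: the correct conversion from $\mw^{-1}$ to $\mw_{(0)}^{-1}$ on $Q$ is $\tfrac{1}{1+r}\mw_{(0)}^{-1}\preceq\mw^{-1}\preceq\tfrac{1}{1-r}\mw_{(0)}^{-1}$, so the intermediate constants should be $\mu=\tfrac{1}{1+r}\min\{1/2,1/p\}$ and $L\geq\tfrac{1}{1-r}\max\{2,4/p\}$, not the $(1-r)$ and $(1+r)$ multipliers you wrote; since the algorithm's fixed $L=\max\{4,8/p\}$ dominates $\tfrac{1}{1-r}\max\{2,4/p\}$ for $r\leq 1/2$, this does not affect the final contraction rate or the $4n$ factor.
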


\begin{proof}
Let $Q\defeq\{w\in\Rm~|~\normInf{\mw_{(0)}^{-1}(w-w^{(0)})}\leq r\}$
and $\mw_{(k)}\defeq\mDiag(w^{(k)})$. Applied to $\min_{w_{i}\geq0}\mathcal{V}_{p}^{\ma}(w)+\sum_{i=1}^{m}w_{i}$,
by Lemma~\ref{lem:lewis_potential_derivatives}, iterations of Theorem~\ref{thm:constrained_minimization}
are
\begin{eqnarray*}
w^{(k+1)} & = & \argmin_{w\in Q}\left\langle \onesVec-w_{(k)}^{-1}\sigma\left(\mw_{(k)}^{\frac{1}{2}-\frac{1}{p}}\ma\right),w\right\rangle +\frac{L}{2}\normFull{w-w^{(k)}}_{\mw_{(0)}^{-1}}^{2}\\
 & = & \argmin_{w\in Q}\normFull{w-w^{(k)}+\frac{1}{L}\left(w^{(0)}-\frac{w^{(0)}}{w^{(k)}}\sigma\left(\mw_{(k)}^{\frac{1}{2}-\frac{1}{p}}\ma\right)\right)}_{\mw_{(0)}^{-1}}^{2}.
\end{eqnarray*}
Since the objective function and the constraints are axis-aligned,
we can compute $w^{(k+1)}$ coordinate-wise and we see that this is
the same as in the statement of this lemma. 

To apply Theorem \ref{thm:constrained_minimization}, we note that
$\normInf{\mw_{(0)}^{-1}(\lpweight-w^{(0)})}\leq\frac{p}{20(p+2)}$
implies that any $w\in Q$ satisfies $\normInf{\mw^{-1}(\lpweight-w)}\leq\frac{p}{8(p+2)}$
and hence Lemma~\ref{lem:Hessian_of_weight} shows that
\begin{equation}
\min\left\{ \frac{1}{4},\frac{1}{2p}\right\} \mw_{(0)}^{-1}\preceq\min\left\{ \frac{1}{2},\frac{1}{p}\right\} \mw^{-1}\preceq\nabla^{2}\mathcal{V}(w)\preceq\max\left\{ 2,\frac{4}{p}\right\} \mw^{-1}\preceq\max\left\{ 4,\frac{8}{p}\right\} \mw_{(0)}^{-1}.\label{eq:lip_grad_vw}
\end{equation}
Hence, Theorem \ref{thm:constrained_minimization} and inequality
\eqref{eq:lip_grad_vw} shows that
\[
\norm{w^{(k)}-\lpweight}_{\mw_{(0)}^{-1}}^{2}\leq\left(1-\frac{\min\{\frac{1}{4},\frac{1}{2p}\}}{\max(4,\frac{8}{p})}\right)^{k}\norm{w^{(0)}-\lpweight}_{\mw_{(0)}^{-1}}^{2}\leq\left(1-\frac{1}{16(\frac{p}{2}+\frac{2}{p})}\right)^{k}\norm{w^{(0)}-\lpweight}_{\mw_{(0)}^{-1}}^{2}.
\]
The result follows as $w^{(0)}$ is close to $\lpweight$ multiplicatively
and therefore
\[
\norm{w^{(0)}-\lpweight}_{\mlpweight^{-1}}^{2}\leq\frac{3}{2}\sum_{i\in[n]}w_{i}^{(0)}\cdot\norm{\mw_{(0)}^{-1}(\lpweight-w^{(0)})}_{\infty}^{2}\leq2n\norm{\mw_{(0)}^{-1}(\lpweight-w^{(0)})}_{\infty}^{2}.
\]
\end{proof}
Note that the lemma does not shows that $w^{(k)}$ is a multiplicative
approximation of $\lpweight$. The following lemma shows that we can
use $w^{(k)}$ to get a multiplicative approximation.
\begin{lem}
\label{lem:fix_small_weight} Given $w$ such that $\|\mlpweight^{-1}(\lpweight-w)\|_{\infty}\leq\frac{p}{8(p+2)}$
and that $\|w-\lpweight\|_{\mlpweight^{-1}}\leq\frac{1}{4(1+\frac{2}{p})^{2}\sqrt{n}}$.
Let $\widehat{w}=(\diag(\ma(\ma^{\top}\mw^{1-\frac{2}{p}}\ma)^{-1}\ma^{\top}))^{\frac{2}{p}}$.
Then, we have that 
\[
\normFull{\mlpweight^{-1}(\widehat{w}-\lpweight)}_{\infty}\leq4\left(1+\frac{2}{p}\right)^{2}\sqrt{n}\cdot\normFull{w-\lpweight}_{\mlpweight^{-1}}.
\]
\end{lem}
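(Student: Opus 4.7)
The plan is to view $\widehat w$ as one step of a natural fixed-point iteration for $\lpweight$, and then bound the Lipschitz constant of that iteration near $\lpweight$. Setting $M_i(w) \defeq [\ma(\ma^{\top}\mw^{1-2/p}\ma)^{-1}\ma^{\top}]_{ii}$, the Lewis weight condition $\lpweight=\sigma(\mw_{p}^{1/2-1/p}\ma)$ unfolds to $M_{i}(\lpweight)=w_{p,i}^{2/p}$, so $\lpweight$ is the unique positive fixed point of the map $F(w)_{i}\defeq M_{i}(w)^{p/2}$ and $\widehat w=F(w)$. I will control $\widehat w-\lpweight$ by expanding $F$ around $\lpweight$ and integrating the Jacobian along the segment from $\lpweight$ to $w$.

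I would first compute the Jacobian at $\lpweight$. Using $\partial_{w_{k}}(\ma^{\top}\mw^{1-2/p}\ma)^{-1}=-(1-2/p)w_{k}^{-2/p}(\ma^{\top}\mw^{1-2/p}\ma)^{-1}a_{k}a_{k}^{\top}(\ma^{\top}\mw^{1-2/p}\ma)^{-1}$ and writing $\mb\defeq \mw_{p}^{1/2-1/p}\ma$ so that the projection $\mP\defeq \mb(\mb^{\top}\mb)^{-1}\mb^{\top}$ has diagonal $\mP_{ii}=w_{p,i}$, a direct calculation collapses the scalings and gives
\[
\partial_{w_{k}}F_{i}(\lpweight)\;=\;-\tfrac{p}{2}\bigl(1-\tfrac{2}{p}\bigr)\,\mP_{ik}^{2}/w_{p,k}.
\]
Using the $\ell_{\infty}$ hypothesis $\|\mlpweight^{-1}(\lpweight-w)\|_{\infty}\leq p/(8(p+2))$ to keep $\ma^{\top}\mw^{1-2/p}\ma$ within a constant multiplicative factor of $\mb^{\top}\mb$ along the segment (the same spectral estimate underlying Lemma~\ref{lem:Hessian_of_weight}), this Jacobian bound extends uniformly with a constant loss absorbed into $(1+2/p)^{2}$, yielding
\[
\frac{|\widehat w_{i}-w_{p,i}|}{w_{p,i}}\;\lesssim\;(1+\tfrac{2}{p})^{2}\,\frac{1}{w_{p,i}}\sum_{k}\frac{\mP_{ik}^{2}}{w_{p,k}}|w_{k}-w_{p,k}|.
\]

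The $\sqrt n$ factor then comes from Cauchy--Schwarz with the projection-matrix identities of Lemma~\ref{lem:tool:projection_matrices}. Splitting off the weighted norm of $w-\lpweight$,
\[
\sum_{k}\frac{\mP_{ik}^{2}}{w_{p,k}}|w_{k}-w_{p,k}|\;\leq\;\sqrt{\sum_{k}\frac{\mP_{ik}^{4}}{w_{p,k}}}\cdot\|w-\lpweight\|_{\mlpweight^{-1}}.
\]
Lemma~\ref{lem:tool:projection_matrices}(3) gives $\mP_{ik}^{2}\leq \mP_{ii}\mP_{kk}=w_{p,i}w_{p,k}$, so $\mP_{ik}^{4}/w_{p,k}\leq w_{p,i}^{2}w_{p,k}$, and summing against $\sum_{k}w_{p,k}=n$ (Lemma~\ref{lem:tool:projection_matrices}(6)) produces $\sqrt{\sum_{k}\mP_{ik}^{4}/w_{p,k}}\leq \sqrt{n}\,w_{p,i}$. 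Combining with the prefactor $1/w_{p,i}$ and taking a maximum over $i$ completes the bound.

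The main obstacle is the uniform control of the Jacobian along the segment: the identity $\mP_{ik}^{2}\leq w_{p,i}w_{p,k}$ holds exactly at $\lpweight$, and at intermediate points one must verify an analogue for the perturbed projection $\widetilde\mP(w)$ and absorb the resulting multiplicative loss into the constant $(1+2/p)^{2}$. The $\ell_{\infty}$ hypothesis is calibrated precisely so that the scaled matrix $(w/\lpweight)^{1-2/p}$ lies in a constant-factor neighborhood of $\iMatrix$ for every exponent $p>0$, which then transfers through operator monotonicity to the diagonals and off-diagonals of $\widetilde\mP(w)$ via Lemma~\ref{lem:tool:projection_matrices}(3). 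This step is where the factor $(1+2/p)^{2}$ (rather than $|1-2/p|$) is forced, since the perturbation of $\mw^{1-2/p}$ contributes a factor $(1+2/p)$ and Lipschitz control of $x\mapsto x^{p/2}$ near $M_{i}(\lpweight)$ contributes another.
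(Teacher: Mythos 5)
Your approach is genuinely different from the paper's, and it is sound in outline. The paper's proof proceeds via a single global spectral comparison: it shows $(1-\alpha)\ma^{\top}\mlpweight^{1-\frac{2}{p}}\ma\preceq\ma^{\top}\mw^{1-\frac{2}{p}}\ma\preceq(1+\alpha)\ma^{\top}\mlpweight^{1-\frac{2}{p}}\ma$ with $\alpha\leq 2|1-\tfrac{2}{p}|\sqrt{n}\,\|w-\lpweight\|_{\mlpweight^{-1}}$, obtained by bounding a trace and applying Cauchy--Schwarz once, and then pushes this multiplicative error through the matrix inverse, the diagonal, and the outer power. You instead linearize the fixed-point map $F$ around $\lpweight$, compute the Jacobian explicitly, and control it coordinate-wise with Cauchy--Schwarz and the projection-matrix identities of Lemma~\ref{lem:tool:projection_matrices}; both routes produce the $\sqrt{n}$ from the same identity $\sum_{i}[\lpweight]_{i}=n$. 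The paper's route is cleaner in that it never needs any estimate at intermediate points; yours requires extending the Jacobian bound along the segment from $\lpweight$ to $w$, which is precisely the gap you flag. Your proposed resolution is the right one: the $\ell_{\infty}$ hypothesis keeps the Gram matrix along the segment within a constant multiplicative factor of $\ma^{\top}\mlpweight^{1-\frac{2}{p}}\ma$ (this is exactly the content of Lemma~\ref{lem:Hessian_of_weight}), so the intermediate projection $\widetilde{\mProj}(w')$ still obeys $\widetilde{\mProj}_{ik}^{2}\leq\widetilde{\mProj}_{ii}\widetilde{\mProj}_{kk}$ (true for any orthogonal projection) with $\widetilde{\mProj}_{ii}$ multiplicatively close to $[\lpweight]_{i}$, and the constant loss is absorbed. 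Two smaller remarks. First, your exponent $p/2$ is the one that actually makes $\lpweight$ a fixed point of $F$, since the Lewis weight equation gives $M_{i}(\lpweight)=[\lpweight]_{i}^{2/p}$ and hence $M_{i}(\lpweight)^{p/2}=[\lpweight]_{i}$; the statement and proof in the paper carry the exponent $2/p$, which appears to be a typo that your derivation silently corrects. Second, your accounting of where $(1+\tfrac{2}{p})^{2}$ arises does not match the computation you set up: the Jacobian prefactor is $\tfrac{p}{2}|1-\tfrac{2}{p}|=|\tfrac{p}{2}-1|$, and the outer power $x\mapsto x^{p/2}$ contributes a Lipschitz factor $p/2$, not $1+\tfrac{2}{p}$, so the two do not multiply to $(1+\tfrac{2}{p})^{2}$. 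This is harmless here because the lemma's stated constant is loose (quite wasteful for small $p$) and a careful version of your argument would in fact give a somewhat sharper bound; but the heuristic explanation you offer for the constant is not what your own Jacobian formula implies.
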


\begin{proof}
The definition of $\widehat{w}$ is motivated from the equality $\lpweight=(\diag(\ma(\ma^{\top}\mlpweight^{1-\frac{2}{p}}\ma)^{-1}\ma^{\top}))^{\frac{2}{p}}$.
To show $\widehat{w}$ is multiplicative close to $\lpweight$, it
therefore suffices to prove that $\ma^{\top}\mlpweight^{1-\frac{2}{p}}\ma$
is multiplicatively close to $\ma^{\top}\mw^{1-\frac{2}{p}}\ma$.
Note that $(1-\alpha)\ma^{\top}\mlpweight^{1-\frac{2}{p}}\ma\preceq\ma^{\top}\mw^{1-\frac{2}{p}}\ma\preceq(1+\alpha)\ma^{\top}\mlpweight^{1-\frac{2}{p}}\ma$
with
\[
\alpha\leq\tr\left[(\ma^{\top}\mlpweight^{1-\frac{2}{p}}\ma)^{-1}(\ma^{\top}|\mw^{1-\frac{2}{p}}-\mlpweight^{1-\frac{2}{p}}|\ma)\right]=\sum_{i\in[n]}\frac{\sigma_{i}(\mlpweight^{\frac{1}{2}-\frac{2}{p}}\ma)}{[\lpweight]_{i}^{1-2/p}}\left|w_{i}^{1-2/p}-[\lpweight]_{i}^{1-2/p}\right|\,.
\]
Since $\normInf{\mlpweight^{-1}(\lpweight-w)}\leq\frac{p}{8(p+2)}$
we have that for all $i\in[n]$
\[
\left|w_{i}^{1-2/p}-[\lpweight]_{i}^{1-2/p}\right|\leq2\left|1-\frac{2}{p}\right|\left|\frac{w_{i}-[\lpweight]_{i}}{[\lpweight]_{i}^{2/p}}\right|
\]
and therefore, by Cauchy Schwarz and that $\sum_{i\in[n]}\frac{\sigma_{i}(\mlpweight^{\frac{1}{2}-\frac{2}{p}}\ma)^{2}}{[\lpweight]_{i}}=\sum_{i\in[n]}\sigma_{i}(\mlpweight^{\frac{1}{2}-\frac{2}{p}}\ma)=n$
we have
\begin{align*}
\alpha & \leq2\left|1-\frac{2}{p}\right|\sum_{i\in[n]}\frac{\sigma_{i}(\mlpweight^{\frac{1}{2}-\frac{2}{p}}\ma)}{[\lpweight]_{i}^{1-2/p}}\left|\frac{w_{i}-[\lpweight]_{i}}{[\lpweight]_{i}^{2/p}}\right|\\
 & \leq\left|1-\frac{2}{p}\right|\sqrt{\sum_{i\in[n]}\frac{\sigma_{i}(\mlpweight^{\frac{1}{2}-\frac{2}{p}}\ma)^{2}}{[\lpweight]_{i}}}\sqrt{\sum_{i\in[n]}\frac{(w_{i}-[\lpweight]_{i})^{2}}{[\lpweight]_{i}}}=2\left|1-\frac{2}{p}\right|\sqrt{n}\cdot\delta\,.
\end{align*}
The result follows from $\lpweight=(\diag(\ma(\ma^{\top}\mlpweight^{1-\frac{2}{p}}\ma)^{-1}\ma^{\top}))^{\frac{2}{p}}$,
that $(1-2|1-(2/p)|\delta\sqrt{n})^{-2/p}\geq1-4(1+\frac{2}{p})^{2}\sqrt{n}\delta$,
and that $(1+2|1-(2/p)|\delta\sqrt{n})^{-2/p}\leq1+4(1+\frac{2}{p})^{2}\sqrt{n}\delta$.
\end{proof}
Combining Lemma~\ref{lem:weight_iterative} and Lemma~\ref{lem:fix_small_weight}
yields the following Theorem~\ref{thm:lewisweightexact}, the main
result of this section on weight computation.

\begin{algorithm2e}[H]

\label{alg:exact_weight}

\caption{$\ensuremath{\ensuremath{w=\code{computeExactWeight}(\ma,p,w^{(0)},\varepsilon)}}$}

\SetAlgoLined

Let $T=\left\lceil 32(\frac{p}{2}+\frac{2}{p})\log(8n(1+\frac{2}{p})\epsilon^{-1})\right\rceil $,
$r=\frac{p}{20(p+2)}$, and $L=\max\{4,\frac{8}{p}\}$

\For{$k=1,\cdots,T-1$}{

$w^{(k+1)}=\code{median}\left(\left(1-r\right)w^{(0)},w^{(k)}-\frac{1}{L}\left(w^{(0)}-\frac{w^{(0)}}{w^{(k)}}\sigma\left(\mw_{(k)}^{\frac{1}{2}-\frac{1}{p}}\ma\right)\right),\left(1+r\right)w^{(0)}\right)$

}

\textbf{Output:} $(\diag(\ma(\ma^{\top}\mw_{(T)}^{1-\frac{2}{p}}\ma)^{-1}\ma^{\top}))^{\frac{2}{p}}$
for $\mWeight_{(T)}=\mDiag(w^{(T)})$.

\end{algorithm2e}

\begin{restatable}[Exact Weight Updates]{thm}{lewisweightexact}\label{thm:lewisweightexact}
For all $\epsilon\in(0,1)$ and $w^{(0)}\in\dWeight$ with $\normInf{w_{(0)}^{-1}(\lpweight(\ma)-w^{(0)})}\leq\frac{p}{20(p+2)}$
the algorithm $\code{computeExactWeight}(\ma,p,w^{(0)},\epsilon)$
(Algorithm~\ref{alg:exact_weight}) outputs $w\in\dWeight$ with
$\norm{\lpweight(\ma)^{-1}(\lpweight(\ma)-w^{(0)}}_{\infty}\leq\epsilon$
in $O((p+\frac{1}{p})\log(n(1+\frac{1}{p})\epsilon^{-1}))$ iterations,
where each iteration involves computing $\sigma\left(\mv\ma\right)$
for diagonal matrix $\mv$ and extra linear time work and $O(1)$
depth.

\end{restatable}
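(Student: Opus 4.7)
The plan is to assemble the theorem from the two lemmas already in hand: Lemma~\ref{lem:weight_iterative} which gives linear convergence of the projected gradient iterates in the $\mlpweight^{-1}$ norm, and Lemma~\ref{lem:fix_small_weight} which upgrades an $\mlpweight^{-1}$ bound into a multiplicative $\ell_\infty$ bound via a single ``leverage-score snapshot'' step $\widehat w=(\diag(\ma(\ma^\top\mw^{1-2/p}\ma)^{-1}\ma^\top))^{2/p}$. The first step is to observe that the inner loop of $\code{computeExactWeight}$ is exactly the update rule \eqref{eq:explicit_formula_w}, so Lemma~\ref{lem:weight_iterative} directly gives
\[
\|w^{(T)}-\lpweight\|_{\mlpweight^{-1}}^{2}\;\le\;4n\Bigl(1-\tfrac{1}{16(p/2+2/p)}\Bigr)^{T}\|\mw_{(0)}^{-1}(\lpweight-w^{(0)})\|_\infty^{2}.
\]
Since $\|\mw_{(0)}^{-1}(\lpweight-w^{(0)})\|_\infty\le r=\tfrac{p}{20(p+2)}\le 1$, the choice $T=\lceil 32(p/2+2/p)\log(8n(1+2/p)/\epsilon)\rceil$ drives the right-hand side below $\bigl(\tfrac{\epsilon}{4(1+2/p)^{2}\sqrt n}\bigr)^{2}$ up to constants, which I'll verify by straightforward algebra on the exponent.

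Next I would check that the two preconditions of Lemma~\ref{lem:fix_small_weight} apply to $w^{(T)}$. The median-based projection in the update keeps every iterate inside the box $\{w:\|\mw_{(0)}^{-1}(w-w^{(0)})\|_\infty\le r\}$, and since $w^{(0)}$ itself is within $r$ of $\lpweight$ multiplicatively, we obtain $\|\mlpweight^{-1}(\lpweight-w^{(T)})\|_\infty\le 2r/(1-r)\le \tfrac{p}{8(p+2)}$, giving the multiplicative precondition. The norm precondition $\|w^{(T)}-\lpweight\|_{\mlpweight^{-1}}\le\tfrac{1}{4(1+2/p)^{2}\sqrt n}$ is exactly what the geometric decay above is tuned to achieve for our choice of $T$.

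Once both preconditions hold, the final snapshot step in the algorithm is the map of Lemma~\ref{lem:fix_small_weight}, so
\[
\|\mlpweight^{-1}(\widehat w-\lpweight)\|_\infty\;\le\;4(1+2/p)^{2}\sqrt n\,\|w^{(T)}-\lpweight\|_{\mlpweight^{-1}}\;\le\;\epsilon,
\]
proving correctness. For the cost, each inner iteration only involves a diagonal rescaling plus a leverage-score computation $\sigma(\mw_{(k)}^{1/2-1/p}\ma)$, so the per-iteration cost is that one $\sigma(\mv\ma)$ call plus $O(m)$ coordinate-wise arithmetic (the three-way median and the gradient step) which is trivially $O(1)$ depth; the final snapshot is one more leverage-score call, absorbed into the iteration count.

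The only place any subtlety arises is the precondition bookkeeping in the middle step: one must confirm that the median projection never pushes iterates outside the region where Lemma~\ref{lem:Hessian_of_weight} gives the Hessian bounds used inside Lemma~\ref{lem:weight_iterative}, and simultaneously that $r$ is small enough for the Lemma~\ref{lem:fix_small_weight} multiplicative precondition. Both follow from $r=\tfrac{p}{20(p+2)}<\tfrac{p}{8(p+2)}$, so this is essentially bookkeeping rather than a real obstacle—no step of the argument is more than a direct invocation of an already-proved lemma.
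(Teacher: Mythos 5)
Your proposal is correct and is exactly the paper's proof: the paper's argument for Theorem~\ref{thm:lewisweightexact} is precisely ``combine Lemma~\ref{lem:weight_iterative} (geometric decay of $\|w^{(k)}-\lpweight\|_{\mlpweight^{-1}}$) with Lemma~\ref{lem:fix_small_weight} (the leverage-score snapshot upgrading to multiplicative $\ell_\infty$ error).'' You have simply spelled out the precondition bookkeeping and exponent arithmetic that the paper leaves to the reader.
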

\begin{proof}
This result follows immediately from Lemma~\ref{lem:weight_iterative}
and Lemma~\ref{lem:fix_small_weight}.
\end{proof}

\subsection{Approximate Computation\label{sec:approx_weight}}

Here we show how to modify the algorithm and analysis of the previous
subsection to use approximate leverage scores instead of exact leverage
score in computing gradient. Further, we show how to use the Johnson-Lindenstrauss
lemma to compute approximate leverage scores efficiently using a linear
system solver. Together, these results give us efficient algorithms
for improving the approximation quality of Lewis weights.

To analyze our algorithm, $\code{computeApxWeight}$ (Algorithm~\ref{alg:approx_weight})
given below, we first give a lemma showing that the optimality condition
$\sigma(\mw^{\frac{1}{2}-\frac{1}{p}}\ma)_{i}/w_{i}$ is stable under
changes to $w$.

\begin{algorithm2e}[H]

\label{alg:approx_weight}

\caption{$\ensuremath{\ensuremath{w=\code{computeApxWeight}(\ma,p,w^{(0)},\varepsilon)}}$}

\SetAlgoLined

$L=\max\{4,\frac{8}{p}\}$, $r=\frac{p^{2}(4-p)}{2^{20}}$ and $\delta=\frac{(4-p)\epsilon}{256}$.

Let the number of iterations $T=\left\lceil 80(\frac{p}{2}+\frac{2}{p})\log\left(\frac{pn}{32\epsilon}\right)\right\rceil $.

\For{$j=1,\cdots,T-1$}{

Compute $\vsigma^{(j)}\in\R^{n}$ such that $e^{-\delta}\sigma(\mw_{(j)}^{\frac{1}{2}-\frac{1}{p}}\ma)_{i}\leq\vsigma_{i}^{(j)}\leq e^{\delta}\sigma(\mw_{(j)}^{\frac{1}{2}-\frac{1}{p}}\ma)_{i}$
for all $i\in[n]$.

$w^{(j+1)}=\code{median}\left(\left(1-r\right)w^{(0)},w^{(j)}-\frac{1}{L}\left(w^{(0)}-\frac{w^{(0)}}{w^{(j)}}\vsigma^{(j)}\right),\left(1+r\right)w^{(0)}\right)$.

}

\textbf{Output:} $(\diag(\ma(\ma^{\top}\mw_{(T)}^{1-\frac{2}{p}}\ma)^{-1}\ma^{\top}))^{\frac{2}{p}}$.

\end{algorithm2e}
\begin{lem}
\label{lem:stable_grad} Let $w,v\in\dWeight$ with $w_{i}=e^{\delta_{i}}v_{i}$
for $|\delta_{i}|\leq\delta$ for all $i\in[n]$. Then, for all $i\in[n]$
\[
e^{\frac{2}{p}\delta_{i}-|1-\frac{2}{p}|\delta}\cdot\frac{\sigma(\mw^{\frac{1}{2}-\frac{1}{p}}\ma)_{i}}{w_{i}}\leq\frac{\sigma(\mv^{\frac{1}{2}-\frac{1}{p}}\ma)_{i}}{v_{i}}\leq e^{\frac{2}{p}\delta_{i}+|1-\frac{2}{p}|\delta}\cdot\frac{\sigma(\mw^{\frac{1}{2}-\frac{1}{p}}\ma)_{i}}{w_{i}}.
\]
\end{lem}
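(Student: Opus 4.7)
The plan is to directly expand the leverage-score ratio $\sigma(\mw^{\frac{1}{2}-\frac{1}{p}}\ma)_i/w_i$ in terms of the matrix $\ma(\ma^\top \mw^{1-\frac{2}{p}}\ma)^{-1}\ma^\top$ and then track how each factor changes when we swap $w$ for $v$. Writing $\alpha \defeq 1-\frac{2}{p}$, the key identity is
\[
\frac{\sigma(\mw^{\frac{1}{2}-\frac{1}{p}}\ma)_i}{w_i} = w_i^{-\frac{2}{p}}\cdot\bigl[\ma(\ma^\top \mw^{\alpha}\ma)^{-1}\ma^\top\bigr]_{ii},
\]
which follows from pulling the diagonal weight out of the projection formula. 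The same identity holds with $v$ in place of $w$.

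Next, I would use the hypothesis $w_i = e^{\delta_i}v_i$ with $|\delta_i|\le\delta$ to compare the two ingredients separately. For the scalar prefactor, $v_i^{-2/p} = e^{(2/p)\delta_i}\,w_i^{-2/p}$ exactly. For the matrix factor, the pointwise multiplicative bound $e^{-|\alpha|\delta}\mv^{\alpha}\preceq \mw^{\alpha}\preceq e^{|\alpha|\delta}\mv^{\alpha}$ yields $e^{-|\alpha|\delta}\ma^\top\mv^{\alpha}\ma \preceq \ma^\top\mw^{\alpha}\ma \preceq e^{|\alpha|\delta}\ma^\top\mv^{\alpha}\ma$, and inverting gives $e^{-|\alpha|\delta}(\ma^\top\mw^{\alpha}\ma)^{-1}\preceq (\ma^\top\mv^{\alpha}\ma)^{-1}\preceq e^{|\alpha|\delta}(\ma^\top\mw^{\alpha}\ma)^{-1}$. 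Sandwiching by $e_i^\top\ma$ and $\ma^\top e_i$ preserves these inequalities on the $i$-th diagonal entry, so
\[
e^{-|\alpha|\delta}\bigl[\ma(\ma^\top \mw^{\alpha}\ma)^{-1}\ma^\top\bigr]_{ii} \le \bigl[\ma(\ma^\top \mv^{\alpha}\ma)^{-1}\ma^\top\bigr]_{ii} \le e^{|\alpha|\delta}\bigl[\ma(\ma^\top \mw^{\alpha}\ma)^{-1}\ma^\top\bigr]_{ii}.
\]

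Multiplying the scalar prefactor identity by this two-sided bound and recognizing $|\alpha| = |1-\frac{2}{p}|$ yields the desired inequality immediately. There is no real obstacle here: the argument is purely algebraic, combining an exact equality for the exponent $\frac{2}{p}\delta_i$ with the PSD-monotonicity of inversion for the $|1-\frac{2}{p}|\delta$ term, so the whole proof is essentially two lines once the leverage-score formula is expanded.
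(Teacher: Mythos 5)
Your proof is correct and follows essentially the same route as the paper's: extract the exact scalar factor $v_i^{-2/p}=e^{(2/p)\delta_i}w_i^{-2/p}$ and bound the inner product $a_i^\top(\ma^\top\mv^{1-2/p}\ma)^{-1}a_i$ via the multiplicative closeness of $\mw^{1-2/p}$ and $\mv^{1-2/p}$ and monotonicity of matrix inversion. The only difference is that you spell out the PSD-chain in more detail; the substance is identical.
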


\begin{proof}
Note that $v_{i}^{-1}\sigma(\mv^{\frac{1}{2}-\frac{1}{p}}\ma)_{i}=v_{i}^{-\frac{2}{p}}a_{i}^{\top}(\ma^{\top}\mv^{1-\frac{2}{p}}\ma)^{-1}a_{i}$
where $a_{i}$ is the $i$-th row of $\ma$. By the assumptions on
$w,v\in\dWeight$ we have
\[
v_{i}^{-1}\sigma(\mv^{\frac{1}{2}-\frac{1}{p}}\ma)_{i}=e^{\frac{2}{p}\delta_{i}}w_{i}^{-\frac{2}{p}}a_{i}^{\top}(\ma^{\top}\mv^{1-\frac{2}{p}}\ma)^{-1}a_{i}\leq e^{\frac{2}{p}\delta_{i}+|1-\frac{2}{p}|\delta}w_{i}^{-\frac{2}{p}}a_{i}^{\top}(\ma^{\top}\mw^{1-\frac{2}{p}}\ma)^{-1}a_{i}.
\]
and the lower bound on $\sigma(\mv^{\frac{1}{2}-\frac{1}{p}}\ma)_{i}$
follows similarly.
\end{proof}
\begin{thm}[Approximate Weight Computation]
\label{thm:weights_full:approximate_weight} If $p\in(0,4)$ and
$w^{(0)}\in\dWeights$ satisfies $\normInf{w_{(0)}^{-1}(\lpweight(\ma)-w^{(0)})}\leq r$
where $r=\frac{p^{2}(4-p)}{2^{20}}$. For $0<\epsilon<\frac{2}{p}-|1-\frac{2}{p}|$,
the algorithm\textbf{ $\code{computeApxWeight}(\vx,w^{(0)},\varepsilon)$}
returns $w$ such that $\norm{\lpweight(\ma)^{-1}(\lpweight(\ma)-w)}_{\infty}\leq\epsilon$
in $O(p^{-1}\log(np^{-1}\epsilon^{-1}))$ steps. Each step involves
computing $\sigma$ up to $\pm\Theta((4-p)\cdot\epsilon)$ multiplicative
error with some extra linear time work.
\end{thm}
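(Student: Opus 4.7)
The plan is to adapt the projected gradient analysis of Lemma~\ref{lem:weight_iterative} to the inexact setting, treating the approximate leverage scores $\vsigma^{(j)}$ as a biased gradient oracle. First I would verify that the median clip in the update keeps every iterate inside the set $Q = \{w : \|\mw_{(0)}^{-1}(w - w^{(0)})\|_\infty \leq r\}$ with $r = p^2(4-p)/2^{20}$, which, via Lemma~\ref{lem:Hessian_of_weight}, gives uniform bounds $\Omega(p) \mw_{(0)}^{-1} \preceq \hess \volPot_p^{\ma}(w) \preceq O(p^{-1}) \mw_{(0)}^{-1}$ on $Q$ for $p \in (0,4)$. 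Combined with Theorem~\ref{thm:constrained_minimization}, this yields a clean contraction of rate $1 - \Omega(p)$ per step in the exact-gradient case, giving the promised $O(p^{-1}\log(n/(p\epsilon)))$ iteration count.

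Second, I would quantify the effect of replacing $\sigma(\mw_{(j)}^{\frac{1}{2}-\frac{1}{p}}\ma)$ with its multiplicative $e^{\pm \delta}$ approximation $\vsigma^{(j)}$. The gradient term $\onesVec - w_{(j)}^{-1}\sigma(\mw_{(j)}^{\frac{1}{2}-\frac{1}{p}}\ma)$ is perturbed by at most $O(\delta)$ coordinate-wise (in the $\mw_{(0)}^{-1}$ norm on $Q$). A standard inexact projected-gradient bookkeeping then gives a recursion of the form
\[
\|w^{(j+1)} - \lpweight\|_{\mw_{(0)}^{-1}}^2 \leq \left(1 - \tfrac{cp}{L}\right)\|w^{(j)} - \lpweight\|_{\mw_{(0)}^{-1}}^2 + \frac{C\,\delta^2\,n}{p^2}
\]
for constants $c,C$, so after $T = O(p^{-1}\log(n/(p\epsilon)))$ steps the iterate satisfies $\|w^{(T)} - \lpweight\|_{\mw_{(0)}^{-1}} \leq O(\delta\sqrt{n}/p) + \epsilon_0$ with $\epsilon_0$ exponentially small.

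Third, to convert this $\ell_2$-type bound to the desired multiplicative $\ell_\infty$ guarantee, I would apply the post-processing step of Lemma~\ref{lem:fix_small_weight} to $\widehat{w} = (\diag(\ma(\ma^{\top}\mw^{1-2/p}\ma)^{-1}\ma^{\top}))^{2/p}$, which amplifies any $\|\cdot\|_{\mlpweight^{-1}}$ error by a factor of $4(1+2/p)^2\sqrt{n}$. Tracking the numerics, it suffices to choose $\delta = \Theta((4-p)\epsilon/256)$ and $T$ as in the algorithm so that the final output satisfies $\|\mlpweight^{-1}(\widehat{w} - \lpweight)\|_\infty \leq \epsilon$.

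The main obstacle is the appearance of the $(4-p)$ factor in the required leverage-score accuracy $\delta$. This factor arises because Lemma~\ref{lem:fix_small_weight} only controls $\widehat{w} - \lpweight$ when the underlying $w$ is already $(p/(8(p+2)))$-close multiplicatively; translating an $\ell_2$ weight error into the needed $\ell_\infty$ closeness costs a factor involving $(1+2/p)^2$, and matching this against the inexact-gradient noise $\delta/p$ forces $\delta$ to scale with $(4-p)$ to keep the induced shift in the fixed point of the perturbed iteration inside the multiplicative neighborhood where all stability lemmas apply. The stability computation of Lemma~\ref{lem:stable_grad} is exactly the tool needed to justify this scaling: a perturbation $e^{\pm\delta}$ of $\sigma/w$ shifts the optimality condition by at most $|1 - 2/p|\delta + (2/p)\delta$, and the condition $\epsilon < 2/p - |1-2/p|$ is precisely what ensures the perturbed equation remains contractive enough for the homotopy argument to close.
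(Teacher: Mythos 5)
Your overall architecture is right at the edges (use Theorem~\ref{thm:constrained_minimization} and Lemma~\ref{lem:Hessian_of_weight} for the contraction rate, post-process with Lemma~\ref{lem:fix_small_weight}, invoke Lemma~\ref{lem:stable_grad} somewhere), but the central step --- the inexact-gradient recursion in the $\mw_{(0)}^{-1}$ norm --- does not close. The multiplicative leverage-score error $e^{\pm\delta}$ perturbs the gradient $\onesVec - w_{(j)}^{-1}\sigma(\mw_{(j)}^{\frac12-\frac1p}\ma)$ by roughly $\delta$ per coordinate, but measured in the dual norm $\|\cdot\|_{\mw_{(0)}}$ this contributes $\delta\sqrt{\sum_i w_i^{(0)}} \approx \delta\sqrt{n}$ (all $m$ coordinates contribute, and $\sum_i w_i^{(0)}\approx n$). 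Your recursion
\[
\|w^{(j+1)} - \lpweight\|_{\mw_{(0)}^{-1}}^2 \leq \Bigl(1-\tfrac{cp}{L}\Bigr)\|w^{(j)} - \lpweight\|_{\mw_{(0)}^{-1}}^2 + \frac{C\,\delta^2 n}{p^2}
\]
therefore has an error \emph{floor} of order $\delta\sqrt{n}/p^{O(1)}$ that does not shrink with $T$. Feeding that floor through Lemma~\ref{lem:fix_small_weight}, which amplifies $\|w-\lpweight\|_{\mlpweight^{-1}}$ by $4(1+2/p)^2\sqrt{n}$, gives a final $\ell_\infty$ error of order $\delta\,n/p^{O(1)}$. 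To make that $\leq\epsilon$ you would need $\delta = O(\epsilon p^{O(1)}/n)$, not $\delta=\Theta((4-p)\epsilon)$ as the algorithm uses. So the claim in your plan that ``tracking the numerics, it suffices to choose $\delta = \Theta((4-p)\epsilon/256)$'' is the gap: the $\ell_2$ bookkeeping you propose forces a $\delta$ that scales like $1/n$, whereas the theorem (and the algorithm it analyzes) requires an $n$-independent $\delta$.

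The paper avoids this by not analyzing the noisy iteration as a perturbed fixed-point in $\ell_2$ at all. Instead it \emph{couples} the actual iterates $w^{(j)}$ to a hypothetical exact run $v^{(j)}$ (same $w^{(0)}$, exact leverage scores) and tracks the entrywise multiplicative discrepancy $w^{(j)}_i = e^{\delta_i^{(j)}}v^{(j)}_i$. Lemma~\ref{lem:stable_grad} --- which you invoke only peripherally --- is precisely what makes this work: it shows that a multiplicative perturbation of $w$ produces a controlled multiplicative perturbation of $w^{-1}\sigma(\mw^{\frac12-\frac1p}\ma)$, coordinate by coordinate, with \emph{no} $\sqrt{n}$ factor. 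The resulting recursion on $\delta^{(j)}=\max_i|\delta_i^{(j)}|$ is contractive with fixed-point $O\bigl(\delta / (\tfrac2p - |1-\tfrac2p|)\bigr)$, which is the $(4-p)$ factor and has no dependence on $n$. Only after that does the paper combine this multiplicative bound (for the coupling error) with Lemma~\ref{lem:weight_iterative} (which supplies the geometrically decaying $\ell_2$ term for the exact run), so the $\sqrt{n}$ amplification from Lemma~\ref{lem:fix_small_weight} only hits the decaying term, not the constant noise floor. Your plan needs to replace the $\ell_2$ recursion with this $\ell_\infty$-coupling recursion; as written it proves a strictly weaker statement.
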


\begin{proof}
Consider an execution of \textbf{$\code{computeApxWeight}(x,w^{(0)},\varepsilon)$}
where there is no error in computing leverages scores, i.e. $\vsigma^{(j)}=\sigma(\mw_{(j)}^{\frac{1}{2}-\frac{1}{p}}\ma)$,
and let $\vv^{(j)}$ denote the $w$ computed during this idealized
execution of $\code{computeApxWeight}$. We will show that $w^{(j)}$
and $v^{(j)}$ are multiplicatively close.

Suppose that $w_{i}^{(j)}=e^{\delta_{i}^{(j)}}v_{i}^{(j)}$ with $|\delta_{i}^{(j)}|\leq\delta^{(j)}$
for some $\delta^{(j)}\geq0$. Define $\overline{v}^{(j+1)},\overline{w}^{(j+1)}\in\dWeight$
to be $v^{(j+1)}$ and $w_{i}^{(j+1)}$ before taking the median,
i.e. 
\[
\overline{v}_{i}^{(j+1)}=v^{(j)}-\frac{1}{L}\left(w^{(0)}-\frac{w^{(0)}}{v^{(j)}}\sigma(\mv_{(j)}^{\frac{1}{2}-\frac{1}{p}}\ma)\right)\text{ and }\overline{w}_{i}^{(j+1)}=w^{(j)}-\frac{1}{L}\left(w^{(0)}-\frac{w^{(0)}}{w^{(j)}}\sigma^{(j)}\right)
\]
Using $\pm\delta$ to denote a real value with magnitude at most $\delta$
and applying Lemma~\ref{lem:stable_grad} with $v=v^{(j)}$ and $w=w^{(j)}$,
we have
\begin{align}
\overline{w}_{i}^{(j+1)}-\overline{v}_{i}^{(j+1)} & =w^{(j)}-v^{(j)}+\frac{w^{(0)}}{L}\left(\frac{e^{\pm\delta}\sigma(\mw_{(j)}^{\frac{1}{2}-\frac{1}{p}}\ma)}{w^{(j)}}-\frac{\sigma(\mv_{(j)}^{\frac{1}{2}-\frac{1}{p}}\ma)}{v^{(j)}}\right)\nonumber \\
 & =(e^{\delta_{i}^{(j)}}-1)v_{i}^{(j)}+\frac{w^{(0)}}{L}\left(e^{-\frac{2}{p}\delta_{i}^{(j)}\pm|1-\frac{2}{p}|\delta^{(j)}\pm\delta}-1\right)\cdot\frac{\sigma(\mv_{(j)}^{\frac{1}{2}-\frac{1}{p}}\ma)}{v^{(j)}}.\label{eq:w_v_error}
\end{align}
Since $\|\mw_{(0)}^{-1}(w^{(0)}-v^{(j)})\|_{\infty}\leq r$ and that
$\|\mw_{(0)}^{-1}(w^{(0)}-\lpweight(\ma))\|_{\infty}\leq r$, we have
that $\lpweight(\ma)=e^{\pm3r}v^{(j)}$. Lemma \ref{lem:stable_grad}
shows that for $w=w_{p}(\ma)$ we have
\begin{equation}
e^{-3(\frac{2}{p}+|1-\frac{2}{p}|)r}\cdot w_{i}^{-1}\sigma(\mw^{\frac{1}{2}-\frac{1}{p}}\ma)_{i}\leq(v_{i}^{(j)})^{-1}\sigma(\mv_{(j)}^{\frac{1}{2}-\frac{1}{p}}\ma)_{i}\leq e^{3(\frac{2}{p}+|1-\frac{2}{p}|)r}\cdot w_{i}^{-1}\sigma(\mw^{\frac{1}{2}-\frac{1}{p}}\ma)_{i}.\label{eq:w_j_sigma}
\end{equation}
where Using that $w_{i}=\sigma(\mw^{\frac{1}{2}-\frac{1}{p}}\ma)_{i}$,
\eqref{eq:w_j_sigma}, and \eqref{eq:w_v_error}, we have that
\begin{align*}
\overline{w}_{i}^{(j+1)}-\overline{v}_{i}^{(j+1)} & =(e^{\delta_{i}^{(j)}}-1)v_{i}^{(j)}+\frac{w^{(0)}}{L}\left(e^{-\frac{2}{p}\delta_{i}^{(j)}\pm|1-\frac{2}{p}|\delta^{(j)}\pm\delta}-1\right)e^{\pm3(1+\frac{4}{p})r}
\end{align*}
Since $w^{(j+1)}$ and $v^{(j+1)}$ are just truncation of $\overline{w}^{(j+1)}$
and $\overline{v}^{(j+1)}$, we have the same bound for $w_{i}^{(j+1)}-v_{i}^{(j+1)}$.
Using $w_{i}^{(j+1)}=e^{\delta_{i}^{(j+1)}}v_{i}^{(j+1)}$, we get
that
\[
(e^{\delta_{i}^{(j+1)}}-1)v_{i}^{(j+1)}=(e^{\delta_{i}^{(j)}}-1)v_{i}^{(j)}+\frac{w^{(0)}}{L}\left(e^{-\frac{2}{p}\delta_{i}^{(j)}\pm|1-\frac{2}{p}|\delta^{(j)}\pm\delta}-1\right)e^{\pm3(1+\frac{4}{p})r}.
\]
Finally, we note that $v^{(j+1)}=e^{\pm2r}w^{(0)}$ and hence
\[
e^{\delta_{i}^{(j+1)}}-1=e^{\pm4r}(e^{\delta_{i}^{(j)}}-1)+\frac{1}{L}\left(e^{-\frac{2}{p}\delta_{i}^{(j)}\pm|1-\frac{2}{p}|\delta^{(j)}\pm\delta}-1\right)e^{\pm3(2+\frac{4}{p})r}.
\]

Using that $L=\max\{4,\frac{8}{p}\}$, $r=\frac{p^{2}(4-p)}{2^{20}}$,
$\left|\delta_{i}^{(j)}\right|\leq\delta^{(j)}\leq2r$, $\delta\leq\frac{1}{32}$
and $p\leq4$ we obtain
\begin{align}
 & e^{\delta_{i}^{(j+1)}}-1\nonumber \\
= & (e^{\delta_{i}^{(j)}}-1)\pm8r\delta_{i}^{(j)}+\frac{1}{L}\left(e^{-\frac{2}{p}\delta_{i}^{(j)}\pm|1-\frac{2}{p}|\delta^{(j)}\pm\delta}-1\right)\pm\frac{6}{L}\left(2+\frac{4}{p}\right)r\left(\left(1+\frac{4}{p}\right)\delta^{(j)}+\delta\right)\nonumber \\
= & \delta_{i}^{(j)}\pm(\delta^{(j)})^{2}\pm8r\delta_{i}^{(j)}+\frac{1}{L}\left(-\frac{2}{p}\delta_{i}^{(j)}\pm\left|1-\frac{2}{p}\right|\delta^{(j)}\pm2\delta\pm\left(\left(1+\frac{4}{p}\right)\delta^{(j)}\right)^{2}\right)\nonumber \\
 & \enspace\enspace\enspace\pm\frac{6}{L}\left(2+\frac{4}{p}\right)r\left(\left(1+\frac{4}{p}\right)\delta^{(j)}+\delta\right)\nonumber \\
= & \left(1-\frac{2}{pL}\right)\delta_{i}^{(j)}\pm\frac{1}{L}\left|1-\frac{2}{p}\right|\delta^{(j)}\pm\frac{3\delta}{L}\pm2(\delta^{(j)})^{2}\pm8r\delta_{i}^{(j)}\pm\frac{12}{L}\left(1+\frac{4}{p}\right)^{2}\delta^{(j)}r\nonumber \\
= & \left(1-\frac{2}{pL}\right)\delta_{i}^{(j)}\pm\frac{1}{L}\left|1-\frac{2}{p}\right|\delta^{(j)}\pm\frac{3\delta}{L}\pm40\left(1+\frac{4}{p}\right)\delta^{(j)}r\label{eq:exp_delta_complicated}
\end{align}
where we used $e^{x}=1\pm\frac{4}{3}x$ for $|x|\leq\frac{1}{2}$
in the first equality, we used $e^{x}=1+x\pm x^{2}$ for $|x|\leq\frac{1}{2}$
in the second equality.

For the first two terms, we have that
\[
\left|\left(1-\frac{2}{pL}\right)\delta_{i}^{(j)}\pm\frac{1}{L}\left|1-\frac{2}{p}\right|\delta^{(j)}\right|\leq\left(1-\frac{2}{pL}+\frac{1}{L}\left|1-\frac{2}{p}\right|\right)\delta^{(j)}.
\]
Using this, $1+x\leq e^{x}$ and \eqref{eq:exp_delta_complicated}
and $\left|\delta_{i}^{(j)}\right|\leq\delta^{(j)}\leq2r$, we have
\[
\delta^{(j+1)}\leq\left(1-\frac{2}{pL}+\frac{1}{L}\left|1-\frac{2}{p}\right|+40\left(1+\frac{4}{p}\right)r\right)\delta^{(j)}+\frac{3\delta}{L}.
\]
Using our choice of $r$, we have $40(1+\frac{4}{p})r\leq\frac{1}{2L}(\frac{2}{p}-|1-\frac{2}{p}|)$
and hence
\[
\delta^{(j+1)}\leq\left(1-\frac{1}{2L}\left(\frac{2}{p}-\left|1-\frac{2}{p}\right|\right)\right)\delta^{(j)}+\frac{3\delta}{L}
\]
and hence for all $j\in[m]$
\[
\delta^{(j)}\leq\frac{1}{\frac{1}{2L}(\frac{2}{p}-|1-\frac{2}{p}|)}\cdot\frac{3\delta}{L}\leq\frac{8\delta}{\frac{2}{p}-|1-\frac{2}{p}|}\leq\frac{\epsilon}{4}\,.
\]

Applying Lemma~\ref{lem:weight_iterative} and Lemma \ref{lem:fix_small_weight},
and recalling that $k=\left\lceil 80(\frac{p}{2}+\frac{2}{p})\log\left(\frac{pn}{32\epsilon}\right)\right\rceil $,
we have
\begin{align*}
\normFull{\mlpweight^{-1}(\lpweight-w^{(k)})}_{\infty} & \leq\normFull{\mlpweight^{-1}(\lpweight-\vv^{(k)})}_{\infty}+\normFull{\mlpweight^{-1}\left(\vv^{(k)}-w^{(k)}\right)}_{\infty}\\
 & \leq4\left(1+\frac{2}{p}\right)^{2}\sqrt{n}\cdot\normFull{w-\lpweight}_{\mlpweight^{-1}}+2\delta^{(k)}\\
 & \leq4\left(1+\frac{2}{p}\right)^{2}\sqrt{n}\cdot2\sqrt{n}\cdot\left(1-\frac{1}{16(\frac{p}{2}+\frac{2}{p})}\right)^{\frac{k}{2}}\cdot\frac{p}{160}+2\delta^{(k)}\leq\epsilon.
\end{align*}
\end{proof}
Unfortunately, we cannot use the previous lemma directly as computing
$\vsigma$ exactly is too expensive for our purposes. However, in
\cite{spielmanS08sparsRes,drineas2012fast} they showed that we can
compute leverage scores, $\vsigma$, approximately by solving only
polylogarithmically many regression problems (See \cite{mahoney11survey,li2012iterative,woodruff2014sketching,cohen2015uniform}
for more details). These results use the fact that the leverage scores
of the the $i^{th}$ constraint, i.e. $\sigma(\ma)_{i}$ is the $\ellTwo$
length of vector $\ma(\ma^{\top}\ma)^{-1}\ma^{\top}\indicVec i$ and
that by the Johnson-Lindenstrauss Lemma these lengths are persevered
up to multiplicative error if we project these vectors onto certain
random low dimensional subspace. Consequently, to approximate the
$\vsigma$ we first compute the projected vectors and then use it
to approximate $\vsigma$ and hence only need to solve $\otilde(1)$
regression problems. For completeness, we provide an algorithm and
theorem statement below most closely resembling the one from \cite{spielmanS08sparsRes}.
here:

\begin{algorithm2e}[H]

\caption{\textbf{$\vsigma^{\mathrm{(apx)}}=\code{computeLeverageScores}(\ma,\epsilon)$}}

\SetAlgoLined

Let $q^{(j)}$ be $k$ random $\pm1/\sqrt{k}$ vectors of length $m$
with $k=O(\log(m)/\epsilon^{2})$. 

Compute $\vl^{(j)}=(\ma^{\top}\ma)^{-1}\ma^{\top}q^{(j)}$ and $\vp^{(j)}=\ma\vl^{(i)}$.

\textbf{Output:} $\sum_{j=1}^{k}\left(\vp_{i}^{(j)}\right)^{2}$.

\end{algorithm2e}
\begin{lem}
\label{thm:weights_full:leverage-score_sampling} For $\epsilon\in(0,1)$
with probability at least $1-\frac{1}{m^{O(1)}}$\footnote{This is the only place our algorithm uses randomness for general linear
programs. Since we can verify the centrality of central path by computing
leverage score exactly (instead of using this theorem) every $m^{O(1)}$
iterations of interior point method, $1-\frac{1}{m^{O(1)}}$ probability
is high enough even for the case $\epsilon$ is doubly exponentially
small.} the algorithm\textbf{ $\code{computeLeverageScores}$} returns $\vsigma^{\mathrm{(apx)}}$
such that for all $i\in[m]$ , $\left(1-\epsilon\right)\sigma(\ma)_{i}\leq\vsigma_{i}^{\mathrm{(apx)}}\leq\left(1+\epsilon\right)\sigma(\ma)_{i}$,
by solving only $O(\epsilon^{-2}\cdot\log m)$ linear systems.
\end{lem}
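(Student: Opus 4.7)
The plan is to observe that the algorithm's output at coordinate $i$ is exactly the squared Euclidean norm of the projection of $e_i$ onto $\im(\ma)$, pre-multiplied by a random $\pm 1/\sqrt{k}$ matrix, and then invoke the Johnson--Lindenstrauss lemma with a union bound over the $m$ coordinates.

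First, I would set up the algebraic identity. Let $\mProj \defeq \ma(\ma^{\top}\ma)^{-1}\ma^{\top}$ be the orthogonal projection onto $\im(\ma)$ and let $\mq \in \R^{k \times m}$ be the random matrix whose $j$-th row is $(q^{(j)})^{\top}$. By construction $\vp^{(j)} = \mProj q^{(j)}$, so $\vp^{(j)}_i = \cordVec i^{\top} \mProj q^{(j)} = \langle \mProj \cordVec i, q^{(j)}\rangle$ (using that $\mProj$ is symmetric). Summing the squares over $j$ gives
\[
\vsigma^{\mathrm{(apx)}}_i \;=\; \sum_{j=1}^{k} \langle \mProj \cordVec i, q^{(j)}\rangle^{2} \;=\; \norm{\mq \mProj \cordVec i}_{2}^{2}.
\]
On the other hand, since $\mProj$ is an orthogonal projection ($\mProj^{2}=\mProj$), the true leverage score satisfies $\sigma(\ma)_i = \mProj_{ii} = \cordVec i^{\top}\mProj^{2}\cordVec i = \norm{\mProj \cordVec i}_{2}^{2}$.

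Next, I would apply the standard Johnson--Lindenstrauss concentration bound for $\pm 1/\sqrt{k}$ sign matrices (equivalently, a scaled Rademacher sketch): there is a universal constant $C$ such that for any \emph{fixed} vector $v \in \R^{m}$,
\[
\Pr\Bigl[\,\bigl|\norm{\mq v}_{2}^{2} - \norm{v}_{2}^{2}\bigr| > \epsilon \norm{v}_{2}^{2}\,\Bigr] \;\leq\; 2\exp(-C\epsilon^{2}k).
\]
Choosing $k = O(\epsilon^{-2}\log m)$ with a sufficiently large constant and taking a union bound over the $m$ fixed vectors $\mProj \cordVec 1,\dots,\mProj \cordVec m$ shows that simultaneously for all $i \in [m]$,
\[
(1-\epsilon)\,\norm{\mProj \cordVec i}_{2}^{2} \;\leq\; \norm{\mq \mProj \cordVec i}_{2}^{2} \;\leq\; (1+\epsilon)\,\norm{\mProj \cordVec i}_{2}^{2},
\]
with probability at least $1 - m^{-\Omega(1)}$, where the hidden constant in $\Omega(1)$ is tunable through the constant in $k$. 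Substituting the two identities above yields $(1-\epsilon)\sigma(\ma)_i \leq \vsigma^{\mathrm{(apx)}}_i \leq (1+\epsilon)\sigma(\ma)_i$ as desired. Finally, the runtime claim is immediate: computing each $\vl^{(j)} = (\ma^{\top}\ma)^{-1}\ma^{\top}q^{(j)}$ is one regression solve against $\ma$, and $\vp^{(j)} = \ma\vl^{(j)}$ plus the accumulation of squares is dominated by $O(\nnz(\ma))$ work per $j$, so the total work outside the linear solves is subsumed.

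There is no real obstacle here; the only substantive ingredient is the JL concentration for sign matrices, which is standard (Achlioptas). The one small thing to be careful about is that the same matrix $\mq$ is reused across all $i$, so concentration must hold simultaneously -- this is exactly what the union bound over the $m$ vectors $\mProj \cordVec i$ handles, and is why $k$ scales like $\log m$ rather than a constant.
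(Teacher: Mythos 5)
Your proof is correct and follows exactly the route the paper sketches (the paper itself only cites \cite{spielmanS08sparsRes,drineas2012fast} and outlines the same JL-on-$\mProj e_i$ argument rather than spelling it out). One minor remark: the paper's prose says leverage scores equal the $\ell_2$ \emph{length} of $\mProj e_i$, but you correctly use the \emph{squared} length $\|\mProj e_i\|_2^2 = \mProj_{ii} = \sigma(\ma)_i$, which is what makes the identity and the JL bound line up.
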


In the next section we show how to combine these results to obtain
our main result on approximate weight computation, Theorem~\ref{thm:lewisweightexactapproxfull}.

\subsection{Initial Weight and Final Theorems}

\label{sec:lewis_initial}

Here, we show how to compute an initial weight without having an approximate
weight to help the computation. While we can use the results of the
previous section during the iterations of our linear programming algorithms
(as we have shown that the Lewis weights do not change too quickly)
we still need to design a routine to compute the initial weights.
Here we show that the algorithm $\code{computeInitialWeight}$ (Algorithm~\ref{alg:initialweight})
simply calls the weight computation algorithms of the previous sections
$\tilde{O}(\sqrt{n})$ times by first computing lewis weights for
$p=2$, i.e. leverage scores, and then gradually decreasing $p$ can
achieve this goal.

\begin{algorithm2e}[H]

\label{alg:initialweight}

\caption{\textbf{$\ensuremath{w=\code{computeInitialWeight}(\ma,p_{\text{target}},\epsilon)}$}}

\SetAlgoLined

$p=2$

\While{$p\neq p_{\text{target}}$}{

Let $r$ be defined as in $\code{computeApxWeight}$ or $\code{computeExactWeight}$

$h=\frac{\min\{2,p\}}{\sqrt{n}\log\frac{me^{2}}{n}}\cdot r$

$\next p=\code{median}(p-h,p_{\text{target}},p+h)$.

$w=\code{computeApxWeight}(\next p,w^{\frac{\next p}{p}},\frac{r}{4})$
or $\code{computeExactWeight}(\next p,w^{\frac{\next p}{p}},\frac{r}{4})$

$p=\next p$.

}

\textbf{Output:} $\code{computeApxWeight}(p_{\text{target}},w,\epsilon)$.

\end{algorithm2e}

The correctness of the above algorithm directly follows from the following
lemma:
\begin{lem}
\label{lem:LS_weights_change_bound}For all $q>0$ let $\widetilde{w}_{q}\in\dWeight$
denote the vector with $[\widetilde{w}_{q}]_{i}=[w_{p}(\ma)]_{i}^{q/p}$
for all $i\in[m]$. If $|p-q|\leq\frac{\min\{2,p\}}{\sqrt{n}\log(me^{2}/n)}$
then 
\begin{equation}
\normFull{\log\left(\frac{\lqweight(\ma)}{\widetilde{\lqweight}}\right)}_{\infty}\leq\max\left\{ \frac{1}{2},\frac{1}{p}\right\} \sqrt{n}\log\left(\frac{me^{2}}{n}\right)\cdot|p-q|\,.\label{eq:lewis_q_change}
\end{equation}
\end{lem}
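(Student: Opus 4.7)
The plan is to prove the bound by deriving and integrating an explicit formula for $\frac{d}{dr}\log w_r(\ma)$. The starting observation is the identity, obtained by taking logarithms of the Lewis weight fixed point equation $w_{r,i}=w_{r,i}^{1-2/r}[\ma\mathbf{S}_r^{-1}\ma^\top]_{ii}$ with $\mathbf{S}_r:=\ma^\top\mw_r^{1-2/r}\ma$, that $\log w_{r,i}=(r/2)\,t_i(r)$, where $t_i(r):=\log[\ma\mathbf{S}_r^{-1}\ma^\top]_{ii}$. Since $[\widetilde w_q]_i=w_{p,i}^{q/p}$ satisfies $\log[\widetilde w_q]_i=(q/2)t_i(p)$, the target quantity simplifies to
\[
\log\bigl(\lqweight(\ma)/\widetilde w_q\bigr)_i \;=\; \tfrac{q}{2}\bigl(t_i(q)-t_i(p)\bigr),
\]
reducing matters to an $\ell_\infty$ bound on $t_i'(r)$ for $r$ between $p$ and $q$.

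To obtain $t_i'(r)$, I would implicitly differentiate $\log w_r=(r/2)t(r)$ to write $u:=\frac{d\log w_r}{dr}=\frac{1}{2}t(r)+\frac{r}{2}t'(r)$, and separately differentiate $\mathbf{S}_r$ via Lemma~\ref{lem:deriv:proj}, tracking both the exponent $1-2/r$ and the implicit $r$-dependence of $w_r$. Substituting the first relation to eliminate $t(r)$ in favor of $\log w_r$, and using the row-sum identity $\sum_j[\mProj^{(2)}(\mw_r^{1/2-1/r}\ma)]_{ij}=w_{r,i}$ from Lemma~\ref{lem:tool:projection_matrices}(1), I expect after simplification the self-consistency equation
\[
t'(r)\;=\;-\tfrac{1}{r}\Bigl(\mi+\tfrac{r-2}{2}\mathbf{R}_r\Bigr)^{-1}\mathbf{R}_r\log w_r, \qquad \mathbf{R}_r:=\mw_r^{-1}\mProj^{(2)}(\mw_r^{1/2-1/r}\ma).
\]

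The key estimate is then an $\ell_\infty$ bound on the right-hand side, which rests on two structural facts. First, $\mathbf{R}_r$ is similar via $\mw_r^{1/2}$-conjugation to a symmetric PSD matrix with spectrum in $[0,1]$ (since $\mProj^{(2)}\preceq\mSigma$), so $(\mi+\tfrac{r-2}{2}\mathbf{R}_r)^{-1}$ has operator norm at most $\max\{1,2/r\}$ in the $\mw_r$-inner product. Second, since $(\mi+c\mathbf{R}_r)^{-1}$ commutes with $\mathbf{R}_r$, one may apply $\mathbf{R}_r$ last and invoke Lemma~\ref{lem:tool:projection_matrices}(4), which yields the free conversion $\|\mathbf{R}_r v\|_\infty\le\|v\|_{\mw_r}$. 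Together these give $\|t'(r)\|_\infty\le (1/r)\max\{1,2/r\}\,\|\log w_r\|_{\mw_r}$. A short Lagrangian analysis of $\sum_i w_i(\log w_i)^2$ subject to $\sum w_i=n$ and $w_i\in(0,1]$ (using that $w(\log w)^2$ is concave on $(0,1/e]$ so the maximum lies at the uniform profile $w_i=n/m$, with the $e^2$ correction absorbing the corner case $n>m/e$) then bounds $\|\log w_r\|_{\mw_r}^2\le n\log^2(me^2/n)$. Integrating over $r$ between $p$ and $q$ and multiplying by $q/2$ produces the claim; the final constant $\max\{1/2,1/p\}$ emerges because $r$ stays within a small multiple of $p$ under the hypothesis on $|p-q|$, so the prefactors $(q/2)(1/r)$ and $(q/2)(2/r^2)$ simplify to $\tfrac{1}{2}$ and $\tfrac{1}{p}$ up to absorbed constants.

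The main obstacle will be executing the implicit differentiation to arrive at the self-consistency equation for $t'(r)$ cleanly. The derivative $\frac{d\mathbf{S}_r}{dr}$ contains two coupled terms, one from differentiating the exponent (which introduces $\log w_r$) and one from differentiating $w_r$ (which introduces $u$), and substituting $\log w_r=(r/2)t(r)$ to eliminate one in favor of the other is algebraically delicate and must precisely line up with the projection-matrix identities to produce the clean resolvent form $(\mi+\tfrac{r-2}{2}\mathbf{R}_r)^{-1}\mathbf{R}_r$. Once this equation is in hand, the rest of the argument is the operator-norm estimate and the Lagrangian optimization outlined above.
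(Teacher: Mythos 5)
Your proposal is correct in outline and takes a genuinely different route from the paper's proof. The paper differentiates $\ln(\lqweight/\widetilde{\lqweight})$ in $q$ directly: because the term $\frac{d}{dq}\log\widetilde{\lqweight}=\frac{1}{p}\log\lpweight=\frac{1}{q}\log\widetilde{\lqweight}$ reintroduces $\widetilde{\lqweight}$, the resulting differential inequality is self-referential, forcing a bootstrap (``let $\delta$ be the largest number such that $\|\ln(\lqweight/\widetilde{\lqweight})\|_\infty\le 1$\ldots''). Your reparametrization $\log w_{r,i}=\tfrac{r}{2}t_i(r)$ with $t_i(r)=\log[\ma(\ma^\top\mw_r^{1-2/r}\ma)^{-1}\ma^\top]_{ii}$ makes $\widetilde{\lqweight}$ \emph{exactly} the $t(p)$-based surrogate, so $\log(\lqweight/\widetilde{\lqweight})=\tfrac{q}{2}(t(q)-t(p))$ and the self-reference disappears: one just integrates a clean $\ell_\infty$ bound on $t'(r)$. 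That is a real structural simplification. Two remarks on the execution. First, you needn't redo the delicate implicit differentiation you flag as the main obstacle: the paper already records $\frac{dw_r}{dr}=2\mw_r(\mw_r-(1-\tfrac{2}{r})\mLambda_r)^{-1}\mLambda_r\frac{\log w_r}{r^2}$ in the proof of this lemma, and your claimed self-consistency equation is precisely $t'(r)=\tfrac{2}{r}\bigl[\mw_r^{-1}\tfrac{dw_r}{dr}-\tfrac{1}{r}\log w_r\bigr]$; substituting and using $\mLambda_r=\mw_r-\mProj_r^{(2)}$ gives your resolvent form after two lines of algebra (I checked it matches). Your subsequent spectral bound on $(\mi+\tfrac{r-2}{2}\mathbf{R}_r)^{-1}$ in the $\mw_r$-inner product and the application of Lemma~\ref{lem:tool:projection_matrices}(4) after commuting $\mathbf{R}_r$ past the resolvent are correct. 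Second, as you acknowledge, the integrand carries $r$-dependent prefactors $\tfrac{q}{2r}\max\{1,\tfrac{2}{r}\}$ rather than the fixed $\max\{\tfrac12,\tfrac1p\}$, and under the stated hypothesis $r$ stays within a factor $\le\tfrac32$ of $p$, so your route establishes the inequality only up to a constant factor close to $1$. This is a shared blemish --- the paper's own proof also silently uses $\max\{\tfrac12,\tfrac1p\}$ where the derivative at $q$ gives $\max\{\tfrac12,\tfrac1q\}$ --- and it is harmless downstream because the lemma's constant only sizes the homotopy step $h$ in $\code{computeInitialWeight}$, which can absorb a factor of $2$. Net: a valid, bootstrap-free alternative; if you want it airtight, either track the $r$-dependent prefactor through the integral and restate the bound with the supremum of $\tfrac{q}{2r}\max\{1,\tfrac{2}{r}\}$ over the integration interval, or tighten the hypothesis by a constant.
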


\begin{proof}
For notational convenience let $w=\lpweight(\ma)$, $\mw\defeq\mDiag(w)$
and $\mLambda\defeq\mLambda(\mw_{p}^{\frac{1}{2}-\frac{1}{p}}\ma)$.
Taking derivative with respect to $p$ on both sides and using Lemma~\ref{lem:deriv:proj}
yields
\[
\frac{d\lpweight(\ma)}{dp}=2\mLambda\left[\frac{(\frac{1}{2}-\frac{1}{p})w^{-\frac{1}{2}-\frac{1}{p}}\frac{dw}{dp}+w^{\frac{1}{2}-\frac{1}{p}}\frac{1}{p^{2}}\log w}{w^{\frac{1}{2}-\frac{1}{p}}}\right]=\mLambda\left[\left(1-\frac{2}{p}\right)\mw^{-1}\frac{dw}{dp}+\frac{2}{p^{2}}\ln w\right].
\]
Hence, we have that
\begin{equation}
\frac{d\lpweight(\ma)}{dp}=2\mw\left(\mw-\left(1-\frac{2}{p}\right)\mLambda\right)^{-1}\mLambda\cdot\frac{\ln w}{p^{2}}.\label{eq:dwdp}
\end{equation}
Lemma \ref{lem:LS_weights_de} and \ref{lem:LS_weights_stable} shows
that for all $h\in\R^{m}$
\[
\normFull{2\left(\mw-(1-\frac{2}{p})\mLambda\right)^{-1}\mLambda h-ph}_{\infty}\leq p\cdot\max\left\{ \frac{p}{2},1\right\} \cdot\norm h_{\mw}.
\]
Setting $h=p^{-2}\log w$ and using \eqref{eq:dwdp} we have
\[
\normFull{\mw^{-1}\frac{d\lpweight(\ma)}{dp}-\frac{\ln\lpweight}{p}}_{\infty}\leq p\cdot\max\left\{ \frac{p}{2},1\right\} \cdot\|p^{-2}\log w\|_{\mw}\leq\max\left\{ \frac{1}{2},\frac{1}{p}\right\} \normFull{\log w}_{\mw}.
\]
Finally, we note that
\[
\norm{\ln w}_{\mw}^{2}=\sum_{i\in[m]}w_{i}\log^{2}w_{i}\leq\sum_{w_{i}\leq\frac{1}{e}}w_{i}\log^{2}w_{i}+\sum_{w_{i}\in(\frac{1}{e},1]}w_{i}\leq n\log^{2}\frac{m}{n}+n\leq n\log^{2}\frac{me}{n}
\]
where we used that $w\log^{2}w$ is concave on $[0,\frac{1}{e}]$
and $\sum_{i\in[n]}w_{i}\leq n$. 

Combining these bounds yields that for all $q$, $\lqweight\defeq\lqweight(\ma)$,
and $\mw_{q}\defeq\mDiag(w_{q})$
\begin{align*}
\normFull{\frac{d}{dq}\ln(\lqweight/\widetilde{\lqweight})}_{\infty} & =\normFull{\mw_{q}^{-1}\frac{d}{dq}\lqweight-\log(\widetilde{\lqweight})\frac{1}{p}}_{\infty}\leq\max\left\{ \frac{1}{2},\frac{1}{p}\right\} \sqrt{n}\log\left(\frac{me}{n}\right)+\frac{1}{p}\norm{\ln(\lqweight/\widetilde{\lqweight})}_{\infty}\,.
\end{align*}
Now let $\delta$ be the largest number for which $q$ satisfying
$|p-q|\leq\delta$ implies that $\norm{\ln(\lqweight/\widetilde{\lqweight})}_{\infty}\leq1$.
Since for all such $q$ we have
\[
\normFull{\frac{d}{dq}\ln(\lqweight/\widetilde{\lqweight})}_{\infty}\leq\max\left\{ \frac{1}{2},\frac{1}{p}\right\} \sqrt{n}\log\left(\frac{me^{2}}{n}\right)
\]
and $\ln(w_{p}/w_{p})=\vzero$, integration yields that \eqref{eq:lewis_q_change}
holds for all such $q$. Therefore, it must be the case that $\delta\leq\left[\max\left\{ \frac{1}{2},\frac{1}{p}\right\} \sqrt{n}\log\left(\frac{me^{2}}{n}\right)\right]^{-1}$
and the result follows.
\end{proof}

We now have everything we need to prove our main theorems regarding
exact and approximate Lewis weight computation. First we prove the
result on exact weight computation (Theorem~\ref{thm:lewisexactfull})
and then we prove the result on approximate weight computation (Theorem~\ref{thm:lewisweightexactapproxfull}).
\begin{proof}[Proof of Theorem~\ref{thm:lewisexactfull}]
From Lemma \ref{lem:LS_weights_change_bound}, we see that each step
of $p$, we lies within the requirement of Theorem~\ref{thm:lewisweightexact}.
Furthermore, Lemma \ref{lem:LS_weights_change_bound} shows that it
takes $O(\sqrt{n}\cdot(p+\frac{1}{p})\cdot\log\frac{m}{n})$ steps
in the $\code{computeInitialWeight}$. Each call of $\code{computeExactWeight}$
involves $O((p+\frac{1}{p})\log(n\epsilon^{-1}(1+\frac{1}{p}))$ iterations
and each iteration involves computing leverage score, which takes
$O(mn^{\omega-1})$ work and $O(\log m)$ depth.
\end{proof}
\begin{proof}[Proof of Theorem~\ref{thm:lewisweightexactapproxfull}]
From Lemma \ref{lem:LS_weights_change_bound}, we see that each step
of $p$, we lies within the requirement of Theorem~\ref{thm:weights_full:approximate_weight}.
Furthermore, Lemma \ref{lem:LS_weights_change_bound} shows that it
takes $O(\sqrt{n}\cdot((4-p)^{-1}+p^{-2})\cdot\log\frac{m}{n})$ steps
in the $\code{computeInitialWeight}$. Each call of $\code{computeApxWeight}$
involves $O(p^{-1}\log(n/(p\epsilon)))$ iterations and each iteration
involves computing leverage score up to accuracy $\frac{\epsilon}{32(\frac{2}{p}-|1-\frac{2}{p}|)}=\Theta((4-p)\cdot\epsilon)$.
Finally, \ref{thm:weights_full:leverage-score_sampling} shows this
involves solving solving $O((4-p)^{-2}\epsilon^{-2}\log m)$ many
linear systems.
\end{proof}

\section{Chasing Game\label{sec:approx_path:chasing_zero}}

\global\long\def\trSetCurr{U^{(k)}}%

The goal of this section is to prove the following theorem:

\zerogame*

This theorem says that taking ``projected gradient steps'' using
the potential function $\Phi_{\mu}(\vx)$, suffices to maintain a
point $x^{(k)}$ sufficiently close to $y^{(k)}$ with respect to
$\ell_{\infty}$ provided that $y^{(k)}$ are updated by a direction
in $U^{(k)}$, noisy $z^{(k)}$ measurements to the $y^{(k)}$ are
available, and slightly large movements to the $x^{(k)}$ (i.e. by
$(1+\epsilon)U^{(k)}$) are allowed. Formally, this theorem analyze
the strategy of updating $x^{(k)}$ by setting the change, $\vDelta^{(k)}$,
to be the vector in $(1+\epsilon)\trSetCurr$ that best minimizes
the potential function of the observed position difference, i.e. $\Phi_{\mu}(x^{(k)}-z^{(k)})$
for careful choice of $\mu$. 

To prove Theorem~\ref{thm:zerogame}, we first show the following
properties of the potential function $\Phi_{\mu}$.
\begin{lem}
\label{lem:smoothing:helper} For all $\vx\in\Rm$ and $\mu>0$, we
have
\begin{equation}
e^{\mu\|\vx\|_{\infty}}\leq\Phi_{\mu}(\vx)\leq2me^{\mu\|\vx\|_{\infty}}\enspace\text{ and }\enspace\mu\Phi_{\mu}(\vx)-2\mu m\leq\normOne{\grad\Phi_{\mu}(\vx)}\label{eq:smoothing:pot_prop_1}
\end{equation}
Furthermore, for any symmetric convex set $U\subseteq\Rm$ and any
$\vx\in\Rm$, let $\vx^{\flat}\defeq\argmax_{\vy\in U}\left\langle x,y\right\rangle $
and $\norm{\vx}_{U}\defeq\max_{\vy\in U}\left\langle x,y\right\rangle $.
Then for all $\vx,\vy\in\Rm$ with $\normInf{\vx-\vy}\leq\delta\leq\frac{1}{5\mu}$
we have
\begin{equation}
e^{-\mu\delta}\norm{\grad\Phi_{\mu}(\vy)}_{U}-\mu\normOne{\grad\Phi_{\mu}(\vy)^{\flat}}\leq\left\langle \grad\Phi_{\mu}(\vx),\grad\Phi_{\mu}(\vy)^{\flat}\right\rangle \leq e^{\mu\delta}\norm{\grad\Phi_{\mu}(\vy)}_{U}+\mu e^{\mu\delta}\normOne{\grad\Phi_{\mu}(\vy)^{\flat}}.\label{eq:smoothing:pot_prop_2}
\end{equation}
If additionally $U$ is contained in a $\ellInf$ ball of radius $R$
then
\begin{equation}
e^{-\mu\delta}\norm{\grad\Phi_{\mu}(\vy)}_{U}-\mu mR\leq\norm{\grad\Phi_{\mu}(\vx)}_{U}\leq e^{\mu\delta}\norm{\grad\Phi_{\mu}(\vy)}_{U}+\mu e^{\mu\delta}mR.\label{eq:smoothing:pot_prop_3}
\end{equation}
\end{lem}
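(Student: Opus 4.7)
I will prove the four inequalities of the lemma in order.

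For (\ref{eq:smoothing:pot_prop_1}), the sandwich on $\Phi_\mu(x)$ is immediate from the definition: each of the $m$ summands $e^{\mu x_i}+e^{-\mu x_i}$ is at most $2e^{\mu|x_i|}\le 2e^{\mu\|x\|_\infty}$, while the index $i^\ast$ attaining $\|x\|_\infty$ contributes a summand of at least $e^{\mu\|x\|_\infty}$. For the gradient bound I will use the pointwise identity
\[
|e^{\mu x_i}-e^{-\mu x_i}| \;=\; (e^{\mu x_i}+e^{-\mu x_i})-2\min(e^{\mu x_i},e^{-\mu x_i}) \;\geq\; (e^{\mu x_i}+e^{-\mu x_i}) - 2,
\]
multiply by $\mu$, and sum over $i\in[m]$ to get $\|\grad\Phi_\mu(x)\|_1 \geq \mu\Phi_\mu(x)-2\mu m$.

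For (\ref{eq:smoothing:pot_prop_2}) the engine is the hyperbolic addition formula applied coordinate-wise with $\eta\defeq x-y$, which yields
\[
[\grad\Phi_\mu(x)]_i \;=\; \cosh(\mu\eta_i)\,[\grad\Phi_\mu(y)]_i \;+\; \sinh(\mu\eta_i)\cdot \mu\phi_i(y),
\]
where $\phi_i(y)\defeq e^{\mu y_i}+e^{-\mu y_i}$. I will multiply by $v_i\defeq [\grad\Phi_\mu(y)^\flat]_i$ and split each coordinate according to the sign relationship between $v_i$ and $[\grad\Phi_\mu(y)]_i$. The hypothesis $\mu\delta\le 1/5$ gives $\cosh(\mu\eta_i)\le e^{\mu\delta}$ and $|\sinh(\mu\eta_i)|\le \sinh(\mu\delta)$. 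The crucial rewriting is
\[
\mu\phi_i(y) \;=\; |[\grad\Phi_\mu(y)]_i| + 2\mu\min(e^{\mu y_i},e^{-\mu y_i}) \;\leq\; |[\grad\Phi_\mu(y)]_i| + 2\mu,
\]
so that the first piece combines with the $\cosh(\mu\eta_i)$ factor via the identities $\cosh(\mu\delta)\pm\sinh(\mu\delta)=e^{\pm\mu\delta}$ (choosing the sign so as to produce the $e^{\mu\delta}$ multiplier on $[\grad\Phi_\mu(y)]_iv_i$ in the upper bound and the $e^{-\mu\delta}$ multiplier in the lower bound), while the ``$+2\mu$'' piece contributes a residual bounded coordinate-wise by $2\mu\sinh(\mu\delta)|v_i|\le \mu e^{\mu\delta}|v_i|$ (using $2\sinh(\mu\delta)\le e^{\mu\delta}$). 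Summing over $i$ then yields both sides of (\ref{eq:smoothing:pot_prop_2}).

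The inequalities (\ref{eq:smoothing:pot_prop_3}) are not pure specializations of (\ref{eq:smoothing:pot_prop_2}), since the latter keeps $\grad\Phi_\mu(y)^\flat$ on the left. The lower bound follows directly from $\|\grad\Phi_\mu(x)\|_U \geq \langle\grad\Phi_\mu(x),\grad\Phi_\mu(y)^\flat\rangle$ together with the lower half of (\ref{eq:smoothing:pot_prop_2}). For the upper bound I will apply the lower half of (\ref{eq:smoothing:pot_prop_2}) after swapping the roles of $x$ and $y$ and setting $v^\ast\defeq\grad\Phi_\mu(x)^\flat$, which gives
\[
\langle\grad\Phi_\mu(y),v^\ast\rangle \;\geq\; e^{-\mu\delta}\|\grad\Phi_\mu(x)\|_U - \mu\|v^\ast\|_1\,;
\]
since also $\langle\grad\Phi_\mu(y),v^\ast\rangle\le\|\grad\Phi_\mu(y)\|_U$, rearranging produces the claimed bound on $\|\grad\Phi_\mu(x)\|_U$. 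In both cases $\|v\|_1\le m\|v\|_\infty\le mR$ because $U$ is contained in the $\ell_\infty$ ball of radius $R$.

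The main obstacle is the sign-aware bookkeeping inside (\ref{eq:smoothing:pot_prop_2}): on coordinates where $v_i$ and $[\grad\Phi_\mu(y)]_i$ disagree in sign, the naive hyperbolic expansion leaves an uncontrolled $|[\grad\Phi_\mu(y)]_i||v_i|$ residual, which a priori is not dominated by $\mu\|v\|_1$. Switching between the two decompositions $\cosh(\mu\delta)\pm\sinh(\mu\delta)=e^{\pm\mu\delta}$ based on sign alignment, combined with the splitting $\mu\phi_i(y)\le|[\grad\Phi_\mu(y)]_i|+2\mu$ that extracts the bounded contribution $2\mu\min(e^{\mu y_i},e^{-\mu y_i})\le 2\mu$ from the potentially large $\phi_i(y)$, is precisely what collapses the error to the clean additive term $\mu e^{\mu\delta}\|v\|_1$.
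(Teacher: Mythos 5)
Your proof is correct, and the route through \eqref{eq:smoothing:pot_prop_2} differs from the paper's in a worthwhile way. The paper estimates the scalar quantity $|p'_\mu(x_i)|$ in terms of $|p'_\mu(y_i)|$ (where $p_\mu(t)=e^{\mu t}+e^{-\mu t}$), and then, to transfer that bound to the signed product with $\grad\Phi_\mu(\vy)^\flat_i$, it is forced into a two-case analysis depending on whether $\sign(x_i)=\sign(y_i)$: when the signs agree the scalar inequality transfers directly, and when they disagree the paper observes $|x_i|\le\delta$ and bounds $|p'_\mu(x_i)|\le\mu/2$ outright. You instead expand the signed quantity $[\grad\Phi_\mu(\vx)]_i v_i$ exactly via $\sinh(\mu x_i)=\sinh(\mu y_i)\cosh(\mu\eta_i)+\cosh(\mu y_i)\sinh(\mu\eta_i)$ with $\eta=\vx-\vy$, split $\mu\phi_i(y)=|[\grad\Phi_\mu(\vy)]_i|+2\mu\min(e^{\mu y_i},e^{-\mu y_i})$, and then use that the first two pieces recombine as $|g_i||v_i|\bigl(\cosh(\mu\eta_i)+\sign(g_i)\sinh(\mu\eta_i)\bigr)=|g_i||v_i|\,e^{\sign(g_i)\mu\eta_i}$, which is sandwiched by $e^{\pm\mu\delta}g_iv_i$ with no case split; the residual is bounded by $2\mu\sinh(\mu\delta)|v_i|$, and $2\sinh(\mu\delta)\le\min\{1,e^{\mu\delta}\}$ (using $\mu\delta\le 1/5$ for the lower side) gives the two constants in the statement. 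This is an algebraically cleaner derivation that produces both sides of \eqref{eq:smoothing:pot_prop_2} from one identity. Your handling of \eqref{eq:smoothing:pot_prop_1} and \eqref{eq:smoothing:pot_prop_3} is the same as the paper's (the upper half of \eqref{eq:smoothing:pot_prop_3} by swapping $\vx,\vy$, which the paper compresses into ``by symmetry'').

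One small imprecision in your commentary: you describe the ``main obstacle'' as coordinates where $v_i$ and $[\grad\Phi_\mu(\vy)]_i$ disagree in sign, but under the standing hypothesis used here (and in the paper) these signs always agree, so that scenario never arises. The case split the paper actually avoids your way is the one between $\sign(x_i)$ and $\sign(y_i)$, i.e., whether the coordinate crossed zero; your hyperbolic expansion absorbs that sign into $e^{\sign(g_i)\mu\eta_i}$. Both you and the paper implicitly use that $\sign(\grad\Phi_\mu(\vy)^\flat_i)=\sign([\grad\Phi_\mu(\vy)]_i)$, which strictly requires $U$ to be axis-symmetric rather than merely centrally symmetric; this holds for the mixed-norm balls used downstream, so it is a pre-existing looseness in the lemma's hypothesis shared by both proofs, not a defect you introduced.
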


\begin{proof}
For, notational convenience let $p_{u}(x)\defeq e^{\mu x}+e^{-\mu x}$
for all $x\in\R$ so that $\Phi_{\mu}(x)=\sum_{i\in[m]}p_{u}(x_{i})$.
Equation (\ref{eq:smoothing:pot_prop_1}) follows from the fact that
for all $x\in\R$,
\[
e^{\mu|x|}\leq p_{\mu}(x)\leq2e^{\mu|x|}\enspace\text{and}\enspace p'_{\mu}(x)=\mu\sign(x)\left(e^{\mu|x|}-e^{-\mu|x|}\right)\,.
\]

Next, let $x,y\in\R$ with $|x-y|\leq\delta$. Note that $\left|p'_{\mu}(x)\right|=p'_{\mu}(\left|x\right|)=\mu(e^{\mu|x|}-e^{-\mu|x|})$
and $\left|x-y\right|\leq\delta$ implies that $|x|=|y|+z$ for some
$z\in[-\delta,\delta]$. Using that $p'(|x|)$ is monotonic in $|x|$
we then have
\begin{align}
|p'_{\mu}(x)| & =p'_{\mu}(|x|)=p'_{\mu}(|y|+z)\leq p'_{\mu}(|y|+\delta)=\mu\left(e^{\mu|y|+\mu\delta}-e^{-\mu|y|-\mu\delta}\right)\nonumber \\
 & =e^{\mu\delta}p'(|y|)+\mu\left(e^{\mu\delta-\mu|y|}-e^{-\mu|y|-\mu\delta}\right)\leq e^{\mu\delta}\left|p'(y)\right|+\mu e^{\mu\delta}.\label{eq:smoothingp2}
\end{align}
By symmetry (i.e. replacing $x$ and $y$) this implies that
\begin{equation}
|p'_{\mu}(x)|\geq e^{-\mu\delta}|p'(y)|-\mu\label{eq:smoothing:p3}
\end{equation}

Since $U$ is symmetric this implies that for all $i\in[m]$ we have
$\sign(\grad\Phi_{\mu}(\vy)^{\flat})_{i}=\sign(\grad\Phi_{\mu}(\vy)_{i})=\sign(y_{i})$.
Therefore, if for all $i\in[n]$ we have $\sign(x_{i})=\sign(y_{i})$,
by (\ref{eq:smoothingp2}), we see that 
\begin{eqnarray*}
\left\langle \grad\Phi_{\mu}(\vx),\grad\Phi_{\mu}(\vy)^{\flat}\right\rangle  & = & \sum_{i\in[m]}p'_{\mu}(x_{i})\grad\Phi_{\mu}(\vy)_{i}^{\flat}\leq\sum_{i\in[m]}\left(e^{\mu\delta}p'_{\mu}(y_{i})+\mu e^{\mu\delta}\right)\grad\Phi_{\mu}(\vy)_{i}^{\flat}\\
 & \leq & e^{\mu\delta}\left\langle \grad\Phi_{\mu}(\vy),\grad\Phi_{\mu}(\vy)^{\flat}\right\rangle +\mu e^{\mu\delta}\norm{\grad\Phi_{\mu}(\vy)^{\flat}}_{1}\\
 & = & e^{\mu\delta}\norm{\grad\Phi_{\mu}(\vy)}_{U}+\mu e^{\mu\delta}\normOne{\grad\Phi_{\mu}(\vy)^{\flat}}.
\end{eqnarray*}
Similarly, using (\ref{eq:smoothing:p3}), we have $e^{-\mu\delta}\norm{\grad\Phi_{\mu}(\vy)}_{U}-\mu\normOne{\grad\Phi_{\mu}(\vy)^{\flat}}\leq\left\langle \grad\Phi_{\mu}(\vx),\grad\Phi_{\mu}(\vy)^{\flat}\right\rangle $
and hence (\ref{eq:smoothing:pot_prop_2}) holds. On the other hand
if $\sign(x_{i})\neq\sign(y_{i})$ then we know that $|x_{i}|\leq\delta$
and consequently $|p'_{\mu}(x_{i})|\leq\mu(e^{\mu\delta}-e^{-\mu\delta})\leq\frac{\mu}{2}$
since $\delta\leq\frac{1}{5\mu}$. Thus, we have
\[
e^{-\mu\delta}\left|p'_{\mu}(y_{i})\right|-\mu\leq-\frac{\mu}{2}\leq\sign\left(y_{i}\right)p'_{\mu}(x_{i})\leq0\leq e^{\mu\delta}\left|p'_{\mu}(y_{i})\right|+\mu e^{\mu\delta}.
\]
Taking inner product on both sides with $\grad\Phi_{\mu}(\vy)_{i}^{\flat}$
and using definition of $\norm{\cdot}_{U}$ and $\cdot^{\flat}$,
we get (\ref{eq:smoothing:pot_prop_2}). Thus, (\ref{eq:smoothing:pot_prop_2})
holds in general.

Finally we note that since $U$ is contained in a $\ellInf$ ball
of radius $R$, we have $\normOne{\vy^{\flat}}\leq mR$ for all $\vy$.
Using this fact, (\ref{eq:smoothing:pot_prop_2}), and the definition
of $\norm{\cdot}_{U}$, we obtain
\[
e^{-\mu\delta}\norm{\grad\Phi_{\mu}(\vy)}_{U}-\mu mR\leq\left\langle \grad\Phi_{\mu}(\vx),\grad\Phi_{\mu}(\vy)^{\flat}\right\rangle \leq\norm{\grad\Phi_{\mu}(\vx)}_{U}
\]
where the last inequality additionally uses $\grad\Phi_{\mu}(\vy)^{\flat}\in U$.
By symmetry (\ref{eq:smoothing:pot_prop_3}) follows.
\end{proof}
Using Lemma~\ref{lem:smoothing:helper} we prove Theorem~\ref{thm:zerogame}.
\begin{proof}[Proof of Theorem~\ref{thm:zerogame}]
For the remainder of the proof, let $\norm{\vx}_{U^{(k)}}=\max_{\vy\in U^{(k)}}\left\langle \vx,\vy\right\rangle $
and $\vx^{\flat_{(k)}}=\argmax_{\vy\in U^{(k)}}\left\langle \vx,\vy\right\rangle $.
Since $U^{(k)}$ is symmetric, we know that $\vDelta^{(k)}=-(1+\epsilon)\left(\grad\Phi_{\mu}(x^{(k-1)}-\vz^{(k)})\right)^{\flat_{(k)}}$
and therefore by applying the mean value theorem twice we have that
\begin{eqnarray*}
\Phi_{\mu}(x^{(k)}-y^{(k)}) & = & \Phi_{\mu}(x^{(k-1)}-y^{(k)})+\left\langle \grad\Phi_{\mu}(\zeta_{1}),x^{(k)}-x^{(k-1)}\right\rangle \\
 & = & \Phi_{\mu}(x^{(k-1)}-y^{(k-1)})+\left\langle \grad\Phi_{\mu}(\zeta_{2}),y^{(k)}-y^{(k-1)}\right\rangle +\left\langle \grad\Phi_{\mu}(\zeta_{1}),x^{(k)}-x^{(k-1)}\right\rangle 
\end{eqnarray*}
for some $\zeta_{1}$ between $x^{(k)}-y^{(k)}$ and $x^{(k-1)}-y^{(k)}$
and some $\zeta_{2}$ between $x^{(k-1)}-y^{(k)}$ and $x^{(k-1)}-y^{(k-1)}$.
Now, using that $y^{(k)}-y^{(k-1)}\in U^{(k)}$ and that $x^{(k)}-x^{(k-1)}=\vDelta^{(k)}$
we have
\begin{equation}
\Phi_{\mu}(x^{(k)}-y^{(k)})\leq\Phi_{\mu}(x^{(k-1)}-y^{(k-1)})+\norm{\nabla\Phi_{\mu}(\zeta_{2})}_{\trSetCurr}-\left(1+\epsilon\right)\left\langle \nabla\Phi_{\mu}(\zeta_{1}),\left(\nabla\Phi_{\mu}(x^{(k-1)}-z^{(k)})\right)^{\flat_{(k)}}\right\rangle .\label{eq:Phi_est_1}
\end{equation}
Since $U^{k}$ is contained within the $\ellInf$ ball of radius $R_{k}$,
Lemma~\ref{lem:smoothing:helper} shows that
\begin{equation}
\norm{\nabla\Phi_{\mu}(\zeta_{2})}_{\trSetCurr}\leq e^{\mu R_{k}}\norm{\nabla\Phi_{\mu}(x^{(k-1)}-y^{(k-1)})}_{\trSetCurr}+m\mu R_{k}e^{\mu R_{k}}.\label{eq:Phi_est_2}
\end{equation}
Furthermore, since $\epsilon<\frac{1}{5}$ and $R_{k}\leq R$, by
triangle inequality we have $\normInf{\zeta_{1}-(x^{(k-1)}-z^{(k)})}\leq(1+\epsilon)R_{k}+R\leq3R$
and $\normInf{z^{(k)}-y^{(k-1)}}\leq2R$. Therefore, applying Lemma~\ref{lem:smoothing:helper}
twice yields that
\begin{align}
\left\langle \nabla\Phi_{\mu}(\zeta_{1}),\left(\nabla\Phi_{\mu}(x^{(k-1)}-z^{(k)})\right)^{\flat_{(k)}}\right\rangle  & \geq e^{-3\mu R}\norm{\nabla\Phi_{\mu}(x^{(k-1)}-z^{(k)})}_{\trSetCurr}-\mu mR_{k}\nonumber \\
 & \geq e^{-5\mu R}\norm{\nabla\Phi_{\mu}(x^{(k-1)}-y^{(k-1)})}_{\trSetCurr}-2\mu mR_{k}.\label{eq:Phi_est_3}
\end{align}
Combining (\ref{eq:Phi_est_1}), (\ref{eq:Phi_est_2}), and (\ref{eq:Phi_est_3})
then yields that
\begin{align*}
\Phi_{\mu}(x^{(k)}-y^{(k)})\leq & \Phi_{\mu}(x^{(k-1)}-y^{(k-1)})-\left((1+\epsilon)e^{-5\mu R}-e^{\mu R}\right)\norm{\nabla\Phi_{\mu}(x^{(k-1)}-y^{(k-1)})}_{\trSetCurr}\\
 & +m\mu R_{k}e^{\mu R}+2(1+\epsilon)m\mu R_{k}.
\end{align*}
Since we chose $\mu=\frac{\epsilon}{12R}$ and $\epsilon\in(0,1/5)$
we have $(1+\epsilon)e^{-5\mu R}-e^{\mu R}\geq\frac{2\epsilon}{5}$
and 
\[
m\mu R_{k}e^{\mu R}+2(1+\epsilon)m\mu R_{k}\leq\epsilon m\frac{7R_{k}}{24R}.
\]
Thus, we have
\[
\Phi_{\mu}(x^{(k)}-y^{(k)})\leq\Phi_{\mu}(x^{(k-1)}-y^{(k-1)})-\frac{2\epsilon}{5}\norm{\nabla\Phi_{\mu}(x^{(k-1)}-y^{(k-1)})}_{\trSetCurr}+\epsilon m\frac{7R_{k}}{24R}.
\]
Using Lemma~\ref{lem:smoothing:helper} and the fact that $U_{k}$
contains a $\ellInf$ ball of radius $r_{k}$, we have
\[
\norm{\nabla\Phi_{\mu}(x^{(k-1)}-y^{(k-1)})}_{\trSetCurr}\geq r_{k}\norm{\nabla\Phi_{\mu}(x^{(k-1)}-y^{(k-1)})}_{1}\geq\frac{\epsilon r_{k}}{12R}\left(\Phi_{\mu}(x^{(k-1)}-y^{(k-1)})-2m\right).
\]
Therefore, we have that
\begin{eqnarray*}
\Phi_{\mu}(x^{(k)}-y^{(k)}) & \leq & \left(1-\frac{\epsilon^{2}r_{k}}{30R}\right)\Phi_{\mu}(x^{(k-1)}-y^{(k-1)})+\frac{\epsilon^{2}r_{k}}{15R}m+\epsilon m\frac{7R_{k}}{24R}\\
 & \leq & \left(1-\frac{\epsilon^{2}r_{k}}{30R}\right)\Phi_{\mu}(x^{(k-1)}-y^{(k-1)})+\epsilon m\frac{R_{k}}{3R}.
\end{eqnarray*}
Hence, if $\Phi_{\mu}(x^{(k-1)}-y^{(k-1)})\leq\frac{12m\tau}{\epsilon}$,
we have $\Phi_{\mu}(x^{(k)}-y^{(k)})\leq\frac{12m\tau}{\epsilon}$.
Since $\Phi_{\mu}(x^{(0)}-y^{(0)})\leq\frac{12m\tau}{\epsilon}$ by
assumption we have by induction that $\Phi_{\mu}(x^{(k)}-y^{(k)})\leq\frac{12m\tau}{\epsilon}$
for all $k$. The necessary bound on $\normInf{x^{(k)}-y^{(k)}}$
then follows immediately from Lemma~\ref{lem:smoothing:helper}.
\end{proof}

\section{Appendix: Projection on Mixed Norm Ball\label{sec:app:Project_ball_box}}

Here we give an algorithm to solve the following problem
\begin{equation}
\max_{\norm{\vx}_{2}+\norm{\vl^{-1}\vx}_{\infty}\leq1}\left\langle \va,\vx\right\rangle \label{eq:project_our_form}
\end{equation}
for some given vector $l$ and $a$. This is used in Section \ref{sec:master_thm}
to compute weights. Note that 
\begin{eqnarray}
\max_{\norm{\vx}_{2}+\norm{\vl^{-1}\vx}_{\infty}\leq1}\left\langle \va,\vx\right\rangle  & = & \max_{0\leq t\leq1}\left[\max_{\norm{\vx}_{2}\leq1-t\text{ and }-tl_{i}\leq x_{i}\leq tl_{i}}\left\langle \va,\vx\right\rangle \right]\nonumber \\
 & = & \max_{0\leq t\leq1}(1-t)\left[\max_{\norm{\vx}_{2}\leq1\text{ and }-\frac{t}{1-t}l_{i}\leq x_{i}\leq\frac{t}{1-t}l_{i}}\left\langle \va,\vx\right\rangle \right]\,.\nonumber \\
 & = & \max_{0\leq t\leq1}(1-t)f(t)\text{ where }f(t)\defeq\max_{\norm{\vx}_{2}\leq1,-\frac{t}{1-t}l_{i}\leq x_{i}\leq\frac{t}{1-t}l_{i}}\left\langle \va,\vx\right\rangle \,.\label{eq:projection_mixed_norm_ball_eq}
\end{eqnarray}
After sorting the coordinates so that $|a_{i}|/l_{i}$ monotonically
decrease with $i\in[n]$, and considering the maximization problem
in $f(t)$ with only the $\norm x_{2}$ or $-\frac{t}{1-t}l_{i}\leq x_{i}\leq\frac{t}{1-t}l_{i}$
constraints, it can be shown that the maximizing $x$ in the definition
of $f$ is $\vx^{i_{t}}$ where for all $j\in[n]$
\begin{equation}
\vx_{j}^{i_{t}}=\begin{cases}
\frac{t}{1-t}\text{sign}(a_{j})l_{j} & \text{if }j\in[i_{t}]\\
\sqrt{\frac{1-\left(\frac{t}{1-t}\right)^{2}\sum_{k\in[i_{t}]}l_{k}^{2}}{\norm a_{2}^{2}-\sum_{k\in[i_{t}]}a_{k}^{2}}}\va_{j} & \text{otherwise}
\end{cases}.\label{eq:project_x_form}
\end{equation}
and $i_{t}$ is the first coordinate $i\in[n]$ such that
\[
\frac{1-\left(\frac{t}{1-t}\right)^{2}\sum_{k\in[i]}l_{k}^{2}}{\norm a_{2}^{2}-\sum_{k\in[i]}a_{k}^{2}}\leq\frac{\left(\frac{t}{1-t}\right)^{2}l_{i}^{2}}{a_{i}^{2}}.
\]
Note that $i_{t}\geq i_{s}$ if $t\leq s$. Therefore, the set of
$t$ such that $i_{t}=j$ is simply an interval given by\footnote{There are some boundary cases we ignored for simplicity.}
\begin{equation}
\frac{\left|a_{j}\right|}{\sqrt{l_{j}^{2}\left(\norm a_{2}^{2}-\sum_{k\in[j]}a_{k}^{2}\right)+a_{j}^{2}\sum_{k\in[j]}l_{k}^{2}}}\leq\frac{t}{1-t}<\frac{\left|a_{j-1}\right|}{\sqrt{l_{j-1}^{2}\left(\norm a_{2}^{2}-\sum_{k\in[j-1]}a_{k}^{2}\right)+a_{j-1}^{2}\sum_{k\in[j-1]}l_{k}^{2}}}.\label{eq:proj_t_interval}
\end{equation}
Therefore, we know that
\[
f(t)=\left\langle \va,\vx^{(i_{t})}\right\rangle =\frac{t}{1-t}\sum_{j\in[i_{t}]}\left|a_{j}\right|\left|l_{j}\right|+\sqrt{1-\left(\frac{t}{1-t}\right)^{2}\sum_{k\in[i_{t}]}l_{k}^{2}}\sqrt{\norm a_{2}^{2}-\sum_{k\in[i_{t}]}a_{k}^{2}}\,.
\]
Substituting this into \ref{eq:projection_mixed_norm_ball_eq}, we
have that
\begin{eqnarray*}
\max_{\norm{\vx}_{2}+\norm{\vl^{-1}\vx}_{\infty}\leq1}\left\langle \va,\vx\right\rangle  & = & \max_{0\leq t\leq1}g(t)\defeq t\sum_{j\in[i_{t}]}\left|a_{j}\right|\left|l_{j}\right|+\sqrt{(1-t)^{2}-t^{2}\sum_{k\in[i_{t}]}l_{k}^{2}}\sqrt{\norm a_{2}^{2}-\sum_{k\in[i_{t}]}a_{k}^{2}}.
\end{eqnarray*}

Note that 
\begin{align*}
g'(t) & =\sum_{j\in[i_{t}]}\left|a_{j}\right|\left|l_{j}\right|+\frac{\left((1-\sum_{k\in[i_{t}]}l_{k}^{2})t-1\right)\sqrt{\norm a_{2}^{2}-\sum_{k\in[i_{t}]}a_{k}^{2}}}{\sqrt{(1-t)^{2}-t^{2}\sum_{k\in[i_{t}]}l_{k}^{2}}},\\
g''(t) & =-\frac{\left(\sum_{k\in[i_{t}]}l_{k}^{2}\right)\cdot\sqrt{\norm a_{2}^{2}-\sum_{k\in[i_{t}]}a_{k}^{2}}}{\left((1-t)^{2}-t^{2}\sum_{k\in[i_{t}]}l_{k}^{2}\right)^{3/2}}.
\end{align*}
Hence, $g(t)$ is concave and its maximizer has a closed form via
the quadratic formula. Therefore, one can compute the maximum value
for each interval of $t$ (\ref{eq:proj_t_interval}) and find which
is the best. This yields the following algorithm.

\begin{algorithm2e}[H]

\caption{$\vx=\code{projectMixedBall}(\va,\vl)$}

\SetAlgoLined

Sort the coordinate such that $\left|a_{i}\right|/l_{i}$ is in descending
order.

Precompute $\sum_{k=0}^{i}l_{k}^{2}$, $\sum_{k=0}^{i}a_{k}^{2}$
and $\sum_{j=1}^{i}\left|a_{j}\right|\left|l_{j}\right|$ for all
$i$. 

Let $g_{i}(t)=t\sum_{j\in[i]}\left|a_{j}\right|\left|l_{j}\right|+\sqrt{(1-t)^{2}-t^{2}\sum_{k=0}^{i}l_{k}^{2}}\sqrt{\norm a_{2}^{2}-\sum_{k=0}^{i}a_{k}^{2}}.$

For each $j\in\{1,\cdots,n\}$, Find $t_{j}=\argmax_{i_{t}=j}g_{j}(t)$
using (\ref{eq:proj_t_interval}).

Find $i=\argmax_{i}g_{i}(t_{i}).$

\textbf{Output:} $(1-t_{i})\vx^{(i)}$ defined by (\ref{eq:project_x_form}).

\end{algorithm2e}

The discussion above leads to the following theorem.
\begin{thm}
\label{thm:project_mixed_norm} For any $a\in\Rn$ and $l\in\R_{>0}^{n}$,
the algorithm \textup{$\code{projectMixedBall}(a,l)$ outputs a solution
to }(\ref{eq:project_our_form})\textup{ in total work $O(n\log n)$
and depth $O(\log n)$ (in EREW model).}
\end{thm}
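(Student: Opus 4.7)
The plan is to verify that the algorithm correctly implements the reduction laid out in the discussion preceding the theorem, and then to account for the work and depth of each step. First I would justify correctness. Introducing the slack $t = \|\vl^{-1}\vx\|_\infty$ decomposes the original problem into the outer maximization $\max_{0\leq t\leq 1}(1-t)f(t)$, where $f(t)$ is the maximum of $\langle \va,\vx\rangle$ over the intersection of the unit $\ell_2$ ball with the box $|x_i|\leq \tfrac{t}{1-t}l_i$. After sorting coordinates so that $|a_i|/l_i$ is descending, a standard KKT/exchange argument shows the maximizer of $f(t)$ saturates the box exactly on some prefix $[i_t]$ and is proportional to $\va$ on the complement, giving the closed form $\vx^{(i_t)}$ in (\ref{eq:project_x_form}); the choice of $i_t$ is pinned down by feasibility of the complement against the $\ell_2$ constraint, and monotonicity of $|a_j|/l_j$ makes the set $\{t:i_t=j\}$ a single interval, yielding (\ref{eq:proj_t_interval}). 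Substituting $\vx^{(i_t)}$ into $(1-t)f(t)$ gives the function $g_j(t)$ on each interval, which is concave (from the formula for $g''$), so its maximum on a closed subinterval is attained either at the interior critical point (found by a single quadratic equation) or at an endpoint. Taking the best over all $j$ then returns a global maximizer.

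Next I would analyze the work and depth. Sorting by $|a_i|/l_i$ costs $O(n\log n)$ work and $O(\log n)$ depth in the EREW model using a standard parallel mergesort (e.g., Cole's mergesort for the $O(\log n)$ depth bound); alternatively, since we only need comparisons and binary merges, any $O(\log n)$-depth EREW sort suffices. The three prefix sums $\sum_{k\leq i}l_k^2$, $\sum_{k\leq i}a_k^2$, $\sum_{k\leq i}|a_k||l_k|$ can all be computed by parallel prefix in $O(n)$ work and $O(\log n)$ depth. With these precomputed, each $g_j$ and its defining interval $[L_j,U_j]$ from (\ref{eq:proj_t_interval}) is specified in $O(1)$ work, and its constrained maximizer $t_j$ is obtained in $O(1)$ work by solving the scalar quadratic $g_j'(t)=0$ and clipping to $[L_j,U_j]$; this stage is embarrassingly parallel, so it runs in $O(n)$ work and $O(1)$ depth. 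Finally, $\operatorname*{argmax}_j g_j(t_j)$ is a max-reduction of $n$ scalars, done in $O(n)$ work and $O(\log n)$ depth on EREW. Reconstructing the returned vector $(1-t_i)\vx^{(i)}$ from (\ref{eq:project_x_form}) is again $O(n)$ work and $O(1)$ depth using the stored prefix sums. Summing these contributions gives the claimed $O(n\log n)$ work and $O(\log n)$ depth.

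The only mildly subtle point, and the main obstacle worth attending to, is handling the boundary cases that the excerpt explicitly sets aside: intervals in (\ref{eq:proj_t_interval}) may collapse when consecutive ratios $|a_j|/l_j$ are equal, the denominator under the square root in (\ref{eq:project_x_form}) may vanish when the prefix already accounts for all mass, and the unconstrained maximizer of $g_j$ may lie outside its interval. Each of these is handled deterministically: tied ratios can be grouped (any tiebreaking is consistent because the associated $g_j$'s agree on the shared endpoint), vanishing radicals correspond to $i_t = n$ and reduce the problem to the pure $\ell_\infty$-ball case whose optimum is $t = 1$, and infeasibility of the interior critical point is detected by the clip step so the interval optimum is an endpoint, which is also the endpoint optimum of the adjacent interval by continuity of $g$. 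With these cases verified, correctness is complete and the stated complexity follows directly.
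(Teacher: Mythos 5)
Your proof follows the paper's approach exactly: correctness via the reduction and closed-form analysis laid out in the preceding discussion, and the complexity bound via Cole's parallel mergesort plus parallel prefix sums, with the remaining per-coordinate and reduction steps costing $O(n)$ work and $O(\log n)$ depth. You provide more detail than the paper's one-line proof (which defers correctness to "the discussion above" and calls the rest "easy"), including a useful treatment of the boundary cases the paper explicitly sets aside, but the route is the same.
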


\begin{proof}
The correctness follows from the discussion above. For the runtime,
it is known that sorting can be done in $O(n\log n)$ work and $O(\log n)$
depth in EREW model \cite{cole1988parallel} and that prefix sum can
be done in $O(n)$ work and $O(\log n)$ depth in EREW model. The
rest is easy.
\end{proof}

\section{Extreme Lewis Weights and Barrier}

\label{sec:extreme_lewis_weights}

In this section we discuss the limits of Lewis weights and the Lewis
weight barrier when $p\rightarrow0$ and $p\rightarrow\infty$. In
Section~\ref{sec:pzero} we show that as $p\rightarrow0$ Lewis weights
converge to the uniform distribution over rows of a matrix under mild
assumptions. This shows that under mild assumptions on the structure
of a polytope, the Lewis weight barrier considered in Section~\ref{sec:self-concordance}
converges to the standard logarithmic barrier. In Section~\ref{sec:pinf}
we consider the opposite extreme when $p\rightarrow\infty$. In this
case we show that $\ell_{\infty}$ Lewis weights of the matrix $\ma$
are precisely the weights that induce a John ellipse of the polytope
$\{x\in\R^{n}:\norm{\ma x}_{\infty}\leq1\}$. This justifies the intuition
given in the introduction regarding our barrier and path finding scheme
as following a path induced by regularized John ellipses.

\subsection{$p\rightarrow0$}

\label{sec:pzero}

Here we show that $\ell_{p}$ Lewis weights for a matrix $\ma\in\R^{m\times n}$
in general position, i.e. any $n$ rows are linearly independent,
converge to uniform as $p\rightarrow0$. Note that the assumption
of general position is stronger than that of non-degeneracy and required
for the statement to be true. For example, if there is a row that
is perpendicular to all other rows, then it is not difficult to show
that this row must have Lewis weight $1$ for any $p>0$. 
\begin{lem}
\label{lem:l0} Given a matrix $\ma\in\Rmn$ in general position,
i.e. any $n$ rows of $\ma$ are linearly independent, then
\[
\lim_{p\rightarrow0^{+}}\lpweight(\ma)_{i}=\frac{n}{m}\text{ for all }i.
\]
\end{lem}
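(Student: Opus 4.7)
The plan is to use the convex formulation of Lewis weights from Lemma~\ref{lem:unique_lewis} together with a large-deviations analysis of the determinant as $p \to 0^+$. Setting $s = 2/p - 1$, so that $s \to \infty$ as $p \to 0^+$, the Lewis weight $\lpweight(\ma)$ is the unique minimizer of $\mathcal{V}_p^\ma(w) = \frac{1}{s} \log\det(\ma^\top \mw^{-s}\ma)$ over the simplex $\Delta \defeq \{w \in \R_{\geq 0}^m : \sum_i w_i = n\}$. Applying Cauchy--Binet gives
\[
\det(\ma^\top \mw^{-s}\ma) \;=\; \sum_{|S|=n} \det(\ma_S)^2 \prod_{i \in S} w_i^{-s},
\]
and general position guarantees $0 < \delta \leq \det(\ma_S)^2 \leq D < \infty$ uniformly over all size-$n$ subsets $S$, with constants depending only on $\ma$. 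Bracketing this sum between its largest term and $\binom{m}{n}$ times its largest term yields, uniformly in $w \in \Delta$, the estimate
\[
\mathcal{V}_p^\ma(w) \;=\; f_0(w) + O(1/s), \qquad f_0(w) \;\defeq\; -\log\!\Bigl(\min_{|S|=n}\prod_{i \in S}w_i\Bigr) \;=\; -\sum_{i \in S^*(w)} \log w_i,
\]
where $S^*(w)$ denotes the index set of the $n$ smallest coordinates of $w$. Hence $\mathcal{V}_p^\ma \to f_0$ uniformly on $\Delta$.

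Next I will identify $(n/m)\mathbf{1}$ as the unique minimizer of $f_0$ on $\Delta$. By concavity of $\log$ (AM--GM), $\sum_{i \in S^*}\log w_i \leq n \log \bar{w}_{S^*}$, where $\bar w_{S^*}$ is the average of $w$ over $S^*$. Since $S^*$ collects the $n$ smallest coordinates, $\bar w_{S^*} \leq \bar w_{[m]\setminus S^*}$, and together with the identity $n\bar w_{S^*} + (m-n)\bar w_{[m]\setminus S^*} = n$ this forces $\bar w_{S^*} \leq n/m$. Thus $f_0(w) \geq n\log(m/n)$ on all of $\Delta$, with equality only when both the AM--GM bound is tight (all weights in $S^*$ equal) and $\bar w_{S^*} = n/m$; these conditions together with the ordering pin down $w = (n/m)\mathbf{1}$.

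To close the argument, let $w^*$ be any accumulation point of $\lpweight(\ma)$ as $p \to 0^+$; this exists because Lewis weights lie in $[0,1]^m$. Optimality $\mathcal{V}_p^\ma(\lpweight(\ma)) \leq \mathcal{V}_p^\ma((n/m)\mathbf{1})$ combined with the uniform convergence on $\Delta$ yields
\[
\limsup_{p \to 0^+} f_0(\lpweight(\ma)) \;\leq\; f_0((n/m)\mathbf{1}) \;=\; n\log(m/n) \;<\; \infty.
\]
Since $f_0(w) \to +\infty$ whenever any coordinate of $w$ tends to zero, $w^*$ must lie in the open simplex, where $f_0$ is continuous; hence $f_0(w^*) \leq n\log(m/n)$ and the uniqueness above forces $w^* = (n/m)\mathbf{1}$. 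Every accumulation point coinciding gives $\lim_{p\to 0^+}\lpweight(\ma) = (n/m)\mathbf{1}$. The main technical point is establishing the $O(1/s)$ Cauchy--Binet remainder uniformly over $w \in \Delta$, which is precisely where general position enters through the lower bound on $\det(\ma_S)^2$.
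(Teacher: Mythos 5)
Your proof is correct, and it rests on exactly the same two pillars as the paper's: the convex characterization of Lewis weights from Lemma~\ref{lem:unique_lewis} and the Cauchy--Binet expansion $\det(\ma^\top\mw^{1-2/p}\ma)=\sum_{|S|=n}\det(\ma_S)^2\prod_{i\in S}w_i^{1-2/p}$, with general position used precisely to lower-bound every $n\times n$ minor. Where you differ is in the packaging. The paper argues directly with inequalities on determinants: it plugs $w=(n/m)\mathbf{1}$ into the objective to get an upper bound, uses Cauchy--Binet and $\min_S\det(\ma_S)^2>0$ to derive a lower bound on $\det([\mw_p^{1-2/p}]_S)$ for each $S$, raises to the power $1/(1-2/p)\to 0^-$ to get $\det([\mw_p]_S)\geq c^{1/(1-2/p)}(n/m)^n$, and then passes to $\liminf$. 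You instead identify the limiting variational problem explicitly, showing that $\mathcal{V}_p^\ma$ converges uniformly (in the extended-reals sense) on the interior of the simplex to $f_0(w)=-\log\min_{|S|=n}\prod_{i\in S}w_i$, proving $(n/m)\mathbf{1}$ is the unique minimizer of $f_0$, and closing with a standard compactness/lower-semicontinuity argument on accumulation points. This is essentially a $\Gamma$-convergence reformulation of the same estimate; it is a bit cleaner conceptually because it names the limit functional and makes the role of coercivity ($f_0\to+\infty$ at $\partial\Delta$) explicit, whereas the paper's final passage from the determinant bounds $\det(\mw_S^*)\geq(n/m)^n$ together with $\sum_i w_i^*\leq n$ to $w^*=(n/m)\mathbf{1}$ is stated tersely and you fill in that AM--GM step. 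Both are valid and of comparable length; neither is materially more general.
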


\begin{proof}
For $p\in(0,2)$, Lemma~\ref{lem:unique_lewis} shows that the Lewis
weight is given by 
\[
\lpweight(\ma)=\argmin_{w\in\R_{\geq0}^{m},\sum_{i\in[m]}w_{i}=n}\det(\ma^{\top}\mw^{1-\frac{2}{p}}\ma).
\]
Considering $w\in\R^{m}$ with $w_{i}=\frac{n}{m}$ for all $i\in[m]$
we see that 
\begin{align}
\min_{w_{i}\geq0,\sum_{i\in[m]}w_{i}=n}\det(\ma^{\top}\mw^{1-\frac{2}{p}}\ma) & \leq\left(\frac{n}{m}\right)^{n(1-\frac{2}{p})}\det(\ma^{\top}\ma).\label{eq:l0_lewis}
\end{align}
On the other hand the Cauchy--Binet formula shows that
\[
\det(\ma^{\top}\mw^{1-\frac{2}{p}}\ma)=\sum_{S\in{[m] \choose n}}\det(\ma_{S}){}^{2}\det(\mw_{S}^{1-\frac{2}{p}})
\]
where $\ma_{S}\subset\Rnn$ are the rows of $\ma$ at indices from
$S$, $\mw_{S}\subset\Rnn$ is diagonal with the diagonals of $\mw$
at indices from $S$ and the summation is over all subsets of size
$n$. Since $\ma$ is in general position, we have that $\det\ma_{S}\neq0$
for all $S$. Therefore, for all subsets $S\subseteq[m]$ of size
$n$ and all $\mw\succeq0$.
\begin{equation}
\det(\mw_{S}^{1-\frac{2}{p}})\leq\frac{\det(\ma^{\top}\mw^{1-\frac{2}{p}}\ma)}{\min_{S\subset\binom{[m]}{n}}\det(\ma_{S})^{2}}\,.\label{eq:l0_lewis2}
\end{equation}

Now, let $\mw_{p}=\mDiag(\lpweight(\ma))$ be the diagonal matrix
formed by the $\ell_{p}$ Lewis weight of $\ma$. Combining (\ref{eq:l0_lewis})
and (\ref{eq:l0_lewis2}), we have that
\[
\det([\mw_{p}^{1-\frac{2}{p}}]_{S})\leq c\left(\frac{n}{m}\right)^{n(1-\frac{2}{p})}\text{ where }c=\frac{\det\ma^{\top}\ma}{\min_{S\subset\binom{m}{n}}(\det\ma_{S})^{2}}.
\]
Hence, we have that $\det([\mw_{p}]_{S})\geq c^{\frac{1}{1-\frac{2}{p}}}\cdot(\frac{n}{m})^{n}.$
Let $\mw^{*}=\liminf_{p\rightarrow0^{+}}\mw_{p}$. Taking limit $p\rightarrow0^{+}$
on both sides, we have that $\det(\mw_{S}^{*})\geq(n/m)^{n}$ for
all subsets $S$ of size $n$. Since this holds for all subsets and
since $\sum_{i\in[m]}w_{i}^{*}=n$, we have that $w_{i}^{*}=\frac{n}{m}$
for all $i$. Since 
\[
\limsup_{p\rightarrow0^{+}}\sum_{i\in[m]}\lpweight(\ma)_{i}=n=\liminf_{p\rightarrow0^{+}}\sum_{i\in[m]}\lpweight(\ma)_{i}
\]
this shows that $\lim w_{p}$ exists and it converges to $\frac{n}{m}$.
\end{proof}

\subsection{$p\rightarrow\infty$}

\label{sec:pinf}

Here we show that as $p\rightarrow\infty$ the ellipse $E=\{x\in\Rn:\ x^{\top}\mm x\leq1\}$
for $\mm\defeq\lim_{p\rightarrow+\infty}\ma^{\top}\mw_{p}\ma$ where
$\mw_{p}=\lpweight(\ma)$ is the John ellipse of the polytope $K=\{x\in\Rn:\ \norm{\ma x}_{\infty}\leq1\}$,
i.e. the ellipsoid of maximum volume contained inside $K$. To prove
this we use the following lemma proved in \cite{khachiyan1996rounding}
characterizing the John Ellipse.
\begin{lem}
\label{lem:John_ellipsoid_formula} Given a polytope $\Omega=\{x\in\Rn:\ \norm{\ma x}_{\infty}\leq1\}$
with $\ma\in\R^{m\times n}$. Let $E$ be the John ellipsoid of $\Omega$,
namely, $E$ is the maximum volume ellipsoid contained inside $\Omega$.
Then, we have that $E=\{x^{\top}\ma^{\top}\mw\ma x\leq1\}$ with the
diagonal matrix $\mw$ given by the vector maximizing
\[
\min_{w_{i}\geq0,\sum_{i\in[m[}w_{i}=n}\log\det\ma^{\top}\mw\ma.
\]
\end{lem}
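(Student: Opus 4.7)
The plan is to derive the characterization via standard Lagrangian duality for the maximum-volume inscribed ellipsoid problem. First I would parameterize any origin-symmetric ellipsoid as $E_{\mh}=\{x:x^{\top}\mh x\leq1\}$ with $\mh\succ0$, whose volume is a fixed constant times $(\det\mh)^{-1/2}$; maximizing volume is therefore equivalent to maximizing $\log\det\mg$ where $\mg\defeq\mh^{-1}$. The containment $E_{\mh}\subseteq\Omega$ translates cleanly via the support-function identity $\max_{x^{\top}\mh x\leq1}a_{i}^{\top}x=\sqrt{a_{i}^{\top}\mg a_{i}}$, so $E_{\mh}\subseteq\Omega$ is equivalent to the $m$ linear constraints $\tr(a_{i}a_{i}^{\top}\mg)\leq1$ for $i\in[m]$, where $a_i^\top$ is the $i$-th row of $\ma$.

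Next I would form the Lagrangian with multipliers $w_{i}\geq0$,
\[
L(\mg,w)=\log\det\mg-\tr\bigl(\ma^{\top}\mw\ma\cdot\mg\bigr)+\sum_{i\in[m]}w_{i},
\]
and eliminate $\mg$ via the stationarity condition $\mg^{-1}=\ma^{\top}\mw\ma$. Substituting $\mg=(\ma^{\top}\mw\ma)^{-1}$ and using $\tr(\mi_{n})=n$ collapses the problem to the dual
\[
\min_{w\geq0}\,\Bigl[-\log\det\ma^{\top}\mw\ma+\sum_{i\in[m]}w_{i}\Bigr]-n.
\]
Strong duality holds because the primal is strictly concave in $\log\det$ and admits a Slater point (a small multiple of $\mi$, using that $\Omega$ has nonempty interior), so the John ellipsoid corresponds to $\mh=\ma^{\top}\mw_{\star}\ma$ for the dual-optimal $w_{\star}$.

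Finally I would recover the stated normalization $\sum_{i}w_{i}=n$ from the KKT conditions. Differentiating the dual objective in $w_{i}$ and invoking Lemma~\ref{lem:deriv:log_det} gives $\partial_{w_{i}}=1-\sigma(\mw^{1/2}\ma)_{i}/w_{i}$; at an interior stationary point this forces $w_{i}=\sigma(\mw^{1/2}\ma)_{i}$, and summing with Lemma~\ref{lem:tool:projection_matrices}(6) yields $\sum_{i}w_{i}=n$. Consequently the $\sum_{i}w_{i}$ term in the dual is a constant on the affine slice $\{w\geq0,\,\sum_{i}w_{i}=n\}$, and the problem reduces to maximizing $\log\det\ma^{\top}\mw\ma$ on that simplex, as claimed. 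The only real technical wrinkle is ensuring that the optimal $w_{\star}$ is such that $\ma^{\top}\mw_{\star}\ma\succ0$ (so the inversion $\mg=(\ma^{\top}\mw\ma)^{-1}$ that drives the dualization is legitimate); this follows from the standing assumption that $\Omega$ is bounded and full-dimensional, which forces $w_{\star}$ to be supported on enough rows that $\ma^{\top}\mw_{\star}\ma$ is nonsingular, and it is here that one must be careful about boundary weights going to zero.
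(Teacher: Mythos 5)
Your proof is correct and fills a gap the paper leaves open: the paper states this lemma with only a citation to Khachiyan (1996) and gives no derivation, so your Lagrangian-duality argument is a self-contained reconstruction of the standard proof rather than a restatement of something in the paper. The structure — parametrize by the inverse shape matrix $\mg=\mh^{-1}$ so volume maximization becomes $\log\det\mg$ maximization, encode containment in $\Omega$ via the support-function bound $a_i^\top\mg a_i\leq 1$, dualize, and read off $\mh=\ma^\top\mw\ma$ from stationarity — is exactly the D-optimal-design duality underlying the cited reference, and your invocation of Slater via a small multiple of $\mi$ (valid because $\Omega$ is bounded and full-dimensional, forcing the rows $a_i$ to be bounded) is the right way to get strong duality.

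One step is stated a bit loosely: you recover $\sum_i w_i=n$ "at an interior stationary point," but the dual optimum need not be interior — some $w_i^{\star}$ can vanish. This does not actually break the argument, and it is worth saying why. Writing $\sigma_i=\sigma(\mw^{1/2}\ma)_i=w_i\,a_i^\top(\ma^\top\mw\ma)^{-1}a_i$, the KKT conditions give $w_i^{\star}=\sigma_i$ when $w_i^{\star}>0$, while $w_i^{\star}=0$ forces $\sigma_i=0$ directly from the formula; so the identity $w_i^{\star}=\sigma_i$ holds for every $i$, and summing with Lemma~\ref{lem:tool:projection_matrices}(6) still yields $\sum_i w_i^{\star}=n$. (An alternative route that sidesteps complementary slackness entirely is the scale-invariance of the dual: for the ray $\alpha w^{\star}$ the dual objective is $-n\log\alpha-\log\det(\ma^\top\mw^{\star}\ma)+\alpha\sum_i w_i^{\star}-n$, whose unique minimizer over $\alpha>0$ is $\alpha=n/\sum_i w_i^{\star}$; optimality of $w^{\star}$ forces $\alpha=1$.) You correctly flag the other genuine subtlety — that $\ma^\top\mw^{\star}\ma$ must be nonsingular for the inversion $\mg=(\ma^\top\mw\ma)^{-1}$ to be legitimate — and your appeal to boundedness and full-dimensionality of $\Omega$ is the right mechanism; equivalently, one can note that the dual objective is $+\infty$ whenever $\ma^\top\mw\ma$ is singular (push $\mg$ to infinity along the kernel), so singular $\mw$ are automatically excluded from the dual optimum. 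Finally, note the lemma as printed in the paper reads "maximizing $\min_{w\geq 0,\sum w_i=n}\log\det\ma^\top\mw\ma$," which is self-contradictory; the $\min$ is a typo, and your interpretation (maximize $\log\det\ma^\top\mw\ma$ over the stated simplex) is the one the derivation supports and the one used downstream in Section E.2.
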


Using this we prove our desired result regarding the limits of Lewis
weights as $p\rightarrow\infty$.
\begin{lem}
For non-degenerate $\ma\in\Rmn$ let $\mm\defeq\lim_{p\rightarrow+\infty}\ma^{\top}\mw_{p}\ma$
where $\mw_{p}=\mDiag(\lpweight(\ma))$. Then $E=\{x\in\Rn:\ x^{\top}\mm x\leq1\}$
is the John ellipsoid of $K=\{x\in\Rn:\ \norm{\ma x}_{\infty}\leq1\}$.
\end{lem}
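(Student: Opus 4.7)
The plan is to express both quantities as solutions of convex programs that are limits of each other and then pass to the limit. First, by Lemma~\ref{lem:John_ellipsoid_formula}, if $\mw^{*}$ maximizes $\log\det(\ma^{\top}\mw\ma)$ on the simplex $\Delta\defeq\{w\in\R_{\geq0}^{m}:\sum_{i}w_{i}=n\}$, then the ellipsoid $\{x:x^{\top}\ma^{\top}\mw^{*}\ma x\leq 1\}$ is the John ellipsoid of $K$. Second, by Lemma~\ref{lem:unique_lewis}, for every $p>2$ we have
\[
w_{p}(\ma)=\argmax_{w\in\Delta}F_{p}(w)\quad\text{where}\quad F_{p}(w)\defeq\log\det\bigl(\ma^{\top}\mw^{1-\frac{2}{p}}\ma\bigr),
\]
and similarly let $F_{\infty}(w)\defeq\log\det(\ma^{\top}\mw\ma)$. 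So it suffices to show that $\ma^{\top}\mw_{p}(\ma)\ma\to\ma^{\top}\mw^{*}\ma$ for some (equivalently any) maximizer $\mw^{*}$ of $F_{\infty}$ on $\Delta$.

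The key observation is that the map $(p,w)\mapsto\mw^{1-2/p}$ is jointly continuous on $[2,\infty]\times\Delta$ (interpreting $p=\infty$ as the identity exponent), because $t\mapsto t^{\alpha}$ is continuous on $[0,n]$ for each $\alpha\in[0,1]$ and, since $|t^{1-2/p}-t|\leq n|1-2/p|\log n$ uniformly on $t\in[0,n]$ for $p$ large, the convergence is uniform in $w\in\Delta$ as $p\to\infty$. Non-degeneracy of $\ma$ together with $\sum_{i}w_{i}=n$ keeps $\ma^{\top}\mw^{1-2/p}\ma$ on the positive-definite cone in the limit (any $w\in\Delta$ with $\ma^{\top}\mw\ma$ singular would give $F_{\infty}(w)=-\infty$, which is irrelevant to the maximizer), so $F_{p}\to F_{\infty}$ uniformly on any neighborhood of the $F_{\infty}$-maximizing set.

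Now take any subsequential limit point $\bar{w}$ of $\{w_{p}(\ma)\}_{p\to\infty}$, which exists by compactness of $\Delta$. For any fixed $v\in\Delta$, $F_{p}(w_{p}(\ma))\geq F_{p}(v)$, and passing to the limit along the subsequence using the uniform convergence yields $F_{\infty}(\bar{w})\geq F_{\infty}(v)$, so $\bar{w}$ is an $F_{\infty}$-maximizer. Although the maximizer itself need not be unique, the matrix $\ma^{\top}\bar{w}\ma$ is: if two feasible $w_{1},w_{2}\in\Delta$ produced distinct PD matrices both maximizing $F_{\infty}$, then $(w_{1}+w_{2})/2$ would yield a strictly larger $\log\det$ by strict concavity of $\log\det$ on $\{\mm\succ 0\}$, contradicting maximality. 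Thus every subsequential limit of $\ma^{\top}\mw_{p}(\ma)\ma$ equals the same matrix $\mm^{*}\defeq\ma^{\top}\mw^{*}\ma$, so the full limit exists and equals $\mm^{*}$. Combining with Lemma~\ref{lem:John_ellipsoid_formula} gives $E=\{x:x^{\top}\mm^{*}x\leq 1\}$ is the John ellipsoid of $K$.

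The main obstacle is the joint continuity and uniform-convergence argument for $F_{p}\to F_{\infty}$ on $\Delta$: one needs to check that the inverse of $\ma^{\top}\mw^{1-2/p}\ma$ is well-behaved near weights $w$ for which the matrix is degenerate. This can be handled by restricting attention to a slightly shrunken simplex (say $w_{i}\geq\epsilon$) for the lower-bound inequality, observing the maximizer is bounded away from any $w$ making the matrix singular (since $F_{\infty}$ tends to $-\infty$ there), and then letting $\epsilon\to 0$; alternatively, since $\ma$ is non-degenerate, at any $F_{\infty}$-maximizer the support must contain $n$ linearly independent rows, so the matrix $\ma^{\top}\mw\ma$ stays PD on a neighborhood in $\Delta$ on which continuity is straightforward.
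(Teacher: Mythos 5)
Your proposal is correct, and it takes a genuinely different route from the paper's. The paper proves the result \emph{volumetrically}: it invokes the rounding bound of Lemma~\ref{lem:lewis_p_change} to show that the rescaled ellipsoid $\{x:c_{p,\infty,m}^{2}\,x^{\top}\ma^{\top}\mw_{p}\ma x\leq1\}$ is contained in $K$, computes that its volume converges (using $c_{p,\infty,m}\to1$ and Lemma~\ref{lem:unique_lewis}) to the maximal volume of an inscribed ellipsoid given by Lemma~\ref{lem:John_ellipsoid_formula}, and then appeals to the uniqueness of the John ellipsoid to conclude $E_{p}\to E^{*}$. You instead argue \emph{variationally}: both the Lewis weights and the John weights arise as maximizers of $\log\det(\ma^{\top}\mw^{1-2/p}\ma)$ over $\Delta$ (with $p=\infty$ corresponding to the John case), and you show $F_{p}\to F_{\infty}$ uniformly in $w$, so any subsequential limit of Lewis weights must maximize $F_{\infty}$; the strict concavity of $\log\det$ on the PD cone then pins down the matrix $\ma^{\top}\bar{\mw}\ma$ uniquely (even if $\bar{w}$ is not unique), so all subsequential limits of $\ma^{\top}\mw_{p}\ma$ agree. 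Your approach is more elementary --- it bypasses the rounding machinery entirely and only needs Lemma~\ref{lem:unique_lewis} and Lemma~\ref{lem:John_ellipsoid_formula} --- and it is also more complete in one respect: you actually establish that $\lim_{p\to\infty}\ma^{\top}\mw_{p}\ma$ exists, a fact the paper implicitly assumes when writing $\mm\defeq\lim_{p\to\infty}\ma^{\top}\mw_{p}\ma$ in the statement. The paper's argument has the advantage of demonstrating that the rounding constants of Section~\ref{sec:lewis_rounding} are asymptotically tight. One minor note: your claimed uniform bound $|t^{1-2/p}-t|\le n|1-2/p|\log n$ is only accurate for $t$ bounded away from zero; near $t=0$ the supremum behaves like $\Theta(1/p)$ by a different calculation. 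But the key fact --- that $(p,t)\mapsto t^{1-2/p}$ is jointly continuous on $[4,\infty]\times[0,1]$ because the exponent stays bounded away from $0$ --- is correct, and the rest of the argument goes through.
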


\begin{proof}
Let $c_{p,m}$ be the constant defined in Lemma~\ref{lem:lewis_rounding}.
Further, for all $p>2$ let $\mm_{p}\defeq c_{p,m}^{2}\ma^{\top}\mw_{p}\ma$
and $E_{p}\defeq\{x\in\Rn:\ x^{\top}\mm_{p}x\leq1\}$. Lemma~\ref{lem:lewis_rounding}
shows that $E_{p}\subseteq K$. Further, letting $s_{n}$ is the volume
of the unit sphere we have that
\begin{align*}
\vol(E_{p}) & =s_{n}\left(c_{p,m}^{2n}\det(\ma^{\top}\mw_{p}\ma)\right)^{-\frac{1}{2}}\geq s_{n}\left(c_{p,m}^{2n}\det(\ma^{\top}\mw_{p}^{1-\frac{2}{p}}\ma)\right)^{-\frac{1}{2}}\\
 & =s_{n}\left(c_{p,m}^{2n}\cdot\min_{w_{i}\geq0,\sum_{i\in[m]}w_{i}=n}\log\det\ma^{\top}\mw^{1-\frac{2}{p}}\ma\right)^{-\frac{1}{2}}
\end{align*}
where in the last step we used Lemma~\ref{lem:unique_lewis}. Note
that $w^{1-\frac{2}{p}}\rightarrow w$ as $p\rightarrow\infty$ for
all $w>0$ and $c_{p,m}\rightarrow1$ as $p\rightarrow\infty$. Hence,
\[
\liminf_{p\rightarrow+\infty}\vol(E_{p})\geq s_{n}\cdot\left(\min_{w_{i}\geq0,\sum_{i\in[m]}w_{i}=n}\log\det\ma^{\top}\mw\ma\right)^{-\frac{1}{2}}
\]
On the other hand, it is known that the John ellipsoid $E^{*}$ of
$K$ is unique and its volume is given by the right hand side (Lemma~\ref{lem:John_ellipsoid_formula}).
This implies that $E_{p}$ converges to the John ellipsoid of $K$.
\end{proof}

\section{Linear System Properties\label{subsec:well_conditioned}}

Often the the running time for solving linear system solvers depends
on the condition number of the matrix and/or how fast the linear systems
change from iteration to iteration. Here we show that our interior
point method enjoys properties frequently exploited in other interior
point methds and therefore is amenable to techniques for improving
iteration costs.

There are two key lemmas we prove in this section. First, in Lemma~\ref{lem:slacks_bound}
we provide a general technical lemma on the structure of weighted
minimizers of self-concordant barriers. This allows us to reason about
how close the weighted central path can go to the boundary of the
polytope and allows us to reason about how ill-conditioned the linear
system we need to solver become over the course of the algorithm (see
Corollary~\ref{cor:distance_bdy} and its proof). Second, in Lemma~\ref{lem:sequence_system_changes}
we bound how much the linear systems can change over the course of
our algorithm.

Since Lemma~\ref{lem:slacks_bound} is of independent interest, we
prove a slightly more general version than what we need here (which
in turn is a generalization of \cite[Lemma 16]{vavasis1996primal}).
We consider the case of minimizing weighted combinations of arbitrary
self-concordant functions subject to a linear constraint and bound
under changes to the weights upper bound how well the line between
minimizers. To prove Lemma~\ref{lem:slacks_bound} we use the following
known equivalent characterization of self-concordance and properties
of self-concordant functions.
\begin{lem}
[{\cite[Theorem 4.1.6, 4.2.4]{Nesterov2003}}]\label{lem:general_self_concordant}
We call convex function $\phi$ a \emph{$\nu$-self-concordant barrier}
for open convex set $\Omega\subset\Rn$ if $\phi(\vx)\rightarrow+\infty$
as $\vx\rightarrow\partial\Omega$ and for all $x\in\Omega$ and $h\in\R^{n}$,
$\psi(t)\defeq\phi(\vx+th)$ satisfies $\psi'''(0)\leq2(\psi''(0))^{3/2}$
and $\psi'(0)\leq(\nu\cdot\psi''(0))^{1/2}$. For such $\phi$ the
following hold.
\begin{itemize}
\item For all $s\in\Omega$ and $t\in\Rn$ such that $\|t-s\|_{\nabla^{2}\phi(s)}<1$
then $t\in\dom(\phi)$.
\item For all $x,y\in\Omega$ we have $\nabla\phi(x)\cdot(y-x)\leq\nu$.
\end{itemize}
\end{lem}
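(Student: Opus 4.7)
The plan is to prove the two items as the standard consequences of self-concordance, using the univariate restriction $\psi(u) \defeq \phi(s + u(t-s))$ (or analogously with $x,y$). Throughout, I will use only the two pointwise inequalities $\psi'''(0) \le 2(\psi''(0))^{3/2}$ and $\psi'(0)^2 \le \nu\,\psi''(0)$, applied to shifted starting points.

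For item (1), the key is that self-concordance controls how fast $\psi''$ can change along the line. Setting $g(u) \defeq (\psi''(u))^{-1/2}$ on the maximal open interval containing $0$ on which $s + u(t-s) \in \Omega$, a direct differentiation gives $g'(u) = -\tfrac{1}{2}\,\psi'''(u)\,(\psi''(u))^{-3/2}$, and applying the self-concordance bound at the point $u$ (via translation invariance of the hypothesis) yields $|g'(u)| \le 1$. Since $g(0) = 1/\|t-s\|_{\nabla^2\phi(s)} > 1$, integration gives $g(u) \ge g(0) - u > 0$ for all $u \in [0,1]$, so $\psi''$ stays bounded on $[0,1]$. Combined with $\phi(x) \to +\infty$ as $x \to \partial\Omega$, this forces $s + u(t-s)$ to remain in $\Omega$ for all $u \in [0,1]$; otherwise $\psi$ (and hence $\psi''$) would blow up before $u = 1$. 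In particular $t \in \Omega \subseteq \dom(\phi)$.

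For item (2), set $\psi(u) \defeq \phi(x + u(y-x))$ on $[0,1]$ (both endpoints in $\Omega$, so $\psi$ is smooth and finite throughout by convexity of $\Omega$). If $\psi'(0) \le 0$ the claim is trivial since $\nu > 0$, so assume $\psi'(0) > 0$. Convexity of $\phi$ gives $\psi'$ non-decreasing, hence $\psi'(u) > 0$ on $[0,1]$. Applying the second self-concordance inequality at the base point $u$ yields $\psi'(u)^2 \le \nu\,\psi''(u)$, i.e.\ $\frac{d}{du}\!\left(-\frac{1}{\psi'(u)}\right) = \frac{\psi''(u)}{\psi'(u)^2} \ge \frac{1}{\nu}$. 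Integrating from $0$ to $1$ gives $\frac{1}{\psi'(0)} - \frac{1}{\psi'(1)} \ge \frac{1}{\nu}$, and since $\psi'(1) > 0$, we conclude $\frac{1}{\psi'(0)} \ge \frac{1}{\nu}$, i.e.\ $\nabla\phi(x)^\top(y-x) = \psi'(0) \le \nu$.

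The only genuine subtlety is justifying that item (1)'s integration can be carried past any purported exit point of the segment from $\Omega$: this is where one needs that $\phi$ is a \emph{barrier} (blows up at $\partial\Omega$) together with the $g' \le 1$ bound — if the segment exited at some $u^* \le 1$, then $\psi''(u) \to \infty$ as $u \uparrow u^*$, forcing $g(u^*) = 0$ and contradicting $g(u^*) \ge g(0) - u^* > 0$. This is the only step requiring care; everything else is mechanical manipulation of the defining inequalities.
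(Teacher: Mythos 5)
Your proof is correct and follows the standard argument from Nesterov's textbook, which is exactly what the paper cites (the paper gives no proof of its own). The univariate reduction, the $|g'|\le 1$ bound on $g(u)=(\psi''(u))^{-1/2}$ for item (1), and the differential inequality $\frac{d}{du}(-1/\psi'(u))\ge 1/\nu$ for item (2), together with your observation that the $h\mapsto -h$ symmetry of the defining inequalities gives the needed two-sided bounds, are precisely the textbook's route.
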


\begin{lem}
\label{lem:slacks_bound} For all $i\in[k]$ let $\phi_{i}$ be a
$\nu_{i}$-self-concordant barriers on $\dom(\phi_{i})$, a open convex
subset of $\R^{m}$. Let $\Omega\defeq\{\vx:\ma^{\top}\vx=\vb\}\cap(\cap_{i\in[m]}\dom(\phi_{i}))$
for arbitrary $\ma\in\R^{m\times n}$ and $b\in\R^{n}$. Further,
$c\in\R^{m}$ and for all $w\in\R_{>0}^{m}$ let 
\[
\vx_{\vWeight}\defeq\arg\min_{x\in\Omega}\vc^{\top}\vx+\sum_{i\in[m]}w_{i}\phi_{i}(\vx_{i}).
\]
Then if $\vWeight^{(0)},\vWeight^{(1)}\in\dWeight$ are such that
either $w^{(0)}\geq w^{(1)}$ or $w^{(1)}\geq w^{(0)}$ entrywise
then $p(t)\defeq\vx_{\vWeight^{(0)}}+t(\vx_{\vWeight^{(1)}}-\vx_{\vWeight^{(0)}})\in\Omega$
for all $t\in(-\theta,1+\theta)$ where
\begin{equation}
\theta\defeq\frac{\min\left\{ \min_{j\in[k]}w_{j}^{(0)},\min_{j\in[k]}w_{j}^{(1)}\right\} }{\sum_{j\in[k]}\nu_{j}\left|w_{j}^{(0)}-w_{j}^{(1)}\right|}\,.\label{eq:slack_simple}
\end{equation}
Further, for any $\vWeight^{(0)},\vWeight^{(1)}\in\dWeight$ (regardless
of their entrywise relation), we have $p(t)\in\Omega$ for all $t\in(-\gamma,1+\gamma)$
where $\gamma=\frac{\theta^{2}}{1+2\theta}$ for $\theta$ as defined
above.
\end{lem}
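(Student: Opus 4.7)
My approach is to prove the monotone case (one weight vector dominates the other) first via the Dikin ellipsoid characterization of self-concordance, then bootstrap to the general case by inserting an intermediate weight vector entrywise comparable to both $w^{(0)}$ and $w^{(1)}$.

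First I would write the first-order optimality conditions for $x^{(0)}\defeq x_{w^{(0)}}$ and $x^{(1)}\defeq x_{w^{(1)}}$: there exist dual variables $\lambda^{(0)},\lambda^{(1)}\in\R^n$ with
\[
c + \sum_{j\in[k]} w_j^{(\ell)}\,\nabla\phi_j\bigl(x^{(\ell)}\bigr) \;=\; \ma\lambda^{(\ell)}, \qquad \ell\in\{0,1\}.
\]
Subtracting these and pairing with $h\defeq x^{(1)}-x^{(0)}\in\ker(\ma^{\top})$ cancels both the $\ma\lambda$ and the $c$ contributions, yielding the identity
\[
\sum_j w_j^{(0)}\,\nabla\phi_j(x^{(0)})^{\top} h \;=\; \sum_j w_j^{(1)}\,\nabla\phi_j(x^{(1)})^{\top} h .
\]
Combined with the force bound $\nabla\phi_j(x)^{\top}(y-x)\le \nu_j$ from Lemma~\ref{lem:general_self_concordant}, this identity is the core tool for bounding the displacement $x^{(1)}-x^{(0)}$ in the Hessian norm at either endpoint.

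In the monotone case (say $w^{(0)}\le w^{(1)}$), to show $p(1+s)\in\Omega$ it suffices by Lemma~\ref{lem:general_self_concordant} to verify $\|p(1+s)-x^{(1)}\|_{\nabla^{2}\phi_j(x^{(1)})}<1$ for every $j$, i.e.\ $s\|h\|_{\nabla^{2}\phi_j(x^{(1)})}<1$. Since $\nabla^{2}f_{w^{(1)}}(x^{(1)})\succeq w_j^{(1)}\nabla^{2}\phi_j(x^{(1)})$ for each $j$, this reduces to checking $s^{2}\|h\|_{\nabla^{2}f_{w^{(1)}}(x^{(1)})}^{2} < \min_j w_j^{(1)}$. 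To bound $\|h\|_{\nabla^{2}f_{w^{(1)}}(x^{(1)})}$, I would invoke the standard monotonicity inequality $(\nabla\phi_j(x^{(1)})-\nabla\phi_j(x^{(0)}))^{\top} h \;\ge\; \|h\|_{\nabla^{2}\phi_j(x^{(1)})}^{2}/(1+\|h\|_{\nabla^{2}\phi_j(x^{(1)})})$ available from self-concordance, sum it with weights $w_j^{(1)}$, and apply the gradient identity together with $\nabla\phi_j(x^{(1)})^{\top} h\ge -\nu_j$ to upper bound the left-hand side by $\sum_j (w_j^{(1)}-w_j^{(0)})\nu_j$. Solving the resulting scalar inequality for $\|h\|_{\nabla^{2}f_{w^{(1)}}}$ yields the $x^{(1)}$-side extension factor $\theta$; the $x^{(0)}$-side extension is symmetric, swapping the roles of the two endpoints.

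For the general (non-monotone) case I would introduce $w^{(2)}$ entrywise comparable to both weight vectors, for instance $w_j^{(2)}\defeq\min(w_j^{(0)},w_j^{(1)})$, and set $x^{(2)}\defeq x_{w^{(2)}}$. Applying the monotone case to each pair $(w^{(2)},w^{(0)})$ and $(w^{(2)},w^{(1)})$ certifies extendability of the two chords $[x^{(2)},x^{(0)}]$ and $[x^{(2)},x^{(1)}]$ by a factor $\theta$ on their $x^{(0)}$- and $x^{(1)}$-sides, all of the extensions lying inside $\Omega$. A planar triangle argument inside $\conv\{x^{(2)},x^{(0)},x^{(1)}\}$ combined with convexity of $\Omega$ then shows that $[x^{(0)},x^{(1)}]$ itself can be extended by a smaller factor, and computing the intersection geometry of the two extended segments relative to the chord $[x^{(0)},x^{(1)}]$ produces the claimed $\gamma=\theta^{2}/(1+2\theta)$ bound. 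The main obstacle will be this last step: keeping track of the exact quadratic-over-linear form of $\gamma$ (rather than a weaker estimate like $\gamma\gtrsim\theta^{2}$) requires carefully tracking the triangle geometry and choosing $w^{(2)}$ in a way that tightly balances the two monotone extension factors, and this is where I expect most of the technical bookkeeping to lie.
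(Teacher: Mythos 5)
Your monotone-case route has a genuine gap: it does not recover the lemma's claimed $\theta$. The paper does not bound $\|x_{w^{(1)}}-x_{w^{(0)}}\|$ in a Hessian norm and then invoke the Dikin ellipsoid as a sufficient condition; instead it defines $\theta$ as the \emph{exact} first exit time of the line from $\Omega$ and then bootstraps on this definition twice. First, at the boundary crossing time $1+\theta$ the Dikin criterion gives a pointwise lower bound $\psi_i''(t)\ge(1+\theta-t)^{-2}$ on the offending barrier, which integrates to $\psi_i'(1)\ge\psi_i'(0)+\frac{1}{\theta(1+\theta)}$. Second — and this is what your argument misses — since the line is known to remain in $\dom(\phi_j)$ on all of $(-\theta,1+\theta)$, the force bound may be applied over the full extended segment, giving the sharper estimate $\psi_j'(0)\cdot(1+\theta)\le\nu_j$ rather than just $\psi_j'(0)\le\nu_j$. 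This extra factor of $(1+\theta)$ cancels exactly the $(1+\theta)$ in the denominator from the integrated Dikin bound, yielding $\theta\ge w_i^{(1)}/\sum_j\nu_j|w_j^{(0)}-w_j^{(1)}|$ cleanly. Your route (monotonicity inequality to bound $u\defeq\|h\|_{\nabla^2 f_{w^{(1)}}(x^{(1)})}$, then Dikin sufficiency $s\,\|h\|_{\nabla^2\phi_j(x^{(1)})}<1$) gives an extension factor of roughly $\sqrt{\min_j w_j^{(1)}}/u$ with $u\le\tfrac12(M+\sqrt{M^2+4M})$ for $M\defeq\sum_j\nu_j|w_j^{(0)}-w_j^{(1)}|$. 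This quantity is not always $\ge\min_j w_j^{(1)}/M$; for instance with $\min_j w_j^{(1)}=M=1$ it gives $2/(1+\sqrt5)\approx 0.62<1$. So as written the argument proves a weaker bound in some parameter regimes and does not establish the lemma. Separately, your sign bookkeeping is off: with $w^{(0)}\le w^{(1)}$ you want the identity to produce $\sum_j(w_j^{(0)}-w_j^{(1)})\nabla\phi_j(x^{(1)})^\top h$, which requires summing the monotonicity inequality with weights $w_j^{(0)}$, not $w_j^{(1)}$, before using $\nabla\phi_j(x^{(1)})^\top h\ge-\nu_j$.

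Your non-monotone reduction is conceptually aligned with the paper: it also inserts an entrywise-comparable intermediate weight vector and combines via convexity of $\Omega$. The paper happens to use $v_j=\max\{w_j^{(0)},w_j^{(1)}\}$ while you propose $\min$; both give $\min_j w_j^{(2)}=\min\{\min_j w_j^{(0)},\min_j w_j^{(1)}\}$ and $\sum_j\nu_j|w_j^{(\ell)}-w_j^{(2)}|\le M$, so either works. The paper avoids any genuine triangle geometry: it takes the specific affine combinations $\frac{\gamma}{1+2\gamma}v_{1+\gamma}^{(0)}+\frac{1+\gamma}{1+2\gamma}v_{1+\gamma}^{(1)}$ and the symmetric one at $t=-\gamma$, and a short calculation shows the coefficient of $x_v$ vanishes while the remaining coefficients place the combination at $p\bigl(1+\tfrac{\gamma^2}{1+2\gamma}\bigr)$ and $p\bigl(-\tfrac{\gamma^2}{1+2\gamma}\bigr)$ respectively. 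This is the mechanical step you correctly flagged as the place where the $\theta^2/(1+2\theta)$ form is produced; it is shorter than a free-form triangle argument would suggest.
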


\begin{proof}
First, we prove the case where either $w^{(0)}\geq w^{(1)}$ or $w^{(1)}\geq w^{(0)}$
entrywise. Note that $p(t)$ is a straight line intersecting $p(0)=\vx_{\vWeight^{(0)}}$
and $p(1)=x_{w^{(1)}}$. Let $\theta$ denote the smallest value of
$\theta$ for which either $p(-\theta)\notin\Omega$ or $p(1+\theta)\notin\Omega$,
i.e. the least amount the line segment between $\vx_{\vWeight^{(0)}}$
and $x_{w^{(1)}}$ needs to be extended in either direction to leave
$\Omega$. By convexity $p(t)\in\Omega$ for all $t\in[0,1]$ and
therefore $\theta\geq0$. Further, we assume that $\theta$ is finite,
i.e. the straight line passing through $x_{w^{(0)}}$ and $x_{w^{(1)}}$
leaves $\Omega$, as otherwise the lemma trivially holds.

Note that either $p(1+\theta)\in\partial\Omega$, the boundary of
$\Omega$, or $p(-\theta)\in\partial\Omega$. By symmetry, we assume
without loss of generality that $p(1+\theta)\in\partial\Omega$ (as
applying the lemma under this assumption with $w^{(0)}$ and $w^{(1)}$
swapped would yield the other case). Consequently, for some $i\in[m]$
we have $p(1+\theta)\in\partial\dom(\phi_{i})$, the boundary of $\phi_{i}$,
and we fix such a $i\in[m]$ throughout. For notational convenience,
we define $\psi_{j}(t)\defeq\phi_{j}(p(t))$ for all $j\in[m]$ and
let $\dom(\psi_{i})$ denote the set of values of $u$ for which $p(u)\in\dom(\phi_{i})$.
We will leverage that each $\psi_{j}$ is $\nu_{j}$-self-concordant
on $\dom(\phi_{i})$, as the restriction of a $\text{\ensuremath{\nu}}$-self-concordant
function to a line is $\nu$-self-concordant (see Definition~\ref{lem:general_self_concordant}).

Now, note that the first bullet of Lemma~\ref{lem:general_self_concordant}
implies that if for some $t\in(-\theta,1+\theta)$ and $u\geq t$
we have $\sqrt{\psi''_{i}(t)}(u-t)<1$ then $u\in\dom(\psi_{i})$.
However, we know that $u\notin\dom(\psi_{i})$ for $u=1+\theta$ and
thus, $\psi''_{i}(t)\geq(1+\theta-t)^{-2}$. Integrating, yields that
\[
\psi'_{i}(1)=\psi_{i}'(0)+\int_{0}^{1}\psi''_{i}(t)dt\geq\psi_{i}'(0)+\int_{0}^{1}\frac{1}{(1+\theta-t)^{2}}dt=\psi_{i}'(0)+\frac{1}{\theta(1+\theta)}\,.
\]
Further, since each $\psi_{j}$ is convex we have that $\psi_{j}'(1)\geq\psi_{j}'(0)$
and combining yields that 
\begin{equation}
\sum_{j\in[k]}w_{j}^{(1)}\psi_{j}'(1)\geq\sum_{j\in[k]}w_{j}^{(1)}\psi_{j}'(0)+\frac{w_{i}^{(1)}}{\theta(1+\theta)}\text{ and }\sum_{j\in[k]}w_{j}^{(0)}\psi_{j}'(1)\geq\sum_{j\in[k]}w_{j}^{(0)}\psi_{j}'(0)+\frac{w_{i}^{(0)}}{\theta(1+\theta)}\,.\label{eq:stability_lower_bound}
\end{equation}

Next, note that the optimality conditions of $x_{w^{(0)}}$ and $x_{w^{(1)}}$
imply that for some $\lambda^{(0)},\lambda^{(1)}\in\R^{n}$ 
\[
\vc+\sum_{j\in[k]}w_{j}^{(0)}\nabla\phi_{j}(\vx_{\vWeight^{(0)}})=\ma\lambda^{(0)}\text{ and }\vc+\sum_{j\in[k]}w_{j}^{(1)}\nabla\phi_{j}(\vx_{\vWeight^{(1)}})=\ma\lambda^{(1)}\,.
\]
Now, since $\ma^{\top}x_{w^{(0)}}=b=\ma^{\top}x_{w^{(1)}}$, this
implies
\begin{align*}
\sum_{j\in[k]}w_{j}^{(1)}\psi_{j}'(1) & =\sum_{j\in[k]}w_{j}^{(1)}\left[\nabla\phi_{j}(\vx_{\vWeight^{(1)}})^{\top}(x_{w^{(1)}}-x_{w^{(0)}})\right]=\left[c-\ma\lambda^{(1)}\right]^{\top}(x_{w^{(1)}}-x_{w^{(0)}})\\
 & =c^{\top}(x_{w^{(1)}}-x_{w^{(0)}})=\left[c-\ma\lambda^{(0)}\right]^{\top}(x_{w^{(1)}}-x_{w^{(0)}})\\
 & =\sum_{j\in[k]}w_{j}^{(0)}\left[\nabla\phi_{j}(\vx_{\vWeight^{(0)}})^{\top}(x_{w^{(1)}}-x_{w^{(0)}})\right]=\sum_{j\in[k]}w_{j}^{(0)}\psi_{j}'(0)\,.
\end{align*}
Combining with (\ref{eq:stability_lower_bound}) yields that 
\[
\frac{w_{i}^{(1)}}{\theta(1+\theta)}\leq\sum_{j\in[k]}\left(w_{j}^{(0)}-w_{j}^{(1)}\right)\psi_{j}'(0)\text{ and}\frac{w_{i}^{(0)}}{\theta(1+\theta)}\leq\sum_{j\in[k]}\left(w_{j}^{(0)}-w_{j}^{(1)}\right)\psi_{j}'(1)\,.
\]
Further, the definition of $\theta$ implies that $t\in\dom(\psi_{j})$
for all $t\in(-\theta,1+\theta)$ and $j\in[k].$ Therefore, the second
bullet of Lemma~\ref{lem:general_self_concordant} implies that 
\[
\psi_{j}'(0)\cdot(1+\theta)=\psi_{j}'(0)\cdot((1+\theta)-0)\leq\nu_{j}\text{ and }-\psi_{j}'(1)\cdot(1+\theta)=\psi_{j}'(1)\cdot((-\theta)-1)\leq\nu_{j}\,.
\]
Consequently, if $w_{j}^{(1)}\leq w_{j}^{(0)}$ for all $j$ we have
\[
\min\left\{ \min_{j\in[k]}w_{j}^{(0)},\min_{j\in[k]}w_{j}^{(1)}\right\} \leq w_{i}^{(1)}\leq\theta\sum_{j\in[k]}\left|w_{j}^{(0)}-w_{j}^{(1)}\right|\cdot\nu_{j}
\]
 and if $w_{j}^{(1)}\geq w_{j}^{(0)}$ for all $j$ we have 
\[
\min\left\{ \min_{j\in[k]}w_{j}^{(0)},\min_{j\in[k]}w_{j}^{(1)}\right\} \leq w_{i}^{(0)}\leq\theta\sum_{j\in[k]}\left|w_{j}^{(0)}-w_{j}^{(1)}\right|\cdot\nu_{j}\,.
\]
Therefore, the result in (\ref{eq:slack_simple}) holds in either
case.

Finally, we consider the case of arbitrary $\vWeight^{(0)},\vWeight^{(1)}\in\dWeight$
(i.e. where it is not necessarily the case that $w^{(0)}\geq w^{(1)}$
or $w^{(1)}\geq w^{(0)}$). In this case, we let $v_{j}=\max\{w_{j}^{(0)},w_{j}^{(1)}\}$
where max is applied entrywise. Note that $w^{(0)}\leq v$ and $v\geq w^{(1)}$
entrywise and consequently we can apply the previous result, i.e.
(\ref{eq:slack_simple}), to the pairs $(w^{(0)},v)$ and $(v,w^{(1)})$
to show that 
\[
v_{t}^{(0)}\defeq w^{(0)}+t(v-w^{(0)})\in\Omega\text{ and }v_{t}^{(1)}\defeq v+t(w^{(1)}-v)\in\Omega
\]
for all $t\in(-\theta,1+\theta)$ where (as in the previous case)
\[
\theta\geq\frac{\min\left\{ \min_{j\in[k]}w_{j}^{(0)},\min_{j\in[k]}w_{j}^{(1)}\right\} }{\sum_{j\in[k]}\nu_{j}\left|w_{j}^{(0)}-w_{j}^{(1)}\right|}\,.
\]
Consequently, since $\Omega$ is convex, considering $t=1+\gamma$
for $\gamma\in[0,\theta)$ we have
\begin{align*}
\Omega & \ni\left(\frac{\gamma}{1+2\gamma}\right)v_{t}^{(0)}+\left(\frac{1+\gamma}{1+2\gamma}\right)v_{t}^{(1)}=\left(\frac{1}{1+2\gamma}\right)\left[-\gamma^{2}w^{(0)}+(1+\gamma)^{2}w^{(1)}\right]\\
 & =w^{(0)}+\frac{(1+\gamma)^{2}}{(1+2\gamma)}\left[w^{(1)}-w^{(0)}\right]=w^{(0)}+\left[1+\frac{\gamma^{2}}{1+2\gamma}\right]\cdot\left(w^{(1)}-w^{(0)}\right)\,.
\end{align*}
Further, considering $t=-\gamma$ for $\gamma\in[0,\theta]$ we haves
\begin{align*}
\Omega & \ni\left(\frac{1+\gamma}{1+2\gamma}\right)v_{t}^{(0)}+\left(\frac{\gamma}{1+2\gamma}\right)v_{t}^{(1)}=\left(\frac{1}{1+2\gamma}\right)\left[(1+\gamma)^{2}w^{(0)}-\gamma^{2}w^{(1)}\right]\\
 & =w^{(0)}+\frac{\gamma^{2}}{(1+2\gamma)}\left[w^{(1)}-w^{(0)}\right]\,.
\end{align*}
Consequently $w^{(0)}+t(w^{(1)}-w^{(0)})\in\Omega$ for all $t\in(-\gamma,1+\gamma)$
with $\gamma=\frac{\theta^{2}}{1+2\theta}$ as desired.
\end{proof}
In the applications we consider in Section~\ref{sec:lp_alg} we have
$\arg\min_{x}f_{t}\left(\vx,\vWeight\right)=\arg\min f_{1}\left(\vx,\frac{\vWeight}{t}\right)$.
Further, since $\vWeight$ is polynomial bounded by above and below
by $m$, the ratio between old weights $w^{(1)}/t^{(2)}$ and the
new weights $w^{(t)}/t^{(2)}$ is bounded polynomially by the ratio
of $t^{(1)}$ and $t^{(2)}$ and $m$. Further, since our initial
point starts away from the boundary of the polytope Lemma~\ref{lem:slacks_bound}
implies that the distance from $x$ to the boundary can always be
bounded.
\begin{cor}
\label{cor:distance_bdy} Using the notation and assumptions in either
Theorem~\ref{thm:LPSolve_detailed} or Theorem~\ref{thm:LPSolve_detailed_dual},
we have $\phi_{i}''(\vx)\leq O(\poly(mU/\varepsilon))$ throughout
the algorithm.
\end{cor}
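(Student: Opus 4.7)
The plan is to reduce bounding $\phi_i''(\vx)$ to bounding the distance from $\vx$ to the boundary of $\dInterior$, and then apply Lemma~\ref{lem:slacks_bound} to control that distance by tracing back to the initial point $\vx_0$. First, I would observe that throughout the algorithm Theorem~\ref{thm:LPSolve} guarantees centrality $\delta_t(\vx,\vWeight)\leq R = \Omega(\log^{-3}m)$, so by the argument in Lemma~\ref{lem: distance gurantee} we have $\|\sqrt{\phi''(\vx_{w,t})}(\vx-\vx_{w,t})\|_\infty = O(1)$ where $\vx_{w,t}\defeq \arg\min_{\ma^\top y=b} f_t(y,w)$. Lemma~\ref{lem:gen:phi_properties_sim} then shows that $\phi_i''(\vx) = \Theta(1)\cdot\phi_i''(\vx_{w,t})$, so it suffices to bound $\phi_i''$ at the weighted central path point.

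Next I would rewrite $\vx_{w,t}$ as the unconstrained-weighted minimizer $\arg\min_{\ma^\top y=b}\bigl[\vc^\top y + \sum_i (w_i/t)\phi_i(y_i)\bigr]$, i.e.\ the minimizer for effective weights $\hat w\defeq w/t$. By construction of $\code{LPSolve}$ the initial point $\vx_0$ is itself such a minimizer, but with a shifted cost vector $\vd$ and initial weights $w^{(0)}\in[\tfrac{n}{2m},\tfrac{3n}{2}]$ (Theorem~\ref{thm:max_flow:weight_properties}). The correctness proof of $\code{LPSolve}$ further shows that $w$ stays in this range throughout and $t$ stays in $[t_1,t_2]$ with $t_1=\Theta((m^{3/2}U^2\log^4 m)^{-1})$ and $t_2=2m/\epsilon$, so the effective weights $\hat w_i$ lie in $[1/\mathrm{poly}(mU/\epsilon),\mathrm{poly}(mU/\epsilon)]$. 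To handle the cost mismatch between $\vd$ and $\vc$ I would absorb the difference $t(\vc-\vd)$ as a bounded-size perturbation of one coordinate weight, since $|\vc-\vd|\leq\|\vc\|_\infty+\|\phi'(\vx_0)\|_\infty\leq 2U$ by Lemma~\ref{lem:gen:phi_properties_force}.

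I would then apply Lemma~\ref{lem:slacks_bound} with $\nu_j=1$, comparing the initial weights $w^{(0)}/t_{\mathrm{start}}$ against the current effective weights $\hat w=w/t$. Since both weight vectors are entrywise in a $\mathrm{poly}(mU/\epsilon)$ range, the formula for $\theta$ together with the $\gamma=\theta^2/(1+2\theta)$ extension for arbitrary weight pairs gives $\gamma\geq 1/\mathrm{poly}(mU/\epsilon)$. Geometrically this says the straight segment from $\vx_0$ to $\vx_{w,t}$ can be extended on either side by a factor of $\gamma$ while remaining in $\dInterior$. Consequently the signed distance from $\vx_{w,t}$ to $\partial\dom(\phi_i)$ along each coordinate is at least $\gamma\cdot|[\vx_{w,t}-\vx_0]_i|/(1+\gamma)$, which in turn is at least $1/\mathrm{poly}(mU/\epsilon)$ once $\|\vx_{w,t}\|_\infty\leq\mathrm{poly}(mU)$ is ruled out by a crude diameter bound. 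A contrapositive use of Lemma~\ref{lem:gen:phi_properties_sim} (namely, if $t\notin\dom(\phi_i)$ then $\sqrt{\phi_i''(s)}|s-t|\geq 1$) then translates this distance bound into $\phi_i''(\vx_{w,t})\leq\mathrm{poly}(mU/\epsilon)$, and transferring back through the centrality bound finishes the proof.

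The main obstacle will be the technical bookkeeping: verifying that the two-stage structure of $\code{LPSolve}$ (first path-follow with cost $\vd$ from $t=1$ to $t=t_1$, then switch to cost $\vc$) does not introduce weight or centrality transitions outside the hypotheses of Lemma~\ref{lem:slacks_bound}, and choosing the ``reference'' minimizer in each stage so that the effective-weight ratio is provably $\mathrm{poly}(mU/\epsilon)$. Once this is arranged, the required polynomial bound on $\phi_i''(\vx)$ follows uniformly across the whole execution, and the analogous argument handles the dual formulation of Theorem~\ref{thm:LPSolve_detailed_dual} without modification.
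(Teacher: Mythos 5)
Your overall strategy matches the paper's: first use the centrality bound together with Lemma~\ref{lem: distance gurantee} and Lemma~\ref{lem:gen:phi_properties_sim} to reduce the claim to bounding $\phi_i''$ at the weighted central path point, then apply Lemma~\ref{lem:slacks_bound} with the initial point $\vx_0$ as the anchor to bound the distance to the boundary there. That is exactly the paper's proof, which is itself quite terse.

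There are two places where the argument as you wrote it does not close. First, the cost-vector switch. Lemma~\ref{lem:slacks_bound} compares two minimizers of $\vc^{\top}\vx+\sum_j w_j\phi_j(\vx_j)$ for the \emph{same} cost vector $\vc$; absorbing the difference $t(\vc-\vd)$ as ``a bounded-size perturbation of one coordinate weight'' is not a legitimate move in this framework, since a coordinate-wise linear perturbation cannot in general be represented as a rescaling of $w_j\phi_j$. The fix (implicit in the paper) is to handle each stage of $\code{LPSolve}$ with its own fixed cost vector and bridge through the iterate at the stage boundary: bound $\phi''$ at $\vx_{t_1}^{d,\next w}$ via Lemma~\ref{lem:slacks_bound} against $\vx_0$, transfer to $\next{\vx}$ via centrality (Lemma~\ref{lem: distance gurantee}), then transfer to $\vx_{t_1}^{c,\next w}$ via the $\delta_{t_1}^{c}$-centrality shown in the proof of Theorem~\ref{thm:LPSolve_detailed}, and finally use $\vx_{t_1}^{c,\next w}$ as the anchor for the second stage.

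Second, the extraction of the boundary distance from the extension factor. Your inequality that the coordinate-wise distance is at least $\gamma\cdot|[\vx_{w,t}-\vx_0]_i|/(1+\gamma)$ is vacuous when $|[\vx_{w,t}-\vx_0]_i|$ is tiny, and a ``crude diameter bound'' on $\|\vx_{w,t}\|_\infty$ is an upper bound, so it does not rescue the small-displacement case. What is needed is the case split: if $|[\vx_{w,t}-\vx_0]_i|\leq\frac{1}{2U}$ then $\vx_{w,t}$ inherits the $\Omega(1/U)$ boundary distance directly from $\vx_0$; otherwise the $\gamma$-extension from Lemma~\ref{lem:slacks_bound} gives distance at least $\gamma/(2U)$. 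Either branch yields the desired $1/\poly(mU/\epsilon)$ lower bound, after which Lemma~\ref{lem:gen:phi_properties_sim} converts it to an upper bound on $\phi_i''$.
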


\begin{proof}
We prove the claim for Theorem~\ref{thm:LPSolve_detailed} as Theorem~\ref{thm:LPSolve_detailed_dual}
applies the same algorithm. Consider, Algorithm~\ref{alg:lp_solve}.
By the assumptions of Theorem~\ref{thm:LPSolve_detailed}, the initial
point $x_{0}$ has distance at least $1/U$ to each of the $\ell_{i}\leq x_{i}\leq u_{i}$
constraints. Further, by Lemma~\ref{lem:slacks_bound}, $\frac{1}{2m}\leq w\leq2$
and that $\frac{1}{2m}\leq\next w\leq2$, $x_{t_{1}}\defeq\arg\min f_{t}(\vx,\next w)$
has distance at least $\frac{1}{\poly(mU)}$ to any of the $\ell_{i}\leq x_{i}\leq u_{i}$.
By Lemma~\ref{lem:gen:phi_properties_sim}, $\phi_{i}''(\vx_{t_{1}})\leq O(\poly(mU/\varepsilon))$
and by Lemma~\ref{thm:LPSolve}, we have that $\delta_{t_{1}}(x^{(\text{new})},\next{\vWeight})\leq\frac{1}{2^{16}\log^{3}m}$.
Hence, Lemma~\ref{lem: distance gurantee} and Lemma~\ref{lem:gen:phi_properties_sim}
shows that $\phi_{i}''(\next x)\leq O(\poly(mU/\varepsilon))$. By
the same argument, we also have $\phi_{i}''(x)\leq O(\poly(mU/\varepsilon))$
for $x=x^{(\text{final})}$ and for all intermediate steps $x$.
\end{proof}
\begin{lem}
\label{lem:sequence_system_changes} Using the notation and assumptions
in either Theorem~\ref{thm:LPSolve_detailed} or Theorem~\ref{thm:LPSolve_detailed_dual}
let $\ma^{\top}\md_{k}\ma$ be the $k^{th}$ linear system that is
used in the algorithm $\code{LPSolve}$. For all $k\geq1$, we have
the following:

\begin{enumerate}
\item The condition number of $\ma^{\top}\md_{k}\ma$ relative to $\ma^{\top}\ma$
is bounded by $\poly(mU/\varepsilon)$, i.e., 
\[
\poly(\varepsilon/(mU))\ma^{\top}\ma\preceq\ma^{\top}\md_{k}\ma\preceq\poly(mU/\varepsilon)\ma^{\top}\ma
\]
\item $\norm{\log(\md_{k+1})-\log(\md_{k})}_{\infty}\leq1/10$.
\item $\norm{\log(\md_{k+1})-\log(\md_{k})}_{\weight_{p}(\md_{k}^{1/2}\ma)}\leq1/10$.
\end{enumerate}
\end{lem}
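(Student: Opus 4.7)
The linear systems that arise in $\code{LPSolve}$ are of two kinds. The centering steps (through $\code{centeringInexact}$) need to invert the matrix $\ma_{x}^{\top}\mWeight^{-1}\ma_{x}=\ma^{\top}\md\ma$ with $\md=\mWeight^{-1}(\mPhi''(\vx))^{-1}$, while the Lewis-weight subroutine $\code{computeApxWeight}$ invokes leverage-score estimation for $\mv^{1/2-1/p}\ma_{x}$, which amounts to systems $\ma^{\top}\md\ma$ with $\md=v^{1-2/p}(\mPhi''(\vx))^{-1}$. In both cases the diagonal is the product of a bounded power of a positive weight vector and $(\phi''(\vx))^{-1}$. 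The plan is to show each part by pushing the bounds on $\vx,\vWeight$ (maintained by the outer loop) and the internal dynamics of the subroutines through these two formulae.

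For part (1), Corollary~\ref{cor:distance_bdy} bounds $\phi''_i(\vx)\leq\poly(mU/\varepsilon)$ from above along the whole trajectory, and Lemma~\ref{lem:gen:phi_properties_sim} gives $\phi''_i(\vx)\geq1/U^{2}$. Theorem~\ref{thm:LPSolve} and Theorem~\ref{thm:max_flow:weight_properties} maintain the invariant $\tfrac{n}{4m}\leq w_{i}\leq 2$, and the internal Lewis-weight iterates in $\code{computeApxWeight}$/$\code{computeInitialWeight}$ stay multiplicatively within a constant factor of the true Lewis weights, which by Lemma~\ref{lem:tool:projection_matrices} lie in $[\sigma_{\min},1]$ with $\sigma_{\min}^{-1}\leq\poly(mU)$. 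Multiplying these bounds, $\md_{k}$ has entries in $[\poly(\varepsilon/(mU)),\poly(mU/\varepsilon)]$, hence the desired spectral bounds relative to $\ma^{\top}\ma$.

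For part (2), I would decompose $\log\md_{k+1}-\log\md_{k}$ as the change in the weight part plus the change in $\log\phi''(\vx)$. For consecutive centering iterations, Theorem~\ref{thm:smoothing:centering_inexact_weight} (applied as in the proof of Theorem~\ref{thm:LPSolve}) gives $\|\log\next{\vWeight}-\log\vWeight\|_{w+\infty}\leq(1+\epsilon)(1-6/(7c_k))\delta\leq R$, and by \eqref{eq:centrality_equivalence} the Newton step satisfies $\|\sqrt{\phi''(\vx)}(\next{\vx}-\vx)\|_{\infty}\leq c_{\gamma}\delta\leq 2R$; Lemma~\ref{lem:gen:phi_properties_sim} then converts the latter into $\|\log\phi''(\next{\vx})-\log\phi''(\vx)\|_{\infty}\leq 4R$. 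Since $R=O(1/\log^{3}m)$ and $p=1-1/\log(4m)$ is bounded away from $0$, the contribution $|1-2/p|\cdot O(R)$ from the weight exponent is still $O(1/\log m)\ll 1/10$. For consecutive inner Lewis-weight iterations, the truncated gradient update in $\code{computeApxWeight}$ and $\code{computeExactWeight}$ keeps $v$ within a $(1\pm r)$ multiplicative factor of $v^{(0)}$ for $r=p^{2}(4-p)/2^{20}$, and the point $\vx$ does not move, so the $\ell_{\infty}$ change across an internal iteration is $O(r)\ll 1/10$. The same argument applied to the homotopy steps of $\code{computeInitialWeight}$ (where by Lemma~\ref{lem:LS_weights_change_bound} each adjustment of $p$ changes weights multiplicatively by a small constant) handles that subroutine.

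For part (3), the same two quantities must be bounded in the $\|\cdot\|_{\weight_{p}(\md_{k}^{1/2}\ma)}$ norm. The key observation is that $\md_{k}^{1/2}\ma$ is, up to a bounded diagonal rescaling, equal to $\mWeight_{k}^{-1/2}\ma_{x_{k}}$ (or its Lewis-weight analogue), and by Lemma~\ref{lem:LS_weights_stable}, \ref{lem:LS_weights_change_bound}, Theorem~\ref{thm:lewisweightexactapproxfull}, and the invariants of Theorem~\ref{thm:LPSolve}, the $\ell_{p}$ Lewis weights of $\md_{k}^{1/2}\ma$ are multiplicatively comparable to $\vWeight_{k}$. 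Hence $\|\cdot\|_{\weight_{p}(\md_{k}^{1/2}\ma)}$ and $\|\cdot\|_{\vWeight_{k}}$ are equivalent up to a constant. The bound on the weight part in $\|\cdot\|_{w+\infty}$ from Theorem~\ref{thm:smoothing:centering_inexact_weight} already dominates $C_{\mathrm{norm}}\|\cdot\|_{\vWeight_{k}}$. For the $\phi''$ part, note that $\sqrt{\phi''(\vx)}\vh_{t}(\vx,\vWeight)=\mProj_{\vx,\vWeight}(\cdot)$ satisfies $\|\sqrt{\phi''(\vx)}\vh_{t}\|_{\vWeight}\leq c_{\gamma}\delta$, so by Lemma~\ref{lem:gen:phi_properties_sim} the induced change in $\log\phi''$ is bounded in $\|\cdot\|_{\vWeight}$ by $O(\delta)\leq O(R)$, which is again $\ll 1/10$.

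The main obstacles are bookkeeping ones rather than technical: one must carefully enumerate every type of linear system touched by the algorithm (outer centering, inner Lewis-weight gradient step, and the homotopy passes in $\code{computeInitialWeight}$), verify that consecutive systems of different types occur only at bounded transition points, and in part (3) track the equivalence between the current weight $\vWeight_k$ and the Lewis weights of the rescaled matrix to transfer the $\|\cdot\|_{\vWeight_{k}}$ estimates to $\|\cdot\|_{\weight_p(\md_k^{1/2}\ma)}$. All the analytical estimates needed are already available from Section~\ref{sec:weighted_path_finding}, Section~\ref{sec:weight-function}, and Appendix~\ref{sec:weights_full:computing}, so the proof reduces to careful case analysis.
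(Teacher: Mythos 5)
Your plan follows the same decomposition as the paper's proof: bound the diagonal $\md_k=\mWeight^{-1}\mPhi''(\vx)^{-1}$ by controlling the weight entries and $\phi''(\vx)$ via Corollary~\ref{cor:distance_bdy} and Lemma~\ref{lem:gen:phi_properties_sim} for part (1), and split $\log\md_{k+1}-\log\md_k$ into the $\log\phi''$ change (controlled by \eqref{eq:centrality_equivalence} and Lemma~\ref{lem:gen:phi_properties_sim}) and the $\log\vWeight$ change (controlled by the step-size constraint in $\code{centeringInexact}$) for parts (2)--(3).

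One small difference worth noting: you explicitly enumerate the linear systems arising inside $\code{computeApxWeight}$ and $\code{computeInitialWeight}$ (which have a different diagonal, namely powers of the internal Lewis-weight iterate times $\mPhi''(\vx)^{-1}$), and argue those stay within a $(1\pm r)$ window of their anchor. The paper's proof only discusses the Newton-step system $\md=\mWeight^{-1}\mPhi''(\vx)^{-1}$ and treats the subroutine systems implicitly; your version is the more complete reading of ``the $k$th linear system used in the algorithm.'' For part (1) you also bound the entries of $\md_k$ directly rather than invoking Lemma~\ref{lem:lewis_rounding} as the paper does, which is slightly more elementary and transparent since the paper's citation there is somewhat indirect. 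Both routes yield the same conclusion.
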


\begin{proof}
We prove the claim for Theorem~\ref{thm:LPSolve_detailed} as Theorem~\ref{thm:LPSolve_detailed_dual}
applies the same algorithm.

During the algorithm, the matrix we need to solve is of the form $\ma^{\top}\md\ma$
where $\md=\mWeight^{-1}\mPhi''(\vx)^{-1}$. Lemma~\ref{lem:lewis_rounding}
shows that $\ma^{\top}\md\ma\approx_{\poly(m)}\ma^{\top}\mPhi''(\vx)^{-1}\ma$.
Lemma \ref{lem:gen:phi_properties_sim} shows that $\phi_{i}''(\vx)\geq\frac{1}{U^{2}}$.
Also, Lemma \ref{cor:distance_bdy} shows that $\phi_{i}''(\vx)$
is upper bounded by $\poly(mU/\varepsilon)$. Thus, the condition
number of $\ma^{\top}\md_{k}\ma$ relative to $\ma^{\top}\ma$ is
bounded by $\poly(mU/\varepsilon)$. 

Now, we bound the changes of $\md$ by bound the changes of $\mPhi''(\vx)$
and the changes of $\mWeight$ separately. For the changes of $\mPhi''(\vx)$,
(\ref{eq:centrality_equivalence}) shows that $\mixedNorm{\sqrt{\vphi''(\vx)}\vh_{t}(\vx,\vWeight)}{\vWeight}\leq\mixedNorm{\mProj_{\vx,\vWeight}}{\vWeight}\delta_{t}.$
Since $\mixedNorm{\mProj_{\vx,\vWeight}}{\vWeight}\leq2$ and $\delta_{t}\leq1/80$,
we have 
\begin{eqnarray*}
\mixedNorm{\sqrt{\vphi''(\vx)}(\next{\vx}-\vx)}{\vWeight} & = & \mixedNorm{\sqrt{\vphi''(\vx)}\vh_{t}(\vx,\vWeight)}{\vWeight}\leq1/40.
\end{eqnarray*}
Applying this with Lemma~\ref{lem:gen:phi_properties_sim}, we have
\begin{eqnarray*}
\normFull{\log\left(\vphi''(\next{\vx})\right)-\log\left(\vphi''(\vx)\right)}_{\vWeight+\infty} & \leq & \left(1-\mixedNorm{\sqrt{\vphi''(\vx)}(\next{\vx}-\vx)}{\vWeight}\right)^{-1}-1\\
 & \leq & 1/36.
\end{eqnarray*}
Since $\vWeight_{i}\geq\frac{1}{2}\weight_{p}(\md^{1/2}\ma)_{i}$
for all $i$, we have
\begin{equation}
\normFull{\log\left(\vphi''(\next{\vx})\right)-\log\left(\vphi''(\vx)\right)}_{\weight_{p}(\md^{1/2}\ma)+\infty}\leq1/20.\label{eq:changes_phi}
\end{equation}

For the changes of $\mWeight$, we look at the description of $\code{centeringInexact}$.
The algorithm ensures the changes of $\log(\vWeight)$ is in $(1+\epsilon)U$
where $U=\{\vx\in\Rm~|~\mixedNorm{\vx}{\vWeight}\leq\left(1-\frac{7}{8c_{k}}\right)\delta_{t}\}$.
Since $\delta_{t}\leq1/80$ and $\vWeight_{i}\geq\frac{1}{2}\weight_{p}(\md^{1/2}\ma)_{i}$
for all $i$, we get that
\begin{equation}
\normFull{\log\left(\next{\vWeight}\right)-\log\left(\vWeight\right)}_{\weight_{p}(\md^{1/2}\ma)+\infty}\leq1/20.\label{eq:changes_of_w}
\end{equation}
The assertion (2) and (3) follows from (\ref{eq:changes_phi}) and
(\ref{eq:changes_of_w}).
\end{proof}

\end{document}